\definecolor{subsectioncolor}{rgb}{0.067,0.627,0.859}
\newtheorem{proposition}{Proposition}
\newtheorem{lemma}{Lemma}
\newtheorem{theorem}{Theorem}
\newcommand{\myqed}{\hfill \mbox{$\square$}}
\definecolor{mygreen}{HTML}{00F78D}
\newcommand{\highlight}[1]{\colorbox{mygreen!30}{\makebox[0.4em][c]{$#1$}}}
\newtheorem{proof}{Proof}
\newtheorem{definition}{Definition}
\newtheorem{remark}{Remark}
\newtheorem{eg}{Example}
\begin{document}

\begin{frontmatter}
\title{Structural sign herdability of linear time-invariant systems: theory and design for arbitrary network structures}

\thanks[footnoteinfo]{This paper was not presented at any IFAC 
meeting. \\ Corresponding author: Twinkle~Tripathy}

\author{Pradeep~M}\ead{pradeepm22@iitk.ac.in},    
\author{Twinkle~Tripathy}\ead{ttripathy@iitk.ac.in}  

\address{Department of Electrical Engineering, Indian Institute of Technology Kanpur, Uttar Pradesh, India 208016.}  

\begin{keyword}    \textbf{ 
Structural herdability, signed graphs, structural controllability, networked control systems, layered graphs, sign-matching, dilation, positive controllability.}              
\end{keyword}                            
\begin{abstract} 
The objective of this paper is to investigate graph-theoretic conditions for structural herdability of an LTI system. In particular, we are interested in the structural sign $(\mathcal{SS})$ herdability of a system wherein the underlying digraph representing it is signed. Structural herdability finds applications in various domains like power networks, biological networks, opinion dynamics, multi-robot shepherding, \textit{etc}. We begin the analysis by introducing a layered graph representation $\mathcal{G}_s$ of the signed digraph $\mathcal{G}$; such a representation allows us to capture the signed distances between the nodes with ease. We construct a subgraph of $\mathcal{G}_s$ that characterizes paths of identical signs between layers and uniform path lengths, referred to as a layer-wise unisigned graph $\mathcal{LUG}(\mathcal{G}_s)$. A special subgraph of an $\mathcal{LUG}(\mathcal{G}_s)$, denoted as an $\mathcal{LUG^{H}}(\mathcal{G}_s)$, is key to achieving $\mathcal{SS}$ herdability. This is because we prove that an LTI system is $\mathcal{SS}$ herdable if and only if there exists an $\mathcal{LUG^{H}}(\mathcal{G}_s)$ which covers all the nodes of the given digraph. To the best of our knowledge, such a graphical test is one of the first methods which allows us to check $\mathcal{SS}$ herdability for arbitrary digraph topologies. Interestingly, the analysis also reveals that a system can be $\mathcal{SS}$ herdable even in the presence of (signed and layer) dilation in the associated digraph (note that such a behaviour has been shown to be impossible in directed trees). Additionally, we also extend these results to digraphs with multiple leader and driver nodes. In order to illustrate all the results, we present numerous examples throughout the paper.
\end{abstract}
\end{frontmatter}
\section{Introduction} 
\label{Sec:1}
 In a networked system, controllability deals with the ability to independently control all nodes in the network \cite{liu2011controllability} and drive them to any desired state. If the system cannot reach a point in the state space using a suitable control input in a finite time, then the system is not completely controllable. How do we address the case when the system lacks complete controllability, but the desired states lie within the reachable space of the nodes? In most practical systems, reaching a certain subset or region in the state space is sufficient rather than analyzing full controllability. For example, in heating systems, the desired values are always positive. Similarly, a negative level in a tank is illogical. The property of a system to reach specifically the positive orthant of $\mathbb{R}^n$ is termed herdability. It has been extensively studied across various domains, including finance and market analysis \cite{devenow1996rational},\cite{welch2000herding},\cite{wermers1999mutual}, cognitive science \cite{raafat2009herding}, social networks \cite{baddeley2013herding}, \cite{hubert2019consensus}. In a networked system, herdability refers to the ability of a particular set of nodes, termed leader nodes, to receive an external input signal and drive the states of the rest of the follower nodes in the same direction or towards a particular subset, where both cooperative and antagonistic interactions coexist among the followers.

One of the earliest works on herdability \cite{ruf2018herdable} proved that input connectivity is a necessary condition for complete herdability. Subsequent studies have investigated herdability based on graph network topologies and structures. In \cite{meng2020leader}, leader selection strategies for herdability in structurally balanced signed graphs were studied using graph walks, with extensions to weakly balanced signed graphs. The study in \cite{de2023herdability} further extended this framework to analyze the herdability of clustering balanced directed graphs with $k$ clusters. Moreover, several algebraic conditions for the herdability of different graph topologies, particularly tree graphs, have been investigated under specific single-leader assumptions. Sufficient conditions were established in \cite{she2019characterizing} to guarantee the herdability of followers in signed networks through 1-walk and 2-walk from the leader. Similarly, in \cite{she2020characterizing}, the controllable subspace of such dynamics was analyzed using a generalized equitable partition. The study revealed that the herdability of the original graph can be inferred from its quotient graph. Recently \cite{shen2025herdability} studied the herdability of switching signed networks, with necessary and sufficient conditions established using union graphs, graph partitions, and switching quotient graphs; a leader selection method is also proposed for structurally balanced cases. In \cite{shen2025target}, it was shown that target herdability is fully characterized by the length and sign of walks from leader nodes to target nodes. Building upon this foundation, later works investigated target herdability in structurally balanced networks and established corresponding analytical conditions.

Although structural controllability has been rigorously studied in the literature, investigations on structural herdability are still in their initial stages.
 In \cite{ruf2019herdability}, several sufficient conditions for the herdability of a system were derived, and a concept analogous to strong structural controllability and sign controllability was investigated within the framework of herdability, termed sign herdability. Sign herdability examines network herdability by leveraging the sign patterns of the associated graphs. Some studies have been conducted on undirected graphs, and research on herdability in digraphs, especially when the digraph is not acyclic, and directed acyclic graphs (non-tree topologies) is limited.  
In this study, we develop a novel graph-theoretic framework for analyzing and ensuring herdability in a structural sense in a directed graph.

Our main contributions are as follows.
\begin{itemize}
 
    \item \textit{Formulation of structural sign herdability}:
    While previous studies \cite{ruf2018herdable, ruf2019herdability,de2023herdability} address herdability based on sign pattern of the graph, we propose a unified structural framework that jointly characterizes the effects of both sign patterns and edge weight on the herdability of directed graphs. We formally define the notion of structural sign $(\mathcal{SS})$ herdability and show that it can be analyzed for digraphs with arbitrary topologies. The proposed framework enables the systematic study of $(\mathcal{SS})$ herdability in large-scale networks with unknown or uncertain edge weights, offering a clearer understanding of how structure influences herdability.
    
      \item \textit{Effects of signed dilation}: In \cite{ruf2018herdable, ruf2019herdability}, it is shown that a graph is not completely herdable if it contains signed dilation. In this work, We introduce a signed dilation set to systematically explore the impact of signed dilation on $(\mathcal{SS})$ herdability and to determine the criteria necessary for $(\mathcal{SS})$ herdability in digraphs with such structural features.
      we also study the effect of layer dilation on $(\mathcal{SS})$ herdability of arbitrary digraphs, a notion that has been appeared in literature \cite{she2019characterizing,de2023herdability}, particularly for tree graphs.

    \item \textit{ Significance of $\mathcal{LUG^{H}}(\mathcal{G}_s)$ in $\mathcal{SS}$ herdability}: We propose a layerwise unisigned graph structure $(\mathcal{LUG})$ and sign-matching of nodes with respect to edges closely related to the study in \cite{liu2011controllability}. Together, they aid in the formulation of $\mathcal{LUG^{H}}(\mathcal{G}_s)$, a version of $(\mathcal{LUG})$ using which we derive the necessary and sufficient conditions for $(\mathcal{SS})$ herdability. Through rigorous analysis we test and confirm our conclusions. 
    
   \item \textit{$\mathcal{SS}$ herdability analysis in Multi-Leader Networks}: This work is further extended to digraphs with multiple leaders, where each leader receives the input $u(t)$. Subsequently, we investigate an extended scenario in which multiple driver nodes are each associated with distinct control inputs.
\end{itemize}
    
The remainder of this paper is organized as follows. In Section \ref{Sec:2}, we describe the notation and review the notions of structural controllability and herdability. Section \ref{Sec:3} presents the problem. In Section \ref{Sec:4}, we define the structural sign herdability and sign-structured controllability matrix. In this section, we also propose a signed layered graph. In section \ref{Sec:5} we introduce the signed dilation set and provide the necessary and sufficient conditions for $\mathcal{SS}$ herdability of arbitrary digraphs, and discuss the implications. In Section \ref{Sec:6}, we extend these results to the herdability of digraphs with multiple leaders. Finally, Section \ref{Sec:7} concludes the paper and discusses future work.

\section{Notation and Background} 
\label{Sec:2}
\subsection{Notation and Matrix Theory}\label{Notation}
The set of real numbers is denoted as $\mathbb{R}$ and the set of integers is denoted by $\mathbb{Z}$. $\mathbb{R}^+$ and $\mathbb{Z}^+$ represent nonnegative real numbers and nonnegative integer number sets, respectively.
  For a vector $\bm{k}$, $[k]_i$ denotes $i^{th}$ entry of the vector. Let $v = [v_1, v_2, \dots, v_n]^\top \in \mathbb{R}^n$. For a chosen index set $I = \{i_1, i_2, \dots, i_k\}$ with $i_1 < i_2 < \cdots < i_k$, the corresponding subvector is \(\mathcal{D} = [v_{i_1}, v_{i_2},\dots, v_{i_k}]^\top\). A vector is said to be unisigned if every nonzero entry is either nonnegative or nonpositive. Note that a vector with only one nonzero entry is trivially unisigned. The matrix $\mathcal{A} \in \mathbb{R}^{n\times n}$ is nonnegative (respectively positive) if $a_{ij}\geq0$ $(a_{ij}>0)$ for $i,j=\{1,2, \dots n\}$. $\mathcal{A}_{(i,j)}$ represents the $(i,j)^{th}$ entry of the matrix $\mathcal{A}$ and $\mathcal{A}_{(:,j)}$ refers to $j^{th}$  column while $\mathcal{A}_{(i,:)}$ refers to $i^{th}$ row of the matrix, $\mathcal{A}$. The image of the matrix $\mathcal{A}$ is given by $\mathcal{I}m(\mathcal{A} )$ = $\{ y~|~y= \mathcal{A} ~v \}$. Let $\mathfrak{S}(x)$ be the sign pattern of $x$, where $x$ may be a matrix or an entry of a matrix. Let $\lfloor x \rfloor$ denote the greatest integer less than or equal to $x$ (often referred to as the staircase function), and $x \bmod n$ gives the remainder $r$ when $x$ is divided by $n$, where $0 \le r < n$.
\subsection{Graph Theory}\label{Graph theory}
Let $\mathcal{G=(V, E, A)}$ represent the weighted signed digraph of the network, where $\mathcal{V}$ is the set of nodes and $\mathcal{E \subseteq V \times V}$ denotes the edge set. Each element of $\mathcal{E}$ can be represented by $(i,j)$, which means that there is a directed edge from node $ i$ to $ j $ in the digraph $\mathcal{G(A,B)}$.An entry $a_{ij} \ne 0$ indicates that $(j,i) \in \mathcal{E}$, whereas $a_{ij} = 0$ implies that there is no edge between nodes $i$ and $j$. An entry $a_{ij} > 0$ (resp. $a_{ij} < 0$) indicates that $(j, i) \in \mathcal{E}_+$ (resp. $(j, i) \in \mathcal{E}_-)$.
A walk in a digraph $\mathcal{G(A,B)}$ is a sequence of directed edges that successively connects an initial node to the final node. A path is a walk in which no vertex is repeated. A cycle in a digraph is a closed walk in the digraph. Let ${w}^r_{(i,j)}$ be the $r^{\text{th}}$ walk from node $i$ to $j$. Similarly $\{\mathcal{W}^r_{(i,j)}\}$ be the product of the edge weights in  ${w}^r_{(i,j)}$. Let \( \mathcal{V}(X) \) denote the set of nodes in \( X \).
\subsection{Layered Graph}\label{LG}
A layered graph is a tree-structured representation of a digraph $\mathcal{G(A,B)}$. It is isomorphic to $\mathcal{G(A,B)}$ and portrays the distances between the selected root node and all other nodes. A layered graph comprises all possible walks from the root node to any other node in a digraph. The application of layered graphs extends to search algorithms, network flows, Euler paths, and discrete optimization. In a layered graph, $L_p$ represents layer $p$ and $V_{L_p}$ is the set of nodes in layer $p$,~$p\in \mathbb{Z}^+$. If there are no edges between two consecutive layers, the graph is disconnected. The cyclic portion of the graph can also be represented as a layered graph in which the nodes and edges repeat. Hence, the layers can be unfolded into any desired number of iterations.
\subsection{Controllability and Structural Controllability} \label{Str.con}
The state-space representation of an LTI system is given by the following equation:
\begin{equation} \label{sys1:sys1eqn}
    \dot{x}(t) = \mathcal{A} x(t) + \mathcal{B} u(t), \quad t \in \mathbb{R}^+.
\end{equation}
where $x \in \mathbb{R}^n$ and $u \in \mathbb{R}^m $ are the state and input vectors to the system, respectively. The matrices $\mathcal{A}\in\mathbb{R}^{n\times n}$ and $\mathcal{B}\in\mathbb{R}^{n\times m}$ represent the system matrix and the input matrix, respectively. An LTI system is said to be controllable if the columns of the controllability matrix $\mathcal{C}(\mathcal{A,B})$ span the entire $\mathbb{R}^n$ space; thus, it can reach any point in the state space. 
 If the pair $(\mathcal{A},\mathcal{B})$ is not completely controllable, it is structurally controllable if there exists another pair $(\mathcal{\Bar{A}},\mathcal{\Bar{B}})$ of the same structure that is completely controllable \cite{lin1974structural}. If the system is completely controllable for all realizations, then it is called strongly structurally controllable.  
\subsection{Herdability}\label{herd_para}  
As discussed above, analyzing complete controllability is irrelevant in many physical systems. In such systems, reaching a specific subset in the state space is both practical and targeted. Ideally, the analysis focuses on reaching the positive orthant of the state space.
\begin{definition}
A system is said to be herdable if, for every initial condition $x(0)$, there exists an input $u(t)$ such that its states can be brought to the positive orthant in a finite time $t_f$.
\end{definition}

The system $(\mathcal{A},\mathcal{B})$ is said to be (strictly) herdable if for every \( x(0) \) and every \( d > 0 \), there exists a time \( t_f > 0 \) and an input \( u(t), \, t \in [0, t_f) \), that drives the state of the system from \( x(0) \) to  \( x(t_f)\) \(\geq\) \(d \) \(\mathbf{1}_n \). 
For an LTI system, the existence of a strictly positive image in the range space of the controllability matrix guarantees the herdability of the system.
As shown in Fig.~\ref{subspace}, even if the graph is not fully controllable, it is still herdable.
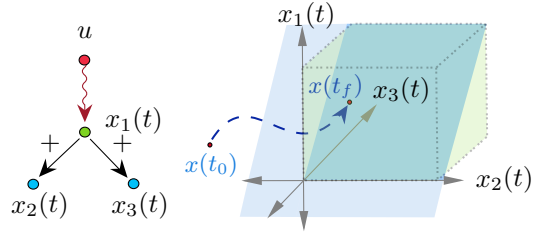
\begin{figure}[ht]
    \centering
\tikzset{every picture/.style={scale=0.75pt}} 

\begin{tikzpicture}[x=0.85pt,y=0.75pt,yscale=-1,xscale=1]

\draw  [fill={rgb, 255:red, 126; green, 211; blue, 33 }  ,fill opacity=1 ] (61.19,91.34) .. controls (61.19,89.52) and (62.58,88.05) .. (64.3,88.05) .. controls (66.02,88.05) and (67.41,89.52) .. (67.41,91.34) .. controls (67.41,93.16) and (66.02,94.63) .. (64.3,94.63) .. controls (62.58,94.63) and (61.19,93.16) .. (61.19,91.34) -- cycle ;
\draw  [fill={rgb, 255:red, 0; green, 192; blue, 248 }  ,fill opacity=1 ] (89.8,125.95) .. controls (89.8,124.14) and (91.19,122.66) .. (92.91,122.66) .. controls (94.63,122.66) and (96.02,124.14) .. (96.02,125.95) .. controls (96.02,127.77) and (94.63,129.25) .. (92.91,129.25) .. controls (91.19,129.25) and (89.8,127.77) .. (89.8,125.95) -- cycle ;
\draw  [fill={rgb, 255:red, 0; green, 192; blue, 248 }  ,fill opacity=1 ] (30.87,126.28) .. controls (30.87,124.47) and (32.26,122.99) .. (33.98,122.99) .. controls (35.7,122.99) and (37.09,124.47) .. (37.09,126.28) .. controls (37.09,128.1) and (35.7,129.57) .. (33.98,129.57) .. controls (32.26,129.57) and (30.87,128.1) .. (30.87,126.28) -- cycle ;
\draw [color={rgb, 255:red, 161; green, 27; blue, 43 }  ,draw opacity=1 ]   (64.02,43.03) .. controls (65.7,44.68) and (65.71,46.35) .. (64.05,48.03) .. controls (62.4,49.71) and (62.41,51.38) .. (64.09,53.03) .. controls (65.76,54.69) and (65.77,56.36) .. (64.12,58.03) .. controls (62.47,59.7) and (62.48,61.37) .. (64.15,63.03) .. controls (65.83,64.68) and (65.84,66.35) .. (64.19,68.03) .. controls (62.54,69.7) and (62.55,71.37) .. (64.22,73.03) -- (64.23,73.76) -- (64.28,81.76) ;
\draw [shift={(64.3,84.76)}, rotate = 269.62] [fill={rgb, 255:red, 161; green, 27; blue, 43 }  ,fill opacity=1 ][line width=0.08]  [draw opacity=0] (10.72,-5.15) -- (0,0) -- (10.72,5.15) -- (7.12,0) -- cycle    ;
\draw    (67.36,95.97) -- (87.45,119.74) ;
\draw [shift={(89.38,122.03)}, rotate = 229.79] [fill={rgb, 255:red, 0; green, 0; blue, 0 }  ][line width=0.08]  [draw opacity=0] (10.72,-5.15) -- (0,0) -- (10.72,5.15) -- (7.12,0) -- cycle    ;
\draw    (61.14,95.7) -- (39.33,119.21) ;
\draw [shift={(37.3,121.41)}, rotate = 312.84] [fill={rgb, 255:red, 0; green, 0; blue, 0 }  ][line width=0.08]  [draw opacity=0] (10.72,-5.15) -- (0,0) -- (10.72,5.15) -- (7.12,0) -- cycle    ;
\draw  [fill={rgb, 255:red, 239; green, 43; blue, 67 }  ,fill opacity=1 ] (60.91,43.03) .. controls (60.91,41.21) and (62.3,39.74) .. (64.02,39.74) .. controls (65.74,39.74) and (67.13,41.21) .. (67.13,43.03) .. controls (67.13,44.85) and (65.74,46.32) .. (64.02,46.32) .. controls (62.3,46.32) and (60.91,44.85) .. (60.91,43.03) -- cycle ;
\draw  [color={rgb, 255:red, 0; green, 0; blue, 0 }  ,draw opacity=0 ][fill={rgb, 255:red, 74; green, 144; blue, 226 }  ,fill opacity=0.19 ] (218.2,18.67) -- (182.7,19.52) -- (159.97,124.59) -- (192.79,123.66) -- cycle ;
\draw  [color={rgb, 255:red, 0; green, 0; blue, 0 }  ,draw opacity=0 ][fill={rgb, 255:red, 74; green, 144; blue, 226 }  ,fill opacity=0.19 ] (192.79,123.66) -- (159.97,124.59) -- (155.24,148.14) -- (188.19,148.29) -- cycle ;
\draw  [color={rgb, 255:red, 0; green, 0; blue, 0 }  ,draw opacity=0 ][fill={rgb, 255:red, 74; green, 144; blue, 226 }  ,fill opacity=0.19 ] (274.83,123.71) -- (192.79,123.66) -- (188.19,148.29) -- (269.56,148.53) -- cycle ;
\draw  [color={rgb, 255:red, 0; green, 0; blue, 0 }  ,draw opacity=0 ][fill={rgb, 255:red, 0; green, 124; blue, 248 }  ,fill opacity=0.37 ] (300.24,18.72) -- (218.2,18.67) -- (192.79,123.66) -- (274.83,123.71) -- cycle ;
\draw  [fill={rgb, 255:red, 208; green, 2; blue, 27 }  ,fill opacity=1 ] (136.56,99.96) .. controls (136.55,99.19) and (137.07,98.57) .. (137.71,98.56) .. controls (138.35,98.55) and (138.88,99.17) .. (138.89,99.93) .. controls (138.89,100.69) and (138.38,101.31) .. (137.74,101.32) .. controls (137.09,101.33) and (136.57,100.72) .. (136.56,99.96) -- cycle ;
\draw [color={rgb, 255:red, 128; green, 128; blue, 128 }  ,draw opacity=1 ]   (158.21,123.92) -- (285.03,123.69) ;
\draw [shift={(287.03,123.69)}, rotate = 179.9] [fill={rgb, 255:red, 128; green, 128; blue, 128 }  ,fill opacity=1 ][line width=0.08]  [draw opacity=0] (12,-3) -- (0,0) -- (12,3) -- cycle    ;
\draw [shift={(156.21,123.92)}, rotate = 359.9] [fill={rgb, 255:red, 128; green, 128; blue, 128 }  ,fill opacity=1 ][line width=0.08]  [draw opacity=0] (12,-3) -- (0,0) -- (12,3) -- cycle    ;
\draw [color={rgb, 255:red, 128; green, 128; blue, 128 }  ,draw opacity=1 ]   (192.91,155.88) -- (191.95,22.04) ;
\draw [shift={(191.93,20.04)}, rotate = 89.59] [fill={rgb, 255:red, 128; green, 128; blue, 128 }  ,fill opacity=1 ][line width=0.08]  [draw opacity=0] (12,-3) -- (0,0) -- (12,3) -- cycle    ;
\draw [shift={(192.93,157.88)}, rotate = 269.59] [fill={rgb, 255:red, 128; green, 128; blue, 128 }  ,fill opacity=1 ][line width=0.08]  [draw opacity=0] (12,-3) -- (0,0) -- (12,3) -- cycle    ;
\draw [color={rgb, 255:red, 128; green, 128; blue, 128 }  ,draw opacity=1 ]   (234.53,74.14) -- (172.69,147.75) ;
\draw [shift={(171.4,149.28)}, rotate = 310.04] [fill={rgb, 255:red, 128; green, 128; blue, 128 }  ,fill opacity=1 ][line width=0.08]  [draw opacity=0] (12,-3) -- (0,0) -- (12,3) -- cycle    ;
\draw [shift={(235.82,72.61)}, rotate = 130.04] [fill={rgb, 255:red, 128; green, 128; blue, 128 }  ,fill opacity=1 ][line width=0.08]  [draw opacity=0] (12,-3) -- (0,0) -- (12,3) -- cycle    ;
\draw [color={rgb, 255:red, 8; green, 39; blue, 164 }  ,draw opacity=1 ][line width=0.75]  [dash pattern={on 4.5pt off 4.5pt}]  (138.9,96.28) .. controls (163.55,52.26) and (185.91,122.08) .. (217.23,75.45) ;
\draw [shift={(218.67,73.24)}, rotate = 122.18] [fill={rgb, 255:red, 8; green, 39; blue, 164 }  ,fill opacity=1 ][line width=0.08]  [draw opacity=0] (10.72,-5.15) -- (0,0) -- (10.72,5.15) -- (7.12,0) -- cycle    ;
\draw  [fill={rgb, 255:red, 208; green, 2; blue, 27 }  ,fill opacity=1 ] (218.58,71.26) .. controls (218.57,70.5) and (219.09,69.88) .. (219.73,69.87) .. controls (220.37,69.86) and (220.9,70.47) .. (220.91,71.23) .. controls (220.91,72) and (220.4,72.62) .. (219.76,72.63) .. controls (219.11,72.64) and (218.59,72.02) .. (218.58,71.26) -- cycle ;
\draw  [color={rgb, 255:red, 0; green, 0; blue, 0 }  ,draw opacity=0.27 ][fill={rgb, 255:red, 184; green, 233; blue, 134 }  ,fill opacity=0.29 ][dash pattern={on 0.75pt off 0.75pt on 0.75pt off 1.5pt}][line width=0.75]  (191.97,48.05) -- (221.06,18.66) -- (299.7,18.71) -- (300.52,94.32) -- (271.43,123.7) -- (192.79,123.65) -- cycle ; \draw  [color={rgb, 255:red, 0; green, 0; blue, 0 }  ,draw opacity=0.27 ][dash pattern={on 0.75pt off 0.75pt on 0.75pt off 1.5pt}][line width=0.75]  (299.7,18.71) -- (270.61,48.1) -- (191.97,48.05) ; \draw  [color={rgb, 255:red, 0; green, 0; blue, 0 }  ,draw opacity=0.27 ][dash pattern={on 0.75pt off 0.75pt on 0.75pt off 1.5pt}][line width=0.75]  (270.61,48.1) -- (271.43,123.7) ;

\draw (58.22,18.93) node [anchor=north west][inner sep=0.75pt]  [font=\normalsize]  {$u$};
\draw (74.6,72.98) node [anchor=north west][inner sep=0.75pt]  [color={rgb, 255:red, 0; green, 0; blue, 0 }  ,opacity=1 ,rotate=-359.39]  {$x_{1}( t)$};
\draw (19.38,130.29) node [anchor=north west][inner sep=0.75pt]  [color={rgb, 255:red, 0; green, 0; blue, 0 }  ,opacity=1 ,rotate=-359.39]  {$x_{2}( t)$};
\draw (78.21,130.28) node [anchor=north west][inner sep=0.75pt]  [color={rgb, 255:red, 0; green, 0; blue, 0 }  ,opacity=1 ,rotate=-359.39]  {$x_{3}( t)$};
\draw (121.4,104.17) node [anchor=north west][inner sep=0.75pt]  [font=\footnotesize,color={rgb, 255:red, 22; green, 125; blue, 223 }  ,opacity=1 ,rotate=-359.39]  {$x( t_{0})$};
\draw (176,2.81) node [anchor=north west][inner sep=0.75pt]  [color={rgb, 255:red, 0; green, 0; blue, 0 }  ,opacity=1 ,rotate=-359.39]  {$x_{1}( t)$};
\draw (292.27,112.93) node [anchor=north west][inner sep=0.75pt]  [color={rgb, 255:red, 0; green, 0; blue, 0 }  ,opacity=1 ,rotate=-359.39]  {$x_{2}( t)$};
\draw (231.79,53.17) node [anchor=north west][inner sep=0.75pt]  [color={rgb, 255:red, 0; green, 0; blue, 0 }  ,opacity=1 ,rotate=-359.39]  {$x_{3}( t)$};
\draw (194.01,51.42) node [anchor=north west][inner sep=0.75pt]  [font=\footnotesize,color={rgb, 255:red, 34; green, 83; blue, 175 }  ,opacity=1 ,rotate=-359.39]  {$x( t_{f})$};
\draw (36.43,91.44) node [anchor=north west][inner sep=0.75pt]    {$+$};
\draw (78.82,91.44) node [anchor=north west][inner sep=0.75pt]    {$+$};

\end{tikzpicture}
    \caption{An uncontrollable graph can still be herdable.}
    \label{subspace}
\end{figure}
 This is possible if a subspace spanned by the columns of the controllability matrix extends to the positive orthant of the state space. Hence, herdability is considered a set-based reachability, which is a more relaxed version of controllability.
   
\subsection{Dilation and Signed Dilation}
\subsubsection{Dilation \cite{lin1974structural}}\label{dil para} A graph becomes structurally uncontrollable if the graph contains a dilation or an inaccessible node, resulting in form I or form II representations, as demonstrated in \cite{lin1974structural}. It is evident that the system cannot be fully controlled when an inaccessible node is involved, as shown in Fig.(\ref{fig:inaccess}). When a graph contains dilation, as shown in Fig. (\ref{fig: dilation}), the leader node cannot independently govern two or more nodes. 
\subsubsection{Signed Dilation \cite{ruf2018herdable}}\label{SD para}
Similarly, a graph is said to have a signed dilation if it consists of a node whose outgoing edges have different signs, as shown in Fig.(\ref{fig: signed dilation}). With a signed dilation, the leader node cannot drive the followers in the same direction simultaneosly, thereby affecting the herdability of the system.
\begin{figure}[ht]
  \centering
   \begin{subfigure}{0.17\textwidth}
  \centering
\tikzset{every picture/.style={scale=0.75pt}} 

\begin{tikzpicture}[x=0.75pt,y=0.75pt,yscale=-1,xscale=1]

\draw    (132.56,65.4) -- (117.02,89.62) ;
\draw [shift={(115.39,92.14)}, rotate = 302.7] [fill={rgb, 255:red, 0; green, 0; blue, 0 }  ][line width=0.08]  [draw opacity=0] (10.72,-5.15) -- (0,0) -- (10.72,5.15) -- (7.12,0) -- cycle    ;
\draw  [draw opacity=0][fill={rgb, 255:red, 126; green, 211; blue, 33 }  ,fill opacity=1 ] (131.6,60.25) .. controls (131.6,58.13) and (133.5,56.41) .. (135.85,56.41) .. controls (138.19,56.41) and (140.1,58.13) .. (140.1,60.25) .. controls (140.1,62.36) and (138.19,64.08) .. (135.85,64.08) .. controls (133.5,64.08) and (131.6,62.36) .. (131.6,60.25) -- cycle ;
\draw  [draw opacity=0][fill={rgb, 255:red, 0; green, 192; blue, 248 }  ,fill opacity=1 ] (155.93,99.71) .. controls (155.93,97.38) and (158.03,95.48) .. (160.62,95.48) .. controls (163.21,95.48) and (165.31,97.38) .. (165.31,99.71) .. controls (165.31,102.04) and (163.21,103.93) .. (160.62,103.93) .. controls (158.03,103.93) and (155.93,102.04) .. (155.93,99.71) -- cycle ;
\draw  [draw opacity=0][fill={rgb, 255:red, 0; green, 192; blue, 248 }  ,fill opacity=1 ] (106.56,99.43) .. controls (106.56,97.1) and (108.65,95.21) .. (111.24,95.21) .. controls (113.83,95.21) and (115.93,97.1) .. (115.93,99.43) .. controls (115.93,101.77) and (113.83,103.66) .. (111.24,103.66) .. controls (108.65,103.66) and (106.56,101.77) .. (106.56,99.43) -- cycle ;

\draw (130.61,38.38) node [anchor=north west][inner sep=0.75pt]  [color={rgb, 255:red, 0; green, 0; blue, 0 }  ,opacity=1 ,rotate=-359.39]  {$1$};
\draw (156.02,106.46) node [anchor=north west][inner sep=0.75pt]  [color={rgb, 255:red, 0; green, 0; blue, 0 }  ,opacity=1 ,rotate=-359.39]  {$3$};
\draw (105.05,106.44) node [anchor=north west][inner sep=0.75pt]  [color={rgb, 255:red, 0; green, 0; blue, 0 }  ,opacity=1 ,rotate=-359.39]  {$2$};

\end{tikzpicture}
    \caption{Inaccessible node}
    \label{fig:inaccess}
  \end{subfigure}
  \begin{subfigure}{0.137\textwidth}
    \centering
\tikzset{every picture/.style={scale=0.75pt}} 

\begin{tikzpicture}[x=0.75pt,y=0.75pt,yscale=-1,xscale=1]

\draw    (139.29,65.81) -- (154.32,90.09) ;
\draw [shift={(155.89,92.64)}, rotate = 238.24] [fill={rgb, 255:red, 0; green, 0; blue, 0 }  ][line width=0.08]  [draw opacity=0] (10.72,-5.15) -- (0,0) -- (10.72,5.15) -- (7.12,0) -- cycle    ;
\draw    (132.56,65.4) -- (117.02,89.62) ;
\draw [shift={(115.39,92.14)}, rotate = 302.7] [fill={rgb, 255:red, 0; green, 0; blue, 0 }  ][line width=0.08]  [draw opacity=0] (10.72,-5.15) -- (0,0) -- (10.72,5.15) -- (7.12,0) -- cycle    ;
\draw  [draw opacity=0][fill={rgb, 255:red, 126; green, 211; blue, 33 }  ,fill opacity=1 ] (131.6,60.25) .. controls (131.6,58.13) and (133.5,56.41) .. (135.85,56.41) .. controls (138.19,56.41) and (140.1,58.13) .. (140.1,60.25) .. controls (140.1,62.36) and (138.19,64.08) .. (135.85,64.08) .. controls (133.5,64.08) and (131.6,62.36) .. (131.6,60.25) -- cycle ;
\draw  [draw opacity=0][fill={rgb, 255:red, 0; green, 192; blue, 248 }  ,fill opacity=1 ] (155.93,99.71) .. controls (155.93,97.38) and (158.03,95.48) .. (160.62,95.48) .. controls (163.21,95.48) and (165.31,97.38) .. (165.31,99.71) .. controls (165.31,102.04) and (163.21,103.93) .. (160.62,103.93) .. controls (158.03,103.93) and (155.93,102.04) .. (155.93,99.71) -- cycle ;
\draw  [draw opacity=0][fill={rgb, 255:red, 0; green, 192; blue, 248 }  ,fill opacity=1 ] (106.56,99.43) .. controls (106.56,97.1) and (108.65,95.21) .. (111.24,95.21) .. controls (113.83,95.21) and (115.93,97.1) .. (115.93,99.43) .. controls (115.93,101.77) and (113.83,103.66) .. (111.24,103.66) .. controls (108.65,103.66) and (106.56,101.77) .. (106.56,99.43) -- cycle ;

\draw (130.61,38.38) node [anchor=north west][inner sep=0.75pt]  [color={rgb, 255:red, 0; green, 0; blue, 0 }  ,opacity=1 ,rotate=-359.39]  {$1$};
\draw (156.02,106.46) node [anchor=north west][inner sep=0.75pt]  [color={rgb, 255:red, 0; green, 0; blue, 0 }  ,opacity=1 ,rotate=-359.39]  {$3$};
\draw (105.05,106.44) node [anchor=north west][inner sep=0.75pt]  [color={rgb, 255:red, 0; green, 0; blue, 0 }  ,opacity=1 ,rotate=-359.39]  {$2$};

\end{tikzpicture}
    \caption{Dilation}
    \label{fig: dilation}
  \end{subfigure}
  \begin{subfigure}{0.16\textwidth}
    \centering 

\tikzset{every picture/.style={scale=0.75pt}} 

\begin{tikzpicture}[x=0.75pt,y=0.75pt,yscale=-1,xscale=1]

\draw    (139.29,65.81) -- (154.32,90.09) ;
\draw [shift={(155.89,92.64)}, rotate = 238.24] [fill={rgb, 255:red, 0; green, 0; blue, 0 }  ][line width=0.08]  [draw opacity=0] (10.72,-5.15) -- (0,0) -- (10.72,5.15) -- (7.12,0) -- cycle    ;
\draw    (132.56,65.4) -- (117.02,89.62) ;
\draw [shift={(115.39,92.14)}, rotate = 302.7] [fill={rgb, 255:red, 0; green, 0; blue, 0 }  ][line width=0.08]  [draw opacity=0] (10.72,-5.15) -- (0,0) -- (10.72,5.15) -- (7.12,0) -- cycle    ;
\draw  [draw opacity=0][fill={rgb, 255:red, 126; green, 211; blue, 33 }  ,fill opacity=1 ] (131.6,60.25) .. controls (131.6,58.13) and (133.5,56.41) .. (135.85,56.41) .. controls (138.19,56.41) and (140.1,58.13) .. (140.1,60.25) .. controls (140.1,62.36) and (138.19,64.08) .. (135.85,64.08) .. controls (133.5,64.08) and (131.6,62.36) .. (131.6,60.25) -- cycle ;
\draw  [draw opacity=0][fill={rgb, 255:red, 0; green, 192; blue, 248 }  ,fill opacity=1 ] (155.93,99.71) .. controls (155.93,97.38) and (158.03,95.48) .. (160.62,95.48) .. controls (163.21,95.48) and (165.31,97.38) .. (165.31,99.71) .. controls (165.31,102.04) and (163.21,103.93) .. (160.62,103.93) .. controls (158.03,103.93) and (155.93,102.04) .. (155.93,99.71) -- cycle ;
\draw  [draw opacity=0][fill={rgb, 255:red, 0; green, 192; blue, 248 }  ,fill opacity=1 ] (106.56,99.43) .. controls (106.56,97.1) and (108.65,95.21) .. (111.24,95.21) .. controls (113.83,95.21) and (115.93,97.1) .. (115.93,99.43) .. controls (115.93,101.77) and (113.83,103.66) .. (111.24,103.66) .. controls (108.65,103.66) and (106.56,101.77) .. (106.56,99.43) -- cycle ;

\draw (130.61,38.38) node [anchor=north west][inner sep=0.75pt]  [color={rgb, 255:red, 0; green, 0; blue, 0 }  ,opacity=1 ,rotate=-359.39]  {$1$};
\draw (156.02,106.46) node [anchor=north west][inner sep=0.75pt]  [color={rgb, 255:red, 0; green, 0; blue, 0 }  ,opacity=1 ,rotate=-359.39]  {$3$};
\draw (105.05,106.44) node [anchor=north west][inner sep=0.75pt]  [color={rgb, 255:red, 0; green, 0; blue, 0 }  ,opacity=1 ,rotate=-359.39]  {$2$};
\draw (106.5,65.4) node [anchor=north west][inner sep=0.75pt]    {$-$};
\draw (152.5,65.4) node [anchor=north west][inner sep=0.75pt]    {$+$};

\end{tikzpicture}
    \caption{Signed dilation}
    \label{fig: signed dilation}
  \end{subfigure}
  \caption{Types of dilation}
\end{figure}
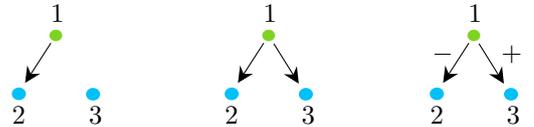
\begin{remark} \label{Remark1}
    Let $(\mathcal{A}, \mathcal{B})$ denote the system matrices corresponding to~\eqref{sys1:sys1eqn}. Assume that node 1 is the leader node that receives external input $u(t)$. Hence $\mathcal{B}\in \mathbb{R}^n$. Then, the controllability matrix of the pair \( (A, B) \) is given by
    \begin{equation} \label{Control_mat}
\mathcal{C}(\mathcal{A,B}) = \begin{bmatrix}
~~\mathcal{B} ~|~\Psi_1~|~\Psi_2~|....~|\Psi_{n-1}
\end{bmatrix}
\end{equation}
\normalfont
where  \( [\Psi_k] =[\mathcal{A}^k\mathcal{B}]\), $k\in \{1,2, \dots n-1]$ and \(\mathcal{B} = \begin{bmatrix} b_1 & 0 & 0 & 0 & \cdots \end{bmatrix}^\top \in \mathbb{R}^n\). Here, $b_ 1$ is the strength of the input signal received by the leader node. Without loss of generality, the sign of $b_1$ is assumed to be positive for the remainder of this study.
 If the $j^{th}$ entry of $[\Psi_k]$, $[\Psi_k]_j$ is not equal to $0$, then there exists at least one walk from the leader to node $j$ of length $k$. 
\end{remark}
\begin{proposition}\cite{ruf2019herdability}\label{prop 1 follo}
    A follower \( j \) is herdable if there exists at least one \( k \) such that \([A^k B]_j \neq 0\). 
    \end{proposition}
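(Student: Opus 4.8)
The plan is to translate the algebraic hypothesis $[A^k B]_j \neq 0$ into a statement about reachability and then exploit the freedom to choose both the sign and the magnitude of the scalar control input. First I would invoke the standard fact from linear systems theory that the set of states reachable from the origin in finite time coincides with the image of the controllability matrix, $\mathcal{R} = \mathcal{I}m(\mathcal{C}(A,B))$. Since $B, AB, \dots, A^{n-1}B$ are by definition the columns of $\mathcal{C}(A,B)$, and since the Cayley--Hamilton theorem writes every higher power $A^k B$ with $k \geq n$ as a real linear combination of these columns, it follows that $A^k B \in \mathcal{R}$ for every $k \geq 0$. In particular the vector furnished by the hypothesis lies in the reachable subspace.

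Next I would set $v = A^k B$ for the index $k$ guaranteed by the hypothesis, so that $v_j = [A^k B]_j \neq 0$. Because $\mathcal{R}$ is a linear subspace it is closed under scalar multiplication, and in particular under negation; hence both $v$ and $-v$ are reachable. Choosing the sign $s = \mathfrak{S}(v_j) \in \{+1,-1\}$ produces a reachable vector $s v$ whose $j$-th entry equals $|v_j| > 0$. For any $\alpha > 0$ the vector $\alpha s v$ is likewise reachable, so there is an input $u_\alpha(\cdot)$ and a terminal time $t_f > 0$ driving the state from the origin to $x(t_f) = \alpha s v$, whose $j$-th component $\alpha |v_j|$ can be made arbitrarily large by increasing $\alpha$.

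It remains to accommodate an arbitrary initial condition $x(0)$, which I would handle by superposition. By linearity the response to $u_\alpha$ from $x(0)$ satisfies $x(t_f) = e^{A t_f} x(0) + \alpha s v$, so that its $j$-th entry is $[e^{A t_f} x(0)]_j + \alpha |v_j|$. For fixed $t_f$ and $x(0)$ the free-response term $[e^{A t_f} x(0)]_j$ is a finite constant, whereas $\alpha |v_j|$ grows without bound; choosing $\alpha$ sufficiently large therefore forces $x_j(t_f) \geq d$ for any prescribed $d > 0$. Since the $j$-th coordinate can thus be driven above any threshold from every initial condition, follower $j$ is herdable, which is the desired conclusion.

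The step I expect to be the main obstacle is making the reachability argument precise while leaving the remaining coordinates unconstrained. One must argue that $\alpha s v$ is reachable for a common terminal time $t_f$ (for instance by appealing to invertibility of the controllability Gramian on $\mathcal{R}$), and one must emphasize that single-follower herdability imposes no requirement on the other entries of $x(t_f)$; otherwise the coupling introduced through $e^{A t_f}$ and through the off-$j$ coordinates of $\alpha s v$ might appear to obstruct the construction, whereas in fact it is irrelevant to the claim for follower $j$.
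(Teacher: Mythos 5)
Your proof is correct. Note that the paper does not actually prove this proposition: it is imported verbatim from the cited reference \cite{ruf2019herdability} and used as a black box, so there is no in-paper argument to compare against. Your justification is the standard one and is sound: the reachable set at any fixed $t_f>0$ equals $\mathcal{I}m(\mathcal{C}(\mathcal{A},\mathcal{B}))$ (via invertibility of the Gramian on that subspace), Cayley--Hamilton puts $A^kB$ in that image for every $k\ge 0$, and scaling the reachable target $\alpha s v$ lets the controlled contribution to the $j$-th coordinate dominate the fixed free response $[e^{At_f}x(0)]_j$, while single-follower herdability places no constraint on the remaining coordinates. This is consistent with how the surrounding results in the paper (e.g.\ Lemma~1 and Proposition~2) treat herdability through the column space of the controllability matrix.
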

\begin{lemma} \cite{ruf2019herdability} \label{lemma 19}
A state $i$ is herdable if there exists a column $c_j$ of the controllability matrix $\mathcal{C}$ such that  \([\mathcal{C}]_{i,j} \neq 0 \quad \text{and} \quad 
\{k \neq i : [\mathcal{C}]_{k,j} \neq 0\} \subseteq H,\), where $H$ is the set of already herdable states.  
 \end{lemma}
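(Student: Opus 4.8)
The plan is to work with the set-reachability characterization of herdability recalled above: the reachable subspace of~\eqref{sys1:sys1eqn} is exactly $\mathcal{I}m(\mathcal{C})$, and a collection of states is herdable precisely when $\mathcal{I}m(\mathcal{C})$ contains a vector that is nonnegative in every coordinate and strictly positive on those states. Accordingly, I would carry the slightly stronger inductive invariant that the ``already herdable'' set $H$ admits a single witness, that is, a reachable vector $h \in \mathcal{I}m(\mathcal{C})$ with $h \geq 0$ entrywise and $[h]_k > 0$ for every $k \in H$. The base of the induction is the leader itself, whose witness is $\mathcal{B} \geq 0$ with $[\mathcal{B}]_1 = b_1 > 0$. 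The goal is then to upgrade this witness so that it is additionally strictly positive at $i$, which exhibits $H \cup \{i\}$ as herdable and in particular proves that $i$ is herdable.

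First I would bring in the distinguished column. The column $c_j = \mathcal{C}_{(:,j)}$ lies in $\mathcal{I}m(\mathcal{C})$, and since the image is a subspace, so does $-c_j$; replacing $c_j$ by $-c_j$ if necessary, I may assume the pivot entry satisfies $[c_j]_i > 0$. The two hypotheses on $c_j$ translate into a support condition: $c_j$ vanishes at every coordinate outside $\{i\} \cup H$, its value at $i$ is strictly positive, and its values on $H$ are real numbers of unknown sign.

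The core of the argument is a single superposition with a large coefficient. I would set $v = c_j + \beta h$ for a scalar $\beta > 0$ to be fixed, which again lies in $\mathcal{I}m(\mathcal{C})$, and then check nonnegativity coordinate by coordinate in three groups. For $k \notin \{i\} \cup H$ we have $[c_j]_k = 0$, so $[v]_k = \beta [h]_k \geq 0$. For $k \in H$ we have $[h]_k > 0$, so any negative contribution from $[c_j]_k$ is overwhelmed once $\beta > \max_{k \in H} \max\{0,-[c_j]_k\}/[h]_k$, a finite threshold, giving $[v]_k > 0$. The decisive group is the single coordinate $i$: here $[v]_i = [c_j]_i + \beta [h]_i$, and because the invariant forces $h \geq 0$ everywhere we have $[h]_i \geq 0$, so $[v]_i \geq [c_j]_i > 0$ for every $\beta \geq 0$. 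Thus for $\beta$ above the threshold, $v \geq 0$ with $[v]_k > 0$ for all $k \in H \cup \{i\}$, which is exactly the witness certifying herdability of $i$ and re-establishing the invariant for $H \cup \{i\}$.

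The step I expect to be the crux is controlling the sign at $i$ simultaneously with the signs on $H$. Pushing $i$ positive wants the $c_j$-term to dominate, whereas curing the possibly negative entries of $c_j$ on $H$ wants the $h$-term to dominate, and a naive balance of the two coefficients need not admit a feasible range. The resolution, and the reason the stronger invariant is worth carrying, is that demanding $h \geq 0$ on all coordinates, not merely on $H$, makes the $i$-coordinate immune to the size of $\beta$, so that a single large-$\beta$ choice settles all three groups at once. The only remaining bookkeeping is the sign normalization $c_j \mapsto -c_j$ and verifying that the finite maximum defining the threshold is well posed, both of which are routine.
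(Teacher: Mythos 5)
The paper never proves this lemma---it is imported verbatim from \cite{ruf2019herdability}---so there is no in-paper argument to compare against; judged on its own, your proof is correct and is essentially the standard superposition argument behind the original result. Two small remarks. First, your opening characterization (``a collection of states is herdable precisely when $\mathcal{I}m(\mathcal{C})$ contains a vector that is nonnegative in every coordinate and strictly positive on those states'') is false as an equivalence: with $\mathcal{I}m(\mathcal{C})=\operatorname{span}\{(1,-1)^{\top}\}$ state $1$ is herdable yet no nonnegative witness exists. You only ever use the sufficiency direction, and you re-establish the nonnegative witness inductively rather than extracting it from herdability of $H$, so nothing breaks---but the sentence should be weakened to an ``if.'' Second, and relatedly, your argument proves the lemma for sets $H$ built up by repeated application of the lemma starting from the leader (which is exactly how it is used in practice), not for an arbitrary herdable $H$: for arbitrary $H$ the joint conclusion that $H\cup\{i\}$ is herdable can genuinely fail (same two-dimensional example with $c_j=(1,-1)^{\top}$ and $h=(-1,1)^{\top}$), whereas the literal conclusion ``state $i$ is herdable'' is already immediate from $[\mathcal{C}]_{i,j}\neq 0$ via Proposition~\ref{prop 1 follo}. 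Carrying the stronger entrywise-nonnegativity invariant is precisely what makes the large-$\beta$ superposition close at coordinate $i$, and is the right way to make the induction airtight; it would be worth stating that invariant explicitly as the thing being proved.
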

\begin{theorem}\label{gordan thm}(Gordan's Theorem {\cite{gordan1873ueber,dantzig2016linear}})
Let $A \in \mathbb{R}^{m \times n}$. Then exactly one of the following statements holds:
\begin{enumerate}
    \item There exists $\bm{x} \in \mathbb{R}^n$ such that $A \bm{x} \gg 0$.
    \item There exists $\bm{y} \in \mathbb{R}^m$, with $\bm{y} \ge 0$, such that $A^{\top} \bm{y} = 0$.
\end{enumerate}
\end{theorem}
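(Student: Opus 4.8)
The plan is to prove this classical dichotomy by reformulating both alternatives geometrically in terms of the rows of $A$ and then invoking the separating hyperplane theorem. Writing $a_1^\top, \dots, a_m^\top$ for the rows of $A$ (so that each $a_i \in \mathbb{R}^n$), alternative (1) asserts the existence of $\bm{x}$ with $\langle a_i, \bm{x} \rangle > 0$ for every $i$, i.e.\ a single vector forming an acute angle with all the rows. Alternative (2), read with the (implicit) requirement $\bm{y} \neq 0$ that is needed for the dichotomy to be exact, asserts that the origin admits a nontrivial nonnegative representation $\sum_i y_i a_i = 0$ with $\bm{y} \ge 0$, $\bm{y} \neq 0$; equivalently, $0$ lies in the convex hull $C := \mathrm{conv}\{a_1, \dots, a_m\}$.

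First I would dispatch mutual exclusivity, which is the easy half. Suppose both alternatives held simultaneously. Taking the inner product of the relation $\sum_i y_i a_i = 0$ with the vector $\bm{x}$ from (1) yields $0 = \sum_i y_i \langle a_i, \bm{x}\rangle$. But each summand is nonnegative, each $\langle a_i, \bm{x}\rangle$ is strictly positive, and at least one $y_i$ is strictly positive (since $\bm{y} \ge 0$, $\bm{y} \neq 0$), so the right-hand side is strictly positive, a contradiction. Hence at most one alternative can hold.

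Next I would show that at least one alternative always holds, splitting on whether $0 \in C$. The key observation is that $C$ is the convex hull of finitely many points and is therefore compact, hence closed. If $0 \in C$, then by definition $0 = \sum_i \lambda_i a_i$ with $\lambda_i \ge 0$ and $\sum_i \lambda_i = 1$; setting $\bm{y} = \bm{\lambda}$ gives a valid certificate for (2), with $\bm{y} \neq 0$ guaranteed by $\sum_i \lambda_i = 1$. If $0 \notin C$, then since $C$ is closed and convex and $\{0\}$ is compact, the strict separating hyperplane theorem supplies a vector $\bm{x}$ and a scalar $\beta$ with $\langle a, \bm{x}\rangle > \beta > \langle 0, \bm{x}\rangle = 0$ for all $a \in C$; in particular $\langle a_i, \bm{x}\rangle > \beta > 0$ for each $a_i$, which is exactly alternative (1).

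The main obstacle I anticipate is the clean application of strict separation: one must ensure that the hyperplane \emph{strictly} separates the origin from $C$, not merely weakly, so that the resulting inequalities are strict and genuinely yield $A\bm{x} \gg 0$. This is precisely where compactness (equivalently, closedness) of the finitely generated convex hull $C$ is essential, since strict separation of a point from a convex set can fail without such a hypothesis. Once strict separation is justified, the remaining steps are the routine reformulation and the inner-product argument above.
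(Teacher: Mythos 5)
The paper does not prove this statement at all: Gordan's theorem is imported as a classical result, cited to Gordan (1873) and Dantzig, and used as a black box (e.g.\ in Proposition~\ref{test for H}), so there is no in-paper argument to compare yours against. Your proof is the standard and correct separating-hyperplane proof of Gordan's theorem: the exclusivity half via the inner product $0=\langle \bm{y}, A\bm{x}\rangle>0$ is sound, and the existence half correctly splits on whether $0$ lies in the convex hull $C=\mathrm{conv}\{a_1,\dots,a_m\}$ of the rows, using compactness of a finitely generated hull to justify \emph{strict} separation when $0\notin C$. You were also right to flag that the dichotomy as literally stated in the paper is broken by $\bm{y}=0$, which trivially satisfies alternative (2); the theorem requires $\bm{y}\ge 0$, $\bm{y}\neq 0$ (equivalently $\bm{y}\ge 0$ with $\mathbf{1}^\top\bm{y}=1$), and your normalization $\sum_i\lambda_i=1$ in the case $0\in C$ supplies exactly that nontrivial certificate. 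No gaps.
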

\section{Motivation and Problem Statement}\label{Sec:3} 

Consider $\mathit{N}$ agents whose communication is described by a weighted directed signed graph that can be integrated in the form of a state equation (1). $\mathcal{A}$ matrix of the system is chosen to be the adjacency matrix of the associated network. The nodes receiving external input are considered leaders, whereas the other nodes are followers. A graph is said to be input connected if there exists at least one walk to all the followers from the leader. This naturally leads to the question of whether followers can be guided to achieve any desired state through the influence of control input through leader nodes. The sign pattern of the graph is fixed; therefore, the magnitude of the edge weight $a_{ij} \in \mathbb{R}^+$. Although the existing literature uses the sign pattern of the edge weights of a digraph to analyze herdability, the magnitudes of the weights also play a key role, as shown below. 
\begin{eg} \label{two path example}
\normalfont

Consider a digraph with a node $v$ such that there exist at least two distinct paths ${w}^{1}_{(1,v)}~ ,~{w}^{2}_{(1,v)}$ from the leader of the same length $k$. The controllability matrix associated with the digraph shown in Fig.\ref{fig:ss digraph}, is  
{
\small
\begin{equation}
        \mathcal{C}_1= \begin{bmatrix}
          b_1& 0 & 0 & 0 & 0&~0 \\
    0 & b_1a_{21} & 0 & 0 & 0 &~0\\
    0 & 0 & b_1a_{21}a_{32} & 0 & 0&~0 \\
    0 & 0 & b_1a_{21}a_{42} & 0 & 0 &~0\\
    0 & 0 & 0 & b_1a_{21}(a_{42}a_{54}-a_{32}a_{53}) & 0&~0\\
    0 & 0 & 0& -b_1a_{21}a_{42}a_{64} & 0 & ~0
    \end{bmatrix}
    \end{equation} 
    }
    
where each $[\Psi_k]_j~\neq0$, as defined in \eqref{Control_mat}, is the sum of the product weights of paths from leader to follower $j$ with length $k$ equivalently $\{\mathcal{W}^1_{(1,k)}\}+\{\mathcal{W}^2_{(1,k)}\}$.  
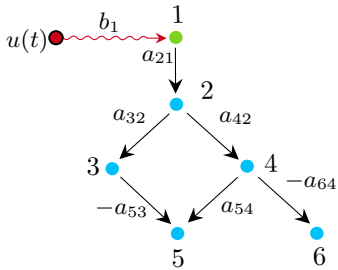
\begin{figure}[ht]
  \centering
 
\tikzset{every picture/.style={scale=0.7pt}} 
       
\begin{tikzpicture}[x=0.75pt,y=0.75pt,yscale=-1,xscale=1]

\draw [color={rgb, 255:red, 0; green, 0; blue, 0 }  ,draw opacity=1 ]   (370.24,102.75) -- (370.38,132.76) ;
\draw [shift={(370.39,135.76)}, rotate = 269.75] [fill={rgb, 255:red, 0; green, 0; blue, 0 }  ,fill opacity=1 ][line width=0.08]  [draw opacity=0] (10.72,-5.15) -- (0,0) -- (10.72,5.15) -- (7.12,0) -- cycle    ;
\draw [color={rgb, 255:red, 0; green, 0; blue, 0 }  ,draw opacity=1 ]   (378.55,149.76) -- (413.28,181.21) ;
\draw [shift={(415.5,183.22)}, rotate = 222.16] [fill={rgb, 255:red, 0; green, 0; blue, 0 }  ,fill opacity=1 ][line width=0.08]  [draw opacity=0] (10.72,-5.15) -- (0,0) -- (10.72,5.15) -- (7.12,0) -- cycle    ;
\draw [color={rgb, 255:red, 0; green, 0; blue, 0 }  ,draw opacity=1 ]   (365.97,150.17) -- (332.51,180.77) ;
\draw [shift={(330.29,182.8)}, rotate = 317.56] [fill={rgb, 255:red, 0; green, 0; blue, 0 }  ,fill opacity=1 ][line width=0.08]  [draw opacity=0] (10.72,-5.15) -- (0,0) -- (10.72,5.15) -- (7.12,0) -- cycle    ;
\draw [color={rgb, 255:red, 0; green, 0; blue, 0 }  ,draw opacity=1 ]   (331.08,195.6) -- (365.08,227.46) ;
\draw [shift={(367.27,229.51)}, rotate = 223.14] [fill={rgb, 255:red, 0; green, 0; blue, 0 }  ,fill opacity=1 ][line width=0.08]  [draw opacity=0] (10.72,-5.15) -- (0,0) -- (10.72,5.15) -- (7.12,0) -- cycle    ;
\draw  [draw opacity=0][fill={rgb, 255:red, 0; green, 192; blue, 248 }  ,fill opacity=1 ] (325.09,184.98) .. controls (327.73,184.93) and (329.92,186.99) .. (329.98,189.59) .. controls (330.04,192.19) and (327.94,194.35) .. (325.3,194.4) .. controls (322.66,194.45) and (320.47,192.38) .. (320.41,189.78) .. controls (320.35,187.18) and (322.45,185.03) .. (325.09,184.98) -- cycle ;
\draw  [draw opacity=0][fill={rgb, 255:red, 0; green, 192; blue, 248 }  ,fill opacity=1 ] (371.67,231.54) .. controls (374.31,231.48) and (376.5,233.55) .. (376.56,236.15) .. controls (376.62,238.75) and (374.52,240.9) .. (371.88,240.96) .. controls (369.24,241.01) and (367.05,238.94) .. (366.99,236.34) .. controls (366.93,233.74) and (369.03,231.59) .. (371.67,231.54) -- cycle ;
\draw  [draw opacity=0][fill={rgb, 255:red, 0; green, 192; blue, 248 }  ,fill opacity=1 ] (370.89,138.7) .. controls (373.54,138.64) and (375.73,140.71) .. (375.79,143.31) .. controls (375.84,145.91) and (373.75,148.06) .. (371.1,148.12) .. controls (368.46,148.17) and (366.27,146.1) .. (366.21,143.5) .. controls (366.16,140.9) and (368.25,138.75) .. (370.89,138.7) -- cycle ;
\draw  [draw opacity=0][fill={rgb, 255:red, 0; green, 192; blue, 248 }  ,fill opacity=1 ] (421.31,183.18) .. controls (423.95,183.13) and (426.14,185.2) .. (426.2,187.8) .. controls (426.26,190.4) and (424.16,192.55) .. (421.52,192.6) .. controls (418.88,192.66) and (416.69,190.59) .. (416.63,187.99) .. controls (416.57,185.39) and (418.67,183.24) .. (421.31,183.18) -- cycle ;
\draw  [draw opacity=0][fill={rgb, 255:red, 0; green, 192; blue, 248 }  ,fill opacity=1 ] (471.02,231.39) .. controls (473.66,231.34) and (475.85,233.4) .. (475.91,236) .. controls (475.97,238.61) and (473.87,240.76) .. (471.23,240.81) .. controls (468.59,240.86) and (466.4,238.8) .. (466.34,236.19) .. controls (466.28,233.59) and (468.38,231.44) .. (471.02,231.39) -- cycle ;
\draw  [draw opacity=0][fill={rgb, 255:red, 126; green, 211; blue, 33 }  ,fill opacity=1 ] (370.49,90.67) .. controls (373.13,90.61) and (375.32,92.68) .. (375.38,95.28) .. controls (375.44,97.88) and (373.34,100.03) .. (370.7,100.09) .. controls (368.05,100.14) and (365.86,98.07) .. (365.81,95.47) .. controls (365.75,92.87) and (367.84,90.72) .. (370.49,90.67) -- cycle ;
\draw [color={rgb, 255:red, 0; green, 0; blue, 0 }  ,draw opacity=1 ]   (414.77,195.77) -- (381.31,226.37) ;
\draw [shift={(379.09,228.4)}, rotate = 317.56] [fill={rgb, 255:red, 0; green, 0; blue, 0 }  ,fill opacity=1 ][line width=0.08]  [draw opacity=0] (10.72,-5.15) -- (0,0) -- (10.72,5.15) -- (7.12,0) -- cycle    ;
\draw [color={rgb, 255:red, 0; green, 0; blue, 0 }  ,draw opacity=1 ]   (429.35,195.76) -- (464.08,227.21) ;
\draw [shift={(466.3,229.22)}, rotate = 222.16] [fill={rgb, 255:red, 0; green, 0; blue, 0 }  ,fill opacity=1 ][line width=0.08]  [draw opacity=0] (10.72,-5.15) -- (0,0) -- (10.72,5.15) -- (7.12,0) -- cycle    ;
\draw [color={rgb, 255:red, 208; green, 2; blue, 27 }  ,draw opacity=1 ]   (285.11,95.58) .. controls (286.78,93.92) and (288.45,93.93) .. (290.11,95.6) .. controls (291.77,97.27) and (293.44,97.28) .. (295.11,95.62) .. controls (296.78,93.97) and (298.45,93.98) .. (300.11,95.65) .. controls (301.77,97.32) and (303.44,97.33) .. (305.11,95.67) .. controls (306.78,94.01) and (308.45,94.02) .. (310.11,95.69) .. controls (311.77,97.36) and (313.44,97.37) .. (315.11,95.71) .. controls (316.78,94.06) and (318.45,94.07) .. (320.11,95.74) .. controls (321.77,97.41) and (323.44,97.42) .. (325.11,95.76) .. controls (326.78,94.1) and (328.45,94.11) .. (330.11,95.78) .. controls (331.77,97.45) and (333.44,97.46) .. (335.11,95.8) .. controls (336.78,94.15) and (338.45,94.16) .. (340.11,95.83) .. controls (341.77,97.5) and (343.44,97.51) .. (345.11,95.85) .. controls (346.78,94.19) and (348.45,94.2) .. (350.11,95.87) -- (353.07,95.88) -- (361.07,95.92) ;
\draw [shift={(364.07,95.93)}, rotate = 180.26] [fill={rgb, 255:red, 208; green, 2; blue, 27 }  ,fill opacity=1 ][line width=0.08]  [draw opacity=0] (8.04,-3.86) -- (0,0) -- (8.04,3.86) -- (5.34,0) -- cycle    ;
\draw  [color={rgb, 255:red, 0; green, 0; blue, 0 }  ,draw opacity=1 ][fill={rgb, 255:red, 208; green, 2; blue, 27 }  ,fill opacity=1 ][line width=0.75]  (285,90.87) .. controls (287.64,90.82) and (289.83,92.88) .. (289.89,95.48) .. controls (289.95,98.08) and (287.85,100.24) .. (285.21,100.29) .. controls (282.57,100.34) and (280.38,98.27) .. (280.32,95.67) .. controls (280.26,93.07) and (282.36,90.92) .. (285,90.87) -- cycle ;

\draw (386.19,125.68) node [anchor=north west][inner sep=0.75pt]  [color={rgb, 255:red, 0; green, 0; blue, 0 }  ,opacity=1 ,rotate=-358.27]  {$2$};
\draw (304.81,180.49) node [anchor=north west][inner sep=0.75pt]  [color={rgb, 255:red, 0; green, 0; blue, 0 }  ,opacity=1 ,rotate=-358.27]  {$3$};
\draw (466.2,244.93) node [anchor=north west][inner sep=0.75pt]  [color={rgb, 255:red, 0; green, 0; blue, 0 }  ,opacity=1 ,rotate=-358.38]  {$6$};
\draw (365.5,246.19) node [anchor=north west][inner sep=0.75pt]  [color={rgb, 255:red, 0; green, 0; blue, 0 }  ,opacity=1 ,rotate=-358.27]  {$5$};
\draw (431.67,177.75) node [anchor=north west][inner sep=0.75pt]  [color={rgb, 255:red, 0; green, 0; blue, 0 }  ,opacity=1 ,rotate=-358.27]  {$4$};
\draw (365.3,70.65) node [anchor=north west][inner sep=0.75pt]  [color={rgb, 255:red, 0; green, 0; blue, 0 }  ,opacity=1 ,rotate=-358.27]  {$1$};
\draw (344.4,102.9) node [anchor=north west][inner sep=0.75pt]  [font=\small]  {$a_{21}$};
\draw (323.8,145.5) node [anchor=north west][inner sep=0.75pt]  [font=\small]  {$a_{32}$};
\draw (399.6,146.7) node [anchor=north west][inner sep=0.75pt]  [font=\small]  {$a_{42}$};
\draw (311.2,210.3) node [anchor=north west][inner sep=0.75pt]  [font=\small]  {$-a_{53}$};
\draw (400.93,211.49) node [anchor=north west][inner sep=0.75pt]  [font=\small]  {$a_{54}$};
\draw (446.4,190.7) node [anchor=north west][inner sep=0.75pt]  [font=\small]  {$-a_{64}$};
\draw (247.83,87.4) node [anchor=north west][inner sep=0.75pt]  [font=\small]  {$u( t)$};
\draw (313.83,75.7) node [anchor=north west][inner sep=0.75pt]  [font=\small,color={rgb, 255:red, 0; green, 0; blue, 0 }  ,opacity=1 ]  {$b_{1}$};

\end{tikzpicture}
    \caption{An $\mathcal{SS}$ herdable digraph}
    \label{fig:ss digraph}
\end{figure}
The linear combination of columns of $\mathcal{C}_1$ i.e. $\mathcal{B}\delta_1 + \Psi_1\delta_2 + \Psi_2\delta_3 + \Psi_3\delta_4$ 
can generate a positive vector $\bm{v}$ only if $a_{42}a_{54} < a_{32}a_{53}$. Hence, it is herdable in a structural sense, that is, only for some magnitudes of the edge weights.
\end{eg}

In Example \ref{two path example}, the dependence on the magnitudes of the edge weights arises because of the existence of multiple paths from node 1 to node 5. This highlights that the study of sign patterns is sufficient in the case of directed trees, where only unique paths exist between any two nodes. In general, the presence of cycles and multiple paths between nodes poses challenges in herdability analysis. This motivates the need to explore the following points.
\begin{itemize}
     \item The existing literature \cite{ruf2018herdable}, \cite{ruf2019herdability} often focuses on the sign pattern of digraphs for herdability analysis. However, as shown in Example \ref{two path example}, the edge weights also affect herdability. Hence, our primary objective is to develop graph-theoretical conditions for structural-sign herdability.
     \item Further, there are limited works on the herdability of cyclic digraphs. This is because of the existence of walks of arbitrarily long lengths, which increases the computational complexity of the existing algorithms. Similarly, only a limited number of studies exist on the herdability of DAGs (which are not directed trees). We are also interested in developing a generalized approach to check herdability in a structural sense for any arbitrary digraph.
\end{itemize}   
\subsection{Test for herdability}
To the authors' best understanding, there is no existing test similar to the Kalman controllability criterion for herdability. Adopting Gordan's theorem of alternatives given in Theorem~\ref{gordan thm}, we formulate the following test for the herdability of a linear system described by \eqref{sys1:sys1eqn}.
\begin{proposition} \label{test for H}
A system \eqref{sys1:sys1eqn} is herdable if and only if there exists no $\bm{y}\geq0;~\bm{y}~\in~\mathbb{R}^n$ such that $\mathcal{C}(\mathcal{A},\mathcal{B})^{\top} \bm{y}=0$, where $\mathcal{C}(\mathcal{A},\mathcal{B})$ is the controllability matrix associated with the pair $(\mathcal{A},\mathcal{B})$. In other words, there is no nonnegative $\bm{y}$ such that $\bm{y}~\in~\textit{Null}~(\mathcal{C}(\mathcal{A},\mathcal{B})^{\top})$.
\end{proposition}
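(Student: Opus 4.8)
The plan is to reduce herdability to the solvability of a strict linear inequality over the image of the controllability matrix, and then to invoke Gordan's theorem of alternatives (Theorem~\ref{gordan thm}) applied to $A=\mathcal{C}(\mathcal{A},\mathcal{B})$.

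First I would write the state response of \eqref{sys1:sys1eqn},
\[
x(t_f)=e^{\mathcal{A}t_f}x(0)+\int_0^{t_f}e^{\mathcal{A}(t_f-\tau)}\mathcal{B}\,u(\tau)\,d\tau,
\]
and use the standard fact that, for any fixed $t_f>0$, the set of states reachable from a given $x(0)$ is the affine subspace $e^{\mathcal{A}t_f}x(0)+\mathcal{I}m(\mathcal{C}(\mathcal{A},\mathcal{B}))$; here the reachable subspace coincides with $\mathcal{I}m(\mathcal{C}(\mathcal{A},\mathcal{B}))$ by the Cayley--Hamilton theorem, which is precisely why only the finitely many powers $\mathcal{A}^k\mathcal{B}$ in \eqref{Control_mat} are needed.

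The crux is then to show that the system is herdable if and only if $\mathcal{I}m(\mathcal{C}(\mathcal{A},\mathcal{B}))$ contains a strictly positive vector, i.e.\ there exists $\bm{x}$ with $\mathcal{C}(\mathcal{A},\mathcal{B})\,\bm{x}\gg 0$. For sufficiency, given such $\bm{v}:=\mathcal{C}(\mathcal{A},\mathcal{B})\bm{x}\gg0$ and an arbitrary $x(0)$ and $d>0$, I would set the offset $c:=e^{\mathcal{A}t_f}x(0)$ and observe that $c+\alpha\bm{v}\geq d\,\mathbf{1}_n$ componentwise for all sufficiently large $\alpha>0$ (each entry of $\bm{v}$ being strictly positive), so a suitable input steers $x(t_f)$ to at least $d\,\mathbf{1}_n$. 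For necessity, I would choose $x(0)=0$ so that the offset vanishes; herdability then forces some reachable vector to exceed $d\,\mathbf{1}_n\gg0$, and this vector lies in $\mathcal{I}m(\mathcal{C}(\mathcal{A},\mathcal{B}))$, yielding the required strictly positive element.

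Finally I would apply Theorem~\ref{gordan thm} with $A=\mathcal{C}(\mathcal{A},\mathcal{B})$: its first alternative, the existence of $\bm{x}$ with $\mathcal{C}(\mathcal{A},\mathcal{B})\bm{x}\gg0$, holds precisely when its second alternative fails, i.e.\ when there is no nonzero $\bm{y}\geq0$ with $\mathcal{C}(\mathcal{A},\mathcal{B})^{\top}\bm{y}=0$. Chaining this with the equivalence of the previous paragraph gives the claim. I expect the main obstacle to be the careful justification of the reachable-set characterization and the fact that the offset $e^{\mathcal{A}t_f}x(0)$ can be absorbed uniformly over all $x(0)$ and all $d>0$; a minor point is that Gordan's second alternative must be read with $\bm{y}\neq0$ (otherwise it is vacuously satisfiable), so the statement's ``no nonnegative $\bm{y}$ in $\textit{Null}(\mathcal{C}(\mathcal{A},\mathcal{B})^{\top})$'' should be understood as excluding only \emph{nonzero} such $\bm{y}$.
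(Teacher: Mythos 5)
Your proof is correct and follows essentially the same route as the paper: both reduce the claim to Gordan's theorem applied to $\mathcal{C}(\mathcal{A},\mathcal{B})$, with the paper's own proof being a two-line citation of Theorem~\ref{gordan thm} that leans on the equivalence between herdability and the existence of a strictly positive vector in $\mathcal{I}m(\mathcal{C}(\mathcal{A},\mathcal{B}))$ already asserted in Section~\ref{herd_para}, whereas you additionally justify that equivalence via the variation-of-constants formula and the absorption of the offset $e^{\mathcal{A}t_f}x(0)$. Your closing caveat that Gordan's second alternative must be read with $\bm{y}\neq 0$ is a fair and correct observation --- as literally written both Theorem~\ref{gordan thm} and the proposition are vacuous at $\bm{y}=0$ --- and the paper implicitly adopts the same reading.
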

The proof follows immediately from Gordan’s Theorem. If there exists no nonnegative $\bm{y}$ such that $\bm{y}~\in~\textit{Null}~(\mathcal{C}(\mathcal{A},\mathcal{B})^{\top})$ then according to Theorem \ref{gordan thm}, the statement (1) holds i.e. there exists $\bm{x} \in \mathbb{R}^n$ such that $\mathcal{C}(\mathcal{A},\mathcal{B})\cdot \bm{x} \gg0$. This implies the system \ref{sys1:sys1eqn} is herdable.
\section{Structural Sign Herdability in Arbitrary Networks}   
\label{Sec:4}
In this section, we present tools that aid in the analysis of herdability in a structural sense for signed digraphs. To formally explain the concept of herdability in a structural sense, we introduce the following definitions:
\begin{definition}
 System (1) is said to be structurally sign $(\mathcal{SS})$ herdable if there exists a pair $(\Bar{\mathcal{A}}, \Bar{\mathcal{B}})$ with the same structure as $(\mathcal{A,B})$ and $(\Bar{\mathcal{A}}, \Bar{\mathcal{B}})$ herdable for Some different parametric realizations are given by $a_{ij} \in \mathbb{R}^{+}$ and $b_i \in \mathbb{R} \setminus \{0\}$.
\end{definition}

In the case of signed digraphs, there are several challenges in the analysis of $\mathcal{SS}$ herdability, as outlined in Section. \ref{Sec:3}. One of the major challenges is the possibility of the existence of multiple paths between the leader and follower nodes. In such cases, the magnitude of the edge weights of the paths can affect the sign pattern of the controllability matrix, thereby affecting the herdability of the system. To address this issue, we introduce a \textit{sign-structured controllability matrix}, as discussed below.
\subsection{Sign-structured controllability matrix ($\mathfrak{S}(\mathcal{C}(\mathcal{A,B}))$) }\label{SSC-M para}
For the system in \eqref{sys1:sys1eqn}, the sign-structured controllability matrix, $\mathfrak{S}(\mathcal{C}(\mathcal{A,B}))$ or simply $\mathfrak{S}[\mathcal{C}]$ is defined, with each entry given by \eqref{eq:sign col}. It records all possible signs of the nonzero entries of the controllability matrix $\mathcal{C}$, depending on the parametric realizations of $(\mathcal{A},\mathcal{B})$. The $(i,j)^{th}$ entry of $\mathfrak{S}[\mathcal{C}]$ is
  
\vspace{-2mm}
\begin{align}
\mathfrak{S}[\mathcal{C}_{(i,j)}] :=
\begin{cases}
0, & \text{if there is no path from $1$ to $i$}, \\[4pt]
+ (\text{or} -), & \text{if } 
   \operatorname{sign}\!\Big(\sum_r \mathcal{W}^r_{(1,i)} \text{ of length } \\  
      &(j-1)\Big) > 0 \quad(\text{ or }< 0), \\[4pt]  
+/-, & \text{otherwise}.     
\end{cases}
\label{eq:sign col}
\end{align}
where $r \in \mathbb{Z^+}$ and $\mathcal{W}^r_{(1,i)}$ is the product of edge weights of the $r^{th}$ walk from node $1$ to $i$, as defined in Section \ref{Sec:2}.

Let us revisit Example \ref{two path example} in which follower $5$ has two distinct paths of the same length from the leader, with different signs in the path gain. Consequently, the entry $[\Psi_3]_5=b_1a_{21}(a_{42}a_{54}-a_{32}a_{53})$ can take any sign depending on the parameter values. The $\mathfrak{S}[\mathcal{C}]$ matrix corresponding to the digraph is given as follows:
{
\small
\begin{align} 
    \mathfrak{S}(\mathcal{C}_1)= \begin{bmatrix}
    +& 0 & 0 & 0 & 0&0 \\
    0 & + & 0 & 0 & 0 &0\\
    0 & 0 & + & 0 & 0&0 \\
    0 & 0 & + & 0 & 0 &0\\
    0 & 0 & 0 & +/- & 0&0\\
    0 & 0 & 0& + & 0 & 0 
    \end{bmatrix}
\end{align}
}
Note that the sign of the entry associated with follower $5$ is not structurally fixed. Hence, the entry  $\mathfrak{S}([\Psi_3]_5)= +/-$. Thus, the $\mathfrak{S}[\mathcal{C}]$ matrix allows us to highlight the same.

Consider the digraphs $\mathcal{G}_2$ and $\mathcal{G}_3$ shown in Figs. ~\ref{moti eg 1} and \ref{Moti eg2} with their sign-structured controllability matrices $\mathfrak{S}[\mathcal{C}_2]$ and $\mathfrak{S}[\mathcal{C}_3]$. For herdability, we are interested in determining if the columns of $\mathfrak{S}[\mathcal{C}_2]$ and $\mathfrak{S}[\mathcal{C}_3]$ span the positive orthant of $\mathbb{R}^n$. Although such conclusions can be drawn in principle for both matrices, it is relatively easier for $\mathcal{C}_2$ owing to its sparsity. Hence:
\begin{itemize}
    \item[a.] The analysis of herdability is relatively easier for directed trees than for other acyclic digraphs. 
    \item[b.] Similarly, the presence of cycles can make the analysis much more complex than that of acyclic digraphs. 
\end{itemize}

Matrix computations to check ($\mathcal{SS}$) herdability often become tedious for large networks, particularly when they have cycles. Hence, we propose the use of graph-theoretic methods, which are discussed next.
\begin{figure}[ht]
\centering
\begin{minipage}[b]{0.45\columnwidth}   
    \centering
 
\tikzset{every picture/.style={scale=0.75pt}} 

\begin{tikzpicture}[x=0.75pt,y=0.75pt,yscale=-1,xscale=1]

\draw [color={rgb, 255:red, 0; green, 0; blue, 0 }  ,draw opacity=1 ]   (557.76,142.94) -- (582.59,166.17) ;
\draw [shift={(584.78,168.22)}, rotate = 223.09] [fill={rgb, 255:red, 0; green, 0; blue, 0 }  ,fill opacity=1 ][line width=0.08]  [draw opacity=0] (7.14,-3.43) -- (0,0) -- (7.14,3.43) -- (4.74,0) -- cycle    ;
\draw [color={rgb, 255:red, 0; green, 0; blue, 0 }  ,draw opacity=1 ]   (538.93,142.34) -- (514.19,168.06) ;
\draw [shift={(512.11,170.22)}, rotate = 313.89] [fill={rgb, 255:red, 0; green, 0; blue, 0 }  ,fill opacity=1 ][line width=0.08]  [draw opacity=0] (7.14,-3.43) -- (0,0) -- (7.14,3.43) -- (4.74,0) -- cycle    ;
\draw  [draw opacity=0][fill={rgb, 255:red, 0; green, 192; blue, 248 }  ,fill opacity=1 ] (505.79,172.04) .. controls (508.64,171.99) and (510.99,174.16) .. (511.06,176.9) .. controls (511.12,179.64) and (508.86,181.91) .. (506.02,181.96) .. controls (503.17,182.02) and (500.82,179.84) .. (500.75,177.1) .. controls (500.69,174.36) and (502.95,172.1) .. (505.79,172.04) -- cycle ;
\draw  [draw opacity=0][fill={rgb, 255:red, 0; green, 192; blue, 248 }  ,fill opacity=1 ] (590.33,169.08) .. controls (593.18,169.03) and (595.54,171.2) .. (595.6,173.94) .. controls (595.66,176.68) and (593.41,178.95) .. (590.56,179) .. controls (587.72,179.06) and (585.36,176.88) .. (585.3,174.14) .. controls (585.23,171.4) and (587.49,169.14) .. (590.33,169.08) -- cycle ;
\draw  [draw opacity=0][fill={rgb, 255:red, 126; green, 211; blue, 33 }  ,fill opacity=1 ] (548.02,82.71) .. controls (550.86,82.66) and (553.22,84.84) .. (553.28,87.58) .. controls (553.34,90.31) and (551.09,92.58) .. (548.24,92.64) .. controls (545.4,92.69) and (543.04,90.52) .. (542.98,87.78) .. controls (542.92,85.04) and (545.17,82.77) .. (548.02,82.71) -- cycle ;
\draw [color={rgb, 255:red, 208; green, 2; blue, 27 }  ,draw opacity=1 ]   (486.64,88.46) .. controls (488.28,86.77) and (489.95,86.74) .. (491.64,88.38) .. controls (493.33,90.01) and (495,89.98) .. (496.64,88.29) .. controls (498.28,86.6) and (499.95,86.57) .. (501.64,88.2) .. controls (503.33,89.84) and (505,89.81) .. (506.64,88.12) .. controls (508.28,86.43) and (509.95,86.4) .. (511.64,88.03) .. controls (513.33,89.66) and (515,89.63) .. (516.63,87.94) .. controls (518.27,86.25) and (519.94,86.22) .. (521.63,87.86) .. controls (523.32,89.49) and (524.99,89.46) .. (526.63,87.77) -- (529.21,87.72) -- (537.21,87.58) ;
\draw [shift={(540.21,87.53)}, rotate = 179] [fill={rgb, 255:red, 208; green, 2; blue, 27 }  ,fill opacity=1 ][line width=0.08]  [draw opacity=0] (8.04,-3.86) -- (0,0) -- (8.04,3.86) -- (5.34,0) -- cycle    ;
\draw  [color={rgb, 255:red, 0; green, 0; blue, 0 }  ,draw opacity=1 ][fill={rgb, 255:red, 208; green, 2; blue, 27 }  ,fill opacity=1 ][line width=0.75]  (486.29,83.56) .. controls (488.87,83.52) and (491,85.43) .. (491.06,87.83) .. controls (491.12,90.24) and (489.07,92.23) .. (486.5,92.28) .. controls (483.92,92.33) and (481.79,90.42) .. (481.73,88.01) .. controls (481.67,85.6) and (483.72,83.61) .. (486.29,83.56) -- cycle ;
\draw [color={rgb, 255:red, 0; green, 0; blue, 0 }  ,draw opacity=1 ]   (549.37,148.21) -- (551.01,198.56) ;
\draw [shift={(551.11,201.56)}, rotate = 268.13] [fill={rgb, 255:red, 0; green, 0; blue, 0 }  ,fill opacity=1 ][line width=0.08]  [draw opacity=0] (7.14,-3.43) -- (0,0) -- (7.14,3.43) -- (4.74,0) -- cycle    ;
\draw  [draw opacity=0][fill={rgb, 255:red, 0; green, 192; blue, 248 }  ,fill opacity=1 ] (550.38,212.17) .. controls (553.22,212.11) and (555.58,214.29) .. (555.64,217.03) .. controls (555.7,219.77) and (553.45,222.04) .. (550.6,222.09) .. controls (547.76,222.15) and (545.4,219.97) .. (545.34,217.23) .. controls (545.28,214.49) and (547.53,212.22) .. (550.38,212.17) -- cycle ;
\draw [color={rgb, 255:red, 0; green, 0; blue, 0 }  ,draw opacity=1 ]   (514.44,183.89) -- (540.32,209.77) ;
\draw [shift={(542.44,211.89)}, rotate = 225] [fill={rgb, 255:red, 0; green, 0; blue, 0 }  ,fill opacity=1 ][line width=0.08]  [draw opacity=0] (7.14,-3.43) -- (0,0) -- (7.14,3.43) -- (4.74,0) -- cycle    ;
\draw [color={rgb, 255:red, 0; green, 0; blue, 0 }  ,draw opacity=1 ]   (586.63,181) -- (560.18,208.72) ;
\draw [shift={(558.11,210.89)}, rotate = 313.65] [fill={rgb, 255:red, 0; green, 0; blue, 0 }  ,fill opacity=1 ][line width=0.08]  [draw opacity=0] (7.14,-3.43) -- (0,0) -- (7.14,3.43) -- (4.74,0) -- cycle    ;
\draw  [draw opacity=0][fill={rgb, 255:red, 0; green, 192; blue, 248 }  ,fill opacity=1 ] (548.64,132) .. controls (551.48,131.94) and (553.84,134.12) .. (553.9,136.86) .. controls (553.96,139.6) and (551.71,141.87) .. (548.86,141.92) .. controls (546.02,141.98) and (543.66,139.8) .. (543.6,137.06) .. controls (543.54,134.32) and (545.79,132.06) .. (548.64,132) -- cycle ;
\draw [color={rgb, 255:red, 0; green, 0; blue, 0 }  ,draw opacity=1 ]   (547.9,97.35) -- (548.6,122.74) ;
\draw [shift={(548.69,125.74)}, rotate = 268.41] [fill={rgb, 255:red, 0; green, 0; blue, 0 }  ,fill opacity=1 ][line width=0.08]  [draw opacity=0] (7.14,-3.43) -- (0,0) -- (7.14,3.43) -- (4.74,0) -- cycle    ;

\draw (559.08,126.64) node [anchor=north west][inner sep=0.75pt]  [font=\small,color={rgb, 255:red, 0; green, 0; blue, 0 }  ,opacity=1 ,rotate=-358.27]  {$2$};
\draw (481.92,172.48) node [anchor=north west][inner sep=0.75pt]  [font=\small,color={rgb, 255:red, 0; green, 0; blue, 0 }  ,opacity=1 ,rotate=-358.27]  {$3$};
\draw (555.99,72.74) node [anchor=north west][inner sep=0.75pt]  [font=\small,color={rgb, 255:red, 0; green, 0; blue, 0 }  ,opacity=1 ,rotate=-358.27]  {$1$};
\draw (536.75,170.84) node [anchor=north west][inner sep=0.75pt]  [font=\footnotesize,color={rgb, 255:red, 0; green, 0; blue, 0 }  ,opacity=1 ,rotate=-358.88]  {$-$};
\draw (580.32,196.17) node [anchor=north west][inner sep=0.75pt]  [font=\footnotesize,color={rgb, 255:red, 0; green, 0; blue, 0 }  ,opacity=1 ,rotate=-358.88]  {$-$};
\draw (511.94,68.49) node [anchor=north west][inner sep=0.75pt]  [font=\scriptsize,color={rgb, 255:red, 0; green, 0; blue, 0 }  ,opacity=1 ,rotate=-358.88]  {$+$};
\draw (471.24,107.49) node [anchor=north west][inner sep=0.75pt]  [font=\small]  {$u( t)$};
\draw (603.29,166.76) node [anchor=north west][inner sep=0.75pt]  [font=\small,color={rgb, 255:red, 0; green, 0; blue, 0 }  ,opacity=1 ,rotate=-358.27]  {$4$};
\draw (546.9,230.12) node [anchor=north west][inner sep=0.75pt]  [font=\small,color={rgb, 255:red, 0; green, 0; blue, 0 }  ,opacity=1 ,rotate=-358.27]  {$5$};
\draw (531.63,103.39) node [anchor=north west][inner sep=0.75pt]  [font=\scriptsize,color={rgb, 255:red, 0; green, 0; blue, 0 }  ,opacity=1 ,rotate=-358.88]  {$+$};
\draw (513.99,140.79) node [anchor=north west][inner sep=0.75pt]  [font=\scriptsize,color={rgb, 255:red, 0; green, 0; blue, 0 }  ,opacity=1 ,rotate=-358.88]  {$+$};
\draw (578.4,146.94) node [anchor=north west][inner sep=0.75pt]  [font=\scriptsize,color={rgb, 255:red, 0; green, 0; blue, 0 }  ,opacity=1 ,rotate=-358.88]  {$+$};
\draw (512.25,203.13) node [anchor=north west][inner sep=0.75pt]  [font=\scriptsize,color={rgb, 255:red, 0; green, 0; blue, 0 }  ,opacity=1 ,rotate=-358.88]  {$+$};

\end{tikzpicture}
  \caption{A digraph $\mathcal{G}_2 $}
    \label{moti eg 1}
\end{minipage}%
\hfill
\begin{minipage}[b]{0.55\columnwidth}   
    \centering

\tikzset{every picture/.style={scale=0.75pt}} 

\begin{tikzpicture}[x=0.75pt,y=0.75pt,yscale=-1,xscale=1]

\draw [color={rgb, 255:red, 0; green, 0; blue, 0 }  ,draw opacity=1 ]   (122.46,159.02) -- (122.48,191.46) ;
\draw [shift={(122.49,194.46)}, rotate = 269.96] [fill={rgb, 255:red, 0; green, 0; blue, 0 }  ,fill opacity=1 ][line width=0.08]  [draw opacity=0] (7.14,-3.43) -- (0,0) -- (7.14,3.43) -- (4.74,0) -- cycle    ;
\draw [color={rgb, 255:red, 0; green, 0; blue, 0 }  ,draw opacity=1 ]   (167.33,88.67) -- (199.57,137.61) ;
\draw [shift={(201.22,140.11)}, rotate = 236.63] [fill={rgb, 255:red, 0; green, 0; blue, 0 }  ,fill opacity=1 ][line width=0.08]  [draw opacity=0] (7.14,-3.43) -- (0,0) -- (7.14,3.43) -- (4.74,0) -- cycle    ;
\draw [color={rgb, 255:red, 0; green, 0; blue, 0 }  ,draw opacity=1 ]   (158.33,88.67) -- (125.31,137.95) ;
\draw [shift={(123.64,140.45)}, rotate = 303.82] [fill={rgb, 255:red, 0; green, 0; blue, 0 }  ,fill opacity=1 ][line width=0.08]  [draw opacity=0] (7.14,-3.43) -- (0,0) -- (7.14,3.43) -- (4.74,0) -- cycle    ;
\draw  [draw opacity=0][fill={rgb, 255:red, 0; green, 192; blue, 248 }  ,fill opacity=1 ] (121.64,145.28) .. controls (124.48,145.23) and (126.83,147.23) .. (126.89,149.74) .. controls (126.96,152.26) and (124.71,154.34) .. (121.87,154.39) .. controls (119.03,154.44) and (116.67,152.45) .. (116.61,149.93) .. controls (116.55,147.41) and (118.8,145.33) .. (121.64,145.28) -- cycle ;
\draw  [draw opacity=0][fill={rgb, 255:red, 0; green, 192; blue, 248 }  ,fill opacity=1 ] (202.69,195.7) .. controls (205.53,195.65) and (207.88,197.65) .. (207.95,200.16) .. controls (208.01,202.68) and (205.76,204.76) .. (202.92,204.81) .. controls (200.08,204.86) and (197.73,202.87) .. (197.66,200.35) .. controls (197.6,197.83) and (199.85,195.75) .. (202.69,195.7) -- cycle ;
\draw  [draw opacity=0][fill={rgb, 255:red, 0; green, 192; blue, 248 }  ,fill opacity=1 ] (204.74,141.04) .. controls (207.58,140.99) and (209.94,142.99) .. (210,145.5) .. controls (210.06,148.02) and (207.81,150.1) .. (204.97,150.15) .. controls (202.13,150.2) and (199.78,148.21) .. (199.71,145.69) .. controls (199.65,143.17) and (201.9,141.09) .. (204.74,141.04) -- cycle ;
\draw [line width=0.75]    (128.25,142.12) .. controls (152.21,126.94) and (170.63,126.48) .. (194.74,143.51) ;
\draw [shift={(197,145.14)}, rotate = 216.5] [fill={rgb, 255:red, 0; green, 0; blue, 0 }  ][line width=0.08]  [draw opacity=0] (7.14,-3.43) -- (0,0) -- (7.14,3.43) -- (4.74,0) -- cycle    ;
\draw [color={rgb, 255:red, 0; green, 0; blue, 0 }  ,draw opacity=1 ]   (204.08,156.67) -- (204.1,189.61) ;
\draw [shift={(204.1,192.61)}, rotate = 269.96] [fill={rgb, 255:red, 0; green, 0; blue, 0 }  ,fill opacity=1 ][line width=0.08]  [draw opacity=0] (7.14,-3.43) -- (0,0) -- (7.14,3.43) -- (4.74,0) -- cycle    ;
\draw  [draw opacity=0][fill={rgb, 255:red, 0; green, 192; blue, 248 }  ,fill opacity=1 ] (122.36,197.66) .. controls (125.2,197.61) and (127.56,199.61) .. (127.62,202.13) .. controls (127.68,204.64) and (125.43,206.72) .. (122.59,206.78) .. controls (119.75,206.83) and (117.4,204.83) .. (117.34,202.31) .. controls (117.27,199.8) and (119.52,197.71) .. (122.36,197.66) -- cycle ;
\draw  [draw opacity=0][fill={rgb, 255:red, 126; green, 211; blue, 33 }  ,fill opacity=1 ] (162.72,75.22) .. controls (165.56,75.17) and (167.92,77.17) .. (167.98,79.69) .. controls (168.04,82.2) and (165.79,84.29) .. (162.95,84.34) .. controls (160.11,84.39) and (157.76,82.39) .. (157.7,79.87) .. controls (157.63,77.36) and (159.89,75.27) .. (162.72,75.22) -- cycle ;
\draw [color={rgb, 255:red, 0; green, 0; blue, 0 }  ,draw opacity=1 ]   (199.06,158.61) -- (137.37,192.76) ;
\draw [shift={(134.74,194.21)}, rotate = 331.03] [fill={rgb, 255:red, 0; green, 0; blue, 0 }  ,fill opacity=1 ][line width=0.08]  [draw opacity=0] (7.14,-3.43) -- (0,0) -- (7.14,3.43) -- (4.74,0) -- cycle    ;
\draw [color={rgb, 255:red, 0; green, 0; blue, 0 }  ,draw opacity=1 ]   (130.19,160.65) -- (191.57,193.42) ;
\draw [shift={(194.21,194.83)}, rotate = 208.1] [fill={rgb, 255:red, 0; green, 0; blue, 0 }  ,fill opacity=1 ][line width=0.08]  [draw opacity=0] (7.14,-3.43) -- (0,0) -- (7.14,3.43) -- (4.74,0) -- cycle    ;
\draw    (134.31,154.06) .. controls (150.06,160.15) and (174.84,165.41) .. (194.52,152.26) ;
\draw [shift={(131.41,152.88)}, rotate = 22.86] [fill={rgb, 255:red, 0; green, 0; blue, 0 }  ][line width=0.08]  [draw opacity=0] (10.72,-5.15) -- (0,0) -- (10.72,5.15) -- (7.12,0) -- cycle    ;
\draw [color={rgb, 255:red, 208; green, 2; blue, 27 }  ,draw opacity=1 ]   (93.37,81.25) .. controls (95.01,79.56) and (96.68,79.54) .. (98.37,81.18) .. controls (100.06,82.82) and (101.72,82.8) .. (103.37,81.11) .. controls (105.02,79.42) and (106.68,79.4) .. (108.37,81.04) .. controls (110.06,82.68) and (111.72,82.66) .. (113.37,80.97) .. controls (115.02,79.28) and (116.68,79.26) .. (118.37,80.9) .. controls (120.06,82.54) and (121.73,82.51) .. (123.37,80.82) .. controls (125.02,79.13) and (126.68,79.11) .. (128.37,80.75) .. controls (130.06,82.39) and (131.72,82.37) .. (133.37,80.68) .. controls (135.01,78.99) and (136.67,78.97) .. (138.36,80.61) -- (142.76,80.55) -- (150.76,80.43) ;
\draw [shift={(153.76,80.39)}, rotate = 179.18] [fill={rgb, 255:red, 208; green, 2; blue, 27 }  ,fill opacity=1 ][line width=0.08]  [draw opacity=0] (8.04,-3.86) -- (0,0) -- (8.04,3.86) -- (5.34,0) -- cycle    ;
\draw  [color={rgb, 255:red, 0; green, 0; blue, 0 }  ,draw opacity=1 ][fill={rgb, 255:red, 208; green, 2; blue, 27 }  ,fill opacity=1 ][line width=0.75]  (89.71,76.84) .. controls (91.97,76.8) and (93.84,78.71) .. (93.9,81.1) .. controls (93.96,83.5) and (92.17,85.48) .. (89.91,85.52) .. controls (87.65,85.57) and (85.77,83.66) .. (85.72,81.26) .. controls (85.66,78.87) and (87.45,76.89) .. (89.71,76.84) -- cycle ;

\draw (103.46,139.32) node [anchor=north west][inner sep=0.75pt]  [font=\small,color={rgb, 255:red, 0; green, 0; blue, 0 }  ,opacity=1 ,rotate=-358.27]  {$2$};
\draw (213.37,138.41) node [anchor=north west][inner sep=0.75pt]  [font=\small,color={rgb, 255:red, 0; green, 0; blue, 0 }  ,opacity=1 ,rotate=-358.27]  {$3$};
\draw (214.37,192.39) node [anchor=north west][inner sep=0.75pt]  [font=\small,color={rgb, 255:red, 0; green, 0; blue, 0 }  ,opacity=1 ,rotate=-358.27]  {$5$};
\draw (101.43,191.67) node [anchor=north west][inner sep=0.75pt]  [font=\small,color={rgb, 255:red, 0; green, 0; blue, 0 }  ,opacity=1 ,rotate=-358.27]  {$4$};
\draw (185.94,84.23) node [anchor=north west][inner sep=0.75pt]  [font=\small,color={rgb, 255:red, 0; green, 0; blue, 0 }  ,opacity=1 ,rotate=-358.27]  {$1$};
\draw (185.77,104.92) node [anchor=north west][inner sep=0.75pt]  [font=\footnotesize,rotate=-358.88]  {$+$};
\draw (215.54,166.74) node [anchor=north west][inner sep=0.75pt]  [font=\footnotesize,color={rgb, 255:red, 0; green, 0; blue, 0 }  ,opacity=1 ,rotate=-358.88]  {$-$};
\draw (123.91,106.05) node [anchor=north west][inner sep=0.75pt]  [font=\footnotesize,color={rgb, 255:red, 0; green, 0; blue, 0 }  ,opacity=1 ,rotate=-358.88]  {$+$};
\draw (154.52,113.41) node [anchor=north west][inner sep=0.75pt]  [font=\footnotesize,rotate=-358.88]  {$+$};
\draw (157.54,147.35) node [anchor=north west][inner sep=0.75pt]  [font=\footnotesize,color={rgb, 255:red, 0; green, 0; blue, 0 }  ,opacity=1 ]  {$-$};
\draw (99.02,168.29) node [anchor=north west][inner sep=0.75pt]  [font=\footnotesize,color={rgb, 255:red, 0; green, 0; blue, 0 }  ,opacity=1 ]  {$-$};
\draw (136.31,170.28) node [anchor=north west][inner sep=0.75pt]  [font=\footnotesize,rotate=-358.88]  {$+$};
\draw (181.22,170.28) node [anchor=north west][inner sep=0.75pt]  [font=\footnotesize,rotate=-358.88]  {$+$};
\draw (130.69,60.08) node [anchor=north west][inner sep=0.75pt]  [font=\normalsize,rotate=-89.64]  {$+$};
\draw (68.73,98.64) node [anchor=north west][inner sep=0.75pt]  [font=\small]  {$u( t)$};

\end{tikzpicture}

    \caption{A digraph $\mathcal{G}_3 $}
    \label{Moti eg2}
\end{minipage}%

\end{figure} 

\begin{figure}[ht]
\begin{minipage}[b]{0.2\columnwidth}   
    \centering
      {
\small
\[
\mathfrak{S}(\mathcal{C}_2) = 
\begin{bmatrix}
+ & 0 & 0 & 0 & 0 \\
0 & + & 0 & 0 & 0\\
0 & 0 & + & 0 & 0  \\
0 & 0 & + & 0 & 0 \\
0 & 0 & - & +/- & 0\\ 

\end{bmatrix}
\]}
\label{SSc M2}
\end{minipage}
\hfill
\begin{minipage}[b]{1.05\columnwidth}   
    \centering
      {
\small
\[
\mathfrak{S}(\mathcal{C}_3) = 
\begin{bmatrix}
+ & 0 & 0 & 0 & 0 & \\
0 & + & - & - & +& \\
0 & + & + & - & - &  \\
0 & 0 & +/- & + & +/-& \\
0 & 0 & +/- & - & +/-& 
\end{bmatrix} 
\]}
\label{SSc M3}
\end{minipage}
\end{figure} 

\subsection{Signed Layered Graph ($\mathcal{G}_s$)} \label{G_s para}
Given a digraph $\mathcal{G(A,B)}$, we can always generate a corresponding layered graph, which is a tree-structured representation of $\mathcal{G}$ (see Section \ref{LG}  for more details). When $\mathcal{G}$ is a signed network, the corresponding layered graph is termed a signed layered graph, denoted as $\mathcal{G}_s$. 
\begin{itemize}
    \item In the signed layered graph $\mathcal{G}_s$, the set of nodes in the $k^{th}$ layer $L_k$ is denoted as $V_{L_k}$ where the node set $V_{L_1}$ includes only the leader node. 
    \item $V_{L_k}$ includes the set of nodes that can be reached in $k-1$ steps from the leader.
    \item Since there can be multiple paths to a single node $v$ of the same length $m-1$, a node $v$ can appear multiple times in the node set $V_{L_{m}}$.
    \item Since cycles can increase the number of layers infinitely, we extend $\mathcal{G}_s$ to only $n$ layers. This is consistent with Remark \ref{Remark1} in the sense that only $n$ columns are relevant in the controllability matrix.
\end{itemize}
  Consider the digraph $\mathcal{G}_4$, as shown in Fig. \ref{fig:signed  layered graph}, the corresponding signed layer representation is depicted adjacent to it. Although cycles can extend the graph indefinitely, we limit it to $n=7$ layers to match the number of columns of the controllability matrix.

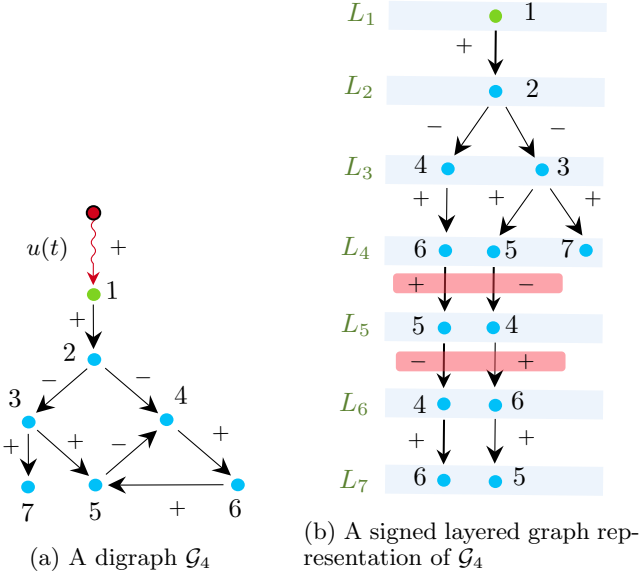
\begin{figure}[ht]
\centering
  \begin{subfigure}{0.18\textwidth}   
    \centering
 
\tikzset{every picture/.style={scale=0.85pt}} 

\begin{tikzpicture}[x=0.75pt,y=0.75pt,yscale=-1,xscale=1]

\draw [color={rgb, 255:red, 0; green, 0; blue, 0 }  ,draw opacity=1 ]   (70.39,110.24) -- (70.49,133.66) ;
\draw [shift={(70.51,136.66)}, rotate = 269.73] [fill={rgb, 255:red, 0; green, 0; blue, 0 }  ,fill opacity=1 ][line width=0.08]  [draw opacity=0] (10.72,-5.15) -- (0,0) -- (10.72,5.15) -- (7.12,0) -- cycle    ;
\draw [color={rgb, 255:red, 0; green, 0; blue, 0 }  ,draw opacity=1 ]   (77.38,147.86) -- (106.21,172.68) ;
\draw [shift={(108.48,174.64)}, rotate = 220.72] [fill={rgb, 255:red, 0; green, 0; blue, 0 }  ,fill opacity=1 ][line width=0.08]  [draw opacity=0] (10.72,-5.15) -- (0,0) -- (10.72,5.15) -- (7.12,0) -- cycle    ;
\draw [color={rgb, 255:red, 0; green, 0; blue, 0 }  ,draw opacity=1 ]   (66.79,148.19) -- (39.02,172.33) ;
\draw [shift={(36.75,174.3)}, rotate = 319] [fill={rgb, 255:red, 0; green, 0; blue, 0 }  ,fill opacity=1 ][line width=0.08]  [draw opacity=0] (10.72,-5.15) -- (0,0) -- (10.72,5.15) -- (7.12,0) -- cycle    ;
\draw [color={rgb, 255:red, 0; green, 0; blue, 0 }  ,draw opacity=1 ]   (37.42,184.54) -- (65.64,209.68) ;
\draw [shift={(67.88,211.68)}, rotate = 221.69] [fill={rgb, 255:red, 0; green, 0; blue, 0 }  ,fill opacity=1 ][line width=0.08]  [draw opacity=0] (10.72,-5.15) -- (0,0) -- (10.72,5.15) -- (7.12,0) -- cycle    ;
\draw  [draw opacity=0][fill={rgb, 255:red, 0; green, 192; blue, 248 }  ,fill opacity=1 ] (31.99,214.45) .. controls (34.22,214.4) and (36.06,216.06) .. (36.11,218.14) .. controls (36.16,220.22) and (34.39,221.94) .. (32.17,221.99) .. controls (29.94,222.03) and (28.1,220.37) .. (28.05,218.29) .. controls (28,216.21) and (29.77,214.49) .. (31.99,214.45) -- cycle ;
\draw  [draw opacity=0][fill={rgb, 255:red, 0; green, 192; blue, 248 }  ,fill opacity=1 ] (32.37,176.04) .. controls (34.6,176) and (36.44,177.66) .. (36.49,179.74) .. controls (36.54,181.82) and (34.77,183.54) .. (32.55,183.58) .. controls (30.32,183.62) and (28.48,181.97) .. (28.43,179.89) .. controls (28.38,177.81) and (30.15,176.08) .. (32.37,176.04) -- cycle ;
\draw  [draw opacity=0][fill={rgb, 255:red, 0; green, 192; blue, 248 }  ,fill opacity=1 ] (71.58,213.3) .. controls (73.81,213.26) and (75.65,214.91) .. (75.7,216.99) .. controls (75.75,219.08) and (73.99,220.8) .. (71.76,220.84) .. controls (69.54,220.88) and (67.69,219.23) .. (67.64,217.15) .. controls (67.6,215.06) and (69.36,213.34) .. (71.58,213.3) -- cycle ;
\draw  [draw opacity=0][fill={rgb, 255:red, 0; green, 192; blue, 248 }  ,fill opacity=1 ] (70.93,139.01) .. controls (73.16,138.97) and (75,140.62) .. (75.05,142.7) .. controls (75.1,144.78) and (73.33,146.51) .. (71.11,146.55) .. controls (68.88,146.59) and (67.04,144.93) .. (66.99,142.85) .. controls (66.94,140.77) and (68.71,139.05) .. (70.93,139.01) -- cycle ;
\draw  [draw opacity=0][fill={rgb, 255:red, 0; green, 192; blue, 248 }  ,fill opacity=1 ] (113.37,174.61) .. controls (115.6,174.56) and (117.44,176.22) .. (117.49,178.3) .. controls (117.54,180.38) and (115.77,182.1) .. (113.55,182.15) .. controls (111.32,182.19) and (109.48,180.53) .. (109.43,178.45) .. controls (109.38,176.37) and (111.15,174.65) .. (113.37,174.61) -- cycle ;
\draw [color={rgb, 255:red, 0; green, 0; blue, 0 }  ,draw opacity=1 ]   (32.2,186.78) -- (32.29,208.95) ;
\draw [shift={(32.31,211.95)}, rotate = 269.75] [fill={rgb, 255:red, 0; green, 0; blue, 0 }  ,fill opacity=1 ][line width=0.08]  [draw opacity=0] (10.72,-5.15) -- (0,0) -- (10.72,5.15) -- (7.12,0) -- cycle    ;
\draw  [draw opacity=0][fill={rgb, 255:red, 0; green, 192; blue, 248 }  ,fill opacity=1 ] (155.22,213.18) .. controls (157.44,213.14) and (159.29,214.79) .. (159.34,216.88) .. controls (159.38,218.96) and (157.62,220.68) .. (155.4,220.72) .. controls (153.17,220.76) and (151.33,219.11) .. (151.28,217.03) .. controls (151.23,214.94) and (152.99,213.22) .. (155.22,213.18) -- cycle ;
\draw  [draw opacity=0][fill={rgb, 255:red, 126; green, 211; blue, 33 }  ,fill opacity=1 ] (70.59,100.57) .. controls (72.81,100.53) and (74.66,102.18) .. (74.7,104.27) .. controls (74.75,106.35) and (72.99,108.07) .. (70.76,108.11) .. controls (68.54,108.15) and (66.7,106.5) .. (66.65,104.42) .. controls (66.6,102.34) and (68.36,100.61) .. (70.59,100.57) -- cycle ;
\draw [color={rgb, 255:red, 0; green, 0; blue, 0 }  ,draw opacity=1 ]   (105.61,186.65) -- (77.83,210.79) ;
\draw [shift={(107.87,184.68)}, rotate = 139] [fill={rgb, 255:red, 0; green, 0; blue, 0 }  ,fill opacity=1 ][line width=0.08]  [draw opacity=0] (10.72,-5.15) -- (0,0) -- (10.72,5.15) -- (7.12,0) -- cycle    ;
\draw [color={rgb, 255:red, 0; green, 0; blue, 0 }  ,draw opacity=1 ]   (120.14,184.67) -- (148.97,209.49) ;
\draw [shift={(151.25,211.45)}, rotate = 220.72] [fill={rgb, 255:red, 0; green, 0; blue, 0 }  ,fill opacity=1 ][line width=0.08]  [draw opacity=0] (10.72,-5.15) -- (0,0) -- (10.72,5.15) -- (7.12,0) -- cycle    ;
\draw [color={rgb, 255:red, 0; green, 0; blue, 0 }  ,draw opacity=1 ]   (146.85,216.87) -- (81.93,217.1) ;
\draw [shift={(78.93,217.11)}, rotate = 359.8] [fill={rgb, 255:red, 0; green, 0; blue, 0 }  ,fill opacity=1 ][line width=0.08]  [draw opacity=0] (10.72,-5.15) -- (0,0) -- (10.72,5.15) -- (7.12,0) -- cycle    ;
\draw [color={rgb, 255:red, 208; green, 2; blue, 27 }  ,draw opacity=1 ]   (70.52,55.98) .. controls (72.18,57.65) and (72.17,59.32) .. (70.5,60.98) .. controls (68.83,62.64) and (68.82,64.31) .. (70.48,65.98) .. controls (72.14,67.65) and (72.13,69.32) .. (70.46,70.98) .. controls (68.79,72.65) and (68.79,74.31) .. (70.45,75.98) .. controls (72.11,77.65) and (72.1,79.32) .. (70.43,80.98) .. controls (68.76,82.64) and (68.75,84.31) .. (70.41,85.98) -- (70.4,87.79) -- (70.37,95.79) ;
\draw [shift={(70.36,98.79)}, rotate = 270.21] [fill={rgb, 255:red, 208; green, 2; blue, 27 }  ,fill opacity=1 ][line width=0.08]  [draw opacity=0] (8.04,-3.86) -- (0,0) -- (8.04,3.86) -- (5.34,0) -- cycle    ;
\draw  [color={rgb, 255:red, 0; green, 0; blue, 0 }  ,draw opacity=1 ][fill={rgb, 255:red, 208; green, 2; blue, 27 }  ,fill opacity=1 ][line width=0.75]  (70.43,52.02) .. controls (72.62,51.97) and (74.43,53.71) .. (74.47,55.9) .. controls (74.52,58.08) and (72.79,59.89) .. (70.6,59.93) .. controls (68.42,59.98) and (66.61,58.24) .. (66.56,56.06) .. controls (66.51,53.87) and (68.25,52.06) .. (70.43,52.02) -- cycle ;

\draw (50.83,131.91) node [anchor=north west][inner sep=0.75pt]  [color={rgb, 255:red, 0; green, 0; blue, 0 }  ,opacity=1 ,rotate=-358.27]  {$2$};
\draw (18.69,159.53) node [anchor=north west][inner sep=0.75pt]  [color={rgb, 255:red, 0; green, 0; blue, 0 }  ,opacity=1 ,rotate=-358.27]  {$3$};
\draw (148.24,224.44) node [anchor=north west][inner sep=0.75pt]  [color={rgb, 255:red, 0; green, 0; blue, 0 }  ,opacity=1 ,rotate=-358.38]  {$6$};
\draw (65.89,225.93) node [anchor=north west][inner sep=0.75pt]  [color={rgb, 255:red, 0; green, 0; blue, 0 }  ,opacity=1 ,rotate=-358.27]  {$5$};
\draw (116.19,157.59) node [anchor=north west][inner sep=0.75pt]  [color={rgb, 255:red, 0; green, 0; blue, 0 }  ,opacity=1 ,rotate=-358.27]  {$4$};
\draw (75.95,95.63) node [anchor=north west][inner sep=0.75pt]  [color={rgb, 255:red, 0; green, 0; blue, 0 }  ,opacity=1 ,rotate=-358.27]  {$1$};
\draw (54.42,113.48) node [anchor=north west][inner sep=0.75pt]  [font=\scriptsize,color={rgb, 255:red, 0; green, 0; blue, 0 }  ,opacity=1 ,rotate=-358.88]  {$+$};
\draw (26.31,226.67) node [anchor=north west][inner sep=0.75pt]    {$7$};
\draw (15.35,188.43) node [anchor=north west][inner sep=0.75pt]  [font=\scriptsize,color={rgb, 255:red, 0; green, 0; blue, 0 }  ,opacity=1 ,rotate=-358.88]  {$+$};
\draw (78.9,187.07) node [anchor=north west][inner sep=0.75pt]  [font=\scriptsize,color={rgb, 255:red, 0; green, 0; blue, 0 }  ,opacity=1 ]  {$-$};
\draw (138.57,180.94) node [anchor=north west][inner sep=0.75pt]  [font=\scriptsize,rotate=-358.88]  {$+$};
\draw (113.04,224.95) node [anchor=north west][inner sep=0.75pt]  [font=\scriptsize,rotate=-358.88]  {$+$};
\draw (37.8,149.24) node [anchor=north west][inner sep=0.75pt]  [font=\scriptsize,color={rgb, 255:red, 0; green, 0; blue, 0 }  ,opacity=1 ]  {$-$};
\draw (93,147.64) node [anchor=north west][inner sep=0.75pt]  [font=\scriptsize,color={rgb, 255:red, 0; green, 0; blue, 0 }  ,opacity=1 ]  {$-$};
\draw (53.18,185.21) node [anchor=north west][inner sep=0.75pt]  [font=\scriptsize,color={rgb, 255:red, 0; green, 0; blue, 0 }  ,opacity=1 ,rotate=-358.88]  {$+$};
\draw (29.61,68.89) node [anchor=north west][inner sep=0.75pt]  [font=\small]  {$u( t)$};
\draw (78.39,70.79) node [anchor=north west][inner sep=0.75pt]  [font=\scriptsize,color={rgb, 255:red, 0; green, 0; blue, 0 }  ,opacity=1 ,rotate=-358.88]  {$+$};

\end{tikzpicture}

    \caption{A digraph $\mathcal{G}_4$ }
    \label{L-dil1}
\end{subfigure}%
\hfill
  \begin{subfigure}{0.25\textwidth}
    \centering

\tikzset{every picture/.style={scale=0.85pt}} 

\begin{tikzpicture}[x=0.75pt,y=0.75pt,yscale=-1,xscale=1]

\draw  [color={rgb, 255:red, 0; green, 0; blue, 0 }  ,draw opacity=0 ][fill={rgb, 255:red, 74; green, 144; blue, 226 }  ,fill opacity=0.1 ] (234.53,140.71) -- (362.58,142.03) -- (362.4,159.17) -- (234.35,157.85) -- cycle ;
\draw  [color={rgb, 255:red, 0; green, 0; blue, 0 }  ,draw opacity=0 ][fill={rgb, 255:red, 74; green, 144; blue, 226 }  ,fill opacity=0.1 ] (236.95,1.44) -- (365,2.76) -- (364.82,19.9) -- (236.77,18.58) -- cycle ;
\draw  [color={rgb, 255:red, 0; green, 0; blue, 0 }  ,draw opacity=0 ][fill={rgb, 255:red, 74; green, 144; blue, 226 }  ,fill opacity=0.1 ] (235.76,45.81) -- (363.8,47.13) -- (363.62,64.27) -- (235.58,62.95) -- cycle ;
\draw  [color={rgb, 255:red, 0; green, 0; blue, 0 }  ,draw opacity=0 ][fill={rgb, 255:red, 74; green, 144; blue, 226 }  ,fill opacity=0.1 ] (235.03,92.7) -- (363.07,94.02) -- (362.89,111.16) -- (234.85,109.84) -- cycle ;
\draw  [color={rgb, 255:red, 0; green, 0; blue, 0 }  ,draw opacity=0 ][fill={rgb, 255:red, 74; green, 144; blue, 226 }  ,fill opacity=0.1 ] (234.56,185.65) -- (362.61,186.97) -- (362.43,204.11) -- (234.38,202.79) -- cycle ;
\draw [color={rgb, 255:red, 0; green, 0; blue, 0 }  ,draw opacity=1 ]   (295.39,64.15) -- (277.13,90.51) ;
\draw [shift={(275.42,92.98)}, rotate = 304.71] [fill={rgb, 255:red, 0; green, 0; blue, 0 }  ,fill opacity=1 ][line width=0.08]  [draw opacity=0] (10.72,-5.15) -- (0,0) -- (10.72,5.15) -- (7.12,0) -- cycle    ;
\draw [color={rgb, 255:red, 0; green, 0; blue, 0 }  ,draw opacity=1 ]   (305.71,64.26) -- (321.14,91.69) ;
\draw [shift={(322.61,94.3)}, rotate = 240.64] [fill={rgb, 255:red, 0; green, 0; blue, 0 }  ,fill opacity=1 ][line width=0.08]  [draw opacity=0] (10.72,-5.15) -- (0,0) -- (10.72,5.15) -- (7.12,0) -- cycle    ;
\draw [color={rgb, 255:red, 0; green, 0; blue, 0 }  ,draw opacity=1 ][line width=0.75]    (299.77,19.05) -- (299.53,42.56) ;
\draw [shift={(299.5,45.56)}, rotate = 270.59] [fill={rgb, 255:red, 0; green, 0; blue, 0 }  ,fill opacity=1 ][line width=0.08]  [draw opacity=0] (9.82,-4.72) -- (0,0) -- (9.82,4.72) -- (6.52,0) -- cycle    ;
\draw [color={rgb, 255:red, 0; green, 0; blue, 0 }  ,draw opacity=1 ]   (321.43,112.99) -- (303.71,137.98) ;
\draw [shift={(301.98,140.43)}, rotate = 305.35] [fill={rgb, 255:red, 0; green, 0; blue, 0 }  ,fill opacity=1 ][line width=0.08]  [draw opacity=0] (10.72,-5.15) -- (0,0) -- (10.72,5.15) -- (7.12,0) -- cycle    ;
\draw [color={rgb, 255:red, 0; green, 0; blue, 0 }  ,draw opacity=1 ]   (331.76,112.54) -- (348.97,137.89) ;
\draw [shift={(350.65,140.37)}, rotate = 235.84] [fill={rgb, 255:red, 0; green, 0; blue, 0 }  ,fill opacity=1 ][line width=0.08]  [draw opacity=0] (10.72,-5.15) -- (0,0) -- (10.72,5.15) -- (7.12,0) -- cycle    ;
\draw [color={rgb, 255:red, 0; green, 0; blue, 0 }  ,draw opacity=1 ]   (271.07,112.2) -- (270.83,135.71) ;
\draw [shift={(270.8,138.71)}, rotate = 270.59] [fill={rgb, 255:red, 0; green, 0; blue, 0 }  ,fill opacity=1 ][line width=0.08]  [draw opacity=0] (10.72,-5.15) -- (0,0) -- (10.72,5.15) -- (7.12,0) -- cycle    ;
\draw [color={rgb, 255:red, 0; green, 0; blue, 0 }  ,draw opacity=1 ][line width=0.75]    (270.04,158.93) -- (269.79,182.45) ;
\draw [shift={(269.76,185.45)}, rotate = 270.59] [fill={rgb, 255:red, 0; green, 0; blue, 0 }  ,fill opacity=1 ][line width=0.08]  [draw opacity=0] (8.93,-4.29) -- (0,0) -- (8.93,4.29) -- (5.93,0) -- cycle    ;
\draw  [draw opacity=0][fill={rgb, 255:red, 0; green, 192; blue, 248 }  ,fill opacity=1 ] (295.75,55) .. controls (295.77,52.9) and (297.56,51.21) .. (299.73,51.23) .. controls (301.91,51.25) and (303.65,52.98) .. (303.63,55.08) .. controls (303.61,57.19) and (301.83,58.88) .. (299.65,58.85) .. controls (297.48,58.83) and (295.73,57.11) .. (295.75,55) -- cycle ;
\draw  [draw opacity=0][fill={rgb, 255:red, 0; green, 192; blue, 248 }  ,fill opacity=1 ] (267.75,100.83) .. controls (267.77,98.72) and (269.55,97.03) .. (271.73,97.06) .. controls (273.9,97.08) and (275.65,98.8) .. (275.63,100.91) .. controls (275.6,103.01) and (273.82,104.7) .. (271.65,104.68) .. controls (269.47,104.66) and (267.73,102.93) .. (267.75,100.83) -- cycle ;
\draw  [draw opacity=0][fill={rgb, 255:red, 0; green, 192; blue, 248 }  ,fill opacity=1 ] (323.06,101.4) .. controls (323.08,99.29) and (324.86,97.6) .. (327.04,97.62) .. controls (329.21,97.65) and (330.96,99.37) .. (330.93,101.48) .. controls (330.91,103.58) and (329.13,105.27) .. (326.96,105.25) .. controls (324.78,105.23) and (323.04,103.5) .. (323.06,101.4) -- cycle ;
\draw  [draw opacity=0][fill={rgb, 255:red, 0; green, 192; blue, 248 }  ,fill opacity=1 ] (266.24,149.23) .. controls (266.26,147.13) and (268.04,145.44) .. (270.22,145.46) .. controls (272.39,145.48) and (274.14,147.21) .. (274.12,149.31) .. controls (274.09,151.42) and (272.31,153.11) .. (270.14,153.08) .. controls (267.96,153.06) and (266.22,151.34) .. (266.24,149.23) -- cycle ;
\draw  [draw opacity=0][fill={rgb, 255:red, 0; green, 192; blue, 248 }  ,fill opacity=1 ] (294.53,149.9) .. controls (294.55,147.79) and (296.33,146.1) .. (298.51,146.13) .. controls (300.68,146.15) and (302.43,147.87) .. (302.4,149.98) .. controls (302.38,152.08) and (300.6,153.77) .. (298.43,153.75) .. controls (296.25,153.73) and (294.51,152) .. (294.53,149.9) -- cycle ;
\draw  [draw opacity=0][fill={rgb, 255:red, 0; green, 192; blue, 248 }  ,fill opacity=1 ] (348.83,149.06) .. controls (348.85,146.95) and (350.64,145.26) .. (352.81,145.28) .. controls (354.99,145.31) and (356.73,147.03) .. (356.71,149.14) .. controls (356.69,151.24) and (354.91,152.93) .. (352.73,152.91) .. controls (350.56,152.89) and (348.81,151.16) .. (348.83,149.06) -- cycle ;
\draw  [draw opacity=0][fill={rgb, 255:red, 0; green, 192; blue, 248 }  ,fill opacity=1 ] (265.67,195.02) .. controls (265.69,192.91) and (267.48,191.23) .. (269.65,191.25) .. controls (271.83,191.27) and (273.57,193) .. (273.55,195.1) .. controls (273.53,197.21) and (271.75,198.89) .. (269.57,198.87) .. controls (267.4,198.85) and (265.65,197.12) .. (265.67,195.02) -- cycle ;
\draw  [color={rgb, 255:red, 0; green, 0; blue, 0 }  ,draw opacity=0 ][fill={rgb, 255:red, 74; green, 144; blue, 226 }  ,fill opacity=0.1 ] (234.85,230.74) -- (362.9,232.06) -- (362.72,249.2) -- (234.67,247.89) -- cycle ;
\draw [color={rgb, 255:red, 0; green, 0; blue, 0 }  ,draw opacity=1 ][line width=0.75]    (269.38,204.17) -- (269.14,227.69) ;
\draw [shift={(269.11,230.69)}, rotate = 270.59] [fill={rgb, 255:red, 0; green, 0; blue, 0 }  ,fill opacity=1 ][line width=0.08]  [draw opacity=0] (8.93,-4.29) -- (0,0) -- (8.93,4.29) -- (5.93,0) -- cycle    ;
\draw  [draw opacity=0][fill={rgb, 255:red, 0; green, 192; blue, 248 }  ,fill opacity=1 ] (265.34,239.95) .. controls (265.36,237.84) and (267.14,236.15) .. (269.31,236.17) .. controls (271.49,236.2) and (273.23,237.92) .. (273.21,240.03) .. controls (273.19,242.13) and (271.41,243.82) .. (269.23,243.8) .. controls (267.06,243.78) and (265.31,242.05) .. (265.34,239.95) -- cycle ;
\draw  [draw opacity=0][fill={rgb, 255:red, 126; green, 211; blue, 33 }  ,fill opacity=1 ] (299.65,6.51) .. controls (301.73,6.49) and (303.43,8.26) .. (303.45,10.46) .. controls (303.48,12.67) and (301.81,14.47) .. (299.74,14.5) .. controls (297.66,14.52) and (295.96,12.75) .. (295.93,10.54) .. controls (295.91,8.34) and (297.57,6.53) .. (299.65,6.51) -- cycle ;
\draw [color={rgb, 255:red, 0; green, 0; blue, 0 }  ,draw opacity=1 ][line width=0.75]    (298.58,158.97) -- (298.33,182.48) ;
\draw [shift={(298.3,185.48)}, rotate = 270.59] [fill={rgb, 255:red, 0; green, 0; blue, 0 }  ,fill opacity=1 ][line width=0.08]  [draw opacity=0] (8.93,-4.29) -- (0,0) -- (8.93,4.29) -- (5.93,0) -- cycle    ;
\draw  [draw opacity=0][fill={rgb, 255:red, 0; green, 192; blue, 248 }  ,fill opacity=1 ] (294.31,194.87) .. controls (294.33,192.76) and (296.11,191.07) .. (298.29,191.09) .. controls (300.46,191.12) and (302.21,192.84) .. (302.19,194.95) .. controls (302.16,197.05) and (300.38,198.74) .. (298.21,198.72) .. controls (296.03,198.7) and (294.29,196.97) .. (294.31,194.87) -- cycle ;
\draw [color={rgb, 255:red, 0; green, 0; blue, 0 }  ,draw opacity=1 ]   (299.43,203.69) -- (299.18,227.21) ;
\draw [shift={(299.15,230.21)}, rotate = 270.59] [fill={rgb, 255:red, 0; green, 0; blue, 0 }  ,fill opacity=1 ][line width=0.08]  [draw opacity=0] (10.72,-5.15) -- (0,0) -- (10.72,5.15) -- (7.12,0) -- cycle    ;
\draw  [draw opacity=0][fill={rgb, 255:red, 0; green, 192; blue, 248 }  ,fill opacity=1 ] (295.6,240.74) .. controls (295.63,238.63) and (297.41,236.94) .. (299.58,236.97) .. controls (301.76,236.99) and (303.5,238.71) .. (303.48,240.82) .. controls (303.46,242.92) and (301.68,244.61) .. (299.5,244.59) .. controls (297.33,244.57) and (295.58,242.84) .. (295.6,240.74) -- cycle ;
\draw  [color={rgb, 255:red, 0; green, 0; blue, 0 }  ,draw opacity=0 ][fill={rgb, 255:red, 74; green, 144; blue, 226 }  ,fill opacity=0.1 ] (234.85,275.91) -- (362.9,277.23) -- (362.72,294.38) -- (234.67,293.06) -- cycle ;
\draw [color={rgb, 255:red, 0; green, 0; blue, 0 }  ,draw opacity=1 ][line width=0.75]    (269.38,249.34) -- (269.14,272.86) ;
\draw [shift={(269.11,275.86)}, rotate = 270.59] [fill={rgb, 255:red, 0; green, 0; blue, 0 }  ,fill opacity=1 ][line width=0.08]  [draw opacity=0] (8.93,-4.29) -- (0,0) -- (8.93,4.29) -- (5.93,0) -- cycle    ;
\draw  [draw opacity=0][fill={rgb, 255:red, 0; green, 192; blue, 248 }  ,fill opacity=1 ] (265.34,285.12) .. controls (265.36,283.01) and (267.14,281.32) .. (269.31,281.35) .. controls (271.49,281.37) and (273.23,283.09) .. (273.21,285.2) .. controls (273.19,287.3) and (271.41,288.99) .. (269.23,288.97) .. controls (267.06,288.95) and (265.31,287.22) .. (265.34,285.12) -- cycle ;
\draw [color={rgb, 255:red, 0; green, 0; blue, 0 }  ,draw opacity=1 ]   (299.43,248.86) -- (299.18,272.38) ;
\draw [shift={(299.15,275.38)}, rotate = 270.59] [fill={rgb, 255:red, 0; green, 0; blue, 0 }  ,fill opacity=1 ][line width=0.08]  [draw opacity=0] (10.72,-5.15) -- (0,0) -- (10.72,5.15) -- (7.12,0) -- cycle    ;
\draw  [draw opacity=0][fill={rgb, 255:red, 0; green, 192; blue, 248 }  ,fill opacity=1 ] (295.6,285.91) .. controls (295.63,283.8) and (297.41,282.11) .. (299.58,282.14) .. controls (301.76,282.16) and (303.5,283.88) .. (303.48,285.99) .. controls (303.46,288.09) and (301.68,289.78) .. (299.5,289.76) .. controls (297.33,289.74) and (295.58,288.01) .. (295.6,285.91) -- cycle ;
\draw  [color={rgb, 255:red, 0; green, 0; blue, 0 }  ,draw opacity=0 ][fill={rgb, 255:red, 250; green, 62; blue, 75 }  ,fill opacity=0.47 ] (241.08,211.3) .. controls (241.08,210) and (242.14,208.94) .. (243.44,208.94) -- (338.37,208.94) .. controls (339.67,208.94) and (340.73,210) .. (340.73,211.3) -- (340.73,218.36) .. controls (340.73,219.65) and (339.67,220.71) .. (338.37,220.71) -- (243.44,220.71) .. controls (242.14,220.71) and (241.08,219.65) .. (241.08,218.36) -- cycle ;
\draw  [color={rgb, 255:red, 0; green, 0; blue, 0 }  ,draw opacity=0 ][fill={rgb, 255:red, 250; green, 62; blue, 75 }  ,fill opacity=0.47 ] (239.43,165.08) .. controls (239.43,163.78) and (240.48,162.72) .. (241.78,162.72) -- (336.72,162.72) .. controls (338.02,162.72) and (339.07,163.78) .. (339.07,165.08) -- (339.07,172.13) .. controls (339.07,173.43) and (338.02,174.49) .. (336.72,174.49) -- (241.78,174.49) .. controls (240.48,174.49) and (239.43,173.43) .. (239.43,172.13) -- cycle ;

\draw (246.34,163.4) node [anchor=north west][inner sep=0.75pt]  [font=\scriptsize,color={rgb, 255:red, 0; green, 0; blue, 0 }  ,opacity=1 ,rotate=-359.47]  {$+$};
\draw (314.75,1.35) node [anchor=north west][inner sep=0.75pt]  [color={rgb, 255:red, 0; green, 0; blue, 0 }  ,opacity=1 ,rotate=-359.98]  {$1$};
\draw (316,46.68) node [anchor=north west][inner sep=0.75pt]  [color={rgb, 255:red, 0; green, 0; blue, 0 }  ,opacity=1 ,rotate=-359.98]  {$2$};
\draw (334.13,92.99) node [anchor=north west][inner sep=0.75pt]  [color={rgb, 255:red, 0; green, 0; blue, 0 }  ,opacity=1 ,rotate=-359.98]  {$3$};
\draw (250.29,91.79) node [anchor=north west][inner sep=0.75pt]  [color={rgb, 255:red, 0; green, 0; blue, 0 }  ,opacity=1 ,rotate=-359.98]  {$4$};
\draw (303.39,141.92) node [anchor=north west][inner sep=0.75pt]  [color={rgb, 255:red, 0; green, 0; blue, 0 }  ,opacity=1 ,rotate=-359.98]  {$5$};
\draw (250.17,141.31) node [anchor=north west][inner sep=0.75pt]  [color={rgb, 255:red, 0; green, 0; blue, 0 }  ,opacity=1 ,rotate=-359.98]  {$6$};
\draw (274.63,21.98) node [anchor=north west][inner sep=0.75pt]  [font=\scriptsize,color={rgb, 255:red, 0; green, 0; blue, 0 }  ,opacity=1 ,rotate=-0.59]  {$+$};
\draw (350.21,112.45) node [anchor=north west][inner sep=0.75pt]  [font=\scriptsize,color={rgb, 255:red, 0; green, 0; blue, 0 }  ,opacity=1 ,rotate=-0.59]  {$+$};
\draw (247.79,208.8) node [anchor=north west][inner sep=0.75pt]  [font=\scriptsize,color={rgb, 255:red, 0; green, 0; blue, 0 }  ,opacity=1 ,rotate=-0.59]  {$-$};
\draw (336.38,141.36) node [anchor=north west][inner sep=0.75pt]  [rotate=-0.59]  {$7$};
\draw (248.88,187.22) node [anchor=north west][inner sep=0.75pt]  [color={rgb, 255:red, 0; green, 0; blue, 0 }  ,opacity=1 ,rotate=-359.98]  {$5$};
\draw (249.84,233.71) node [anchor=north west][inner sep=0.75pt]  [color={rgb, 255:red, 0; green, 0; blue, 0 }  ,opacity=1 ,rotate=-359.98]  {$4$};
\draw (249.37,112.8) node [anchor=north west][inner sep=0.75pt]  [font=\scriptsize,color={rgb, 255:red, 0; green, 0; blue, 0 }  ,opacity=1 ,rotate=-359.47]  {$+$};
\draw (293.61,112.21) node [anchor=north west][inner sep=0.75pt]  [font=\scriptsize,color={rgb, 255:red, 0; green, 0; blue, 0 }  ,opacity=1 ,rotate=-359.47]  {$+$};
\draw (257.25,70.21) node [anchor=north west][inner sep=0.75pt]  [font=\scriptsize,color={rgb, 255:red, 0; green, 0; blue, 0 }  ,opacity=1 ,rotate=-0.59]  {$-$};
\draw (329.7,71.15) node [anchor=north west][inner sep=0.75pt]  [font=\scriptsize,color={rgb, 255:red, 0; green, 0; blue, 0 }  ,opacity=1 ,rotate=-0.59]  {$-$};
\draw (303.87,186.62) node [anchor=north west][inner sep=0.75pt]  [color={rgb, 255:red, 0; green, 0; blue, 0 }  ,opacity=1 ,rotate=-359.98]  {$4$};
\draw (311.21,163.32) node [anchor=north west][inner sep=0.75pt]  [font=\scriptsize,color={rgb, 255:red, 0; green, 0; blue, 0 }  ,opacity=1 ,rotate=-0.59]  {$-$};
\draw (307.33,231.31) node [anchor=north west][inner sep=0.75pt]  [color={rgb, 255:red, 0; green, 0; blue, 0 }  ,opacity=1 ,rotate=-359.98]  {$6$};
\draw (310.79,208.85) node [anchor=north west][inner sep=0.75pt]  [font=\scriptsize,color={rgb, 255:red, 0; green, 0; blue, 0 }  ,opacity=1 ,rotate=-359.47]  {$+$};
\draw (210.81,1.02) node [anchor=north west][inner sep=0.75pt]  [color={rgb, 255:red, 91; green, 130; blue, 49 }  ,opacity=1 ]  {$L_{1}$};
\draw (209.05,44.53) node [anchor=north west][inner sep=0.75pt]  [color={rgb, 255:red, 91; green, 130; blue, 49 }  ,opacity=1 ]  {$L_{2}$};
\draw (207.28,185) node [anchor=north west][inner sep=0.75pt]  [color={rgb, 255:red, 91; green, 130; blue, 49 }  ,opacity=1 ]  {$L_{5}$};
\draw (207.28,139.69) node [anchor=north west][inner sep=0.75pt]  [color={rgb, 255:red, 91; green, 130; blue, 49 }  ,opacity=1 ]  {$L_{4}$};
\draw (209.05,92.56) node [anchor=north west][inner sep=0.75pt]  [color={rgb, 255:red, 91; green, 130; blue, 49 }  ,opacity=1 ]  {$L_{3}$};
\draw (206.4,232.13) node [anchor=north west][inner sep=0.75pt]  [color={rgb, 255:red, 91; green, 130; blue, 49 }  ,opacity=1 ]  {$L_{6}$};
\draw (206.4,277.31) node [anchor=north west][inner sep=0.75pt]  [color={rgb, 255:red, 91; green, 130; blue, 49 }  ,opacity=1 ]  {$L_{7}$};
\draw (249.96,276.48) node [anchor=north west][inner sep=0.75pt]  [color={rgb, 255:red, 0; green, 0; blue, 0 }  ,opacity=1 ,rotate=-359.98]  {$6$};
\draw (308.26,276.56) node [anchor=north west][inner sep=0.75pt]  [color={rgb, 255:red, 0; green, 0; blue, 0 }  ,opacity=1 ,rotate=-359.98]  {$5$};
\draw (311.16,254.03) node [anchor=north west][inner sep=0.75pt]  [font=\scriptsize,color={rgb, 255:red, 0; green, 0; blue, 0 }  ,opacity=1 ,rotate=-359.47]  {$+$};
\draw (246.53,255.72) node [anchor=north west][inner sep=0.75pt]  [font=\scriptsize,color={rgb, 255:red, 0; green, 0; blue, 0 }  ,opacity=1 ,rotate=-359.47]  {$+$};

\end{tikzpicture}
    \caption{A signed layered graph representation of  $\mathcal{G}_4$}
    \label{LD-2}
\end{subfigure}%
 \caption{A digraph exhibiting layer dilations at $L_4$ and $L_5$, shown in its signed layered representation.}
 \label{fig:signed layered graph}
\end{figure} 

The use of a signed layered graph for a given digraph allows us to have an independent representation of every edge; hence, every walk between every pair of nodes. The relationship between a signed layered graph and the controllability matrix is given in the following lemma:

\begin{lemma} \label{lem1:lemma1}
Consider a digraph $\mathcal{G(A,B)}$. If the $(j,k)^{th}$ entry of its controllability matrix is nonzero, then node $j$ belongs to layer $L_k$ in the signed layered graph of the digraph $\mathcal{G(A,B)}$.
\end{lemma}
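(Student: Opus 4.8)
The plan is to translate the algebraic condition $[\mathcal{C}]_{(j,k)} \neq 0$ into a combinatorial statement about walks, and then to invoke the defining property of the signed layered graph. First I would identify the $(j,k)^{\text{th}}$ entry of the controllability matrix. From \eqref{Control_mat}, the first column of $\mathcal{C}$ is $\mathcal{B}$ and the $k^{\text{th}}$ column is $\Psi_{k-1}=\mathcal{A}^{k-1}\mathcal{B}$, so $[\mathcal{C}]_{(j,k)} = [\mathcal{A}^{k-1}\mathcal{B}]_j$. Since $\mathcal{B} = [\,b_1,0,\dots,0\,]^{\top}$ places its only nonzero entry at the leader node $1$, this reduces to $[\mathcal{C}]_{(j,k)} = b_1\,[\mathcal{A}^{k-1}]_{(j,1)}$, which already handles the base case $k=1$, where $[\mathcal{A}^0]_{(j,1)} = [I]_{(j,1)}$ is nonzero exactly when $j=1$, i.e. the leader lies in $L_1$.

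The central step is the walk-counting interpretation of $[\mathcal{A}^{k-1}]_{(j,1)}$. Expanding the matrix product gives
\[
[\mathcal{A}^{k-1}]_{(j,1)} = \sum_{i_1,\dots,i_{k-2}} a_{j\,i_{k-2}}\,a_{i_{k-2}\,i_{k-3}}\cdots a_{i_1\,1}.
\]
Under the convention stated in Section~\ref{Graph theory} that $a_{pq}\neq 0$ encodes the edge $(q,p)\in\mathcal{E}$, each nonzero summand corresponds exactly to a walk $1 \to i_1 \to \cdots \to i_{k-2} \to j$ of length $k-1$ from the leader to node $j$, whose value is the product of the traversed edge weights. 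Hence $[\mathcal{C}]_{(j,k)} = b_1\sum_r \mathcal{W}^r_{(1,j)}$, the signed sum over all walks of length $k-1$ from node $1$ to node $j$, which is precisely the quantity recorded in Remark~\ref{Remark1}.

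The conclusion then follows quickly. A nonzero sum cannot have an empty index set of nonzero terms, so $[\mathcal{C}]_{(j,k)} \neq 0$ forces the existence of at least one walk of length $k-1$ from the leader to node $j$. By the construction of $\mathcal{G}_s$ in Section~\ref{G_s para}, the node set $V_{L_k}$ consists of exactly those nodes reachable from the leader in $k-1$ steps; the existence of such a walk therefore places node $j$ in $V_{L_k}$, i.e. node $j$ belongs to layer $L_k$, as claimed.

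The argument is essentially direct, so the work lies in bookkeeping rather than in a genuine obstacle: one must respect the transposed adjacency convention $a_{pq}\neq 0 \Leftrightarrow (q,p)\in\mathcal{E}$ so that the chain of indices reads as a walk in the correct direction, and one must track the off-by-one between the column index $k$ and the walk length $k-1$. It is worth emphasizing that only the stated implication holds: because $\mathcal{G}$ is signed, walks of equal length may cancel in the sum, so a walk of length $k-1$ can exist while $[\mathcal{C}]_{(j,k)}=0$. Thus the converse is false and is deliberately not claimed, which is exactly why the \emph{sign-structured} matrix $\mathfrak{S}[\mathcal{C}]$ introduced earlier is needed to capture the cancellation-free behaviour later.
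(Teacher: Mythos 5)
Your proof is correct and follows essentially the same route as the paper's: both reduce $[\mathcal{C}]_{(j,k)}\neq 0$ to the existence of a walk of length $k-1$ from the leader to node $j$ (the paper by citing Remark~\ref{Remark1}, you by expanding the matrix power explicitly) and then invoke the definition of $V_{L_k}$ as the nodes reachable in $k-1$ steps. Your added remark that the converse fails due to cancellation of signed walks is accurate and consistent with the paper's motivation for introducing $\mathfrak{S}[\mathcal{C}]$, but it is not part of the claimed statement.
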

\begin{proof}
\normalfont
In a signed layered graph, each layer represents the distance from the leader node. $L_1$ contains the leader node; hence, the column $\mathcal{C}_{(:,1)}\neq 0$. By Remark \ref{Remark1}, for $k\in\{1,..,n-1\}$, if $\mathcal{C}_{(j,k+1)}\neq 0$, equivalently, $[\Psi_k]_j\neq0$, then there exists a walk of length $k$ from the leader to follower $j$. Therefore, the node $j \in V_{L_{k}}$ Hence proved. \hfill $\qed$
\end{proof}

On the track to deriving graph-theoretical conditions for $\mathcal{SS}$ herdability, we now begin analyzing the scenarios in which the digraph fails to be herdable. One such scenario, which has been studied in the literature \cite{de2023herdability}, \cite{she2019characterizing} only for acyclic graphs, is layer dilation.
\subsubsection{Layer Dilation} \label{LD para}
As defined in Section \ref{SD para}, a digraph $\mathcal{G}$ is said to have a signed dilation if the outgoing edges from any node have different signs. Similarly, we introduce the notion of a \textit{layer dilation}, wherein the outgoing edges from any layer have different signs. More precisely, a signed layered graph $\mathcal{G}_s$ is said to have a layer dilation in the $k^{th}$ layer if the outgoing edges of $V_{L_k}$ have different signs. For example, there is a layer dilation in layers $L_4$ and $L_5$ for the signed network, as shown in Fig. \ref{fig:signed layered graph}, as highlighted in red. Notably, the presence of a signed dilation in a digraph $\mathcal{G}$ implies the existence of a layer dilation; however, the converse does not necessarily hold, as shown in Fig. \ref{fig:signed layered graph}. Layer dilation adversely affects $\mathcal{SS}$ herdability in directed trees, as discussed next. 

\begin{proposition}\label{prop3:prop3}
 A directed tree with a layer dilation is not $\mathcal{SS}$ herdable.
\end{proposition}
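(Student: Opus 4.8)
The plan is to exploit the fact that in a directed tree there is a unique directed path from the leader (node $1$) to every other node, so that the sign pattern of the controllability matrix is \emph{structurally fixed}. By Remark~\ref{Remark1}, the entry $\mathcal{C}_{(j,k+1)}=[\Psi_k]_j$ is the signed sum of the gains of the length-$k$ walks from $1$ to $j$, and in a tree there is at most one such walk. Hence each row $j$ of $\mathcal{C}$ contains exactly one nonzero entry, located in column $c(j):=d(1,j)+1$ (where $d(1,j)$ is the depth of $j$), and its sign $s_j$ equals the product of the edge signs along the unique path from $1$ to $j$. Since no cancellation is possible, $\mathfrak{S}[\mathcal{C}]$ has no $+/-$ entries and the sign pattern is identical for every admissible realization $a_{ij}\in\mathbb{R}^+$, $b_1>0$. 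Consequently, for a tree $\mathcal{SS}$ herdability coincides with herdability of this single fixed sign pattern, which is equivalent to the existence of $\bm{\delta}$ with $\mathcal{C}\bm{\delta}\gg 0$.

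First I would translate herdability into a purely layer-wise condition. Because row $j$ has its only nonzero entry in column $c(j)$, the $j$-th component of $\mathcal{C}\bm{\delta}$ equals $w_j\,\delta_{c(j)}$ with $\operatorname{sign}(w_j)=s_j$; requiring this to be positive forces $\operatorname{sign}(\delta_{c(j)})=s_j$. By Lemma~\ref{lem1:lemma1}, the rows carrying a nonzero entry in a fixed column $k$ are exactly the nodes of layer $L_k$. Thus a single scalar $\delta_k$ must simultaneously match the sign $s_j$ of every node $j\in V_{L_k}$, which is possible \emph{iff} all nodes of $L_k$ share one sign. I would therefore record the equivalence: \emph{a directed tree is herdable iff every layer $L_k$ is unisigned in $\mathcal{C}$}, with the herding direction $\delta_k$ chosen equal to the common sign of $L_k$.

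Finally I would invoke the layer dilation at some layer $L_k$ to violate this condition. A layer dilation provides two edges leaving $V_{L_k}$ with opposite signs, say $p_1\to c_1$ positive and $p_2\to c_2$ negative, with $p_1,p_2\in V_{L_k}$ and $c_1,c_2\in V_{L_{k+1}}$ (and $c_1\neq c_2$, since each node of a tree has a unique parent, so distinct edges reach distinct children). Using the recursion $s_c=s_p\cdot\operatorname{sign}(p\to c)$ along the unique paths, I would argue by cases: if $L_k$ is not unisigned we are already done; otherwise $s_{p_1}=s_{p_2}=\sigma$, whence $s_{c_1}=\sigma$ and $s_{c_2}=-\sigma$, so $L_{k+1}$ contains nodes of both signs and is not unisigned. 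In either case some layer fails to be unisigned, so no $\bm{\delta}$ yields $\mathcal{C}\bm{\delta}\gg 0$; the tree is not herdable for its unique sign pattern and hence not $\mathcal{SS}$ herdable.

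The main obstacle is this third step: a layer dilation only constrains the signs of the \emph{edges} leaving $V_{L_k}$, whereas herdability is governed by the \emph{cumulative} path signs $s_j$ of the nodes, and the parents $p_1,p_2$ may themselves carry different path signs, so one cannot immediately conclude that $c_1,c_2$ receive opposite signs. The case split on whether $L_k$ is itself unisigned is what resolves this, and making that dichotomy airtight---together with confirming that in a tree rooted at the leader every node of $L_{k+1}$ descends from a node of $L_k$, so that the edges leaving $V_{L_k}$ are exactly the parent edges of $V_{L_{k+1}}$---is the delicate part of the argument.
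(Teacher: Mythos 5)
Your proposal is correct and follows essentially the same route as the paper: uniqueness of paths in a directed tree forces each row of $\mathcal{C}$ to have a single nonzero entry of structurally fixed sign, so herdability reduces to every layer being unisigned, and the layer dilation destroys this. In fact your case split on whether $L_k$ is itself unisigned patches a point the paper's own proof glosses over --- the paper asserts directly that the dilation makes the column $[\Psi_k]$ non-unisigned, which only follows when the parent nodes in $L_k$ share a common cumulative path sign; your dichotomy (either $[\Psi_{k-1}]$ already fails, or the parents share a sign and then $[\Psi_k]$ fails) makes the argument airtight.
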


\begin{proof}
\normalfont
By definition, the signed layer graph representation of a directed tree is identical to the tree itself. Hence, the nodes do not repeat in the signed layer graph and have unique paths from the leader. 
Lemma~\ref{lem1:lemma1} establishes that nodes in layer $L_k$ map to the $k^{\text{th}}$ column of the controllability matrix such that $[\Psi_{k-1}]_j\neq 0$. Furthermore, because of the layer dilation in $L_k$ with outgoing edges to $L_{k+1}$, the column $[\Psi_k]$ contains nonzero entries of differing signs. Even if the parameters of $\mathcal{A}$ and $\mathcal{B}$ vary, the column of $\mathcal{C}(\mathcal{A,B})$ corresponding to layer dilation cannot become unisigned. In this case, the span of the columns of $\mathcal{C}(\mathcal{A,B})$ cannot generate a positive vector for any parametric realization of $\mathcal{A,B}$ given by $a_{ij} \in \mathbb{R}^{+}$ and $b_i \in \mathbb{R} \setminus \{0\}$.

Alternatively, we can prove the proposition as follows. Since in a tree graph all the nodes are reached by a unique path, a layer dilation in $L_k$ , the entries appear in $[\Psi_{k-1}]$ with different signs. Hence for every parametric realization, there exists a corresponding $\bm{y}~\in \mathbb{R}^n$ such that $\mathcal{C}(\mathcal{A,B})^\top \bm{y}=0$. By Proposition \ref{test for H}, it is not herdable for any realization. Hence, proved. 
\hfill $\qed$
\end{proof}
In the case of general (acyclic and cyclic) digraphs, the effect of signed dilation on $\mathcal{SS}$ herdability is nontrivial. We explore this in the following section. 
\section{Necessary and Sufficient conditions for $\mathcal{SS}$ Herdability} \label{Sec:5}
We start this section with the following question: Can a digraph be herdable despite having a layer dilation or signed dilation? Although the effect of signed dilation was introduced in \cite{ruf2018herdable}, to the best of our knowledge, no prior study has directly addressed the herdability condition in such cases. In this section, we analyze how these conditions affect $SS$ herdability and investigate the graph-theoretical conditions for $\mathcal{SS}$ herdability in digraphs with arbitrary topologies.

\subsection{Signed dilation set} \label{SD_set para}
Consider a digraph $\mathcal{G(A,B)}$ and its corresponding layered graph $\mathcal{G}_s$. For any node $i \in L_k,~k\in\{1,...,n\}$, we define the following sets:
\begin{subequations}\label{SD-set eqn}
\begin{align}
    \Delta_i &\triangleq \{\, j \in \mathcal{V} : (i,j) \in \mathcal{E} \,\}, \label{eq:Delta_i}\\
    \Delta_i^P &\triangleq \{\, j \in \mathcal{V} : (i,j) \in \mathcal{E}_+ \,\}, \label{eq:Delta_iP}\\
    \Delta_i^N &\triangleq \{\, j \in \mathcal{V} : (i,j) \in \mathcal{E}_- \,\}. \label{eq:Delta_iN}
\end{align}

\end{subequations}

Then, obviously $\Delta_i=\Delta_i^P\cup \Delta_i^N$. We already know that every node in $\Delta_i$ can be herded by node $i$ in the absence of a signed dilation. 

Let us assume that there is a signed dilation associated with the nodes in $\Delta_i$. With Lemma \ref{lem1:lemma1} in view, the column(s) of the controllability matrix associated with $\Delta_i$ are not unisigned. Consequently, the nodes in either $\Delta_i^P$ or $\Delta_i^N$ can be herded. For example, consider the digraph $\mathcal{G}_5$ shown in Fig. ~\ref{eg-SD}(\subref{SD_1}) whose corresponding $\mathcal{G}_s$ is given in Fig. ~\ref{eg-SD}(\subref{SD-2}). 
For node 2, $\Delta_2 = \{4,5,6\}$, and additionally, there exists a signed dilation originating from it at $L_2$. 
The corresponding $\mathfrak{S}(\mathcal{C}(\mathcal{A,B}))$ matrix, $\mathfrak{S}(\mathcal{C}_5)$, given in \eqref{SSc_M5}, shows that the entries of $\Psi_2$ have different signs. Hence, either $\Delta_i^P = \{5\}$ or $\Delta_i^N = \{4,6\}$ can be herded by node $2$, but not both simultaneously.

\begin{figure}[ht]
  \centering
  \begin{subfigure}{0.19\textwidth}
    \centering 
\tikzset{every picture/.style={scale=0.75pt}} 

\begin{tikzpicture}[x=0.75pt,y=0.75pt,yscale=-1,xscale=1]

\draw [color={rgb, 255:red, 0; green, 0; blue, 0 }  ,draw opacity=1 ]   (512.96,317.39) -- (513.75,334.98) ;
\draw [shift={(513.88,337.97)}, rotate = 267.43] [fill={rgb, 255:red, 0; green, 0; blue, 0 }  ,fill opacity=1 ][line width=0.08]  [draw opacity=0] (7.14,-3.43) -- (0,0) -- (7.14,3.43) -- (4.74,0) -- cycle    ;
\draw [color={rgb, 255:red, 0; green, 0; blue, 0 }  ,draw opacity=1 ]   (480.69,264.22) -- (505.22,297.01) ;
\draw [shift={(507.01,299.41)}, rotate = 233.2] [fill={rgb, 255:red, 0; green, 0; blue, 0 }  ,fill opacity=1 ][line width=0.08]  [draw opacity=0] (7.14,-3.43) -- (0,0) -- (7.14,3.43) -- (4.74,0) -- cycle    ;
\draw [color={rgb, 255:red, 0; green, 0; blue, 0 }  ,draw opacity=1 ]   (475.01,265.41) -- (450.82,297.52) ;
\draw [shift={(449.01,299.91)}, rotate = 307] [fill={rgb, 255:red, 0; green, 0; blue, 0 }  ,fill opacity=1 ][line width=0.08]  [draw opacity=0] (7.14,-3.43) -- (0,0) -- (7.14,3.43) -- (4.74,0) -- cycle    ;
\draw  [draw opacity=0][fill={rgb, 255:red, 0; green, 192; blue, 248 }  ,fill opacity=1 ] (443.41,301.07) .. controls (446.33,301.01) and (448.75,303.36) .. (448.82,306.32) .. controls (448.88,309.29) and (446.57,311.74) .. (443.65,311.8) .. controls (440.73,311.85) and (438.31,309.5) .. (438.24,306.54) .. controls (438.18,303.58) and (440.49,301.13) .. (443.41,301.07) -- cycle ;
\draw  [draw opacity=0][fill={rgb, 255:red, 0; green, 192; blue, 248 }  ,fill opacity=1 ] (511.09,301.41) .. controls (514.01,301.36) and (516.43,303.71) .. (516.49,306.67) .. controls (516.56,309.63) and (514.24,312.08) .. (511.32,312.14) .. controls (508.4,312.2) and (505.98,309.85) .. (505.92,306.89) .. controls (505.85,303.92) and (508.17,301.47) .. (511.09,301.41) -- cycle ;
\draw  [draw opacity=0][fill={rgb, 255:red, 126; green, 211; blue, 33 }  ,fill opacity=1 ] (478.13,250.28) .. controls (481.05,250.22) and (483.47,252.58) .. (483.53,255.54) .. controls (483.6,258.5) and (481.28,260.95) .. (478.36,261.01) .. controls (475.44,261.07) and (473.02,258.72) .. (472.96,255.75) .. controls (472.9,252.79) and (475.21,250.34) .. (478.13,250.28) -- cycle ;
\draw [color={rgb, 255:red, 0; green, 0; blue, 0 }  ,draw opacity=1 ]   (437.31,315.91) -- (407.83,354.53) ;
\draw [shift={(406.01,356.91)}, rotate = 307.36] [fill={rgb, 255:red, 0; green, 0; blue, 0 }  ,fill opacity=1 ][line width=0.08]  [draw opacity=0] (7.14,-3.43) -- (0,0) -- (7.14,3.43) -- (4.74,0) -- cycle    ;
\draw  [draw opacity=0][fill={rgb, 255:red, 0; green, 192; blue, 248 }  ,fill opacity=1 ] (403.41,357.68) .. controls (406.33,357.62) and (408.75,359.98) .. (408.82,362.94) .. controls (408.88,365.9) and (406.57,368.35) .. (403.65,368.41) .. controls (400.73,368.47) and (398.31,366.12) .. (398.24,363.15) .. controls (398.18,360.19) and (400.49,357.74) .. (403.41,357.68) -- cycle ;
\draw  [draw opacity=0][fill={rgb, 255:red, 0; green, 192; blue, 248 }  ,fill opacity=1 ] (513.09,343.7) .. controls (516.01,343.64) and (518.43,345.99) .. (518.49,348.95) .. controls (518.56,351.91) and (516.24,354.36) .. (513.32,354.42) .. controls (510.4,354.48) and (507.98,352.13) .. (507.92,349.17) .. controls (507.85,346.21) and (510.17,343.76) .. (513.09,343.7) -- cycle ;
\draw  [draw opacity=0][fill={rgb, 255:red, 0; green, 192; blue, 248 }  ,fill opacity=1 ] (474.75,359.7) .. controls (477.67,359.64) and (480.09,361.99) .. (480.16,364.95) .. controls (480.22,367.91) and (477.91,370.36) .. (474.99,370.42) .. controls (472.07,370.48) and (469.65,368.13) .. (469.59,365.17) .. controls (469.52,362.21) and (471.83,359.76) .. (474.75,359.7) -- cycle ;
\draw [line width=0.75]    (500.33,341.08) .. controls (468.72,321) and (462.15,320.43) .. (452.33,312) ;
\draw [shift={(503.33,343)}, rotate = 212.71] [fill={rgb, 255:red, 0; green, 0; blue, 0 }  ][line width=0.08]  [draw opacity=0] (7.14,-3.43) -- (0,0) -- (7.14,3.43) -- (4.74,0) -- cycle    ;
\draw [color={rgb, 255:red, 0; green, 0; blue, 0 }  ,draw opacity=1 ]   (448.69,317.22) -- (465.98,351.32) ;
\draw [shift={(467.33,354)}, rotate = 243.12] [fill={rgb, 255:red, 0; green, 0; blue, 0 }  ,fill opacity=1 ][line width=0.08]  [draw opacity=0] (7.14,-3.43) -- (0,0) -- (7.14,3.43) -- (4.74,0) -- cycle    ;
\draw [color={rgb, 255:red, 208; green, 2; blue, 27 }  ,draw opacity=1 ]   (419.11,255.58) .. controls (420.74,253.88) and (422.41,253.85) .. (424.11,255.49) .. controls (425.8,257.12) and (427.47,257.09) .. (429.1,255.4) .. controls (430.74,253.71) and (432.41,253.68) .. (434.1,255.32) .. controls (435.79,256.95) and (437.46,256.92) .. (439.1,255.23) .. controls (440.74,253.54) and (442.41,253.51) .. (444.1,255.14) .. controls (445.79,256.77) and (447.46,256.74) .. (449.1,255.05) .. controls (450.74,253.36) and (452.41,253.33) .. (454.1,254.97) .. controls (455.79,256.6) and (457.46,256.57) .. (459.1,254.88) -- (460.33,254.86) -- (468.33,254.72) ;
\draw [shift={(471.33,254.67)}, rotate = 179] [fill={rgb, 255:red, 208; green, 2; blue, 27 }  ,fill opacity=1 ][line width=0.08]  [draw opacity=0] (8.04,-3.86) -- (0,0) -- (8.04,3.86) -- (5.34,0) -- cycle    ;
\draw  [color={rgb, 255:red, 0; green, 0; blue, 0 }  ,draw opacity=1 ][fill={rgb, 255:red, 208; green, 2; blue, 27 }  ,fill opacity=1 ][line width=0.75]  (419,250.87) .. controls (421.64,250.82) and (423.83,252.88) .. (423.89,255.48) .. controls (423.95,258.08) and (421.85,260.24) .. (419.21,260.29) .. controls (416.57,260.34) and (414.38,258.27) .. (414.32,255.67) .. controls (414.26,253.07) and (416.36,250.92) .. (419,250.87) -- cycle ;

\draw (423.97,295.95) node [anchor=north west][inner sep=0.75pt]  [font=\small,color={rgb, 255:red, 0; green, 0; blue, 0 }  ,opacity=1 ,rotate=-358.27]  {$2$};
\draw (524.65,296.41) node [anchor=north west][inner sep=0.75pt]  [font=\small,color={rgb, 255:red, 0; green, 0; blue, 0 }  ,opacity=1 ,rotate=-358.27]  {$3$};
\draw (454.76,374.43) node [anchor=north west][inner sep=0.75pt]  [font=\small,color={rgb, 255:red, 0; green, 0; blue, 0 }  ,opacity=1 ,rotate=-358.27]  {$5$};
\draw (395.1,372.93) node [anchor=north west][inner sep=0.75pt]  [font=\small,color={rgb, 255:red, 0; green, 0; blue, 0 }  ,opacity=1 ,rotate=-358.27]  {$4$};
\draw (486.57,240.38) node [anchor=north west][inner sep=0.75pt]  [font=\small,color={rgb, 255:red, 0; green, 0; blue, 0 }  ,opacity=1 ,rotate=-358.27]  {$1$};
\draw (406.16,323.23) node [anchor=north west][inner sep=0.75pt]  [font=\footnotesize,color={rgb, 255:red, 0; green, 0; blue, 0 }  ,opacity=1 ,rotate=-358.88]  {$-$};
\draw (443.62,235.93) node [anchor=north west][inner sep=0.75pt]  [font=\scriptsize,color={rgb, 255:red, 0; green, 0; blue, 0 }  ,opacity=1 ,rotate=-358.88]  {$+$};
\draw (517.45,318.07) node [anchor=north west][inner sep=0.75pt]  [font=\footnotesize,rotate=-358.88]  {$+$};
\draw (439.25,332) node [anchor=north west][inner sep=0.75pt]  [font=\footnotesize,rotate=-358.88]  {$+$};
\draw (524.09,346.43) node [anchor=north west][inner sep=0.75pt]  [font=\small,color={rgb, 255:red, 0; green, 0; blue, 0 }  ,opacity=1 ,rotate=-358.27]  {$6$};
\draw (447.49,268.57) node [anchor=north west][inner sep=0.75pt]  [font=\footnotesize,color={rgb, 255:red, 0; green, 0; blue, 0 }  ,opacity=1 ,rotate=-358.88]  {$-$};
\draw (502.16,268.57) node [anchor=north west][inner sep=0.75pt]  [font=\footnotesize,color={rgb, 255:red, 0; green, 0; blue, 0 }  ,opacity=1 ,rotate=-358.88]  {$-$};
\draw (477.49,308.9) node [anchor=north west][inner sep=0.75pt]  [font=\footnotesize,color={rgb, 255:red, 0; green, 0; blue, 0 }  ,opacity=1 ,rotate=-358.88]  {$-$};
\draw (381.83,247.4) node [anchor=north west][inner sep=0.75pt]  [font=\small]  {$u( t)$};

\end{tikzpicture}

     \caption{A digraph $\mathcal{G}_5$}
    \label{SD_1}
  \end{subfigure}
  \begin{subfigure}{0.20\textwidth}
    \centering

\tikzset{every picture/.style={scale=0.75pt}} 

\begin{tikzpicture}[x=0.75pt,y=0.75pt,yscale=-1,xscale=1]

\draw  [color={rgb, 255:red, 0; green, 0; blue, 0 }  ,draw opacity=0 ][fill={rgb, 255:red, 74; green, 144; blue, 226 }  ,fill opacity=0.05 ] (252.99,97.51) -- (411.34,97.51) -- (411.34,118.58) -- (252.99,118.58) -- cycle ;
\draw  [color={rgb, 255:red, 0; green, 0; blue, 0 }  ,draw opacity=0 ][fill={rgb, 255:red, 74; green, 144; blue, 226 }  ,fill opacity=0.05 ] (252.49,149.16) -- (410.84,149.16) -- (410.84,170.23) -- (252.49,170.23) -- cycle ;
\draw  [color={rgb, 255:red, 0; green, 0; blue, 0 }  ,draw opacity=0 ][fill={rgb, 255:red, 74; green, 144; blue, 226 }  ,fill opacity=0.05 ] (251.87,46.56) -- (410.21,46.56) -- (410.21,67.63) -- (251.87,67.63) -- cycle ;
\draw  [color={rgb, 255:red, 74; green, 144; blue, 226 }  ,draw opacity=0.24 ][fill={rgb, 255:red, 0; green, 247; blue, 141 }  ,fill opacity=0.21 ] (302.75,97.13) .. controls (309.38,97.33) and (349.91,162.8) .. (343.21,166) .. controls (336.51,169.2) and (275.01,168.47) .. (267.21,166.67) .. controls (259.41,164.87) and (296.11,96.93) .. (302.75,97.13) -- cycle ;
\draw [color={rgb, 255:red, 208; green, 2; blue, 27 }  ,draw opacity=1 ]   (326.11,69.39) -- (308.37,93.96) ;
\draw [shift={(306.61,96.39)}, rotate = 305.84] [fill={rgb, 255:red, 208; green, 2; blue, 27 }  ,fill opacity=1 ][line width=0.08]  [draw opacity=0] (10.72,-5.15) -- (0,0) -- (10.72,5.15) -- (7.12,0) -- cycle    ;
\draw [color={rgb, 255:red, 208; green, 2; blue, 27 }  ,draw opacity=1 ]   (336.51,68.89) -- (354.39,94.44) ;
\draw [shift={(356.11,96.89)}, rotate = 235.01] [fill={rgb, 255:red, 208; green, 2; blue, 27 }  ,fill opacity=1 ][line width=0.08]  [draw opacity=0] (10.72,-5.15) -- (0,0) -- (10.72,5.15) -- (7.12,0) -- cycle    ;
\draw [color={rgb, 255:red, 208; green, 2; blue, 27 }  ,draw opacity=1 ][fill={rgb, 255:red, 183; green, 176; blue, 176 }  ,fill opacity=1 ]   (365.72,116.15) -- (366.14,145.23) ;
\draw [shift={(366.18,148.23)}, rotate = 269.19] [fill={rgb, 255:red, 208; green, 2; blue, 27 }  ,fill opacity=1 ][line width=0.08]  [draw opacity=0] (10.72,-5.15) -- (0,0) -- (10.72,5.15) -- (7.12,0) -- cycle    ;
\draw [color={rgb, 255:red, 128; green, 128; blue, 128 }  ,draw opacity=1 ]   (298.27,116.54) -- (280.74,146.09) ;
\draw [shift={(279.21,148.67)}, rotate = 300.67] [fill={rgb, 255:red, 128; green, 128; blue, 128 }  ,fill opacity=1 ][line width=0.08]  [draw opacity=0] (10.72,-5.15) -- (0,0) -- (10.72,5.15) -- (7.12,0) -- cycle    ;
\draw [color={rgb, 255:red, 128; green, 128; blue, 128 }  ,draw opacity=1 ]   (306.78,116.1) -- (328.1,144.92) ;
\draw [shift={(329.88,147.33)}, rotate = 233.52] [fill={rgb, 255:red, 128; green, 128; blue, 128 }  ,fill opacity=1 ][line width=0.08]  [draw opacity=0] (10.72,-5.15) -- (0,0) -- (10.72,5.15) -- (7.12,0) -- cycle    ;
\draw  [draw opacity=0][fill={rgb, 255:red, 0; green, 192; blue, 248 }  ,fill opacity=1 ] (302.59,102.5) .. controls (305.51,102.44) and (307.93,104.79) .. (307.99,107.75) .. controls (308.06,110.72) and (305.74,113.17) .. (302.82,113.22) .. controls (299.9,113.28) and (297.48,110.93) .. (297.42,107.97) .. controls (297.35,105.01) and (299.67,102.56) .. (302.59,102.5) -- cycle ;
\draw  [draw opacity=0][fill={rgb, 255:red, 0; green, 192; blue, 248 }  ,fill opacity=1 ] (364.59,102) .. controls (367.51,101.94) and (369.93,104.29) .. (369.99,107.25) .. controls (370.06,110.22) and (367.74,112.67) .. (364.82,112.72) .. controls (361.9,112.78) and (359.48,110.43) .. (359.42,107.47) .. controls (359.35,104.51) and (361.67,102.06) .. (364.59,102) -- cycle ;
\draw  [draw opacity=0][fill={rgb, 255:red, 0; green, 192; blue, 248 }  ,fill opacity=1 ] (274.59,153.83) .. controls (277.51,153.77) and (279.93,156.12) .. (279.99,159.09) .. controls (280.06,162.05) and (277.74,164.5) .. (274.82,164.56) .. controls (271.9,164.62) and (269.48,162.26) .. (269.42,159.3) .. controls (269.35,156.34) and (271.67,153.89) .. (274.59,153.83) -- cycle ;
\draw  [draw opacity=0][fill={rgb, 255:red, 0; green, 192; blue, 248 }  ,fill opacity=1 ] (334.26,154.44) .. controls (337.18,154.38) and (339.6,156.74) .. (339.67,159.7) .. controls (339.73,162.66) and (337.42,165.11) .. (334.5,165.17) .. controls (331.58,165.23) and (329.16,162.88) .. (329.09,159.91) .. controls (329.03,156.95) and (331.34,154.5) .. (334.26,154.44) -- cycle ;
\draw  [draw opacity=0][fill={rgb, 255:red, 0; green, 192; blue, 248 }  ,fill opacity=1 ] (366.26,152.94) .. controls (369.18,152.88) and (371.6,155.24) .. (371.67,158.2) .. controls (371.73,161.16) and (369.42,163.61) .. (366.5,163.67) .. controls (363.58,163.73) and (361.16,161.38) .. (361.09,158.41) .. controls (361.03,155.45) and (363.34,153) .. (366.26,152.94) -- cycle ;
\draw [color={rgb, 255:red, 208; green, 2; blue, 27 }  ,draw opacity=1 ]   (302.78,117.6) -- (303.62,144.99) ;
\draw [shift={(303.71,147.99)}, rotate = 268.24] [fill={rgb, 255:red, 208; green, 2; blue, 27 }  ,fill opacity=1 ][line width=0.08]  [draw opacity=0] (10.72,-5.15) -- (0,0) -- (10.72,5.15) -- (7.12,0) -- cycle    ;
\draw  [draw opacity=0][fill={rgb, 255:red, 0; green, 192; blue, 248 }  ,fill opacity=1 ] (306.26,153.44) .. controls (309.18,153.38) and (311.6,155.74) .. (311.67,158.7) .. controls (311.73,161.66) and (309.42,164.11) .. (306.5,164.17) .. controls (303.58,164.23) and (301.16,161.88) .. (301.09,158.91) .. controls (301.03,155.95) and (303.34,153.5) .. (306.26,153.44) -- cycle ;
\draw  [draw opacity=0][fill={rgb, 255:red, 126; green, 211; blue, 33 }  ,fill opacity=1 ] (330.92,51.73) .. controls (333.84,51.67) and (336.26,54.03) .. (336.33,56.99) .. controls (336.39,59.95) and (334.08,62.4) .. (331.16,62.46) .. controls (328.24,62.52) and (325.82,60.17) .. (325.76,57.2) .. controls (325.69,54.24) and (328,51.79) .. (330.92,51.73) -- cycle ;

\draw (273.12,102.06) node [anchor=north west][inner sep=0.75pt]  [font=\small,color={rgb, 255:red, 0; green, 0; blue, 0 }  ,opacity=1 ,rotate=-359.39]  {$2$};
\draw (375.21,101.16) node [anchor=north west][inner sep=0.75pt]  [font=\small,color={rgb, 255:red, 0; green, 0; blue, 0 }  ,opacity=1 ,rotate=-359.39]  {$3$};
\draw (256.53,152.54) node [anchor=north west][inner sep=0.75pt]  [font=\small,color={rgb, 255:red, 0; green, 0; blue, 0 }  ,opacity=1 ,rotate=-359.39]  {$4$};
\draw (289.74,151.7) node [anchor=north west][inner sep=0.75pt]  [font=\small,color={rgb, 255:red, 0; green, 0; blue, 0 }  ,opacity=1 ,rotate=-359.39]  {$5$};
\draw (342.59,48.34) node [anchor=north west][inner sep=0.75pt]  [font=\small,color={rgb, 255:red, 0; green, 0; blue, 0 }  ,opacity=1 ,rotate=-358.27]  {$1$};
\draw (266.49,122.98) node [anchor=north west][inner sep=0.75pt]  [font=\scriptsize,color={rgb, 255:red, 0; green, 0; blue, 0 }  ,opacity=1 ,rotate=-358.88]  {$-$};
\draw (372.29,121.82) node [anchor=north west][inner sep=0.75pt]  [font=\scriptsize,rotate=-358.88]  {$+$};
\draw (344.57,151.7) node [anchor=north west][inner sep=0.75pt]  [font=\small,color={rgb, 255:red, 0; green, 0; blue, 0 }  ,opacity=1 ,rotate=-359.39]  {$6$};
\draw (286.45,126.66) node [anchor=north west][inner sep=0.75pt]  [font=\scriptsize,rotate=-358.88]  {$+$};
\draw (298.83,68.82) node [anchor=north west][inner sep=0.75pt]  [font=\footnotesize,color={rgb, 255:red, 0; green, 0; blue, 0 }  ,opacity=1 ,rotate=-358.88]  {$-$};
\draw (377.91,151.04) node [anchor=north west][inner sep=0.75pt]  [font=\small,color={rgb, 255:red, 0; green, 0; blue, 0 }  ,opacity=1 ,rotate=-359.39]  {$6$};
\draw (356.16,69.48) node [anchor=north west][inner sep=0.75pt]  [font=\footnotesize,color={rgb, 255:red, 0; green, 0; blue, 0 }  ,opacity=1 ,rotate=-358.88]  {$-$};
\draw (330.83,122.15) node [anchor=north west][inner sep=0.75pt]  [font=\footnotesize,color={rgb, 255:red, 0; green, 0; blue, 0 }  ,opacity=1 ,rotate=-358.88]  {$-$};

\end{tikzpicture}

\caption{$\Delta_2$ in $L_2$ of $\mathcal{G}_s$ }
    \label{SD-2}
  \end{subfigure}
  
\caption{Signed dilation set $\Delta_2 = \{4,5,6\}$ from node $2$ in $L_2$.  Either $\{4,6\}$ or $\{5\}$ can be herded in $\Delta_2$. 
Here, node $5$ is herded within $\Delta_2$, while node $6$ is herded outside $\Delta_2$.}

\label{eg-SD}
\end{figure}
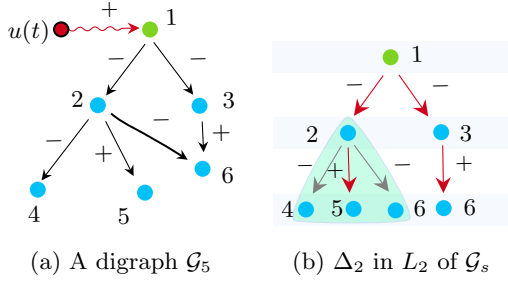
{
\small
\label{SSc_M5}
\begin{align}
\mathfrak{S}(\mathcal{C}_5) =\begin{bNiceMatrix}[margin,columns-width=auto]
    + & ~0 & ~0 & ~0 & ~0 & ~0 \\
    0 & ~- & ~0 & ~0 & ~0 & ~0 \\
    0 & ~- & ~0 & ~0 & ~0 & ~0 \\
    0 & ~0 & ~+ & ~0 & ~0 &~ 0 \\
    0 & ~0 & ~- & ~0 & ~0 & ~0 \\
    0 & ~0 &~+/- & ~0 & ~0 &~ 0\
     \CodeAfter
    \tikz \draw[blue,thick,rounded corners]
    (4-|4)rectangle(7-|3); 
\end{bNiceMatrix}
\end{align}
}
{
\small
\begin{align}\label{SM6} \mathfrak{S}(\mathcal{C}_6)=\begin{bNiceMatrix}[margin,columns-width=auto] ~
+ & 0 & 0 & 0 & 0 & 0\\
0 & - & 0 & + & 0 & -\\
0 & - & 0 & 0 & 0 & 0\\
0 & 0 & + & 0 & - & 0\\
0 & 0 & - & 0 & + & 0\\
0 & 0 & +/- & 0 & - & 0\\
\CodeAfter
\tikz [remember picture, overlay] {
    \draw[blue, thick, rounded corners] (4-|4) rectangle (7-|3);
    \draw[blue, thick, rounded corners] (7-|5) rectangle (4-|6);
    \draw[gray, thick, rounded corners] (3-|4) rectangle (2-|5);
    \draw[gray, thick, rounded corners] (2-|2) rectangle (3-|3);
    \draw[gray, thick, rounded corners] (1-|1) rectangle (2-|2);
    \draw[black, thick, dotted, rounded corners] (4-|4) -- (3-|4);
    \draw[black, thick, dotted, rounded corners] (3-|5) -- (4-|5);
    \draw[black, thick, dotted, rounded corners] (2-|2) -- (2-|2);
    \draw[black, thick, dotted, rounded corners] (3-|3) -- (4-|3);
    \draw[gray, thick, dotted, rounded corners] (2-|2) -- (2-|3);
}
\end{bNiceMatrix}
\end{align}
}

In some digraphs, there is a possibility for a node $i$ with a signed dilation to be reachable through multiple walks of different lengths. Then, it reappears in consecutive layers of $\mathcal{G}_s$. Consequently, the nodes in $\Delta_i$ are jointly associated with multiple columns of the $\mathfrak{S}(\mathcal{C}(\mathcal{A,B}))$ matrix; none of these columns is unisigned. For example, in the digraph $\mathcal{G}_6$ shown in Fig.~\ref{eg-SD2}(\subref{SD_1}), the signed dilation at node $2$ repeats due to the presence of more than one path leading to it. The effect of signed dilation repetition is illustrated in the matrix $\mathfrak{S}(\mathcal{C}(\mathcal{A}, \mathcal{B}))$, associated with the digraph $\mathcal{G}_6$ and denoted by $\mathfrak{S}[\mathcal{C}_6]$ in~\eqref{SM6}. From the highlighted entries corresponding to the walks leading to the dilation set, it is observed that the subvector highlighted at $\Psi_4$ is a scalar multiple of the subvector $\Psi_2$. In such cases, the following result holds.

\begin{figure}[ht]
  \centering
  \begin{subfigure}{0.19\textwidth}
    \centering 
\tikzset{every picture/.style={scale=0.7pt}} 

\begin{tikzpicture}[x=0.75pt,y=0.75pt,yscale=-1,xscale=1]

\draw [color={rgb, 255:red, 0; green, 0; blue, 0 }  ,draw opacity=1 ]   (137.74,152.97) -- (138.53,170.56) ;
\draw [shift={(138.67,173.56)}, rotate = 267.43] [fill={rgb, 255:red, 0; green, 0; blue, 0 }  ,fill opacity=1 ][line width=0.08]  [draw opacity=0] (7.14,-3.43) -- (0,0) -- (7.14,3.43) -- (4.74,0) -- cycle    ;
\draw [color={rgb, 255:red, 0; green, 0; blue, 0 }  ,draw opacity=1 ]   (105.47,99.8) -- (130,132.59) ;
\draw [shift={(131.8,134.99)}, rotate = 233.2] [fill={rgb, 255:red, 0; green, 0; blue, 0 }  ,fill opacity=1 ][line width=0.08]  [draw opacity=0] (7.14,-3.43) -- (0,0) -- (7.14,3.43) -- (4.74,0) -- cycle    ;
\draw [color={rgb, 255:red, 0; green, 0; blue, 0 }  ,draw opacity=1 ]   (99.8,100.99) -- (75.61,133.1) ;
\draw [shift={(73.8,135.49)}, rotate = 307] [fill={rgb, 255:red, 0; green, 0; blue, 0 }  ,fill opacity=1 ][line width=0.08]  [draw opacity=0] (7.14,-3.43) -- (0,0) -- (7.14,3.43) -- (4.74,0) -- cycle    ;
\draw  [draw opacity=0][fill={rgb, 255:red, 0; green, 192; blue, 248 }  ,fill opacity=1 ] (68.2,136.65) .. controls (71.12,136.59) and (73.54,138.94) .. (73.6,141.91) .. controls (73.67,144.87) and (71.35,147.32) .. (68.43,147.38) .. controls (65.51,147.44) and (63.09,145.08) .. (63.03,142.12) .. controls (62.97,139.16) and (65.28,136.71) .. (68.2,136.65) -- cycle ;
\draw  [draw opacity=0][fill={rgb, 255:red, 0; green, 192; blue, 248 }  ,fill opacity=1 ] (135.87,137) .. controls (138.79,136.94) and (141.21,139.29) .. (141.28,142.25) .. controls (141.34,145.22) and (139.03,147.67) .. (136.11,147.72) .. controls (133.19,147.78) and (130.77,145.43) .. (130.71,142.47) .. controls (130.64,139.51) and (132.95,137.06) .. (135.87,137) -- cycle ;
\draw [line width=0.75]    (97.17,191.45) .. controls (104.38,156.72) and (86.36,153.44) .. (75.27,148.84) ;
\draw [shift={(72.56,147.59)}, rotate = 27.79] [fill={rgb, 255:red, 0; green, 0; blue, 0 }  ][line width=0.08]  [draw opacity=0] (7.14,-3.43) -- (0,0) -- (7.14,3.43) -- (4.74,0) -- cycle    ;
\draw  [draw opacity=0][fill={rgb, 255:red, 126; green, 211; blue, 33 }  ,fill opacity=1 ] (102.92,85.86) .. controls (105.84,85.81) and (108.25,88.16) .. (108.32,91.12) .. controls (108.38,94.08) and (106.07,96.53) .. (103.15,96.59) .. controls (100.23,96.65) and (97.81,94.3) .. (97.75,91.34) .. controls (97.68,88.37) and (100,85.92) .. (102.92,85.86) -- cycle ;
\draw [color={rgb, 255:red, 208; green, 2; blue, 27 }  ,draw opacity=1 ]   (103.61,25.03) .. controls (105.27,26.7) and (105.26,28.37) .. (103.59,30.03) .. controls (101.92,31.69) and (101.91,33.36) .. (103.57,35.03) .. controls (105.23,36.7) and (105.22,38.37) .. (103.55,40.03) .. controls (101.88,41.69) and (101.87,43.36) .. (103.53,45.03) .. controls (105.19,46.7) and (105.18,48.37) .. (103.51,50.03) .. controls (101.84,51.7) and (101.84,53.36) .. (103.5,55.03) .. controls (105.16,56.7) and (105.15,58.37) .. (103.48,60.03) -- (103.46,64.97) -- (103.43,72.97) ;
\draw [shift={(103.42,75.97)}, rotate = 270.21] [fill={rgb, 255:red, 208; green, 2; blue, 27 }  ,fill opacity=1 ][line width=0.08]  [draw opacity=0] (8.04,-3.86) -- (0,0) -- (8.04,3.86) -- (5.34,0) -- cycle    ;
\draw  [color={rgb, 255:red, 0; green, 0; blue, 0 }  ,draw opacity=1 ][fill={rgb, 255:red, 208; green, 2; blue, 27 }  ,fill opacity=1 ][line width=0.75]  (103.5,20.32) .. controls (106.14,20.26) and (108.33,22.33) .. (108.39,24.93) .. controls (108.45,27.53) and (106.35,29.68) .. (103.71,29.74) .. controls (101.07,29.79) and (98.88,27.72) .. (98.82,25.12) .. controls (98.76,22.52) and (100.86,20.37) .. (103.5,20.32) -- cycle ;
\draw [color={rgb, 255:red, 0; green, 0; blue, 0 }  ,draw opacity=1 ]   (62.1,151.49) -- (32.62,190.11) ;
\draw [shift={(30.8,192.49)}, rotate = 307.36] [fill={rgb, 255:red, 0; green, 0; blue, 0 }  ,fill opacity=1 ][line width=0.08]  [draw opacity=0] (7.14,-3.43) -- (0,0) -- (7.14,3.43) -- (4.74,0) -- cycle    ;
\draw  [draw opacity=0][fill={rgb, 255:red, 0; green, 192; blue, 248 }  ,fill opacity=1 ] (28.2,193.26) .. controls (31.12,193.2) and (33.54,195.56) .. (33.6,198.52) .. controls (33.67,201.48) and (31.35,203.93) .. (28.43,203.99) .. controls (25.51,204.05) and (23.09,201.7) .. (23.03,198.74) .. controls (22.97,195.77) and (25.28,193.32) .. (28.2,193.26) -- cycle ;
\draw  [draw opacity=0][fill={rgb, 255:red, 0; green, 192; blue, 248 }  ,fill opacity=1 ] (137.87,179.28) .. controls (140.79,179.22) and (143.21,181.57) .. (143.28,184.53) .. controls (143.34,187.5) and (141.03,189.95) .. (138.11,190.01) .. controls (135.19,190.07) and (132.77,187.71) .. (132.71,184.75) .. controls (132.64,181.79) and (134.95,179.34) .. (137.87,179.28) -- cycle ;
\draw  [draw opacity=0][fill={rgb, 255:red, 0; green, 192; blue, 248 }  ,fill opacity=1 ] (99.54,195.28) .. controls (102.46,195.22) and (104.88,197.57) .. (104.94,200.53) .. controls (105.01,203.5) and (102.69,205.95) .. (99.77,206.01) .. controls (96.86,206.07) and (94.44,203.71) .. (94.37,200.75) .. controls (94.31,197.79) and (96.62,195.34) .. (99.54,195.28) -- cycle ;
\draw [line width=0.75]    (90.17,193.57) .. controls (61.91,186.78) and (64.52,162.25) .. (68.76,150.38) ;
\draw [shift={(93.37,194.23)}, rotate = 190.13] [fill={rgb, 255:red, 0; green, 0; blue, 0 }  ][line width=0.08]  [draw opacity=0] (7.14,-3.43) -- (0,0) -- (7.14,3.43) -- (4.74,0) -- cycle    ;
\draw [line width=0.75]    (131.05,170.93) .. controls (107.11,143.44) and (103.28,141.54) .. (80,140.89) ;
\draw [shift={(133.33,173.56)}, rotate = 229.09] [fill={rgb, 255:red, 0; green, 0; blue, 0 }  ][line width=0.08]  [draw opacity=0] (7.14,-3.43) -- (0,0) -- (7.14,3.43) -- (4.74,0) -- cycle    ;

\draw (48.76,131.53) node [anchor=north west][inner sep=0.75pt]  [font=\small,color={rgb, 255:red, 0; green, 0; blue, 0 }  ,opacity=1 ,rotate=-358.27]  {$2$};
\draw (149.43,131.99) node [anchor=north west][inner sep=0.75pt]  [font=\small,color={rgb, 255:red, 0; green, 0; blue, 0 }  ,opacity=1 ,rotate=-358.27]  {$3$};
\draw (79.54,210.01) node [anchor=north west][inner sep=0.75pt]  [font=\small,color={rgb, 255:red, 0; green, 0; blue, 0 }  ,opacity=1 ,rotate=-358.27]  {$5$};
\draw (19.89,208.51) node [anchor=north west][inner sep=0.75pt]  [font=\small,color={rgb, 255:red, 0; green, 0; blue, 0 }  ,opacity=1 ,rotate=-358.27]  {$4$};
\draw (111.35,75.96) node [anchor=north west][inner sep=0.75pt]  [font=\small,color={rgb, 255:red, 0; green, 0; blue, 0 }  ,opacity=1 ,rotate=-358.27]  {$1$};
\draw (30.95,158.82) node [anchor=north west][inner sep=0.75pt]  [font=\footnotesize,color={rgb, 255:red, 0; green, 0; blue, 0 }  ,opacity=1 ,rotate=-358.88]  {$-$};
\draw (66.07,16.98) node [anchor=north west][inner sep=0.75pt]  [font=\small]  {$u( t)$};
\draw (107.41,45.51) node [anchor=north west][inner sep=0.75pt]  [font=\scriptsize,color={rgb, 255:red, 0; green, 0; blue, 0 }  ,opacity=1 ,rotate=-358.88]  {$+$};
\draw (142.24,153.66) node [anchor=north west][inner sep=0.75pt]  [font=\footnotesize,rotate=-358.88]  {$+$};
\draw (55.04,176.58) node [anchor=north west][inner sep=0.75pt]  [font=\footnotesize,rotate=-358.88]  {$+$};
\draw (148.88,182.01) node [anchor=north west][inner sep=0.75pt]  [font=\small,color={rgb, 255:red, 0; green, 0; blue, 0 }  ,opacity=1 ,rotate=-358.27]  {$6$};
\draw (72.28,104.15) node [anchor=north west][inner sep=0.75pt]  [font=\footnotesize,color={rgb, 255:red, 0; green, 0; blue, 0 }  ,opacity=1 ,rotate=-358.88]  {$-$};
\draw (126.95,104.15) node [anchor=north west][inner sep=0.75pt]  [font=\footnotesize,color={rgb, 255:red, 0; green, 0; blue, 0 }  ,opacity=1 ,rotate=-358.88]  {$-$};
\draw (97.61,153.48) node [anchor=north west][inner sep=0.75pt]  [font=\footnotesize,color={rgb, 255:red, 0; green, 0; blue, 0 }  ,opacity=1 ,rotate=-358.88]  {$-$};
\draw (102.28,131.48) node [anchor=north west][inner sep=0.75pt]  [font=\footnotesize,color={rgb, 255:red, 0; green, 0; blue, 0 }  ,opacity=1 ,rotate=-358.88]  {$-$};

\end{tikzpicture}
     \caption{ A digraph $\mathcal{G}_6$}
    \label{SD_21}
  \end{subfigure}
  \begin{subfigure}{0.20\textwidth}
    \centering
\tikzset{every picture/.style={scale=0.7pt}} 

\begin{tikzpicture}[x=0.75pt,y=0.75pt,yscale=-1,xscale=1]

\draw  [color={rgb, 255:red, 0; green, 0; blue, 0 }  ,draw opacity=0 ][fill={rgb, 255:red, 74; green, 144; blue, 226 }  ,fill opacity=0.05 ] (301.78,59.51) -- (460.13,59.51) -- (460.13,80.58) -- (301.78,80.58) -- cycle ;
\draw  [color={rgb, 255:red, 0; green, 0; blue, 0 }  ,draw opacity=0 ][fill={rgb, 255:red, 74; green, 144; blue, 226 }  ,fill opacity=0.05 ] (301.28,111.16) -- (459.63,111.16) -- (459.63,132.23) -- (301.28,132.23) -- cycle ;
\draw  [color={rgb, 255:red, 0; green, 0; blue, 0 }  ,draw opacity=0 ][fill={rgb, 255:red, 74; green, 144; blue, 226 }  ,fill opacity=0.05 ] (300.66,8.56) -- (459,8.56) -- (459,29.63) -- (300.66,29.63) -- cycle ;
\draw  [color={rgb, 255:red, 0; green, 0; blue, 0 }  ,draw opacity=0 ][fill={rgb, 255:red, 74; green, 144; blue, 226 }  ,fill opacity=0.05 ] (302.45,160.51) -- (460.79,160.51) -- (460.79,181.58) -- (302.45,181.58) -- cycle ;
\draw  [color={rgb, 255:red, 0; green, 0; blue, 0 }  ,draw opacity=0 ][fill={rgb, 255:red, 74; green, 144; blue, 226 }  ,fill opacity=0.05 ] (301.95,212.16) -- (460.29,212.16) -- (460.29,233.23) -- (301.95,233.23) -- cycle ;
\draw  [color={rgb, 255:red, 74; green, 144; blue, 226 }  ,draw opacity=0.24 ][fill={rgb, 255:red, 0; green, 247; blue, 141 }  ,fill opacity=0.21 ] (355.53,159.46) .. controls (362.17,159.66) and (402.7,225.13) .. (396,228.33) .. controls (389.3,231.53) and (327.8,230.8) .. (320,229) .. controls (312.2,227.2) and (348.9,159.26) .. (355.53,159.46) -- cycle ;
\draw  [color={rgb, 255:red, 74; green, 144; blue, 226 }  ,draw opacity=0.24 ][fill={rgb, 255:red, 0; green, 247; blue, 141 }  ,fill opacity=0.21 ] (351.53,59.13) .. controls (358.17,59.33) and (398.7,124.8) .. (392,128) .. controls (385.3,131.2) and (323.8,130.47) .. (316,128.67) .. controls (308.2,126.87) and (344.9,58.93) .. (351.53,59.13) -- cycle ;
\draw [color={rgb, 255:red, 208; green, 2; blue, 27 }  ,draw opacity=1 ]   (374.9,31.39) -- (357.16,55.96) ;
\draw [shift={(355.4,58.39)}, rotate = 305.84] [fill={rgb, 255:red, 208; green, 2; blue, 27 }  ,fill opacity=1 ][line width=0.08]  [draw opacity=0] (10.72,-5.15) -- (0,0) -- (10.72,5.15) -- (7.12,0) -- cycle    ;
\draw [color={rgb, 255:red, 208; green, 2; blue, 27 }  ,draw opacity=1 ]   (385.3,30.89) -- (403.18,56.44) ;
\draw [shift={(404.9,58.89)}, rotate = 235.01] [fill={rgb, 255:red, 208; green, 2; blue, 27 }  ,fill opacity=1 ][line width=0.08]  [draw opacity=0] (10.72,-5.15) -- (0,0) -- (10.72,5.15) -- (7.12,0) -- cycle    ;
\draw [color={rgb, 255:red, 208; green, 2; blue, 27 }  ,draw opacity=1 ][fill={rgb, 255:red, 183; green, 176; blue, 176 }  ,fill opacity=1 ]   (414.51,78.15) -- (414.92,107.23) ;
\draw [shift={(414.97,110.23)}, rotate = 269.19] [fill={rgb, 255:red, 208; green, 2; blue, 27 }  ,fill opacity=1 ][line width=0.08]  [draw opacity=0] (10.72,-5.15) -- (0,0) -- (10.72,5.15) -- (7.12,0) -- cycle    ;
\draw [color={rgb, 255:red, 128; green, 128; blue, 128 }  ,draw opacity=1 ]   (347.06,78.54) -- (329.53,108.09) ;
\draw [shift={(328,110.67)}, rotate = 300.67] [fill={rgb, 255:red, 128; green, 128; blue, 128 }  ,fill opacity=1 ][line width=0.08]  [draw opacity=0] (10.72,-5.15) -- (0,0) -- (10.72,5.15) -- (7.12,0) -- cycle    ;
\draw [color={rgb, 255:red, 128; green, 128; blue, 128 }  ,draw opacity=1 ]   (355.57,78.1) -- (376.88,106.92) ;
\draw [shift={(378.67,109.33)}, rotate = 233.52] [fill={rgb, 255:red, 128; green, 128; blue, 128 }  ,fill opacity=1 ][line width=0.08]  [draw opacity=0] (10.72,-5.15) -- (0,0) -- (10.72,5.15) -- (7.12,0) -- cycle    ;
\draw  [draw opacity=0][fill={rgb, 255:red, 0; green, 192; blue, 248 }  ,fill opacity=1 ] (351.37,64.5) .. controls (354.29,64.44) and (356.71,66.79) .. (356.78,69.75) .. controls (356.84,72.72) and (354.53,75.17) .. (351.61,75.22) .. controls (348.69,75.28) and (346.27,72.93) .. (346.21,69.97) .. controls (346.14,67.01) and (348.45,64.56) .. (351.37,64.5) -- cycle ;
\draw  [draw opacity=0][fill={rgb, 255:red, 0; green, 192; blue, 248 }  ,fill opacity=1 ] (413.37,64) .. controls (416.29,63.94) and (418.71,66.29) .. (418.78,69.25) .. controls (418.84,72.22) and (416.53,74.67) .. (413.61,74.72) .. controls (410.69,74.78) and (408.27,72.43) .. (408.21,69.47) .. controls (408.14,66.51) and (410.45,64.06) .. (413.37,64) -- cycle ;
\draw  [draw opacity=0][fill={rgb, 255:red, 0; green, 192; blue, 248 }  ,fill opacity=1 ] (323.37,115.83) .. controls (326.29,115.77) and (328.71,118.12) .. (328.78,121.09) .. controls (328.84,124.05) and (326.53,126.5) .. (323.61,126.56) .. controls (320.69,126.62) and (318.27,124.26) .. (318.21,121.3) .. controls (318.14,118.34) and (320.45,115.89) .. (323.37,115.83) -- cycle ;
\draw  [draw opacity=0][fill={rgb, 255:red, 0; green, 192; blue, 248 }  ,fill opacity=1 ] (383.05,116.44) .. controls (385.97,116.38) and (388.39,118.74) .. (388.45,121.7) .. controls (388.52,124.66) and (386.2,127.11) .. (383.28,127.17) .. controls (380.36,127.23) and (377.95,124.88) .. (377.88,121.91) .. controls (377.82,118.95) and (380.13,116.5) .. (383.05,116.44) -- cycle ;
\draw  [draw opacity=0][fill={rgb, 255:red, 0; green, 192; blue, 248 }  ,fill opacity=1 ] (415.05,114.94) .. controls (417.97,114.88) and (420.39,117.24) .. (420.45,120.2) .. controls (420.52,123.16) and (418.2,125.61) .. (415.28,125.67) .. controls (412.36,125.73) and (409.95,123.38) .. (409.88,120.41) .. controls (409.82,117.45) and (412.13,115) .. (415.05,114.94) -- cycle ;
\draw [color={rgb, 255:red, 208; green, 2; blue, 27 }  ,draw opacity=1 ][fill={rgb, 255:red, 183; green, 176; blue, 176 }  ,fill opacity=1 ]   (355.53,130.79) -- (355.53,155.46) ;
\draw [shift={(355.53,158.46)}, rotate = 270] [fill={rgb, 255:red, 208; green, 2; blue, 27 }  ,fill opacity=1 ][line width=0.08]  [draw opacity=0] (10.72,-5.15) -- (0,0) -- (10.72,5.15) -- (7.12,0) -- cycle    ;
\draw [color={rgb, 255:red, 208; green, 2; blue, 27 }  ,draw opacity=1 ]   (351.57,79.6) -- (352.41,106.99) ;
\draw [shift={(352.5,109.99)}, rotate = 268.24] [fill={rgb, 255:red, 208; green, 2; blue, 27 }  ,fill opacity=1 ][line width=0.08]  [draw opacity=0] (10.72,-5.15) -- (0,0) -- (10.72,5.15) -- (7.12,0) -- cycle    ;
\draw  [draw opacity=0][fill={rgb, 255:red, 0; green, 192; blue, 248 }  ,fill opacity=1 ] (355.05,115.44) .. controls (357.97,115.38) and (360.39,117.74) .. (360.45,120.7) .. controls (360.52,123.66) and (358.2,126.11) .. (355.28,126.17) .. controls (352.36,126.23) and (349.95,123.88) .. (349.88,120.91) .. controls (349.82,117.95) and (352.13,115.5) .. (355.05,115.44) -- cycle ;
\draw [color={rgb, 255:red, 128; green, 128; blue, 128 }  ,draw opacity=1 ]   (351.72,179.2) -- (334.2,208.75) ;
\draw [shift={(332.67,211.33)}, rotate = 300.67] [fill={rgb, 255:red, 128; green, 128; blue, 128 }  ,fill opacity=1 ][line width=0.08]  [draw opacity=0] (10.72,-5.15) -- (0,0) -- (10.72,5.15) -- (7.12,0) -- cycle    ;
\draw [color={rgb, 255:red, 128; green, 128; blue, 128 }  ,draw opacity=1 ]   (360.23,178.76) -- (381.55,207.59) ;
\draw [shift={(383.33,210)}, rotate = 233.52] [fill={rgb, 255:red, 128; green, 128; blue, 128 }  ,fill opacity=1 ][line width=0.08]  [draw opacity=0] (10.72,-5.15) -- (0,0) -- (10.72,5.15) -- (7.12,0) -- cycle    ;
\draw  [draw opacity=0][fill={rgb, 255:red, 0; green, 192; blue, 248 }  ,fill opacity=1 ] (356.04,165.16) .. controls (358.96,165.1) and (361.38,167.46) .. (361.44,170.42) .. controls (361.51,173.38) and (359.19,175.83) .. (356.27,175.89) .. controls (353.36,175.95) and (350.94,173.6) .. (350.87,170.64) .. controls (350.81,167.67) and (353.12,165.22) .. (356.04,165.16) -- cycle ;
\draw  [draw opacity=0][fill={rgb, 255:red, 0; green, 192; blue, 248 }  ,fill opacity=1 ] (387.72,217.11) .. controls (390.64,217.05) and (393.06,219.4) .. (393.12,222.36) .. controls (393.18,225.33) and (390.87,227.78) .. (387.95,227.84) .. controls (385.03,227.9) and (382.61,225.54) .. (382.55,222.58) .. controls (382.48,219.62) and (384.8,217.17) .. (387.72,217.11) -- cycle ;
\draw [color={rgb, 255:red, 208; green, 2; blue, 27 }  ,draw opacity=1 ]   (356.23,180.26) -- (357.07,207.66) ;
\draw [shift={(357.17,210.66)}, rotate = 268.24] [fill={rgb, 255:red, 208; green, 2; blue, 27 }  ,fill opacity=1 ][line width=0.08]  [draw opacity=0] (10.72,-5.15) -- (0,0) -- (10.72,5.15) -- (7.12,0) -- cycle    ;
\draw  [draw opacity=0][fill={rgb, 255:red, 0; green, 192; blue, 248 }  ,fill opacity=1 ] (359.72,216.11) .. controls (362.64,216.05) and (365.06,218.4) .. (365.12,221.36) .. controls (365.18,224.33) and (362.87,226.78) .. (359.95,226.84) .. controls (357.03,226.9) and (354.61,224.54) .. (354.55,221.58) .. controls (354.48,218.62) and (356.8,216.17) .. (359.72,216.11) -- cycle ;
\draw  [draw opacity=0][fill={rgb, 255:red, 0; green, 192; blue, 248 }  ,fill opacity=1 ] (328.71,217.16) .. controls (331.63,217.1) and (334.05,219.46) .. (334.11,222.42) .. controls (334.18,225.38) and (331.86,227.83) .. (328.94,227.89) .. controls (326.02,227.95) and (323.6,225.6) .. (323.54,222.64) .. controls (323.47,219.67) and (325.79,217.22) .. (328.71,217.16) -- cycle ;
\draw  [draw opacity=0][fill={rgb, 255:red, 126; green, 211; blue, 33 }  ,fill opacity=1 ] (379.71,13.73) .. controls (382.63,13.67) and (385.05,16.03) .. (385.11,18.99) .. controls (385.18,21.95) and (382.86,24.4) .. (379.95,24.46) .. controls (377.03,24.52) and (374.61,22.17) .. (374.54,19.2) .. controls (374.48,16.24) and (376.79,13.79) .. (379.71,13.73) -- cycle ;

\draw (321.91,64.06) node [anchor=north west][inner sep=0.75pt]  [font=\small,color={rgb, 255:red, 0; green, 0; blue, 0 }  ,opacity=1 ,rotate=-359.39]  {$2$};
\draw (424,63.16) node [anchor=north west][inner sep=0.75pt]  [font=\small,color={rgb, 255:red, 0; green, 0; blue, 0 }  ,opacity=1 ,rotate=-359.39]  {$3$};
\draw (305.32,114.54) node [anchor=north west][inner sep=0.75pt]  [font=\small,color={rgb, 255:red, 0; green, 0; blue, 0 }  ,opacity=1 ,rotate=-359.39]  {$4$};
\draw (338.53,113.7) node [anchor=north west][inner sep=0.75pt]  [font=\small,color={rgb, 255:red, 0; green, 0; blue, 0 }  ,opacity=1 ,rotate=-359.39]  {$5$};
\draw (391.38,10.34) node [anchor=north west][inner sep=0.75pt]  [font=\small,color={rgb, 255:red, 0; green, 0; blue, 0 }  ,opacity=1 ,rotate=-358.27]  {$1$};
\draw (315.28,84.98) node [anchor=north west][inner sep=0.75pt]  [font=\scriptsize,color={rgb, 255:red, 0; green, 0; blue, 0 }  ,opacity=1 ,rotate=-358.88]  {$-$};
\draw (421.07,83.82) node [anchor=north west][inner sep=0.75pt]  [font=\scriptsize,rotate=-358.88]  {$+$};
\draw (393.36,113.7) node [anchor=north west][inner sep=0.75pt]  [font=\small,color={rgb, 255:red, 0; green, 0; blue, 0 }  ,opacity=1 ,rotate=-359.39]  {$6$};
\draw (338.24,88.66) node [anchor=north west][inner sep=0.75pt]  [font=\scriptsize,rotate=-358.88]  {$+$};
\draw (326.58,164.73) node [anchor=north west][inner sep=0.75pt]  [font=\small,color={rgb, 255:red, 0; green, 0; blue, 0 }  ,opacity=1 ,rotate=-359.39]  {$2$};
\draw (309.98,215.21) node [anchor=north west][inner sep=0.75pt]  [font=\small,color={rgb, 255:red, 0; green, 0; blue, 0 }  ,opacity=1 ,rotate=-359.39]  {$4$};
\draw (343.19,214.37) node [anchor=north west][inner sep=0.75pt]  [font=\small,color={rgb, 255:red, 0; green, 0; blue, 0 }  ,opacity=1 ,rotate=-359.39]  {$5$};
\draw (316.95,188.65) node [anchor=north west][inner sep=0.75pt]  [font=\scriptsize,color={rgb, 255:red, 0; green, 0; blue, 0 }  ,opacity=1 ,rotate=-358.88]  {$-$};
\draw (396.69,214.37) node [anchor=north west][inner sep=0.75pt]  [font=\small,color={rgb, 255:red, 0; green, 0; blue, 0 }  ,opacity=1 ,rotate=-359.39]  {$6$};
\draw (340.91,189.32) node [anchor=north west][inner sep=0.75pt]  [font=\scriptsize,rotate=-358.88]  {$+$};
\draw (347.61,30.82) node [anchor=north west][inner sep=0.75pt]  [font=\footnotesize,color={rgb, 255:red, 0; green, 0; blue, 0 }  ,opacity=1 ,rotate=-358.88]  {$-$};
\draw (363.95,137.65) node [anchor=north west][inner sep=0.75pt]  [font=\scriptsize,color={rgb, 255:red, 0; green, 0; blue, 0 }  ,opacity=1 ,rotate=-358.88]  {$-$};
\draw (426.69,113.04) node [anchor=north west][inner sep=0.75pt]  [font=\small,color={rgb, 255:red, 0; green, 0; blue, 0 }  ,opacity=1 ,rotate=-359.39]  {$6$};
\draw (404.95,31.48) node [anchor=north west][inner sep=0.75pt]  [font=\footnotesize,color={rgb, 255:red, 0; green, 0; blue, 0 }  ,opacity=1 ,rotate=-358.88]  {$-$};
\draw (379.61,84.15) node [anchor=north west][inner sep=0.75pt]  [font=\footnotesize,color={rgb, 255:red, 0; green, 0; blue, 0 }  ,opacity=1 ,rotate=-358.88]  {$-$};
\draw (384.28,182.82) node [anchor=north west][inner sep=0.75pt]  [font=\footnotesize,color={rgb, 255:red, 0; green, 0; blue, 0 }  ,opacity=1 ,rotate=-358.88]  {$-$};

\end{tikzpicture}

\caption{$\Delta_2$ repeating in $L_2$ and $L_4$ of $\mathcal{G}_s$ }
    \label{SD-22}
  \end{subfigure}
  
\caption{The set $\Delta_2$ in $L_2$ reappears in $L_4$ of $\mathcal{G}_s$ due to the edge $(5,2)$; 
hence, the dilation set is repeated in $\mathcal{G}_s$.
}
\label{eg-SD2}
\end{figure}
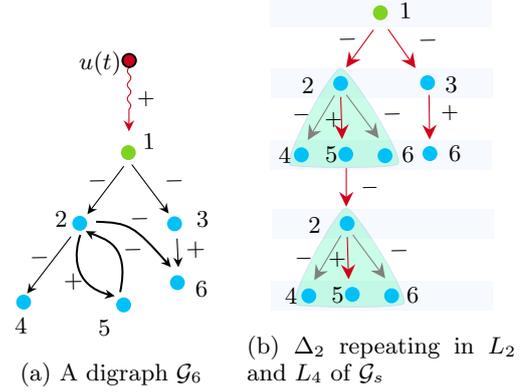
\begin{lemma}
    
 \label{Prop:SD set}

Let $\mathcal{G}(\mathcal{A},\mathcal{B})$ be a digraph associated with system~(\ref{sys1:sys1eqn}), and let $\mathcal{G}_s$ denote its corresponding signed layered graph. For a given node $i$, let $\Delta_i$ denote one of the sets of nodes exhibiting a signed dilation in $\mathcal{G(A,B)}$. Then, any set of nodes in $\Delta_i$ that is unherded by node $i$ remains unherded by it, even if there exist multiple walks to node $i$.
\end{lemma}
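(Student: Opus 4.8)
The plan is to isolate the influence that node $i$ exerts on its out-neighbourhood $\Delta_i$ and to show that this influence is always collinear, regardless of how many walks reach $i$. First I would fix the vector of signed edge weights leaving $i$, writing $d_i \triangleq (a_{ji})_{j \in \Delta_i}$, i.e. the subvector of the $i$-th column of $\mathcal{A}$ supported on $\Delta_i$. By the definition of a signed dilation (Section~\ref{SD para}), $d_i$ carries both strictly positive and strictly negative entries; that is, $d_i$ is \emph{not} unisigned, with positive support $\Delta_i^P$ and negative support $\Delta_i^N$.

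The core step is to compute, for every layer $L_k$ in which node $i$ appears, the contribution that $i$ makes to the $\Delta_i$-entries of the corresponding column $\Psi_k = \mathcal{A}^k\mathcal{B}$. By Remark~\ref{Remark1} and Lemma~\ref{lem1:lemma1}, $i \in V_{L_k}$ exactly when a length-$(k-1)$ walk from the leader to $i$ exists; let $G_k \triangleq \sum_r \mathcal{W}^r_{(1,i)}$ collect the gains of all such walks. Extending each of them by the single edge $i \to j$ adds $G_k\, a_{ji}$ to $[\Psi_k]_j$ for every $j \in \Delta_i$, so the part of $\Psi_k$ restricted to $\Delta_i$ and routed through $i$ equals $G_k\, d_i$. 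Hence, as $k$ ranges over the layers in which $i$ recurs, the restricted subvectors $\{G_k\, d_i\}$ are all scalar multiples of the single fixed vector $d_i$ — precisely the scalar-multiple relationship seen for $\Psi_2$ and $\Psi_4$ in $\mathfrak{S}[\mathcal{C}_6]$ of~\eqref{SM6}.

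Collinearity then closes the argument. Herding a subset $S \subseteq \Delta_i$ through node $i$ amounts to choosing a combination $\sum_k \alpha_k \Psi_k$ whose restriction to $S$ is strictly positive; but that restriction equals $\big(\sum_k \alpha_k G_k\big)\, d_i|_S$, a single scalar multiple of $d_i|_S$. Since $d_i$ is not unisigned, no nonzero scalar multiple of it lies in the open positive orthant of $\mathbb{R}^{|\Delta_i|}$: for any scalar, the coordinates indexed by $\Delta_i^P$ and those indexed by $\Delta_i^N$ carry opposite signs. Thus at most one sign-class can be driven positive at a time, and the complementary class — the set already unherded by node $i$ — cannot be made positive by any combination of the recurring columns. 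Equivalently, applying the test of Proposition~\ref{test for H} to the $\Delta_i$-block, the rank-one structure along $d_i$ always admits a nonnegative certificate $y \ge 0$ with $d_i^{\top} y = 0$ supported on the opposite sign-class, annihilating every through-$i$ column and confirming that the unherded set stays unherded.

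The hard part will be making precise the phrase \emph{``herded by node $i$''} in the core step: I must sum over all walks reaching $i$ at each layer (so that $G_k$ is well defined and the routed contribution is genuinely $G_k\, d_i$), while carefully excluding contributions to $\Delta_i$-nodes that arrive along walks bypassing $i$ — such as node $6$ in Fig.~\ref{eg-SD2}, which is herded \emph{outside} $\Delta_2$. Separating the through-$i$ contributions from these bypass contributions, and arguing that the latter are irrelevant to herdability \emph{by node $i$}, is the delicate bookkeeping on which the collinearity conclusion rests.
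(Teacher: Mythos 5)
Your proposal is correct and follows essentially the same route as the paper's proof: the paper likewise takes the $\Delta_i$-restricted subvectors of the recurring columns $\Psi_p$ and $\Psi_{p+m}$, asserts they are scalar multiples of one another, and concludes that no linear combination of collinear, non-unisigned vectors can yield a positive image. Your explicit factorization of each through-$i$ contribution as $G_k\,d_i$, and your flagging of the bypass-walk bookkeeping, actually make precise the scalar-multiple claim that the paper states without justification, so no gap — if anything your version is the more careful one.
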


\begin{proof}
\normalfont

Let $\mathcal{D}^1_1$ denote the subvector of $[\Psi_p]$ consisting of the entries of $\Delta_i$ in their natural index order, 
and let $\mathcal{D}^2_1$ denote the corresponding subvector of $[\Psi_{p+m}]$ arising from the reappearance of $\Delta_i$ after $m$ layers. 
Since the graph is input connected, $\mathcal{D}^2_1$ is a scalar multiple of $\mathcal{D}^1_1$, determined by the edge weights along the paths. 
Any linear combination of $\mathcal{D}^1_1$ and $\mathcal{D}^2_1$ yields a contradictory system of inequalities for a positive image; 
equivalently, the intersection of the associated half-spaces is empty. Hence, the proposition follows.
\hfill $\qed$
\end{proof}
 

\begin{lemma}\label{SD-herd test}
Let $\mathcal{G}(\mathcal{A},\mathcal{B})$ be a digraph, and let $\mathcal{G}_s$ denote its corresponding signed layered graph. If there exists a signed dilation in $\mathcal{G}(\mathcal{A},\mathcal{B})$, then $\mathcal{G}(\mathcal{A},\mathcal{B})$ is not $\mathcal{SS}$ herdable unless the unherded node in signed dilation set $\Delta_i$ is sign matched outside $\Delta_i$.
\end{lemma}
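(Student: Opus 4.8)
The plan is to certify the failure of herdability via the alternative test of Proposition~\ref{test for H}, by producing a nonzero nonnegative vector in $\textit{Null}\,(\mathcal{C}(\mathcal{A,B})^{\top})$ whenever the unherded side of the dilation is not rescued from outside. Suppose node $i\in V_{L_k}$ carries a signed dilation, so that $\Delta_i^P$ and $\Delta_i^N$ are both nonempty. By Lemma~\ref{lem1:lemma1} the members of $\Delta_i$ lie in layer $L_{k+1}$ and therefore contribute to the column $[\Psi_k]$ of $\mathcal{C}(\mathcal{A,B})$. For any $j\in\Delta_i$ reached only through $i$, the corresponding entry factors as $[\Psi_k]_j=\omega\,a_{ji}$, where $\omega\triangleq\sum_r \mathcal{W}^r_{(1,i)}$ is the common signed leader-to-$i$ gain; since $a_{ji}>0$ on $\Delta_i^P$ and $a_{ji}<0$ on $\Delta_i^N$, the entries of the two subsets are forced to carry opposite signs, independently of the sign of $\omega$. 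This is the precise mechanism by which node $i$ can herd at most one of $\Delta_i^P,\Delta_i^N$.

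First I would neutralise the effect of multiple walks reaching $i$. If $\Delta_i$ reappears in a later layer $L_{k+m+1}$, Lemma~\ref{Prop:SD set} shows that the $\Delta_i$-block of $[\Psi_{k+m}]$ equals $(\omega'/\omega)$ times that of $[\Psi_k]$, with the same scalar on every coordinate of the block. Hence any linear functional annihilating the $\Delta_i$-block of $[\Psi_k]$ also annihilates that of every reappearance, so for the construction below all dilation columns may be treated as a single sign-opposed pair of coordinates.

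The core step builds the witness. Pick $p\in\Delta_i^P$ and $q\in\Delta_i^N$ such that every leader-reaching walk to $p$ and to $q$ passes through $i$, i.e.\ neither is sign matched outside $\Delta_i$; then rows $p$ and $q$ of $\mathcal{C}(\mathcal{A,B})$ are supported only on the dilation columns. Define $\bm{y}\ge\mathbf{0}$ with all entries zero except $[\bm{y}]_p,[\bm{y}]_q>0$ chosen so that $[\bm{y}]_p[\Psi_k]_p+[\bm{y}]_q[\Psi_k]_q=0$, which is feasible precisely because $[\Psi_k]_p$ and $[\Psi_k]_q$ have opposite signs. Then $\mathcal{C}(\mathcal{A,B})^{\top}\bm{y}$ vanishes on $[\Psi_k]$ by construction, on every repeated dilation column by the scalar-multiple property, and on all remaining columns because rows $p,q$ are zero there. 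Thus $\bm{y}\in\textit{Null}\,(\mathcal{C}(\mathcal{A,B})^{\top})$ is nonnegative and nonzero, and since only the structurally fixed signs were used, this holds for every admissible realization $a_{ij}\in\mathbb{R}^{+}$, $b_i\in\mathbb{R}\setminus\{0\}$. By Proposition~\ref{test for H} (equivalently statement~(2) of Theorem~\ref{gordan thm}), $\mathcal{G}(\mathcal{A,B})$ is not $\mathcal{SS}$ herdable.

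Finally I would record the exception. The witness collapses exactly when the unherded node is sign matched outside $\Delta_i$: an additional walk avoiding $i$ makes row $q$ (say) nonzero in a column outside the dilation block, so a vector supported on $\{p,q\}$ can no longer be orthogonal to that column, while the same external walk furnishes a column through which $q$ may be herded independently of the dilation. I expect the principal obstacle to be pinning down ``sign matched outside $\Delta_i$'' so that it is at once (i) the exact condition destroying every two-coordinate witness of the above form, across both choices of which side of the dilation is herded, and (ii) a usable route for herding the unherded node; the second delicate point is guaranteeing orthogonality to every \emph{repeated} dilation column simultaneously, which is precisely where Lemma~\ref{Prop:SD set} is indispensable.
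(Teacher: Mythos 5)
The paper does not actually supply a proof of this lemma: it is stated bare, justified only by the preceding Lemma~\ref{Prop:SD set} (the repeated-dilation-set argument) and the worked example around $\mathfrak{S}(\mathcal{C}_6)$. Your proposal is therefore more explicit than anything in the source, and its overall route --- certify non-herdability for every realization by exhibiting a nonnegative vector in $\textit{Null}\,(\mathcal{C}(\mathcal{A,B})^{\top})$ and invoking Proposition~\ref{test for H} --- is exactly the mechanism the paper uses implicitly (it is also how the paper proves Proposition~\ref{prop3:prop3} and handles the dilation cases inside the proof of Theorem~\ref{theorem1}). Your use of Lemma~\ref{Prop:SD set} to collapse the repeated dilation columns into scalar multiples of a single block is likewise the intended role of that lemma.

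There is, however, one genuine gap. Your two-coordinate witness $\bm{y}$ supported on $\{p,q\}$ requires that rows $p$ and $q$ of $\mathcal{C}(\mathcal{A,B})$ vanish outside the dilation columns, which you secure by assuming that \emph{every} leader-to-$p$ and leader-to-$q$ walk ends with the edge from $i$. You then gloss this as ``neither is sign matched outside $\Delta_i$,'' but these are not the same condition: under Definition~\ref{sign M def}, a node can possess walks that avoid $i$ and still fail to be sign matched outside $\Delta_i$ (e.g.\ because those walks conflict in sign with the other nodes of the layer where they land, or only produce $+/-$ entries that cannot be aligned). In that situation rows $p$ and $q$ have nonzero entries outside the dilation block, your $\bm{y}$ is no longer orthogonal to those extra columns, and the argument as written does not close; one needs the heavier linear-dependence-across-all-realizations analysis used in Appendix~A rather than a fixed two-coordinate witness. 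As a secondary, cosmetic point, your scalar $\omega=\sum_r\mathcal{W}^r_{(1,i)}$ should be indexed by walk length, since only the length-$(k-1)$ walks to $i$ contribute to $[\Psi_k]$; the proportionality of the rows restricted to the dilation columns survives, so this does not affect the structure of the argument.
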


In the $\mathcal{G}_s$ shown in Fig.~\ref{eg-SD2}(\subref{SD-22}), the nodes of $\Delta_2$ appear in $\Phi_2$ and later in $\Phi_4$ due to the reappearance of the same dilation set in $\mathcal{G}_s$. Either $\Delta_2^N = \{4,6\}$ or $\Delta_2^P = \{5\}$ can be herded. With $6$ already herded by node $3$, any choice leaves at least one node unherded: $4 \notin \Delta_2^P$ and $5 \notin \Delta_2^N$ and therefore the digraph is not $\mathcal{SS}$ herdable.

A digraph $\mathcal{G(A,B)}$ with a signed dilation is $\mathcal{SS}$ herdable only if the unherded nodes in $\Delta_i$ are herded outside the dilation set by a node $j\neq i$. To arrive at graph-theoretic conditions for the $\mathcal{SS}$ herdability of a network even in the presence of such nodes, we introduce the notion of layerwise unisigned graphs in the next section.
\subsection{ Layerwise Unisigned Graphs and Sign Matching}\label{LUG and SM para} 
It is well known that a directed graph is structurally controllable if all the nodes are spanned by a cactus, which is a disjoint union of stems and buds~\cite{lin1974structural}. In this subsection, we investigate a similar graph-theoretic structure whose existence can guarantee $\mathcal{SS}$ herdability. Towards the same, we propose the following definition.
\begin{definition}
\label{def_LUG}
    Consider a signed digraph $\mathcal{G(A,B)}$ and the corresponding signed layered graph $\mathcal{G}_s$. A layerwise unisigned graph $\mathcal{LUG}(\mathcal{G}_s)$ is a subgraph of $\mathcal{G}_s$ in which the signs of all incoming edges of any given layer $L_k,~k\in\{2,..,n\}$ are the same. A node $i \in \mathcal{V}$ is said to be spanned by $\mathcal{LUG}(\mathcal{G}_s)$ if $i \in \mathcal{V}(\mathcal{LUG}(\mathcal{G}_s))$.
\end{definition}

\begin{figure}[ht]
  \centering
   \begin{subfigure}{0.15\textwidth} 
    \centering
\tikzset{every picture/.style={scale=0.75pt}} 

\begin{tikzpicture}[x=0.75pt,y=0.75pt,yscale=-1,xscale=1]

\draw [color={rgb, 255:red, 0; green, 0; blue, 0 }  ,draw opacity=1 ]   (55.3,239.71) -- (55.32,277.32) ;
\draw [shift={(55.33,280.32)}, rotate = 269.97] [fill={rgb, 255:red, 0; green, 0; blue, 0 }  ,fill opacity=1 ][line width=0.08]  [draw opacity=0] (7.14,-3.43) -- (0,0) -- (7.14,3.43) -- (4.74,0) -- cycle    ;
\draw [color={rgb, 255:red, 0; green, 0; blue, 0 }  ,draw opacity=1 ]   (95.47,178.86) -- (126.24,215.75) ;
\draw [shift={(128.16,218.06)}, rotate = 230.18] [fill={rgb, 255:red, 0; green, 0; blue, 0 }  ,fill opacity=1 ][line width=0.08]  [draw opacity=0] (7.14,-3.43) -- (0,0) -- (7.14,3.43) -- (4.74,0) -- cycle    ;
\draw [color={rgb, 255:red, 0; green, 0; blue, 0 }  ,draw opacity=1 ]   (88.01,178.86) -- (58.26,216.1) ;
\draw [shift={(56.39,218.44)}, rotate = 308.62] [fill={rgb, 255:red, 0; green, 0; blue, 0 }  ,fill opacity=1 ][line width=0.08]  [draw opacity=0] (7.14,-3.43) -- (0,0) -- (7.14,3.43) -- (4.74,0) -- cycle    ;
\draw  [draw opacity=0][fill={rgb, 255:red, 0; green, 192; blue, 248 }  ,fill opacity=1 ] (55.7,221.77) .. controls (58.62,221.71) and (61.04,224.33) .. (61.1,227.64) .. controls (61.17,230.95) and (58.85,233.69) .. (55.93,233.75) .. controls (53.01,233.82) and (50.59,231.19) .. (50.53,227.88) .. controls (50.46,224.58) and (52.78,221.84) .. (55.7,221.77) -- cycle ;
\draw  [draw opacity=0][fill={rgb, 255:red, 0; green, 192; blue, 248 }  ,fill opacity=1 ] (130.82,281.38) .. controls (133.74,281.32) and (136.16,283.94) .. (136.22,287.25) .. controls (136.29,290.56) and (133.98,293.3) .. (131.06,293.36) .. controls (128.14,293.43) and (125.72,290.8) .. (125.65,287.49) .. controls (125.59,284.19) and (127.9,281.45) .. (130.82,281.38) -- cycle ;
\draw  [draw opacity=0][fill={rgb, 255:red, 0; green, 192; blue, 248 }  ,fill opacity=1 ] (130.87,221.6) .. controls (133.79,221.54) and (136.21,224.16) .. (136.28,227.47) .. controls (136.34,230.78) and (134.03,233.52) .. (131.11,233.58) .. controls (128.19,233.65) and (125.77,231.02) .. (125.71,227.71) .. controls (125.64,224.41) and (127.95,221.67) .. (130.87,221.6) -- cycle ;
\draw [line width=0.75]    (60.66,220.36) .. controls (81.68,199.49) and (103.04,201.75) .. (122.46,221.9) ;
\draw [shift={(124.25,223.82)}, rotate = 227.75] [fill={rgb, 255:red, 0; green, 0; blue, 0 }  ][line width=0.08]  [draw opacity=0] (7.14,-3.43) -- (0,0) -- (7.14,3.43) -- (4.74,0) -- cycle    ;
\draw    (124.48,233.91) .. controls (121.82,237.67) and (107.54,244.76) .. (94.28,244.42) .. controls (82.27,244.12) and (73.23,238.76) .. (65,234.1) ;
\draw [shift={(62.43,232.66)}, rotate = 28.84] [fill={rgb, 255:red, 0; green, 0; blue, 0 }  ][line width=0.08]  [draw opacity=0] (8.04,-3.86) -- (0,0) -- (8.04,3.86) -- (5.34,0) -- cycle    ;
\draw [color={rgb, 255:red, 0; green, 0; blue, 0 }  ,draw opacity=1 ]   (130.8,237.02) -- (130.82,275.2) ;
\draw [shift={(130.83,278.2)}, rotate = 269.97] [fill={rgb, 255:red, 0; green, 0; blue, 0 }  ,fill opacity=1 ][line width=0.08]  [draw opacity=0] (7.14,-3.43) -- (0,0) -- (7.14,3.43) -- (4.74,0) -- cycle    ;
\draw  [draw opacity=0][fill={rgb, 255:red, 0; green, 192; blue, 248 }  ,fill opacity=1 ] (54.41,282.62) .. controls (57.33,282.56) and (59.75,285.19) .. (59.82,288.49) .. controls (59.88,291.8) and (57.57,294.54) .. (54.65,294.61) .. controls (51.73,294.67) and (49.31,292.04) .. (49.24,288.74) .. controls (49.18,285.43) and (51.49,282.69) .. (54.41,282.62) -- cycle ;
\draw  [draw opacity=0][fill={rgb, 255:red, 126; green, 211; blue, 33 }  ,fill opacity=1 ] (91.79,163.76) .. controls (94.71,163.7) and (97.13,166.32) .. (97.2,169.63) .. controls (97.26,172.94) and (94.95,175.68) .. (92.03,175.74) .. controls (89.11,175.81) and (86.69,173.18) .. (86.62,169.87) .. controls (86.56,166.57) and (88.87,163.83) .. (91.79,163.76) -- cycle ;
\draw [color={rgb, 255:red, 208; green, 2; blue, 27 }  ,draw opacity=1 ]   (92.11,111.58) .. controls (93.77,113.25) and (93.76,114.92) .. (92.09,116.58) .. controls (90.42,118.24) and (90.41,119.91) .. (92.07,121.58) .. controls (93.73,123.25) and (93.72,124.92) .. (92.05,126.58) .. controls (90.38,128.24) and (90.37,129.91) .. (92.03,131.58) .. controls (93.69,133.25) and (93.68,134.92) .. (92.01,136.58) .. controls (90.34,138.25) and (90.34,139.91) .. (92,141.58) .. controls (93.66,143.25) and (93.65,144.92) .. (91.98,146.58) -- (91.96,151.52) -- (91.93,159.52) ;
\draw [shift={(91.92,162.52)}, rotate = 270.21] [fill={rgb, 255:red, 208; green, 2; blue, 27 }  ,fill opacity=1 ][line width=0.08]  [draw opacity=0] (8.04,-3.86) -- (0,0) -- (8.04,3.86) -- (5.34,0) -- cycle    ;
\draw  [color={rgb, 255:red, 0; green, 0; blue, 0 }  ,draw opacity=1 ][fill={rgb, 255:red, 208; green, 2; blue, 27 }  ,fill opacity=1 ][line width=0.75]  (92,106.87) .. controls (94.64,106.82) and (96.83,108.88) .. (96.89,111.48) .. controls (96.95,114.08) and (94.85,116.24) .. (92.21,116.29) .. controls (89.57,116.34) and (87.38,114.27) .. (87.32,111.67) .. controls (87.26,109.07) and (89.36,106.92) .. (92,106.87) -- cycle ;

\draw (37.26,218.08) node [anchor=north west][inner sep=0.75pt]  [font=\small,color={rgb, 255:red, 0; green, 0; blue, 0 }  ,opacity=1 ,rotate=-358.27]  {$2$};
\draw (138.93,217.04) node [anchor=north west][inner sep=0.75pt]  [font=\small,color={rgb, 255:red, 0; green, 0; blue, 0 }  ,opacity=1 ,rotate=-358.27]  {$3$};
\draw (139.86,278.88) node [anchor=north west][inner sep=0.75pt]  [font=\small,color={rgb, 255:red, 0; green, 0; blue, 0 }  ,opacity=1 ,rotate=-358.27]  {$5$};
\draw (35.39,278.06) node [anchor=north west][inner sep=0.75pt]  [font=\small,color={rgb, 255:red, 0; green, 0; blue, 0 }  ,opacity=1 ,rotate=-358.27]  {$4$};
\draw (111.85,153.52) node [anchor=north west][inner sep=0.75pt]  [font=\small,color={rgb, 255:red, 0; green, 0; blue, 0 }  ,opacity=1 ,rotate=-358.27]  {$1$};
\draw (113.41,178.54) node [anchor=north west][inner sep=0.75pt]  [font=\footnotesize,rotate=-358.88]  {$+$};
\draw (140.95,249.37) node [anchor=north west][inner sep=0.75pt]  [font=\footnotesize,color={rgb, 255:red, 0; green, 0; blue, 0 }  ,opacity=1 ,rotate=-358.88]  {$-$};
\draw (56.18,179.84) node [anchor=north west][inner sep=0.75pt]  [font=\footnotesize,color={rgb, 255:red, 0; green, 0; blue, 0 }  ,opacity=1 ,rotate=-358.88]  {$+$};
\draw (87.87,186.13) node [anchor=north west][inner sep=0.75pt]  [font=\footnotesize,rotate=-358.88]  {$+$};
\draw (87.3,248.89) node [anchor=north west][inner sep=0.75pt]  [font=\footnotesize,color={rgb, 255:red, 0; green, 0; blue, 0 }  ,opacity=1 ]  {$-$};
\draw (33.17,251.16) node [anchor=north west][inner sep=0.75pt]  [font=\footnotesize,color={rgb, 255:red, 0; green, 0; blue, 0 }  ,opacity=1 ]  {$-$};
\draw (55.23,128.2) node [anchor=north west][inner sep=0.75pt]  [font=\small]  {$u( t)$};
\draw (102.91,130.06) node [anchor=north west][inner sep=0.75pt]  [font=\scriptsize,color={rgb, 255:red, 0; green, 0; blue, 0 }  ,opacity=1 ,rotate=-358.88]  {$+$};

\end{tikzpicture}

    \caption{A digraph $\mathcal{G}_7$} 
    \label{Fig:5a}
  \end{subfigure}
  \begin{subfigure}{0.16\textwidth}
    \centering 

\tikzset{every picture/.style={scale=0.75pt}} 

\begin{tikzpicture}[x=0.75pt,y=0.75pt,yscale=-1,xscale=1]

\draw [color={rgb, 255:red, 208; green, 2; blue, 27 }  ,draw opacity=0.68 ]   (220.74,66.94) -- (211.3,89.37) ;
\draw [shift={(210.14,92.14)}, rotate = 292.8] [fill={rgb, 255:red, 208; green, 2; blue, 27 }  ,fill opacity=0.68 ][line width=0.08]  [draw opacity=0] (10.72,-5.15) -- (0,0) -- (10.72,5.15) -- (7.12,0) -- cycle    ;
\draw [color={rgb, 255:red, 208; green, 2; blue, 27 }  ,draw opacity=0.68 ]   (226.67,66.6) -- (236.76,89.48) ;
\draw [shift={(237.97,92.22)}, rotate = 246.21] [fill={rgb, 255:red, 208; green, 2; blue, 27 }  ,fill opacity=0.68 ][line width=0.08]  [draw opacity=0] (10.72,-5.15) -- (0,0) -- (10.72,5.15) -- (7.12,0) -- cycle    ;
\draw [color={rgb, 255:red, 208; green, 2; blue, 27 }  ,draw opacity=1 ][fill={rgb, 255:red, 183; green, 176; blue, 176 }  ,fill opacity=1 ]   (203.02,110.42) -- (189.47,133.78) ;
\draw [shift={(187.97,136.37)}, rotate = 300.11] [fill={rgb, 255:red, 208; green, 2; blue, 27 }  ,fill opacity=1 ][line width=0.08]  [draw opacity=0] (10.72,-5.15) -- (0,0) -- (10.72,5.15) -- (7.12,0) -- cycle    ;
\draw [color={rgb, 255:red, 126; green, 211; blue, 33 }  ,draw opacity=1 ]   (208.7,111.24) -- (208.61,131.76) ;
\draw [shift={(208.6,134.76)}, rotate = 270.23] [fill={rgb, 255:red, 126; green, 211; blue, 33 }  ,fill opacity=1 ][line width=0.08]  [draw opacity=0] (10.72,-5.15) -- (0,0) -- (10.72,5.15) -- (7.12,0) -- cycle    ;
\draw  [draw opacity=0][fill={rgb, 255:red, 0; green, 192; blue, 248 }  ,fill opacity=1 ] (205.03,100.63) .. controls (205.03,98.69) and (206.73,97.11) .. (208.82,97.11) .. controls (210.9,97.11) and (212.6,98.69) .. (212.6,100.63) .. controls (212.6,102.57) and (210.9,104.15) .. (208.82,104.15) .. controls (206.73,104.15) and (205.03,102.57) .. (205.03,100.63) -- cycle ;
\draw  [draw opacity=0][fill={rgb, 255:red, 0; green, 192; blue, 248 }  ,fill opacity=1 ] (235.61,100.63) .. controls (235.61,98.69) and (237.3,97.11) .. (239.39,97.11) .. controls (241.48,97.11) and (243.17,98.69) .. (243.17,100.63) .. controls (243.17,102.57) and (241.48,104.15) .. (239.39,104.15) .. controls (237.3,104.15) and (235.61,102.57) .. (235.61,100.63) -- cycle ;
\draw  [draw opacity=0][fill={rgb, 255:red, 0; green, 192; blue, 248 }  ,fill opacity=1 ] (205.37,142.51) .. controls (205.37,140.56) and (207.06,138.99) .. (209.15,138.99) .. controls (211.24,138.99) and (212.93,140.56) .. (212.93,142.51) .. controls (212.93,144.45) and (211.24,146.02) .. (209.15,146.02) .. controls (207.06,146.02) and (205.37,144.45) .. (205.37,142.51) -- cycle ;
\draw  [draw opacity=0][fill={rgb, 255:red, 0; green, 192; blue, 248 }  ,fill opacity=1 ] (235.28,142.74) .. controls (235.28,140.8) and (236.97,139.23) .. (239.06,139.23) .. controls (241.15,139.23) and (242.85,140.8) .. (242.85,142.74) .. controls (242.85,144.69) and (241.15,146.26) .. (239.06,146.26) .. controls (236.97,146.26) and (235.28,144.69) .. (235.28,142.74) -- cycle ;
\draw  [draw opacity=0][fill={rgb, 255:red, 0; green, 192; blue, 248 }  ,fill opacity=1 ] (256.14,142.68) .. controls (256.14,140.74) and (257.84,139.17) .. (259.93,139.17) .. controls (262.02,139.17) and (263.71,140.74) .. (263.71,142.68) .. controls (263.71,144.62) and (262.02,146.2) .. (259.93,146.2) .. controls (257.84,146.2) and (256.14,144.62) .. (256.14,142.68) -- cycle ;
\draw  [draw opacity=0][fill={rgb, 255:red, 126; green, 211; blue, 33 }  ,fill opacity=1 ] (223.05,53.94) .. controls (225.05,53.9) and (226.7,55.51) .. (226.74,57.55) .. controls (226.79,59.58) and (225.21,61.26) .. (223.21,61.31) .. controls (221.22,61.35) and (219.57,59.73) .. (219.52,57.7) .. controls (219.48,55.66) and (221.06,53.98) .. (223.05,53.94) -- cycle ;
\draw  [draw opacity=0][fill={rgb, 255:red, 0; green, 192; blue, 248 }  ,fill opacity=1 ] (182.53,142.98) .. controls (182.53,141.04) and (184.23,139.46) .. (186.32,139.46) .. controls (188.41,139.46) and (190.1,141.04) .. (190.1,142.98) .. controls (190.1,144.92) and (188.41,146.49) .. (186.32,146.49) .. controls (184.23,146.49) and (182.53,144.92) .. (182.53,142.98) -- cycle ;
\draw [color={rgb, 255:red, 208; green, 2; blue, 27 }  ,draw opacity=1 ][fill={rgb, 255:red, 183; green, 176; blue, 176 }  ,fill opacity=1 ]   (239.18,110.89) -- (238.75,132.7) ;
\draw [shift={(238.69,135.7)}, rotate = 271.12] [fill={rgb, 255:red, 208; green, 2; blue, 27 }  ,fill opacity=1 ][line width=0.08]  [draw opacity=0] (10.72,-5.15) -- (0,0) -- (10.72,5.15) -- (7.12,0) -- cycle    ;
\draw [color={rgb, 255:red, 208; green, 2; blue, 27 }  ,draw opacity=1 ]   (244.27,110.18) -- (256.94,134.11) ;
\draw [shift={(258.34,136.77)}, rotate = 242.11] [fill={rgb, 255:red, 208; green, 2; blue, 27 }  ,fill opacity=1 ][line width=0.08]  [draw opacity=0] (10.72,-5.15) -- (0,0) -- (10.72,5.15) -- (7.12,0) -- cycle    ;
\draw [color={rgb, 255:red, 208; green, 2; blue, 27 }  ,draw opacity=1 ]   (205.4,196.51) -- (189.34,219.27) ;
\draw [shift={(187.61,221.72)}, rotate = 305.22] [fill={rgb, 255:red, 208; green, 2; blue, 27 }  ,fill opacity=1 ][line width=0.08]  [draw opacity=0] (10.72,-5.15) -- (0,0) -- (10.72,5.15) -- (7.12,0) -- cycle    ;
\draw [color={rgb, 255:red, 126; green, 211; blue, 33 }  ,draw opacity=1 ][fill={rgb, 255:red, 183; green, 176; blue, 176 }  ,fill opacity=1 ]   (209.16,196.51) -- (209.37,218.64) ;
\draw [shift={(209.4,221.64)}, rotate = 269.45] [fill={rgb, 255:red, 126; green, 211; blue, 33 }  ,fill opacity=1 ][line width=0.08]  [draw opacity=0] (10.72,-5.15) -- (0,0) -- (10.72,5.15) -- (7.12,0) -- cycle    ;
\draw [color={rgb, 255:red, 208; green, 2; blue, 27 }  ,draw opacity=1 ][fill={rgb, 255:red, 183; green, 176; blue, 176 }  ,fill opacity=1 ]   (238.16,197.3) -- (238.2,219.82) ;
\draw [shift={(238.2,222.82)}, rotate = 269.91] [fill={rgb, 255:red, 208; green, 2; blue, 27 }  ,fill opacity=1 ][line width=0.08]  [draw opacity=0] (10.72,-5.15) -- (0,0) -- (10.72,5.15) -- (7.12,0) -- cycle    ;
\draw [color={rgb, 255:red, 208; green, 2; blue, 27 }  ,draw opacity=1 ][fill={rgb, 255:red, 183; green, 176; blue, 176 }  ,fill opacity=1 ]   (242.61,197.1) -- (256.48,220.25) ;
\draw [shift={(258.02,222.82)}, rotate = 239.07] [fill={rgb, 255:red, 208; green, 2; blue, 27 }  ,fill opacity=1 ][line width=0.08]  [draw opacity=0] (10.72,-5.15) -- (0,0) -- (10.72,5.15) -- (7.12,0) -- cycle    ;
\draw  [draw opacity=0][fill={rgb, 255:red, 0; green, 192; blue, 248 }  ,fill opacity=1 ] (205.65,228.95) .. controls (205.65,227.01) and (207.34,225.44) .. (209.43,225.44) .. controls (211.52,225.44) and (213.22,227.01) .. (213.22,228.95) .. controls (213.22,230.9) and (211.52,232.47) .. (209.43,232.47) .. controls (207.34,232.47) and (205.65,230.9) .. (205.65,228.95) -- cycle ;
\draw  [draw opacity=0][fill={rgb, 255:red, 0; green, 192; blue, 248 }  ,fill opacity=1 ] (235.16,228.4) .. controls (235.16,226.46) and (236.85,224.89) .. (238.94,224.89) .. controls (241.03,224.89) and (242.73,226.46) .. (242.73,228.4) .. controls (242.73,230.35) and (241.03,231.92) .. (238.94,231.92) .. controls (236.85,231.92) and (235.16,230.35) .. (235.16,228.4) -- cycle ;
\draw  [draw opacity=0][fill={rgb, 255:red, 0; green, 192; blue, 248 }  ,fill opacity=1 ] (258.85,228.74) .. controls (258.85,226.8) and (260.55,225.22) .. (262.64,225.22) .. controls (264.73,225.22) and (266.42,226.8) .. (266.42,228.74) .. controls (266.42,230.68) and (264.73,232.25) .. (262.64,232.25) .. controls (260.55,232.25) and (258.85,230.68) .. (258.85,228.74) -- cycle ;
\draw  [draw opacity=0][fill={rgb, 255:red, 0; green, 192; blue, 248 }  ,fill opacity=1 ] (180.79,229.62) .. controls (180.79,227.68) and (182.49,226.11) .. (184.58,226.11) .. controls (186.67,226.11) and (188.36,227.68) .. (188.36,229.62) .. controls (188.36,231.57) and (186.67,233.14) .. (184.58,233.14) .. controls (182.49,233.14) and (180.79,231.57) .. (180.79,229.62) -- cycle ;
\draw [color={rgb, 255:red, 208; green, 2; blue, 27 }  ,draw opacity=1 ][fill={rgb, 255:red, 183; green, 176; blue, 176 }  ,fill opacity=1 ]   (204.62,154.28) -- (187.82,179.89) ;
\draw [shift={(186.18,182.4)}, rotate = 303.26] [fill={rgb, 255:red, 208; green, 2; blue, 27 }  ,fill opacity=1 ][line width=0.08]  [draw opacity=0] (10.72,-5.15) -- (0,0) -- (10.72,5.15) -- (7.12,0) -- cycle    ;
\draw [color={rgb, 255:red, 208; green, 2; blue, 27 }  ,draw opacity=1 ]   (208.62,156.26) -- (208.54,176.78) ;
\draw [shift={(208.53,179.78)}, rotate = 270.23] [fill={rgb, 255:red, 208; green, 2; blue, 27 }  ,fill opacity=1 ][line width=0.08]  [draw opacity=0] (10.72,-5.15) -- (0,0) -- (10.72,5.15) -- (7.12,0) -- cycle    ;
\draw  [draw opacity=0][fill={rgb, 255:red, 0; green, 192; blue, 248 }  ,fill opacity=1 ] (205.12,188.27) .. controls (205.12,186.33) and (206.82,184.75) .. (208.91,184.75) .. controls (211,184.75) and (212.69,186.33) .. (212.69,188.27) .. controls (212.69,190.21) and (211,191.79) .. (208.91,191.79) .. controls (206.82,191.79) and (205.12,190.21) .. (205.12,188.27) -- cycle ;
\draw  [draw opacity=0][fill={rgb, 255:red, 0; green, 192; blue, 248 }  ,fill opacity=1 ] (235.64,188.7) .. controls (235.64,186.76) and (237.34,185.19) .. (239.43,185.19) .. controls (241.52,185.19) and (243.21,186.76) .. (243.21,188.7) .. controls (243.21,190.65) and (241.52,192.22) .. (239.43,192.22) .. controls (237.34,192.22) and (235.64,190.65) .. (235.64,188.7) -- cycle ;
\draw  [draw opacity=0][fill={rgb, 255:red, 0; green, 192; blue, 248 }  ,fill opacity=1 ] (257.6,188.21) .. controls (257.6,186.27) and (259.29,184.69) .. (261.38,184.69) .. controls (263.47,184.69) and (265.17,186.27) .. (265.17,188.21) .. controls (265.17,190.15) and (263.47,191.73) .. (261.38,191.73) .. controls (259.29,191.73) and (257.6,190.15) .. (257.6,188.21) -- cycle ;
\draw  [draw opacity=0][fill={rgb, 255:red, 0; green, 192; blue, 248 }  ,fill opacity=1 ] (180.07,187.68) .. controls (180.07,185.74) and (181.76,184.16) .. (183.85,184.16) .. controls (185.94,184.16) and (187.63,185.74) .. (187.63,187.68) .. controls (187.63,189.62) and (185.94,191.2) .. (183.85,191.2) .. controls (181.76,191.2) and (180.07,189.62) .. (180.07,187.68) -- cycle ;
\draw [color={rgb, 255:red, 126; green, 211; blue, 33 }  ,draw opacity=1 ][fill={rgb, 255:red, 183; green, 176; blue, 176 }  ,fill opacity=1 ]   (239.42,156.23) -- (238.99,178.04) ;
\draw [shift={(238.94,181.04)}, rotate = 271.12] [fill={rgb, 255:red, 126; green, 211; blue, 33 }  ,fill opacity=1 ][line width=0.08]  [draw opacity=0] (10.72,-5.15) -- (0,0) -- (10.72,5.15) -- (7.12,0) -- cycle    ;
\draw [color={rgb, 255:red, 208; green, 2; blue, 27 }  ,draw opacity=1 ][fill={rgb, 255:red, 183; green, 176; blue, 176 }  ,fill opacity=1 ]   (242.47,154.75) -- (256.31,179.78) ;
\draw [shift={(257.76,182.4)}, rotate = 241.06] [fill={rgb, 255:red, 208; green, 2; blue, 27 }  ,fill opacity=1 ][line width=0.08]  [draw opacity=0] (10.72,-5.15) -- (0,0) -- (10.72,5.15) -- (7.12,0) -- cycle    ;
\draw  [color={rgb, 255:red, 0; green, 0; blue, 0 }  ,draw opacity=0 ][fill={rgb, 255:red, 74; green, 144; blue, 226 }  ,fill opacity=0.1 ] (167.94,49.51) -- (278.33,49.51) -- (278.33,65.73) -- (167.94,65.73) -- cycle ;
\draw  [color={rgb, 255:red, 0; green, 0; blue, 0 }  ,draw opacity=0 ][fill={rgb, 255:red, 74; green, 144; blue, 226 }  ,fill opacity=0.1 ] (166.98,92.96) -- (277.37,92.96) -- (277.37,109.17) -- (166.98,109.17) -- cycle ;
\draw  [color={rgb, 255:red, 0; green, 0; blue, 0 }  ,draw opacity=0 ][fill={rgb, 255:red, 74; green, 144; blue, 226 }  ,fill opacity=0.1 ] (166.98,133.99) -- (277.37,133.99) -- (277.37,150.2) -- (166.98,150.2) -- cycle ;
\draw  [color={rgb, 255:red, 0; green, 0; blue, 0 }  ,draw opacity=0 ][fill={rgb, 255:red, 74; green, 144; blue, 226 }  ,fill opacity=0.1 ] (166.98,179.36) -- (277.37,179.36) -- (277.37,195.57) -- (166.98,195.57) -- cycle ;
\draw  [color={rgb, 255:red, 0; green, 0; blue, 0 }  ,draw opacity=0 ][fill={rgb, 255:red, 74; green, 144; blue, 226 }  ,fill opacity=0.1 ] (166.98,219.91) -- (277.37,219.91) -- (277.37,236.12) -- (166.98,236.12) -- cycle ;
\draw [color={rgb, 255:red, 126; green, 211; blue, 33 }  ,draw opacity=0.54 ][line width=0.75]    (220.74,66.94) -- (213.1,86.21) ;
\draw [shift={(212,89)}, rotate = 291.6] [fill={rgb, 255:red, 126; green, 211; blue, 33 }  ,fill opacity=0.54 ][line width=0.08]  [draw opacity=0] (11.61,-5.58) -- (0,0) -- (11.61,5.58) -- (7.71,0) -- cycle    ;
\draw [color={rgb, 255:red, 126; green, 211; blue, 33 }  ,draw opacity=0.54 ][line width=0.75]    (226.67,66.6) -- (234.85,86.23) ;
\draw [shift={(236,89)}, rotate = 247.39] [fill={rgb, 255:red, 126; green, 211; blue, 33 }  ,fill opacity=0.54 ][line width=0.08]  [draw opacity=0] (11.61,-5.58) -- (0,0) -- (11.61,5.58) -- (7.71,0) -- cycle    ;

\draw (192.42,92.26) node [anchor=north west][inner sep=0.75pt]  [font=\small,color={rgb, 255:red, 0; green, 0; blue, 0 }  ,opacity=1 ,rotate=-359.39]  {$2$};
\draw (245.12,93.1) node [anchor=north west][inner sep=0.75pt]  [font=\small,color={rgb, 255:red, 0; green, 0; blue, 0 }  ,opacity=1 ,rotate=-359.39]  {$3$};
\draw (172.14,135.16) node [anchor=north west][inner sep=0.75pt]  [font=\small,color={rgb, 255:red, 0; green, 0; blue, 0 }  ,opacity=1 ,rotate=-359.39]  {$4$};
\draw (262.22,135.59) node [anchor=north west][inner sep=0.75pt]  [font=\small,color={rgb, 255:red, 0; green, 0; blue, 0 }  ,opacity=1 ,rotate=-359.39]  {$5$};
\draw (196.6,68.37) node [anchor=north west][inner sep=0.75pt]  [font=\scriptsize,color={rgb, 255:red, 0; green, 0; blue, 0 }  ,opacity=1 ]  {$+$};
\draw (238.7,68.19) node [anchor=north west][inner sep=0.75pt]  [font=\scriptsize,color={rgb, 255:red, 0; green, 0; blue, 0 }  ,opacity=1 ]  {$+$};
\draw (211.48,113.81) node [anchor=north west][inner sep=0.75pt]  [font=\scriptsize,color={rgb, 255:red, 0; green, 0; blue, 0 }  ,opacity=1 ]  {$+$};
\draw (179.24,114.66) node [anchor=north west][inner sep=0.75pt]  [font=\scriptsize,color={rgb, 255:red, 0; green, 0; blue, 0 }  ,opacity=1 ]  {$-$};
\draw (229.31,49.4) node [anchor=north west][inner sep=0.75pt]  [font=\small,color={rgb, 255:red, 0; green, 0; blue, 0 }  ,opacity=1 ,rotate=-358.27]  {$1$};
\draw (196.17,221.19) node [anchor=north west][inner sep=0.75pt]  [font=\small,color={rgb, 255:red, 0; green, 0; blue, 0 }  ,opacity=1 ,rotate=-1.04]  {$3$};
\draw (242.55,135.76) node [anchor=north west][inner sep=0.75pt]  [font=\small,color={rgb, 255:red, 0; green, 0; blue, 0 }  ,opacity=1 ,rotate=-359.39]  {$2$};
\draw (224.23,114.59) node [anchor=north west][inner sep=0.75pt]  [font=\scriptsize,color={rgb, 255:red, 0; green, 0; blue, 0 }  ,opacity=1 ]  {$-$};
\draw (169.17,221.17) node [anchor=north west][inner sep=0.75pt]  [font=\small,color={rgb, 255:red, 0; green, 0; blue, 0 }  ,opacity=1 ,rotate=-359.39]  {$4$};
\draw (266.26,220.81) node [anchor=north west][inner sep=0.75pt]  [font=\small,color={rgb, 255:red, 0; green, 0; blue, 0 }  ,opacity=1 ,rotate=-359.39]  {$5$};
\draw (195.65,134.85) node [anchor=north west][inner sep=0.75pt]  [font=\small,color={rgb, 255:red, 0; green, 0; blue, 0 }  ,opacity=1 ,rotate=-359.39]  {$3$};
\draw (243.56,220.58) node [anchor=north west][inner sep=0.75pt]  [font=\small,color={rgb, 255:red, 0; green, 0; blue, 0 }  ,opacity=1 ,rotate=-359.39]  {$2$};
\draw (176.96,201.35) node [anchor=north west][inner sep=0.75pt]  [font=\scriptsize,color={rgb, 255:red, 0; green, 0; blue, 0 }  ,opacity=1 ]  {$-$};
\draw (212.45,201.18) node [anchor=north west][inner sep=0.75pt]  [font=\scriptsize,color={rgb, 255:red, 0; green, 0; blue, 0 }  ,opacity=1 ]  {$+$};
\draw (224.14,201.61) node [anchor=north west][inner sep=0.75pt]  [font=\scriptsize,color={rgb, 255:red, 0; green, 0; blue, 0 }  ,opacity=1 ]  {$-$};
\draw (265.64,181.36) node [anchor=north west][inner sep=0.75pt]  [font=\small,color={rgb, 255:red, 0; green, 0; blue, 0 }  ,opacity=1 ,rotate=-359.39]  {$4$};
\draw (168.21,180.39) node [anchor=north west][inner sep=0.75pt]  [font=\small,color={rgb, 255:red, 0; green, 0; blue, 0 }  ,opacity=1 ,rotate=-359.39]  {$5$};
\draw (223.96,158.45) node [anchor=north west][inner sep=0.75pt]  [font=\scriptsize,color={rgb, 255:red, 0; green, 0; blue, 0 }  ,opacity=1 ,rotate=-358.88]  {$+$};
\draw (194.26,180.07) node [anchor=north west][inner sep=0.75pt]  [font=\small,color={rgb, 255:red, 0; green, 0; blue, 0 }  ,opacity=1 ,rotate=-359.39]  {$2$};
\draw (257.2,159.32) node [anchor=north west][inner sep=0.75pt]  [font=\scriptsize,color={rgb, 255:red, 0; green, 0; blue, 0 }  ,opacity=1 ]  {$-$};
\draw (243.18,180.54) node [anchor=north west][inner sep=0.75pt]  [font=\small,color={rgb, 255:red, 0; green, 0; blue, 0 }  ,opacity=1 ,rotate=-359.39]  {$3$};
\draw (210.61,158.38) node [anchor=north west][inner sep=0.75pt]  [font=\scriptsize,color={rgb, 255:red, 0; green, 0; blue, 0 }  ,opacity=1 ]  {$-$};
\draw (177.91,158.81) node [anchor=north west][inner sep=0.75pt]  [font=\scriptsize,color={rgb, 255:red, 0; green, 0; blue, 0 }  ,opacity=1 ,rotate=-358.88]  {$-$};
\draw (257.08,200.87) node [anchor=north west][inner sep=0.75pt]  [font=\scriptsize,color={rgb, 255:red, 0; green, 0; blue, 0 }  ,opacity=1 ,rotate=-358.88]  {$-$};
\draw (254.81,114.3) node [anchor=north west][inner sep=0.75pt]  [font=\scriptsize,color={rgb, 255:red, 0; green, 0; blue, 0 }  ,opacity=1 ]  {$-$};

\end{tikzpicture}

     \caption {$\mathcal{LUG}~(\mathcal{G}_s)$ }
    \label{LUG2}
  \end{subfigure}
  \begin{subfigure}{0.145\textwidth}
    \centering

\tikzset{every picture/.style={scale=0.75pt}} 

\begin{tikzpicture}[x=0.75pt,y=0.75pt,yscale=-1,xscale=1]

\draw [color={rgb, 255:red, 208; green, 2; blue, 27 }  ,draw opacity=1 ]   (157,39.42) -- (147.57,61.85) ;
\draw [shift={(146.41,64.62)}, rotate = 292.8] [fill={rgb, 255:red, 208; green, 2; blue, 27 }  ,fill opacity=1 ][line width=0.08]  [draw opacity=0] (10.72,-5.15) -- (0,0) -- (10.72,5.15) -- (7.12,0) -- cycle    ;
\draw [color={rgb, 255:red, 208; green, 2; blue, 27 }  ,draw opacity=1 ]   (162.93,39.08) -- (173.02,61.96) ;
\draw [shift={(174.23,64.71)}, rotate = 246.21] [fill={rgb, 255:red, 208; green, 2; blue, 27 }  ,fill opacity=1 ][line width=0.08]  [draw opacity=0] (10.72,-5.15) -- (0,0) -- (10.72,5.15) -- (7.12,0) -- cycle    ;
\draw [color={rgb, 255:red, 208; green, 2; blue, 27 }  ,draw opacity=1 ][fill={rgb, 255:red, 183; green, 176; blue, 176 }  ,fill opacity=1 ]   (138.96,81.9) -- (125.42,105.26) ;
\draw [shift={(123.91,107.86)}, rotate = 300.11] [fill={rgb, 255:red, 208; green, 2; blue, 27 }  ,fill opacity=1 ][line width=0.08]  [draw opacity=0] (10.72,-5.15) -- (0,0) -- (10.72,5.15) -- (7.12,0) -- cycle    ;
\draw [color={rgb, 255:red, 176; green, 169; blue, 169 }  ,draw opacity=0.59 ]   (144.64,82.73) -- (144.56,103.24) ;
\draw [shift={(144.55,106.24)}, rotate = 270.23] [fill={rgb, 255:red, 176; green, 169; blue, 169 }  ,fill opacity=0.59 ][line width=0.08]  [draw opacity=0] (10.72,-5.15) -- (0,0) -- (10.72,5.15) -- (7.12,0) -- cycle    ;
\draw  [draw opacity=0][fill={rgb, 255:red, 0; green, 192; blue, 248 }  ,fill opacity=1 ] (141.29,73.11) .. controls (141.29,71.17) and (142.99,69.6) .. (145.08,69.6) .. controls (147.17,69.6) and (148.86,71.17) .. (148.86,73.11) .. controls (148.86,75.05) and (147.17,76.63) .. (145.08,76.63) .. controls (142.99,76.63) and (141.29,75.05) .. (141.29,73.11) -- cycle ;
\draw  [draw opacity=0][fill={rgb, 255:red, 0; green, 192; blue, 248 }  ,fill opacity=1 ] (171.87,73.11) .. controls (171.87,71.17) and (173.56,69.6) .. (175.65,69.6) .. controls (177.74,69.6) and (179.44,71.17) .. (179.44,73.11) .. controls (179.44,75.05) and (177.74,76.63) .. (175.65,76.63) .. controls (173.56,76.63) and (171.87,75.05) .. (171.87,73.11) -- cycle ;
\draw  [draw opacity=0][fill={rgb, 255:red, 0; green, 192; blue, 248 }  ,fill opacity=1 ] (141.63,114.99) .. controls (141.63,113.05) and (143.32,111.47) .. (145.41,111.47) .. controls (147.5,111.47) and (149.2,113.05) .. (149.2,114.99) .. controls (149.2,116.93) and (147.5,118.5) .. (145.41,118.5) .. controls (143.32,118.5) and (141.63,116.93) .. (141.63,114.99) -- cycle ;
\draw  [draw opacity=0][fill={rgb, 255:red, 0; green, 192; blue, 248 }  ,fill opacity=1 ] (171.54,115.23) .. controls (171.54,113.28) and (173.24,111.71) .. (175.33,111.71) .. controls (177.42,111.71) and (179.11,113.28) .. (179.11,115.23) .. controls (179.11,117.17) and (177.42,118.74) .. (175.33,118.74) .. controls (173.24,118.74) and (171.54,117.17) .. (171.54,115.23) -- cycle ;
\draw  [draw opacity=0][fill={rgb, 255:red, 0; green, 192; blue, 248 }  ,fill opacity=1 ] (192.41,115.17) .. controls (192.41,113.22) and (194.1,111.65) .. (196.19,111.65) .. controls (198.28,111.65) and (199.97,113.22) .. (199.97,115.17) .. controls (199.97,117.11) and (198.28,118.68) .. (196.19,118.68) .. controls (194.1,118.68) and (192.41,117.11) .. (192.41,115.17) -- cycle ;
\draw  [draw opacity=0][fill={rgb, 255:red, 126; green, 211; blue, 33 }  ,fill opacity=1 ] (159.32,26.42) .. controls (161.31,26.38) and (162.96,28) .. (163.01,30.03) .. controls (163.05,32.06) and (161.47,33.75) .. (159.47,33.79) .. controls (157.48,33.83) and (155.83,32.21) .. (155.79,30.18) .. controls (155.74,28.15) and (157.32,26.46) .. (159.32,26.42) -- cycle ;
\draw  [draw opacity=0][fill={rgb, 255:red, 0; green, 192; blue, 248 }  ,fill opacity=1 ] (118.8,115.46) .. controls (118.8,113.52) and (120.49,111.94) .. (122.58,111.94) .. controls (124.67,111.94) and (126.36,113.52) .. (126.36,115.46) .. controls (126.36,117.4) and (124.67,118.98) .. (122.58,118.98) .. controls (120.49,118.98) and (118.8,117.4) .. (118.8,115.46) -- cycle ;
\draw [color={rgb, 255:red, 208; green, 2; blue, 27 }  ,draw opacity=1 ][fill={rgb, 255:red, 183; green, 176; blue, 176 }  ,fill opacity=1 ]   (175.12,82.37) -- (174.7,104.19) ;
\draw [shift={(174.64,107.19)}, rotate = 271.12] [fill={rgb, 255:red, 208; green, 2; blue, 27 }  ,fill opacity=1 ][line width=0.08]  [draw opacity=0] (10.72,-5.15) -- (0,0) -- (10.72,5.15) -- (7.12,0) -- cycle    ;
\draw [color={rgb, 255:red, 208; green, 2; blue, 27 }  ,draw opacity=1 ]   (180.54,81.67) -- (193.2,105.6) ;
\draw [shift={(194.6,108.25)}, rotate = 242.11] [fill={rgb, 255:red, 208; green, 2; blue, 27 }  ,fill opacity=1 ][line width=0.08]  [draw opacity=0] (10.72,-5.15) -- (0,0) -- (10.72,5.15) -- (7.12,0) -- cycle    ;
\draw [color={rgb, 255:red, 208; green, 2; blue, 27 }  ,draw opacity=1 ]   (141.99,169) -- (125.92,191.75) ;
\draw [shift={(124.19,194.2)}, rotate = 305.22] [fill={rgb, 255:red, 208; green, 2; blue, 27 }  ,fill opacity=1 ][line width=0.08]  [draw opacity=0] (10.72,-5.15) -- (0,0) -- (10.72,5.15) -- (7.12,0) -- cycle    ;
\draw [color={rgb, 255:red, 176; green, 169; blue, 169 }  ,draw opacity=0.77 ][fill={rgb, 255:red, 183; green, 176; blue, 176 }  ,fill opacity=1 ]   (145.42,169) -- (145.64,191.12) ;
\draw [shift={(145.67,194.12)}, rotate = 269.45] [fill={rgb, 255:red, 176; green, 169; blue, 169 }  ,fill opacity=0.77 ][line width=0.08]  [draw opacity=0] (10.72,-5.15) -- (0,0) -- (10.72,5.15) -- (7.12,0) -- cycle    ;
\draw [color={rgb, 255:red, 208; green, 2; blue, 27 }  ,draw opacity=1 ][fill={rgb, 255:red, 183; green, 176; blue, 176 }  ,fill opacity=1 ]   (174.75,169.78) -- (174.78,192.3) ;
\draw [shift={(174.79,195.3)}, rotate = 269.91] [fill={rgb, 255:red, 208; green, 2; blue, 27 }  ,fill opacity=1 ][line width=0.08]  [draw opacity=0] (10.72,-5.15) -- (0,0) -- (10.72,5.15) -- (7.12,0) -- cycle    ;
\draw [color={rgb, 255:red, 208; green, 2; blue, 27 }  ,draw opacity=1 ][fill={rgb, 255:red, 183; green, 176; blue, 176 }  ,fill opacity=1 ]   (179.19,169.59) -- (193.06,192.73) ;
\draw [shift={(194.6,195.3)}, rotate = 239.07] [fill={rgb, 255:red, 208; green, 2; blue, 27 }  ,fill opacity=1 ][line width=0.08]  [draw opacity=0] (10.72,-5.15) -- (0,0) -- (10.72,5.15) -- (7.12,0) -- cycle    ;
\draw  [draw opacity=0][fill={rgb, 255:red, 0; green, 192; blue, 248 }  ,fill opacity=1 ] (141.91,201.44) .. controls (141.91,199.49) and (143.6,197.92) .. (145.69,197.92) .. controls (147.78,197.92) and (149.48,199.49) .. (149.48,201.44) .. controls (149.48,203.38) and (147.78,204.95) .. (145.69,204.95) .. controls (143.6,204.95) and (141.91,203.38) .. (141.91,201.44) -- cycle ;
\draw  [draw opacity=0][fill={rgb, 255:red, 0; green, 192; blue, 248 }  ,fill opacity=1 ] (171.42,200.89) .. controls (171.42,198.94) and (173.12,197.37) .. (175.21,197.37) .. controls (177.3,197.37) and (178.99,198.94) .. (178.99,200.89) .. controls (178.99,202.83) and (177.3,204.4) .. (175.21,204.4) .. controls (173.12,204.4) and (171.42,202.83) .. (171.42,200.89) -- cycle ;
\draw  [draw opacity=0][fill={rgb, 255:red, 0; green, 192; blue, 248 }  ,fill opacity=1 ] (195.12,201.22) .. controls (195.12,199.28) and (196.81,197.7) .. (198.9,197.7) .. controls (200.99,197.7) and (202.68,199.28) .. (202.68,201.22) .. controls (202.68,203.16) and (200.99,204.74) .. (198.9,204.74) .. controls (196.81,204.74) and (195.12,203.16) .. (195.12,201.22) -- cycle ;
\draw  [draw opacity=0][fill={rgb, 255:red, 0; green, 192; blue, 248 }  ,fill opacity=1 ] (117.06,202.11) .. controls (117.06,200.16) and (118.75,198.59) .. (120.84,198.59) .. controls (122.93,198.59) and (124.62,200.16) .. (124.62,202.11) .. controls (124.62,204.05) and (122.93,205.62) .. (120.84,205.62) .. controls (118.75,205.62) and (117.06,204.05) .. (117.06,202.11) -- cycle ;
\draw [color={rgb, 255:red, 208; green, 2; blue, 27 }  ,draw opacity=1 ][fill={rgb, 255:red, 183; green, 176; blue, 176 }  ,fill opacity=1 ]   (141.2,126.76) -- (124.4,152.38) ;
\draw [shift={(122.76,154.88)}, rotate = 303.26] [fill={rgb, 255:red, 208; green, 2; blue, 27 }  ,fill opacity=1 ][line width=0.08]  [draw opacity=0] (10.72,-5.15) -- (0,0) -- (10.72,5.15) -- (7.12,0) -- cycle    ;
\draw [color={rgb, 255:red, 208; green, 2; blue, 27 }  ,draw opacity=1 ]   (145.2,128.74) -- (145.12,149.26) ;
\draw [shift={(145.11,152.26)}, rotate = 270.23] [fill={rgb, 255:red, 208; green, 2; blue, 27 }  ,fill opacity=1 ][line width=0.08]  [draw opacity=0] (10.72,-5.15) -- (0,0) -- (10.72,5.15) -- (7.12,0) -- cycle    ;
\draw  [draw opacity=0][fill={rgb, 255:red, 0; green, 192; blue, 248 }  ,fill opacity=1 ] (141.39,160.75) .. controls (141.39,158.81) and (143.08,157.24) .. (145.17,157.24) .. controls (147.26,157.24) and (148.95,158.81) .. (148.95,160.75) .. controls (148.95,162.69) and (147.26,164.27) .. (145.17,164.27) .. controls (143.08,164.27) and (141.39,162.69) .. (141.39,160.75) -- cycle ;
\draw  [draw opacity=0][fill={rgb, 255:red, 0; green, 192; blue, 248 }  ,fill opacity=1 ] (171.91,161.19) .. controls (171.91,159.25) and (173.6,157.67) .. (175.69,157.67) .. controls (177.78,157.67) and (179.47,159.25) .. (179.47,161.19) .. controls (179.47,163.13) and (177.78,164.7) .. (175.69,164.7) .. controls (173.6,164.7) and (171.91,163.13) .. (171.91,161.19) -- cycle ;
\draw  [draw opacity=0][fill={rgb, 255:red, 0; green, 192; blue, 248 }  ,fill opacity=1 ] (193.86,160.69) .. controls (193.86,158.75) and (195.56,157.18) .. (197.65,157.18) .. controls (199.74,157.18) and (201.43,158.75) .. (201.43,160.69) .. controls (201.43,162.64) and (199.74,164.21) .. (197.65,164.21) .. controls (195.56,164.21) and (193.86,162.64) .. (193.86,160.69) -- cycle ;
\draw  [draw opacity=0][fill={rgb, 255:red, 0; green, 192; blue, 248 }  ,fill opacity=1 ] (116.33,160.16) .. controls (116.33,158.22) and (118.02,156.65) .. (120.11,156.65) .. controls (122.2,156.65) and (123.9,158.22) .. (123.9,160.16) .. controls (123.9,162.1) and (122.2,163.68) .. (120.11,163.68) .. controls (118.02,163.68) and (116.33,162.1) .. (116.33,160.16) -- cycle ;
\draw [color={rgb, 255:red, 176; green, 169; blue, 169 }  ,draw opacity=0.59 ][fill={rgb, 255:red, 183; green, 176; blue, 176 }  ,fill opacity=1 ]   (175.68,127.39) -- (175.26,149.2) ;
\draw [shift={(175.2,152.2)}, rotate = 271.12] [fill={rgb, 255:red, 176; green, 169; blue, 169 }  ,fill opacity=0.59 ][line width=0.08]  [draw opacity=0] (10.72,-5.15) -- (0,0) -- (10.72,5.15) -- (7.12,0) -- cycle    ;
\draw [color={rgb, 255:red, 176; green, 169; blue, 169 }  ,draw opacity=0.5 ][fill={rgb, 255:red, 183; green, 176; blue, 176 }  ,fill opacity=1 ]   (179.06,127.24) -- (192.89,152.26) ;
\draw [shift={(194.35,154.88)}, rotate = 241.06] [fill={rgb, 255:red, 176; green, 169; blue, 169 }  ,fill opacity=0.5 ][line width=0.08]  [draw opacity=0] (10.72,-5.15) -- (0,0) -- (10.72,5.15) -- (7.12,0) -- cycle    ;
\draw  [color={rgb, 255:red, 0; green, 0; blue, 0 }  ,draw opacity=0 ][fill={rgb, 255:red, 74; green, 144; blue, 226 }  ,fill opacity=0.1 ] (104.2,22) -- (214.59,22) -- (214.59,38.21) -- (104.2,38.21) -- cycle ;
\draw  [color={rgb, 255:red, 0; green, 0; blue, 0 }  ,draw opacity=0 ][fill={rgb, 255:red, 74; green, 144; blue, 226 }  ,fill opacity=0.1 ] (103.24,65.44) -- (213.63,65.44) -- (213.63,81.65) -- (103.24,81.65) -- cycle ;
\draw  [color={rgb, 255:red, 0; green, 0; blue, 0 }  ,draw opacity=0 ][fill={rgb, 255:red, 74; green, 144; blue, 226 }  ,fill opacity=0.1 ] (103.24,106.47) -- (213.63,106.47) -- (213.63,122.68) -- (103.24,122.68) -- cycle ;
\draw  [color={rgb, 255:red, 0; green, 0; blue, 0 }  ,draw opacity=0 ][fill={rgb, 255:red, 74; green, 144; blue, 226 }  ,fill opacity=0.1 ] (103.24,151.84) -- (213.63,151.84) -- (213.63,168.06) -- (103.24,168.06) -- cycle ;
\draw  [color={rgb, 255:red, 0; green, 0; blue, 0 }  ,draw opacity=0 ][fill={rgb, 255:red, 74; green, 144; blue, 226 }  ,fill opacity=0.1 ] (103.24,192.39) -- (213.63,192.39) -- (213.63,208.6) -- (103.24,208.6) -- cycle ;

\draw (128.68,64.74) node [anchor=north west][inner sep=0.75pt]  [font=\small,color={rgb, 255:red, 0; green, 0; blue, 0 }  ,opacity=1 ,rotate=-359.39]  {$2$};
\draw (181.38,65.59) node [anchor=north west][inner sep=0.75pt]  [font=\small,color={rgb, 255:red, 0; green, 0; blue, 0 }  ,opacity=1 ,rotate=-359.39]  {$3$};
\draw (108.4,107.64) node [anchor=north west][inner sep=0.75pt]  [font=\small,color={rgb, 255:red, 0; green, 0; blue, 0 }  ,opacity=1 ,rotate=-359.39]  {$4$};
\draw (198.48,108.07) node [anchor=north west][inner sep=0.75pt]  [font=\small,color={rgb, 255:red, 0; green, 0; blue, 0 }  ,opacity=1 ,rotate=-359.39]  {$5$};
\draw (165.57,21.89) node [anchor=north west][inner sep=0.75pt]  [font=\small,color={rgb, 255:red, 0; green, 0; blue, 0 }  ,opacity=1 ,rotate=-358.27]  {$1$};
\draw (132.43,193.68) node [anchor=north west][inner sep=0.75pt]  [font=\small,color={rgb, 255:red, 0; green, 0; blue, 0 }  ,opacity=1 ,rotate=-1.04]  {$3$};
\draw (178.81,108.24) node [anchor=north west][inner sep=0.75pt]  [font=\small,color={rgb, 255:red, 0; green, 0; blue, 0 }  ,opacity=1 ,rotate=-359.39]  {$2$};
\draw (105.43,193.66) node [anchor=north west][inner sep=0.75pt]  [font=\small,color={rgb, 255:red, 0; green, 0; blue, 0 }  ,opacity=1 ,rotate=-359.39]  {$4$};
\draw (202.52,193.29) node [anchor=north west][inner sep=0.75pt]  [font=\small,color={rgb, 255:red, 0; green, 0; blue, 0 }  ,opacity=1 ,rotate=-359.39]  {$5$};
\draw (131.92,107.33) node [anchor=north west][inner sep=0.75pt]  [font=\small,color={rgb, 255:red, 0; green, 0; blue, 0 }  ,opacity=1 ,rotate=-359.39]  {$3$};
\draw (179.82,193.07) node [anchor=north west][inner sep=0.75pt]  [font=\small,color={rgb, 255:red, 0; green, 0; blue, 0 }  ,opacity=1 ,rotate=-359.39]  {$2$};
\draw (201.91,153.84) node [anchor=north west][inner sep=0.75pt]  [font=\small,color={rgb, 255:red, 0; green, 0; blue, 0 }  ,opacity=1 ,rotate=-359.39]  {$4$};
\draw (104.47,152.88) node [anchor=north west][inner sep=0.75pt]  [font=\small,color={rgb, 255:red, 0; green, 0; blue, 0 }  ,opacity=1 ,rotate=-359.39]  {$5$};
\draw (130.52,152.55) node [anchor=north west][inner sep=0.75pt]  [font=\small,color={rgb, 255:red, 0; green, 0; blue, 0 }  ,opacity=1 ,rotate=-359.39]  {$2$};
\draw (179.44,153.02) node [anchor=north west][inner sep=0.75pt]  [font=\small,color={rgb, 255:red, 0; green, 0; blue, 0 }  ,opacity=1 ,rotate=-359.39]  {$3$};
\draw (132.1,41.06) node [anchor=north west][inner sep=0.75pt]  [font=\scriptsize,color={rgb, 255:red, 0; green, 0; blue, 0 }  ,opacity=1 ]  {$+$};
\draw (174.2,40.88) node [anchor=north west][inner sep=0.75pt]  [font=\scriptsize,color={rgb, 255:red, 0; green, 0; blue, 0 }  ,opacity=1 ]  {$+$};
\draw (146.98,86.5) node [anchor=north west][inner sep=0.75pt]  [font=\scriptsize,color={rgb, 255:red, 0; green, 0; blue, 0 }  ,opacity=1 ]  {$+$};
\draw (114.74,87.35) node [anchor=north west][inner sep=0.75pt]  [font=\scriptsize,color={rgb, 255:red, 0; green, 0; blue, 0 }  ,opacity=1 ]  {$-$};
\draw (159.73,87.28) node [anchor=north west][inner sep=0.75pt]  [font=\scriptsize,color={rgb, 255:red, 0; green, 0; blue, 0 }  ,opacity=1 ]  {$-$};
\draw (112.46,174.04) node [anchor=north west][inner sep=0.75pt]  [font=\scriptsize,color={rgb, 255:red, 0; green, 0; blue, 0 }  ,opacity=1 ]  {$-$};
\draw (147.95,173.87) node [anchor=north west][inner sep=0.75pt]  [font=\scriptsize,color={rgb, 255:red, 0; green, 0; blue, 0 }  ,opacity=1 ]  {$+$};
\draw (159.64,174.3) node [anchor=north west][inner sep=0.75pt]  [font=\scriptsize,color={rgb, 255:red, 0; green, 0; blue, 0 }  ,opacity=1 ]  {$-$};
\draw (159.46,131.14) node [anchor=north west][inner sep=0.75pt]  [font=\scriptsize,color={rgb, 255:red, 0; green, 0; blue, 0 }  ,opacity=1 ,rotate=-358.88]  {$+$};
\draw (192.7,132.01) node [anchor=north west][inner sep=0.75pt]  [font=\scriptsize,color={rgb, 255:red, 0; green, 0; blue, 0 }  ,opacity=1 ]  {$-$};
\draw (146.11,131.07) node [anchor=north west][inner sep=0.75pt]  [font=\scriptsize,color={rgb, 255:red, 0; green, 0; blue, 0 }  ,opacity=1 ]  {$-$};
\draw (113.41,131.5) node [anchor=north west][inner sep=0.75pt]  [font=\scriptsize,color={rgb, 255:red, 0; green, 0; blue, 0 }  ,opacity=1 ,rotate=-358.88]  {$-$};
\draw (192.58,173.56) node [anchor=north west][inner sep=0.75pt]  [font=\scriptsize,color={rgb, 255:red, 0; green, 0; blue, 0 }  ,opacity=1 ,rotate=-358.88]  {$-$};
\draw (190.31,86.99) node [anchor=north west][inner sep=0.75pt]  [font=\scriptsize,color={rgb, 255:red, 0; green, 0; blue, 0 }  ,opacity=1 ]  {$-$};

\end{tikzpicture}
\caption{$\mathcal{LUG^{H}}(\mathcal{G}_s) $}
    \label{LUG H}
  \end{subfigure}
  \caption{Signed layered graph of $\mathcal{G}_6$ with two $\mathcal{LUG}(\mathcal{G}_s)$, highlighted in green and red in Fig~\ref{3fig}(\subref{LUG2}). This shows that a digraph may admit multiple $\mathcal{LUG}(\mathcal{G}_s)$. Fig~\ref{3fig}(\subref{LUG H}) depicts $\mathcal{LUG}^H(\mathcal{G}_s)$, where all nodes are sign-matched.}

  \label{3fig}
\end{figure}
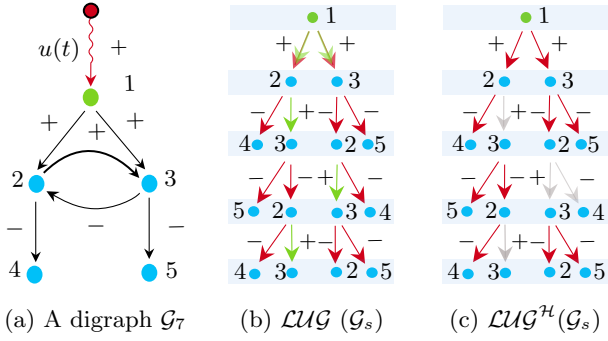

For example, consider the digraph $\mathcal{G}_7$ and its signed layered graph $\mathcal{G}_s$ given in Fig.~\ref{3fig}. From Def. \ref{def_LUG}, it follows that multiple $\mathcal{LUG}(\mathcal{G}_s)$ can exist within $\mathcal{G}_s$, as shown in Fig.~\ref{3fig}(\subref{LUG2}) (highlighted in green and red). Although $L_{m+1}$ is unisigned, there exist cases where the column $\Psi_m$ is not unisigned, since each entry of $\Psi_m$ is obtained as the sum of products of the corresponding path weights, which can be of different sign from other entries.
Thus, an $\mathcal{LUG}(\mathcal{G}_s)$ alone is not sufficient to derive the conditions for $\mathcal{SS}$ herdability. To overcome this issue, we propose to use sign matching in $\mathcal{LUG}(\mathcal{G}_s)$, motivated by \cite{liu2011controllability}.

\begin{definition}\label{sign M def}
\textit{Sign matching:} A set of nodes $S$ in $L_p$ is said to be sign-matched if the following conditions are satisfied:  
\begin{enumerate}
    \item[i.]  All incoming edges from $L_{p-1}$ to nodes in $S$ have the same sign.  

\item[ii.]  Every $i \in S$, there must exist walks of the same signs from the leader to all the nodes in $S$. Equivalently, with the same signs. 
\vspace{-2mm}
\[
\forall i,j \in S, \ \exists r:  \operatorname{sign}\!\big(\mathcal{W}^{r}_{(1,i)}\big)=\operatorname{sign}\!\big(\mathcal{W}^{r}_{(1,j)}\big) 
\]
\item[ii.] Let $\Delta_i$ represent a set of nodes that exhibit a signed dilation in $\mathcal{G(A,B)}$. If within $\Delta_i$, either $\Delta_i^N$ or $\Delta_i^P$ is sign-matched, then the other set will be sign-matched solely outside of $\Delta_i$.

\end{enumerate}
\end{definition}

\begin{definition}\label{lug h def}
\textit{$\mathcal{LUG}^{H} (\mathcal{G}_s)$:} An  $\mathcal{LUG}^{H} (\mathcal{G}_S)$ is an $\mathcal{LUG} (\mathcal{G}_S)$ where every node $\mathcal{G}(\mathcal{A,B})$ in $(\mathcal{G}_S)$ is sign-matched. In the single-leader case, only one edge connects $u(t)$ to $L_1$, so node $1$ is naturally sign-matched.

\end{definition}
\subsection{Properties of $\mathcal{LUG}^{H} (\mathcal{G}_S)$:}

\begin{itemize}
    \item An $\mathcal{LUG}^{H}(\mathcal{G}_S)$ is a subgraph of the signed layer graph $\mathcal{G}_S$ that may appear disconnected, but all its nodes are inherently input-connected in $\mathcal{G}_S$.
   \item In every layer $L_k$ of $\mathcal{LUG}^{H}(\mathcal{G}_S)$, only sign-matched nodes have incoming edges. Consequently, all such nodes satisfy the properties of sign-matched nodes given in Definition \ref{sign M def}.
\item Let $\Delta_i$ represent a signed dilation set in $\mathcal{G(A,B)}$. Then an $\mathcal{LUG^H}(\mathcal{G}_s)$  either $\Delta_i^N$ or $\Delta_i^P$ is sign-matched within $\Delta_i$, and the nodes in other set is sign-matched outside of $\Delta_i$.
    
\end{itemize}

The $\mathcal{LUG} (\mathcal{G}_S)$ highlighted in Fig.~\ref{3fig}(\subref{LUG H}) is an $\mathcal{LUG}^{H} (\mathcal{G}_S)$. When a digraph is spanned by an $\mathcal{LUG}^{\mathcal{H}}(\mathcal{G}_s)$, the following scenarios can occur:

\begin{figure}[ht]
    \centering
    
    \begin{subfigure}{0.14\textwidth}
        \centering

\tikzset{every picture/.style={scale=0.85pt}} 

\begin{tikzpicture}[x=0.75pt,y=0.75pt,yscale=-1,xscale=1]

\draw  [fill={rgb, 255:red, 0; green, 192; blue, 248 }  ,fill opacity=1 ] (70.97,278.72) .. controls (70.97,276.93) and (72.43,275.47) .. (74.22,275.47) .. controls (76.01,275.47) and (77.46,276.93) .. (77.46,278.72) .. controls (77.46,280.51) and (76.01,281.96) .. (74.22,281.96) .. controls (72.43,281.96) and (70.97,280.51) .. (70.97,278.72) -- cycle ;
\draw [color={rgb, 255:red, 161; green, 27; blue, 43 }  ,draw opacity=1 ]   (74.13,229.25) .. controls (75.8,230.92) and (75.79,232.58) .. (74.12,234.25) .. controls (72.45,235.92) and (72.45,237.58) .. (74.12,239.25) -- (74.12,239.9) -- (74.1,247.9) ;
\draw [shift={(74.1,250.9)}, rotate = 270.1] [fill={rgb, 255:red, 161; green, 27; blue, 43 }  ,fill opacity=1 ][line width=0.08]  [draw opacity=0] (10.72,-5.15) -- (0,0) -- (10.72,5.15) -- (7.12,0) -- cycle    ;
\draw  [fill={rgb, 255:red, 239; green, 43; blue, 67 }  ,fill opacity=1 ] (70.89,226) .. controls (70.89,224.21) and (72.34,222.76) .. (74.13,222.76) .. controls (75.93,222.76) and (77.38,224.21) .. (77.38,226) .. controls (77.38,227.79) and (75.93,229.25) .. (74.13,229.25) .. controls (72.34,229.25) and (70.89,227.79) .. (70.89,226) -- cycle ;
\draw  [fill={rgb, 255:red, 0; green, 192; blue, 248 }  ,fill opacity=1 ] (70.7,302.83) .. controls (70.7,301.04) and (72.16,299.58) .. (73.95,299.58) .. controls (75.74,299.58) and (77.19,301.04) .. (77.19,302.83) .. controls (77.19,304.62) and (75.74,306.07) .. (73.95,306.07) .. controls (72.16,306.07) and (70.7,304.62) .. (70.7,302.83) -- cycle ;
\draw  [fill={rgb, 255:red, 0; green, 192; blue, 248 }  ,fill opacity=1 ] (70.57,326.96) .. controls (70.57,325.17) and (72.02,323.72) .. (73.81,323.72) .. controls (75.61,323.72) and (77.06,325.17) .. (77.06,326.96) .. controls (77.06,328.75) and (75.61,330.21) .. (73.81,330.21) .. controls (72.02,330.21) and (70.57,328.75) .. (70.57,326.96) -- cycle ;
\draw  [fill={rgb, 255:red, 0; green, 192; blue, 248 }  ,fill opacity=1 ] (70.85,254.15) .. controls (70.85,252.36) and (72.3,250.9) .. (74.1,250.9) .. controls (75.89,250.9) and (77.34,252.36) .. (77.34,254.15) .. controls (77.34,255.94) and (75.89,257.39) .. (74.1,257.39) .. controls (72.3,257.39) and (70.85,255.94) .. (70.85,254.15) -- cycle ;
\draw    (74.06,306.21) -- (73.83,320.83) ;
\draw [shift={(73.79,323.83)}, rotate = 270.88] [fill={rgb, 255:red, 0; green, 0; blue, 0 }  ][line width=0.08]  [draw opacity=0] (10.72,-5.15) -- (0,0) -- (10.72,5.15) -- (7.12,0) -- cycle    ;
\draw    (74.49,257.85) -- (74.26,272.47) ;
\draw [shift={(74.22,275.47)}, rotate = 270.88] [fill={rgb, 255:red, 0; green, 0; blue, 0 }  ][line width=0.08]  [draw opacity=0] (10.72,-5.15) -- (0,0) -- (10.72,5.15) -- (7.12,0) -- cycle    ;
\draw    (74.22,281.96) -- (73.99,296.58) ;
\draw [shift={(73.95,299.58)}, rotate = 270.88] [fill={rgb, 255:red, 0; green, 0; blue, 0 }  ][line width=0.08]  [draw opacity=0] (10.72,-5.15) -- (0,0) -- (10.72,5.15) -- (7.12,0) -- cycle    ;

\draw (70.67,213.48) node [anchor=north west][inner sep=0.75pt]  [font=\tiny]  {$u$};
\draw (57.2,256.73) node [anchor=north west][inner sep=0.75pt]  [font=\scriptsize]  {$+$};
\draw (58.66,280.2) node [anchor=north west][inner sep=0.75pt]  [font=\scriptsize]  {$-$};
\draw (56.94,304.74) node [anchor=north west][inner sep=0.75pt]  [font=\scriptsize]  {$+$};
\draw (83.5,252.61) node [anchor=north west][inner sep=0.75pt]  [font=\tiny]  {$1$};
\draw (85.3,276.01) node [anchor=north west][inner sep=0.75pt]  [font=\tiny]  {$2$};
\draw (86.2,298.61) node [anchor=north west][inner sep=0.75pt]  [font=\tiny]  {$3$};
\draw (85.6,322.31) node [anchor=north west][inner sep=0.75pt]  [font=\tiny]  {$4$};

\end{tikzpicture}

        \caption{}
        \label{5fig a}
    \end{subfigure}
    \begin{subfigure}{0.14\textwidth}
        \centering
\tikzset{every picture/.style={scale=0.85pt}} 

\begin{tikzpicture}[x=0.75pt,y=0.75pt,yscale=-1,xscale=1]

\draw  [fill={rgb, 255:red, 0; green, 192; blue, 248 }  ,fill opacity=1 ] (80.74,86.09) .. controls (80.74,84.3) and (82.19,82.85) .. (83.98,82.85) .. controls (85.77,82.85) and (87.23,84.3) .. (87.23,86.09) .. controls (87.23,87.88) and (85.77,89.33) .. (83.98,89.33) .. controls (82.19,89.33) and (80.74,87.88) .. (80.74,86.09) -- cycle ;
\draw  [fill={rgb, 255:red, 0; green, 192; blue, 248 }  ,fill opacity=1 ] (43.58,136.2) .. controls (43.58,134.41) and (45.03,132.96) .. (46.82,132.96) .. controls (48.61,132.96) and (50.07,134.41) .. (50.07,136.2) .. controls (50.07,137.99) and (48.61,139.44) .. (46.82,139.44) .. controls (45.03,139.44) and (43.58,137.99) .. (43.58,136.2) -- cycle ;
\draw [color={rgb, 255:red, 161; green, 27; blue, 43 }  ,draw opacity=1 ]   (83.89,36.62) .. controls (85.56,38.29) and (85.56,39.95) .. (83.89,41.62) .. controls (82.22,43.29) and (82.22,44.95) .. (83.88,46.62) -- (83.88,47.28) -- (83.86,55.28) ;
\draw [shift={(83.86,58.28)}, rotate = 270.1] [fill={rgb, 255:red, 161; green, 27; blue, 43 }  ,fill opacity=1 ][line width=0.08]  [draw opacity=0] (10.72,-5.15) -- (0,0) -- (10.72,5.15) -- (7.12,0) -- cycle    ;
\draw  [fill={rgb, 255:red, 239; green, 43; blue, 67 }  ,fill opacity=1 ] (80.65,33.37) .. controls (80.65,31.58) and (82.1,30.13) .. (83.89,30.13) .. controls (85.69,30.13) and (87.14,31.58) .. (87.14,33.37) .. controls (87.14,35.17) and (85.69,36.62) .. (83.89,36.62) .. controls (82.1,36.62) and (80.65,35.17) .. (80.65,33.37) -- cycle ;
\draw  [fill={rgb, 255:red, 248; green, 231; blue, 28 }  ,fill opacity=1 ] (117.14,111.12) .. controls (117.14,109.33) and (118.59,107.87) .. (120.38,107.87) .. controls (122.17,107.87) and (123.63,109.33) .. (123.63,111.12) .. controls (123.63,112.91) and (122.17,114.36) .. (120.38,114.36) .. controls (118.59,114.36) and (117.14,112.91) .. (117.14,111.12) -- cycle ;
\draw  [fill={rgb, 255:red, 248; green, 231; blue, 28 }  ,fill opacity=1 ] (43.85,112.09) .. controls (43.85,110.3) and (45.3,108.85) .. (47.09,108.85) .. controls (48.88,108.85) and (50.34,110.3) .. (50.34,112.09) .. controls (50.34,113.88) and (48.88,115.33) .. (47.09,115.33) .. controls (45.3,115.33) and (43.85,113.88) .. (43.85,112.09) -- cycle ;
\draw  [fill={rgb, 255:red, 248; green, 231; blue, 28 }  ,fill opacity=1 ] (80.25,110.47) .. controls (80.25,108.68) and (81.7,107.23) .. (83.49,107.23) .. controls (85.29,107.23) and (86.74,108.68) .. (86.74,110.47) .. controls (86.74,112.26) and (85.29,113.71) .. (83.49,113.71) .. controls (81.7,113.71) and (80.25,112.26) .. (80.25,110.47) -- cycle ;
\draw    (86.93,87.77) -- (115.39,106.16) ;
\draw [shift={(117.91,107.79)}, rotate = 212.88] [fill={rgb, 255:red, 0; green, 0; blue, 0 }  ][line width=0.08]  [draw opacity=0] (10.72,-5.15) -- (0,0) -- (10.72,5.15) -- (7.12,0) -- cycle    ;
\draw  [fill={rgb, 255:red, 0; green, 192; blue, 248 }  ,fill opacity=1 ] (80.61,61.52) .. controls (80.61,59.73) and (82.07,58.28) .. (83.86,58.28) .. controls (85.65,58.28) and (87.1,59.73) .. (87.1,61.52) .. controls (87.1,63.31) and (85.65,64.76) .. (83.86,64.76) .. controls (82.07,64.76) and (80.61,63.31) .. (80.61,61.52) -- cycle ;
\draw    (81,87.92) -- (51.65,107.37) ;
\draw [shift={(49.15,109.02)}, rotate = 326.47] [fill={rgb, 255:red, 0; green, 0; blue, 0 }  ][line width=0.08]  [draw opacity=0] (10.72,-5.15) -- (0,0) -- (10.72,5.15) -- (7.12,0) -- cycle    ;
\draw    (47.09,115.33) -- (46.87,129.96) ;
\draw [shift={(46.82,132.96)}, rotate = 270.88] [fill={rgb, 255:red, 0; green, 0; blue, 0 }  ][line width=0.08]  [draw opacity=0] (10.72,-5.15) -- (0,0) -- (10.72,5.15) -- (7.12,0) -- cycle    ;
\draw    (83.98,89.33) -- (83.76,103.96) ;
\draw [shift={(83.71,106.96)}, rotate = 270.88] [fill={rgb, 255:red, 0; green, 0; blue, 0 }  ][line width=0.08]  [draw opacity=0] (10.72,-5.15) -- (0,0) -- (10.72,5.15) -- (7.12,0) -- cycle    ;
\draw    (84.25,65.22) -- (84.03,79.85) ;
\draw [shift={(83.98,82.85)}, rotate = 270.88] [fill={rgb, 255:red, 0; green, 0; blue, 0 }  ][line width=0.08]  [draw opacity=0] (10.72,-5.15) -- (0,0) -- (10.72,5.15) -- (7.12,0) -- cycle    ;

\draw (81.64,22.07) node [anchor=north west][inner sep=0.75pt]  [font=\tiny]  {$u$};
\draw (59.57,84.51) node [anchor=north west][inner sep=0.75pt]  [font=\scriptsize]  {$+$};
\draw (103.92,85.59) node [anchor=north west][inner sep=0.75pt]  [font=\scriptsize]  {$+$};
\draw (69.84,94.78) node [anchor=north west][inner sep=0.75pt]  [font=\scriptsize]  {$+$};
\draw (30.9,117.49) node [anchor=north west][inner sep=0.75pt]  [font=\scriptsize]  {$-$};
\draw (68.22,63.97) node [anchor=north west][inner sep=0.75pt]  [font=\scriptsize]  {$-$};
\draw (93.1,59.81) node [anchor=north west][inner sep=0.75pt]  [font=\tiny]  {$1$};
\draw (92.7,80.01) node [anchor=north west][inner sep=0.75pt]  [font=\tiny]  {$2$};
\draw (31.8,109.01) node [anchor=north west][inner sep=0.75pt]  [font=\tiny]  {$3$};
\draw (80.8,121.51) node [anchor=north west][inner sep=0.75pt]  [font=\tiny]  {$4$};
\draw (115.3,120.61) node [anchor=north west][inner sep=0.75pt]  [font=\tiny]  {$5$};
\draw (31.9,135.01) node [anchor=north west][inner sep=0.75pt]  [font=\tiny]  {$6$};

\end{tikzpicture}

        \caption{}
        \label{5fig b}
    \end{subfigure}
    \hfill
    \begin{subfigure}{0.14\textwidth}
        \centering
       
\tikzset{every picture/.style={scale=0.85pt}} 

\begin{tikzpicture}[x=0.75pt,y=0.75pt,yscale=-1,xscale=1]

\draw  [fill={rgb, 255:red, 0; green, 192; blue, 248 }  ,fill opacity=1 ] (144.62,198.17) .. controls (144.62,196.38) and (146.07,194.93) .. (147.86,194.93) .. controls (149.66,194.93) and (151.11,196.38) .. (151.11,198.17) .. controls (151.11,199.96) and (149.66,201.41) .. (147.86,201.41) .. controls (146.07,201.41) and (144.62,199.96) .. (144.62,198.17) -- cycle ;
\draw    (148.13,177.3) -- (147.91,191.93) ;
\draw [shift={(147.86,194.93)}, rotate = 270.88] [fill={rgb, 255:red, 0; green, 0; blue, 0 }  ][line width=0.08]  [draw opacity=0] (10.72,-5.15) -- (0,0) -- (10.72,5.15) -- (7.12,0) -- cycle    ;
\draw  [fill={rgb, 255:red, 0; green, 192; blue, 248 }  ,fill opacity=1 ] (109.49,149.03) .. controls (109.49,147.24) and (110.94,145.79) .. (112.73,145.79) .. controls (114.53,145.79) and (115.98,147.24) .. (115.98,149.03) .. controls (115.98,150.82) and (114.53,152.28) .. (112.73,152.28) .. controls (110.94,152.28) and (109.49,150.82) .. (109.49,149.03) -- cycle ;
\draw  [fill={rgb, 255:red, 0; green, 192; blue, 248 }  ,fill opacity=1 ] (108.02,200.03) .. controls (108.02,198.23) and (109.47,196.78) .. (111.26,196.78) .. controls (113.06,196.78) and (114.51,198.23) .. (114.51,200.03) .. controls (114.51,201.82) and (113.06,203.27) .. (111.26,203.27) .. controls (109.47,203.27) and (108.02,201.82) .. (108.02,200.03) -- cycle ;
\draw [color={rgb, 255:red, 161; green, 27; blue, 43 }  ,draw opacity=1 ]   (112.65,99.56) .. controls (114.31,101.23) and (114.31,102.89) .. (112.64,104.56) .. controls (110.97,106.23) and (110.97,107.89) .. (112.63,109.56) -- (112.63,110.22) -- (112.62,118.22) ;
\draw [shift={(112.61,121.22)}, rotate = 270.1] [fill={rgb, 255:red, 161; green, 27; blue, 43 }  ,fill opacity=1 ][line width=0.08]  [draw opacity=0] (10.72,-5.15) -- (0,0) -- (10.72,5.15) -- (7.12,0) -- cycle    ;
\draw  [fill={rgb, 255:red, 239; green, 43; blue, 67 }  ,fill opacity=1 ] (109.4,96.32) .. controls (109.4,94.53) and (110.86,93.07) .. (112.65,93.07) .. controls (114.44,93.07) and (115.89,94.53) .. (115.89,96.32) .. controls (115.89,98.11) and (114.44,99.56) .. (112.65,99.56) .. controls (110.86,99.56) and (109.4,98.11) .. (109.4,96.32) -- cycle ;
\draw  [fill={rgb, 255:red, 0; green, 192; blue, 248 }  ,fill opacity=1 ] (144.89,174.06) .. controls (144.89,172.27) and (146.34,170.82) .. (148.13,170.82) .. controls (149.93,170.82) and (151.38,172.27) .. (151.38,174.06) .. controls (151.38,175.85) and (149.93,177.3) .. (148.13,177.3) .. controls (146.34,177.3) and (144.89,175.85) .. (144.89,174.06) -- cycle ;
\draw  [fill={rgb, 255:red, 0; green, 192; blue, 248 }  ,fill opacity=1 ] (72.6,175.03) .. controls (72.6,173.24) and (74.05,171.79) .. (75.85,171.79) .. controls (77.64,171.79) and (79.09,173.24) .. (79.09,175.03) .. controls (79.09,176.83) and (77.64,178.28) .. (75.85,178.28) .. controls (74.05,178.28) and (72.6,176.83) .. (72.6,175.03) -- cycle ;
\draw    (115.68,150.71) -- (144.15,169.1) ;
\draw [shift={(146.66,170.73)}, rotate = 212.88] [fill={rgb, 255:red, 0; green, 0; blue, 0 }  ][line width=0.08]  [draw opacity=0] (10.72,-5.15) -- (0,0) -- (10.72,5.15) -- (7.12,0) -- cycle    ;
\draw  [fill={rgb, 255:red, 0; green, 192; blue, 248 }  ,fill opacity=1 ] (109.37,124.46) .. controls (109.37,122.67) and (110.82,121.22) .. (112.61,121.22) .. controls (114.41,121.22) and (115.86,122.67) .. (115.86,124.46) .. controls (115.86,126.25) and (114.41,127.71) .. (112.61,127.71) .. controls (110.82,127.71) and (109.37,126.25) .. (109.37,124.46) -- cycle ;
\draw    (109.75,150.87) -- (80.41,170.31) ;
\draw [shift={(77.91,171.97)}, rotate = 326.47] [fill={rgb, 255:red, 0; green, 0; blue, 0 }  ][line width=0.08]  [draw opacity=0] (10.72,-5.15) -- (0,0) -- (10.72,5.15) -- (7.12,0) -- cycle    ;
\draw    (113,128.17) -- (112.78,142.79) ;
\draw [shift={(112.73,145.79)}, rotate = 270.88] [fill={rgb, 255:red, 0; green, 0; blue, 0 }  ][line width=0.08]  [draw opacity=0] (10.72,-5.15) -- (0,0) -- (10.72,5.15) -- (7.12,0) -- cycle    ;
\draw    (145.7,176.49) -- (116.36,195.94) ;
\draw [shift={(113.86,197.59)}, rotate = 326.47] [fill={rgb, 255:red, 0; green, 0; blue, 0 }  ][line width=0.08]  [draw opacity=0] (10.72,-5.15) -- (0,0) -- (10.72,5.15) -- (7.12,0) -- cycle    ;
\draw    (77.85,177.16) -- (106.31,195.56) ;
\draw [shift={(108.83,197.19)}, rotate = 212.88] [fill={rgb, 255:red, 0; green, 0; blue, 0 }  ][line width=0.08]  [draw opacity=0] (10.72,-5.15) -- (0,0) -- (10.72,5.15) -- (7.12,0) -- cycle    ;

\draw (110.4,85.01) node [anchor=north west][inner sep=0.75pt]  [font=\tiny]  {$u$};
\draw (75.34,187.74) node [anchor=north west][inner sep=0.75pt]  [font=\scriptsize]  {$+$};
\draw (126.72,187.86) node [anchor=north west][inner sep=0.75pt]  [font=\scriptsize]  {$-$};
\draw (98.6,128.4) node [anchor=north west][inner sep=0.75pt]  [font=\scriptsize]  {$+$};
\draw (84.54,148.4) node [anchor=north west][inner sep=0.75pt]  [font=\scriptsize]  {$+$};
\draw (131.05,149.48) node [anchor=north west][inner sep=0.75pt]  [font=\scriptsize]  {$+$};
\draw (155.39,177.59) node [anchor=north west][inner sep=0.75pt]  [font=\scriptsize]  {$-$};
\draw (124.4,121.01) node [anchor=north west][inner sep=0.75pt]  [font=\tiny]  {$1$};
\draw (124.4,146.01) node [anchor=north west][inner sep=0.75pt]  [font=\tiny]  {$2$};
\draw (58.4,173.01) node [anchor=north west][inner sep=0.75pt]  [font=\tiny]  {$3$};
\draw (158.4,170.01) node [anchor=north west][inner sep=0.75pt]  [font=\tiny]  {$4$};
\draw (109.4,183.01) node [anchor=north west][inner sep=0.75pt]  [font=\tiny]  {$5$};
\draw (158.4,195.01) node [anchor=north west][inner sep=0.75pt]  [font=\tiny]  {$6$};

\end{tikzpicture}
        \caption{}
        \label{5fig c}
    \end{subfigure}
    
    \vskip\baselineskip 
    \begin{subfigure}{0.14\textwidth}
        \centering
       
\tikzset{every picture/.style={scale=0.85pt}} 

\begin{tikzpicture}[x=0.75pt,y=0.75pt,yscale=-1,xscale=1]

\draw  [fill={rgb, 255:red, 0; green, 192; blue, 248 }  ,fill opacity=1 ] (378.31,87.91) .. controls (378.31,86.11) and (379.77,84.66) .. (381.56,84.66) .. controls (383.35,84.66) and (384.8,86.11) .. (384.8,87.91) .. controls (384.8,89.7) and (383.35,91.15) .. (381.56,91.15) .. controls (379.77,91.15) and (378.31,89.7) .. (378.31,87.91) -- cycle ;
\draw [color={rgb, 255:red, 161; green, 27; blue, 43 }  ,draw opacity=1 ]   (381.47,38.43) .. controls (383.14,40.1) and (383.13,41.77) .. (381.46,43.43) .. controls (379.79,45.1) and (379.79,46.76) .. (381.46,48.43) -- (381.45,49.09) -- (381.44,57.09) ;
\draw [shift={(381.44,60.09)}, rotate = 270.1] [fill={rgb, 255:red, 161; green, 27; blue, 43 }  ,fill opacity=1 ][line width=0.08]  [draw opacity=0] (10.72,-5.15) -- (0,0) -- (10.72,5.15) -- (7.12,0) -- cycle    ;
\draw  [fill={rgb, 255:red, 239; green, 43; blue, 67 }  ,fill opacity=1 ] (378.23,35.19) .. controls (378.23,33.4) and (379.68,31.95) .. (381.47,31.95) .. controls (383.26,31.95) and (384.72,33.4) .. (384.72,35.19) .. controls (384.72,36.98) and (383.26,38.43) .. (381.47,38.43) .. controls (379.68,38.43) and (378.23,36.98) .. (378.23,35.19) -- cycle ;
\draw  [fill={rgb, 255:red, 0; green, 192; blue, 248 }  ,fill opacity=1 ] (413.71,112.93) .. controls (413.71,111.14) and (415.17,109.69) .. (416.96,109.69) .. controls (418.75,109.69) and (420.2,111.14) .. (420.2,112.93) .. controls (420.2,114.73) and (418.75,116.18) .. (416.96,116.18) .. controls (415.17,116.18) and (413.71,114.73) .. (413.71,112.93) -- cycle ;
\draw  [fill={rgb, 255:red, 0; green, 192; blue, 248 }  ,fill opacity=1 ] (341.42,113.91) .. controls (341.42,112.12) and (342.88,110.66) .. (344.67,110.66) .. controls (346.46,110.66) and (347.92,112.12) .. (347.92,113.91) .. controls (347.92,115.7) and (346.46,117.15) .. (344.67,117.15) .. controls (342.88,117.15) and (341.42,115.7) .. (341.42,113.91) -- cycle ;
\draw    (384.51,89.58) -- (412.97,107.98) ;
\draw [shift={(415.49,109.61)}, rotate = 212.88] [fill={rgb, 255:red, 0; green, 0; blue, 0 }  ][line width=0.08]  [draw opacity=0] (10.72,-5.15) -- (0,0) -- (10.72,5.15) -- (7.12,0) -- cycle    ;
\draw  [fill={rgb, 255:red, 0; green, 192; blue, 248 }  ,fill opacity=1 ] (378.19,63.34) .. controls (378.19,61.55) and (379.64,60.09) .. (381.44,60.09) .. controls (383.23,60.09) and (384.68,61.55) .. (384.68,63.34) .. controls (384.68,65.13) and (383.23,66.58) .. (381.44,66.58) .. controls (379.64,66.58) and (378.19,65.13) .. (378.19,63.34) -- cycle ;
\draw    (378.57,89.74) -- (349.23,109.18) ;
\draw [shift={(346.73,110.84)}, rotate = 326.47] [fill={rgb, 255:red, 0; green, 0; blue, 0 }  ][line width=0.08]  [draw opacity=0] (10.72,-5.15) -- (0,0) -- (10.72,5.15) -- (7.12,0) -- cycle    ;
\draw    (381.83,67.04) -- (381.6,81.66) ;
\draw [shift={(381.56,84.66)}, rotate = 270.88] [fill={rgb, 255:red, 0; green, 0; blue, 0 }  ][line width=0.08]  [draw opacity=0] (10.72,-5.15) -- (0,0) -- (10.72,5.15) -- (7.12,0) -- cycle    ;
\draw    (347.92,113.91) -- (410.71,112.98) ;
\draw [shift={(413.71,112.93)}, rotate = 179.15] [fill={rgb, 255:red, 0; green, 0; blue, 0 }  ][line width=0.08]  [draw opacity=0] (10.72,-5.15) -- (0,0) -- (10.72,5.15) -- (7.12,0) -- cycle    ;

\draw (379.22,23.88) node [anchor=north west][inner sep=0.75pt]  [font=\tiny]  {$u$};
\draw (350.11,89.44) node [anchor=north west][inner sep=0.75pt]  [font=\scriptsize]  {$+$};
\draw (398.79,87.4) node [anchor=north west][inner sep=0.75pt]  [font=\scriptsize]  {$+$};
\draw (365.5,67.27) node [anchor=north west][inner sep=0.75pt]  [font=\scriptsize]  {$+$};
\draw (376.08,119.3) node [anchor=north west][inner sep=0.75pt]  [font=\scriptsize]  {$-$};
\draw (388.3,60.61) node [anchor=north west][inner sep=0.75pt]  [font=\tiny]  {$1$};
\draw (390.3,82.41) node [anchor=north west][inner sep=0.75pt]  [font=\tiny]  {$2$};
\draw (342.2,123.41) node [anchor=north west][inner sep=0.75pt]  [font=\tiny]  {$3$};
\draw (417.6,123.91) node [anchor=north west][inner sep=0.75pt]  [font=\tiny]  {$4$};

\end{tikzpicture}
        \caption{}
        \label{5fig d}
    \end{subfigure}
    \hfill
    \begin{subfigure}{0.14\textwidth}
        \centering
    
\tikzset{every picture/.style={scale=0.85pt}} 

\begin{tikzpicture}[x=0.75pt,y=0.75pt,yscale=-1,xscale=1]

\draw  [fill={rgb, 255:red, 0; green, 192; blue, 248 }  ,fill opacity=1 ] (224.33,89.69) .. controls (224.33,87.9) and (225.78,86.45) .. (227.58,86.45) .. controls (229.37,86.45) and (230.82,87.9) .. (230.82,89.69) .. controls (230.82,91.48) and (229.37,92.93) .. (227.58,92.93) .. controls (225.78,92.93) and (224.33,91.48) .. (224.33,89.69) -- cycle ;
\draw [color={rgb, 255:red, 161; green, 27; blue, 43 }  ,draw opacity=1 ]   (227.49,40.22) .. controls (229.16,41.89) and (229.15,43.55) .. (227.48,45.22) .. controls (225.81,46.89) and (225.81,48.55) .. (227.47,50.22) -- (227.47,50.88) -- (227.46,58.88) ;
\draw [shift={(227.45,61.88)}, rotate = 270.1] [fill={rgb, 255:red, 161; green, 27; blue, 43 }  ,fill opacity=1 ][line width=0.08]  [draw opacity=0] (10.72,-5.15) -- (0,0) -- (10.72,5.15) -- (7.12,0) -- cycle    ;
\draw  [fill={rgb, 255:red, 239; green, 43; blue, 67 }  ,fill opacity=1 ] (224.25,36.98) .. controls (224.25,35.18) and (225.7,33.73) .. (227.49,33.73) .. controls (229.28,33.73) and (230.74,35.18) .. (230.74,36.98) .. controls (230.74,38.77) and (229.28,40.22) .. (227.49,40.22) .. controls (225.7,40.22) and (224.25,38.77) .. (224.25,36.98) -- cycle ;
\draw  [fill={rgb, 255:red, 0; green, 192; blue, 248 }  ,fill opacity=1 ] (259.73,114.72) .. controls (259.73,112.93) and (261.18,111.48) .. (262.98,111.48) .. controls (264.77,111.48) and (266.22,112.93) .. (266.22,114.72) .. controls (266.22,116.51) and (264.77,117.96) .. (262.98,117.96) .. controls (261.18,117.96) and (259.73,116.51) .. (259.73,114.72) -- cycle ;
\draw  [fill={rgb, 255:red, 0; green, 192; blue, 248 }  ,fill opacity=1 ] (187.44,115.69) .. controls (187.44,113.9) and (188.9,112.45) .. (190.69,112.45) .. controls (192.48,112.45) and (193.93,113.9) .. (193.93,115.69) .. controls (193.93,117.48) and (192.48,118.94) .. (190.69,118.94) .. controls (188.9,118.94) and (187.44,117.48) .. (187.44,115.69) -- cycle ;
\draw    (233.04,92.99) -- (261.51,111.39) ;
\draw [shift={(230.52,91.37)}, rotate = 32.88] [fill={rgb, 255:red, 0; green, 0; blue, 0 }  ][line width=0.08]  [draw opacity=0] (10.72,-5.15) -- (0,0) -- (10.72,5.15) -- (7.12,0) -- cycle    ;
\draw  [fill={rgb, 255:red, 0; green, 192; blue, 248 }  ,fill opacity=1 ] (224.21,65.12) .. controls (224.21,63.33) and (225.66,61.88) .. (227.45,61.88) .. controls (229.25,61.88) and (230.7,63.33) .. (230.7,65.12) .. controls (230.7,66.91) and (229.25,68.36) .. (227.45,68.36) .. controls (225.66,68.36) and (224.21,66.91) .. (224.21,65.12) -- cycle ;
\draw    (224.59,91.53) -- (195.25,110.97) ;
\draw [shift={(192.75,112.63)}, rotate = 326.47] [fill={rgb, 255:red, 0; green, 0; blue, 0 }  ][line width=0.08]  [draw opacity=0] (10.72,-5.15) -- (0,0) -- (10.72,5.15) -- (7.12,0) -- cycle    ;
\draw    (227.85,68.82) -- (227.62,83.45) ;
\draw [shift={(227.58,86.45)}, rotate = 270.88] [fill={rgb, 255:red, 0; green, 0; blue, 0 }  ][line width=0.08]  [draw opacity=0] (10.72,-5.15) -- (0,0) -- (10.72,5.15) -- (7.12,0) -- cycle    ;
\draw    (193.93,115.69) -- (256.73,114.76) ;
\draw [shift={(259.73,114.72)}, rotate = 179.15] [fill={rgb, 255:red, 0; green, 0; blue, 0 }  ][line width=0.08]  [draw opacity=0] (10.72,-5.15) -- (0,0) -- (10.72,5.15) -- (7.12,0) -- cycle    ;

\draw (225.24,25.67) node [anchor=north west][inner sep=0.75pt]  [font=\tiny]  {$u$};
\draw (204.25,88.1) node [anchor=north west][inner sep=0.75pt]  [font=\scriptsize]  {$+$};
\draw (221.55,120.54) node [anchor=north west][inner sep=0.75pt]  [font=\scriptsize]  {$-$};
\draw (250.22,93.51) node [anchor=north west][inner sep=0.75pt]  [font=\scriptsize]  {$-$};
\draw (209.93,67.4) node [anchor=north west][inner sep=0.75pt]  [font=\scriptsize]  {$+$};
\draw (236.3,63.01) node [anchor=north west][inner sep=0.75pt]  [font=\tiny]  {$1$};
\draw (239.9,83.21) node [anchor=north west][inner sep=0.75pt]  [font=\tiny]  {$2$};
\draw (187,126.61) node [anchor=north west][inner sep=0.75pt]  [font=\tiny]  {$3$};
\draw (261.6,124.71) node [anchor=north west][inner sep=0.75pt]  [font=\tiny]  {$4$};

\end{tikzpicture}
        \caption{}
        \label{5fig e}
    \end{subfigure}
    \hfill
    \begin{subfigure}{0.14\textwidth}

\tikzset{every picture/.style={scale=0.85pt}} 

\begin{tikzpicture}[x=0.75pt,y=0.75pt,yscale=-1,xscale=1]

\draw  [fill={rgb, 255:red, 0; green, 192; blue, 248 }  ,fill opacity=1 ] (185.75,402.14) .. controls (185.75,400.34) and (187.2,398.89) .. (188.99,398.89) .. controls (190.79,398.89) and (192.24,400.34) .. (192.24,402.14) .. controls (192.24,403.93) and (190.79,405.38) .. (188.99,405.38) .. controls (187.2,405.38) and (185.75,403.93) .. (185.75,402.14) -- cycle ;
\draw    (189.26,381.27) -- (189.04,395.89) ;
\draw [shift={(188.99,398.89)}, rotate = 270.88] [fill={rgb, 255:red, 0; green, 0; blue, 0 }  ][line width=0.08]  [draw opacity=0] (10.72,-5.15) -- (0,0) -- (10.72,5.15) -- (7.12,0) -- cycle    ;
\draw  [fill={rgb, 255:red, 0; green, 192; blue, 248 }  ,fill opacity=1 ] (186.49,345.03) .. controls (186.49,343.24) and (187.94,341.79) .. (189.73,341.79) .. controls (191.53,341.79) and (192.98,343.24) .. (192.98,345.03) .. controls (192.98,346.82) and (191.53,348.28) .. (189.73,348.28) .. controls (187.94,348.28) and (186.49,346.82) .. (186.49,345.03) -- cycle ;
\draw  [fill={rgb, 255:red, 0; green, 192; blue, 248 }  ,fill opacity=1 ] (186.02,378.03) .. controls (186.02,376.23) and (187.47,374.78) .. (189.26,374.78) .. controls (191.06,374.78) and (192.51,376.23) .. (192.51,378.03) .. controls (192.51,379.82) and (191.06,381.27) .. (189.26,381.27) .. controls (187.47,381.27) and (186.02,379.82) .. (186.02,378.03) -- cycle ;
\draw [color={rgb, 255:red, 161; green, 27; blue, 43 }  ,draw opacity=1 ]   (189.65,295.56) .. controls (191.31,297.23) and (191.31,298.89) .. (189.64,300.56) .. controls (187.97,302.23) and (187.97,303.89) .. (189.63,305.56) -- (189.63,306.22) -- (189.62,314.22) ;
\draw [shift={(189.61,317.22)}, rotate = 270.1] [fill={rgb, 255:red, 161; green, 27; blue, 43 }  ,fill opacity=1 ][line width=0.08]  [draw opacity=0] (10.72,-5.15) -- (0,0) -- (10.72,5.15) -- (7.12,0) -- cycle    ;
\draw  [fill={rgb, 255:red, 239; green, 43; blue, 67 }  ,fill opacity=1 ] (186.4,292.32) .. controls (186.4,290.53) and (187.86,289.07) .. (189.65,289.07) .. controls (191.44,289.07) and (192.89,290.53) .. (192.89,292.32) .. controls (192.89,294.11) and (191.44,295.56) .. (189.65,295.56) .. controls (187.86,295.56) and (186.4,294.11) .. (186.4,292.32) -- cycle ;
\draw  [fill={rgb, 255:red, 0; green, 192; blue, 248 }  ,fill opacity=1 ] (215.33,366) .. controls (215.33,364.21) and (216.79,362.76) .. (218.58,362.76) .. controls (220.37,362.76) and (221.82,364.21) .. (221.82,366) .. controls (221.82,367.79) and (220.37,369.24) .. (218.58,369.24) .. controls (216.79,369.24) and (215.33,367.79) .. (215.33,366) -- cycle ;
\draw  [fill={rgb, 255:red, 0; green, 192; blue, 248 }  ,fill opacity=1 ] (156.84,365) .. controls (156.84,363.21) and (158.3,361.76) .. (160.09,361.76) .. controls (161.88,361.76) and (163.33,363.21) .. (163.33,365) .. controls (163.33,366.79) and (161.88,368.24) .. (160.09,368.24) .. controls (158.3,368.24) and (156.84,366.79) .. (156.84,365) -- cycle ;
\draw [color={rgb, 255:red, 128; green, 128; blue, 128 }  ,draw opacity=0.52 ]   (192.68,346.71) -- (214.14,362.01) ;
\draw [shift={(216.58,363.76)}, rotate = 215.51] [fill={rgb, 255:red, 128; green, 128; blue, 128 }  ,fill opacity=0.52 ][line width=0.08]  [draw opacity=0] (10.72,-5.15) -- (0,0) -- (10.72,5.15) -- (7.12,0) -- cycle    ;
\draw  [fill={rgb, 255:red, 0; green, 192; blue, 248 }  ,fill opacity=1 ] (186.37,320.46) .. controls (186.37,318.67) and (187.82,317.22) .. (189.61,317.22) .. controls (191.41,317.22) and (192.86,318.67) .. (192.86,320.46) .. controls (192.86,322.25) and (191.41,323.71) .. (189.61,323.71) .. controls (187.82,323.71) and (186.37,322.25) .. (186.37,320.46) -- cycle ;
\draw    (186.75,346.87) -- (165.8,361.3) ;
\draw [shift={(163.33,363)}, rotate = 325.43] [fill={rgb, 255:red, 0; green, 0; blue, 0 }  ][line width=0.08]  [draw opacity=0] (10.72,-5.15) -- (0,0) -- (10.72,5.15) -- (7.12,0) -- cycle    ;
\draw    (190,324.17) -- (189.78,338.79) ;
\draw [shift={(189.73,341.79)}, rotate = 270.88] [fill={rgb, 255:red, 0; green, 0; blue, 0 }  ][line width=0.08]  [draw opacity=0] (10.72,-5.15) -- (0,0) -- (10.72,5.15) -- (7.12,0) -- cycle    ;
\draw [color={rgb, 255:red, 128; green, 128; blue, 128 }  ,draw opacity=0.41 ]   (189.68,349.71) -- (189.31,371.78) ;
\draw [shift={(189.26,374.78)}, rotate = 270.96] [fill={rgb, 255:red, 128; green, 128; blue, 128 }  ,fill opacity=0.41 ][line width=0.08]  [draw opacity=0] (10.72,-5.15) -- (0,0) -- (10.72,5.15) -- (7.12,0) -- cycle    ;
\draw    (191.68,404.71) -- (213.14,420.01) ;
\draw [shift={(215.58,421.76)}, rotate = 215.51] [fill={rgb, 255:red, 0; green, 0; blue, 0 }  ][line width=0.08]  [draw opacity=0] (10.72,-5.15) -- (0,0) -- (10.72,5.15) -- (7.12,0) -- cycle    ;
\draw [color={rgb, 255:red, 0; green, 0; blue, 0 }  ,draw opacity=1 ]   (186.49,404.88) -- (165.55,419.31) ;
\draw [shift={(163.08,421.01)}, rotate = 325.43] [fill={rgb, 255:red, 0; green, 0; blue, 0 }  ,fill opacity=1 ][line width=0.08]  [draw opacity=0] (10.72,-5.15) -- (0,0) -- (10.72,5.15) -- (7.12,0) -- cycle    ;
\draw  [fill={rgb, 255:red, 0; green, 192; blue, 248 }  ,fill opacity=1 ] (156.59,423.01) .. controls (156.59,421.22) and (158.04,419.77) .. (159.83,419.77) .. controls (161.62,419.77) and (163.08,421.22) .. (163.08,423.01) .. controls (163.08,424.8) and (161.62,426.26) .. (159.83,426.26) .. controls (158.04,426.26) and (156.59,424.8) .. (156.59,423.01) -- cycle ;
\draw  [fill={rgb, 255:red, 0; green, 192; blue, 248 }  ,fill opacity=1 ] (215.58,423.76) .. controls (215.58,421.97) and (217.03,420.51) .. (218.82,420.51) .. controls (220.62,420.51) and (222.07,421.97) .. (222.07,423.76) .. controls (222.07,425.55) and (220.62,427) .. (218.82,427) .. controls (217.03,427) and (215.58,425.55) .. (215.58,423.76) -- cycle ;

\draw (187.4,281.01) node [anchor=north west][inner sep=0.75pt]  [font=\tiny]  {$u$};
\draw (175.22,379.86) node [anchor=north west][inner sep=0.75pt]  [font=\scriptsize]  {$-$};
\draw (175.6,324.4) node [anchor=north west][inner sep=0.75pt]  [font=\scriptsize]  {$+$};
\draw (161.54,344.4) node [anchor=north west][inner sep=0.75pt]  [font=\scriptsize]  {$+$};
\draw (201.4,317.01) node [anchor=north west][inner sep=0.75pt]  [font=\tiny]  {$1$};
\draw (199.4,342.01) node [anchor=north west][inner sep=0.75pt]  [font=\tiny]  {$2$};
\draw (148.9,363.01) node [anchor=north west][inner sep=0.75pt]  [font=\tiny]  {$3$};
\draw (225.9,363.51) node [anchor=north west][inner sep=0.75pt]  [font=\tiny]  {$4$};
\draw (199.4,376.01) node [anchor=north west][inner sep=0.75pt]  [font=\tiny]  {$5$};
\draw (199.4,397.01) node [anchor=north west][inner sep=0.75pt]  [font=\tiny]  {$6$};
\draw (177.04,355.33) node [anchor=north west][inner sep=0.75pt]  [font=\scriptsize]  {$-$};
\draw (147.9,420.51) node [anchor=north west][inner sep=0.75pt]  [font=\tiny]  {$5$};
\draw (225.4,422.01) node [anchor=north west][inner sep=0.75pt]  [font=\tiny]  {$4$};
\draw (204.1,401.4) node [anchor=north west][inner sep=0.75pt]  [font=\scriptsize]  {$+$};
\draw (208.72,342.86) node [anchor=north west][inner sep=0.75pt]  [font=\scriptsize]  {$-$};
\draw (168.6,399.4) node [anchor=north west][inner sep=0.75pt]  [font=\scriptsize]  {$+$};

\end{tikzpicture}
            \caption{}
            \label{5fig f}   
    \end{subfigure}
   \caption{The figures shown here are $\mathcal{LUG}^{\mathcal{H}}(\mathcal{G}_s)$ for different digraphs. In each of them, it can be seen that all the nodes of the digraph get spanned by the respective $\mathcal{LUG}^{\mathcal{H}}(\mathcal{G}_s)$. They differ in the nature of the paths from node $1$.}
    \label{Fig_5_thm}
\end{figure}
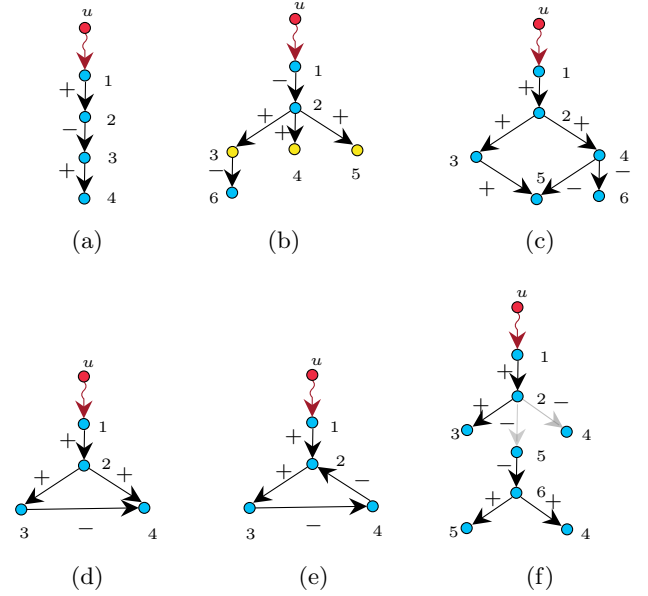

\begin{enumerate}
  \item[i.] All the nodes in $\{2,3,..,n\}$ have unique walks from node $1$. For a given set of nodes in layer \( L_p \) of \( \mathcal{LUG}^{\mathcal{H}}(\mathcal{G}_s) \), the corresponding column in \( \mathfrak{S}[\mathcal{C}(\mathcal{A,B})] \), denoted by \( [\Psi]_{p-1} \), contains nonzero entries. Moreover, all of them have the same sign owing to sign matching.
\begin{enumerate}
\item If all the walks have different lengths as shown in Fig~\ref{Fig_5_thm} (\subref{5fig a}), then $\mathcal{LUG}^{\mathcal{H}}(\mathcal{G}_s)$ reduces to a path graph, which is structurally controllable.
\item If no walk longer than length $m$ exists, with $2 \leq m < n$, then columns $(m+2)$ to $n$ of $\mathfrak{S}[\mathcal{C}(\mathcal{A,B})]$ contain only zero entries, and all nonzero entries correspond to nodes are in the first $m$ columns. Collectively, these columns ensure that the states of all nodes can reach the positive orthant. An illustrative example is shown in Fig.~\ref{Fig_5_thm}(\subref{5fig b}).

\end{enumerate}
Hence, for all parametric realizations with $a_{ij} \in \mathbb{R}^{+},~\forall (i,j)\in\mathcal{E}$, the states of all the nodes can be driven to the positive orthant of $\mathbb{R}^n$.

\item[ii.] In some cases, there may be multiple walks of the same length $p$ from node $1$ to node $j$ to be herded, with different signs in the product of edge weights. Consequently, the entry $\mathfrak{S}[\mathcal{C}_{(j,p+1)}]$ takes the value $(+/-)$.It can be positive or negative, depending upon the parametric realization of $(\mathcal{A,B})$. An instance of which is shown in Fig~\ref{Fig_5_thm} (\subref{5fig c}). The following possibilities emerge in this context.
\begin{enumerate}
\item If the entry $\mathcal{C}_{(j,p+1)}$ is the only nonzero entry in its column, then the corresponding node is herdable, regardless of its sign.

\item If $\mathcal{C}_{(j,p+1)}$ corresponds to the only node that needs to be herded in that column, while other nodes in the column are herded in different columns, it is herdable irrespective of its sign.

\item If the $j^{\text{th}}$ node must be herded along with other nodes in the same column, then for some choice of parameters in $(\mathcal{A},\mathcal{B})$, the sign of $\mathfrak{S}[\mathcal{C}_{(j,p+1)}]$ can be realized such that all these nodes are herded together with the same sign.
\end{enumerate}

\item[iii.] Every column of $\mathfrak{S}[\mathcal{C}(\mathcal{A,B})]$ is nonzero for at least one parametric realization if and only if:
\begin{enumerate}
    \item the digraph is acyclic and contains a walk of length $n$ from the leader to at least one node in the network, as illustrated in Fig.~\ref{Fig_5_thm}(\subref{5fig d}), or
    \item the digraph is cyclic, in which case each node in the cycle is reachable through multiple walks, as shown in Fig.~\ref{Fig_5_thm}(\subref{5fig e}).
\end{enumerate}

In both cases, every nonzero column has entries corresponding to sign-matched nodes with the same sign. Therefore, all the states can be driven to the positive orthant.

\item[iv.] In all other cases, where the nodes that are sign-matched at corresponding layers, as shown in Fig.~\ref{Fig_5_thm}(\subref{5fig f}), are herded in that respective column. Thereby, the states of all nodes are herdable to the positive orthant.
\end{enumerate}

In the matching process, nodes that do not repeat are given higher priority and matched first, whereas nodes belonging to cycles or those repeating due to multiple paths are considered subsequently.
\begin{lemma}\label{lemma C herd}
Consider a system \eqref{sys1:sys1eqn} characterized by structured matrices $\mathcal{\bar{A},\bar{B}}$, where the parameters vary such that $a_{i,j}~\in~\mathbb{R^+}$ and $b_i \in \mathbb{R}$. Define a constant $d\in \mathbb{R^+}$ such that 
\[
a_{i,j} 
\begin{cases}
> d \gg 0, & \text{if $(j,i)$ is an edge that sign matches $i$,} \\[6pt]
\in (0,\, d), & \text{otherwise.}
\end{cases}
\]
\normalfont
 As the result, realization of $\mathcal{\bar{A},\bar{B}}$ is such that the sign-matched nodes in the columns of $\mathcal{C}(\mathcal{\bar{A},\bar{B}})$ have the same sign. The entry $(+/-)$  is realized to the sign that corresponds to sign matching.
 If there is at least one realization of $\mathcal{\bar{A},\bar{B}}$ such that, for every $i \in \{1,2..n\}$, there exists at least one $x$ satisfying 
\begin{equation}\label{C herd satis}
\hspace{-1em}
|\mathcal{C}_{(i,x)}| >\sum_{j=1; j\neq x }^{n} | \mathcal{C}_{(i,j)}~|
~~ \left |\raisebox{-0.4em}{\scriptsize$\begin{array}{c}
x \in \{1, \dots n\}
\end{array}$}
\right.\vspace{-1mm}
\end{equation} 
and $C_{(i,x)}$ is part of the nodes that are herdable in the respective column, which are unisigned, then there exists a suitable $\bm{\delta}$ such that $\mathcal{C}(\mathcal{\bar{A},\bar{B}})\cdot\bm{\delta}=v>0$. Therefore \eqref{sys1:sys1eqn} is $\mathcal{SS}$ herdable.
\end{lemma}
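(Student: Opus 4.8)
The plan is to prove the lemma by exhibiting an explicit $\bm{\delta}\in\mathbb{R}^n$ for which $\mathcal{C}(\bar{\mathcal{A}},\bar{\mathcal{B}})\bm{\delta}\gg 0$, and then to invoke the herdability test. Once a strictly positive vector lies in the image of the controllability matrix, statement~(1) of Gordan's Theorem~\ref{gordan thm} holds, so by Proposition~\ref{test for H} the realization $(\bar{\mathcal{A}},\bar{\mathcal{B}})$ is herdable; hence system~\eqref{sys1:sys1eqn} is $\mathcal{SS}$ herdable. Thus the whole content reduces to solving the componentwise strict inequality $\mathcal{C}\bm{\delta}\gg 0$ under the row-dominance hypothesis~\eqref{C herd satis}.

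First I would fix the realization guaranteed by the hypothesis --- the one produced by the weight separation through $d$ --- so that every $(+/-)$ entry of $\mathfrak{S}[\mathcal{C}]$ is resolved to its sign-matching value and, in each column, the entries corresponding to the herdable nodes form a unisigned set. For every column index $j$ let $s_j\in\{+1,-1\}$ denote this common sign of the herdable entries of column $j$ (setting $s_j=+1$ arbitrarily for a column that is the dominating column of no row, which is harmless since such a column contributes only bounded, non-dominant terms). I would then take the candidate $\bm{\delta}$ to be the sign vector $\delta_j=s_j$, and carry out a diagonal-dominance estimate row by row. For each node $i$, the hypothesis furnishes a column $x=x(i)$ satisfying~\eqref{C herd satis} with $\mathcal{C}_{(i,x)}$ lying in the unisigned herdable set of column $x$, so $\operatorname{sign}(\mathcal{C}_{(i,x)})=s_x$ and therefore $\mathcal{C}_{(i,x)}\,\delta_x=|\mathcal{C}_{(i,x)}|$. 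Bounding the remaining terms by the triangle inequality yields
\[
\big(\mathcal{C}\bm{\delta}\big)_i
=\mathcal{C}_{(i,x)}\delta_x+\sum_{j\neq x}\mathcal{C}_{(i,j)}\delta_j
\;\geq\;|\mathcal{C}_{(i,x)}|-\sum_{j\neq x}|\mathcal{C}_{(i,j)}|\;>\;0,
\]
where the last strict inequality is precisely~\eqref{C herd satis}. Since $i$ was arbitrary, $\mathcal{C}\bm{\delta}=v\gg 0$, which completes the construction.

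The step I expect to be the crux is the well-definedness of the single sign vector $\bm{\delta}$: one column $x$ may simultaneously be the dominating column $x(i)$ of several distinct rows $i$, and for the estimate above to hold for all of them the entries $\mathcal{C}_{(i,x)}$ must share one sign. This is exactly where the hypothesis that the herdable nodes of a column are unisigned does the essential work; it is guaranteed by the sign-matching of $\mathcal{LUG}^{H}(\mathcal{G}_s)$ (Definitions~\ref{sign M def} and~\ref{lug h def}) together with the $d$-scaling that resolves the $(+/-)$ entries to their matching sign, and with signed-dilation conflicts disposed of by herding the unherded nodes outside $\Delta_i$ (Lemmas~\ref{Prop:SD set}--\ref{SD-herd test}). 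The only remaining care is to confirm that a dominance-admitting realization actually exists, which the weight-separation construction through $d$ --- assigning the large weights along the sign-matching edges --- is designed to provide; I would note this explicitly but would not grind through the magnitude bookkeeping, since the lemma assumes such a realization at the outset.
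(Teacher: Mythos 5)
Your proof is correct: choosing $\bm{\delta}$ as the sign vector of the unisigned herdable entries of each column and applying the triangle-inequality estimate row by row is exactly what the dominance hypothesis~\eqref{C herd satis} is designed to support, and you correctly identify that the unisignedness of the herdable entries within a column is what makes a single $\delta_x$ serve every row dominated by column $x$. The paper gives no explicit proof of this lemma, but the construction it sketches in Remark~\ref{delta remark} (sign of $\delta_p$ taken from the sign-matched walk, magnitudes chosen to enforce the dominance inequality~\eqref{delta satis}) is essentially the same argument, so your route matches the paper's intended one.
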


Building on the previous definitions, we introduce our main theorem regarding the $\mathcal{SS}$ herdability of a digraph. 

\begin{theorem}\label{theorem1}
 Let $\mathcal{G(A,B)}$ be a digraph described by (\ref{sys1:sys1eqn}) and $\mathcal{G}_s$ be the signed layered graph associated with it. Then $\mathcal{G(A,B)}$ is $\mathcal{SS}$ herdable if and only if  $\mathcal{G}_s$  has an $\mathcal{LUG}^{\mathcal{H}}(\mathcal{G}_s)$ that spans all the nodes of $\mathcal{G(A,B)}$ in at most $n$ layers.
\end{theorem}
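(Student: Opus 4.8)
The plan is to prove both directions through the herdability test of Proposition~\ref{test for H} (Gordan's alternative), converting the algebraic condition ``$\exists\,\bm{\delta}:\mathcal{C}(\bar{\mathcal{A}},\bar{\mathcal{B}})\bm{\delta}\gg0$'' into the combinatorial existence of a spanning $\mathcal{LUG}^{H}(\mathcal{G}_s)$, and back. Throughout I would use Lemma~\ref{lem1:lemma1} to identify the nonzero pattern of the $k$-th column of $\mathcal{C}$ with the node set $V_{L_k}$, and restrict attention to the first $n$ layers; this is legitimate because only $n$ columns of $\mathcal{C}$ are relevant by Remark~\ref{Remark1}, which is exactly why the statement reads ``at most $n$ layers.''

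For sufficiency, suppose $\mathcal{G}_s$ admits an $\mathcal{LUG}^{H}(\mathcal{G}_s)$ spanning all nodes within $n$ layers. First I would fix the realization prescribed in Lemma~\ref{lemma C herd}: assign weight $>d\gg0$ to each edge that sign-matches its head, and weight in $(0,d)$ otherwise. Under this realization, in each column $[\Psi]_{p-1}$ attached to a layer $L_p$ of the $\mathcal{LUG}^{H}$, the dominant (sign-matching) term fixes the sign of every spanned entry, so that all entries associated with sign-matched nodes share a common sign and every $(+/-)$ entry is resolved to the sign dictated by matching. I would then run through the four structural cases catalogued immediately before the theorem --- disjoint walk lengths reducing to a controllable path graph, truncation after some length $m<n$, resolution of the same-length multi-walk $(+/-)$ entries, and the acyclic/cyclic column-coverage scenarios --- to verify that in each case the diagonal-dominance inequality~\eqref{C herd satis} can be satisfied for every index $i$. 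Lemma~\ref{lemma C herd} then produces a $\bm{\delta}$ with $\mathcal{C}(\bar{\mathcal{A}},\bar{\mathcal{B}})\bm{\delta}>0$, and Proposition~\ref{test for H} gives herdability of this realization, hence $\mathcal{SS}$ herdability.

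For necessity I would argue by contraposition: assuming no $\mathcal{LUG}^{H}(\mathcal{G}_s)$ spans all nodes, I want to show that \emph{every} admissible realization fails the herdability test, i.e.\ always admits a nonnegative $\bm{y}\in\mathrm{Null}(\mathcal{C}(\mathcal{A},\mathcal{B})^{\top})$. The obstruction to spanning must take one of two forms: a layer or signed dilation whose unherded side cannot be sign-matched outside $\Delta_i$, or a node that cannot be sign-matched in any layer. In the dilation case I would invoke Lemma~\ref{Prop:SD set} and Lemma~\ref{SD-herd test}: the reappearance of $\Delta_i$ across layers yields only scalar multiples of the same subvector, so the conflicting inequalities for a positive image persist and the unherded node stays unherded for every realization, forcing the required $\bm{y}$. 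For a node never sign-matched, two equal-length walks reaching it carry opposite-sign weight products whose structural cancellation cannot be removed for any $a_{ij}\in\mathbb{R}^{+}$, pinning a sign-conflicted row that cannot be driven positive jointly with its column-mates.

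The main obstacle I anticipate is the necessity direction, specifically the bookkeeping that turns the single algebraic witness $\mathcal{C}\bm{\delta}\gg0$ into a globally consistent combinatorial selection: one must simultaneously choose, for every layer, incoming edges of a single sign, and for every node a sign-matched walk from the leader, while honoring the coupling imposed by signed-dilation sets that repeat across layers (Lemma~\ref{Prop:SD set}). Showing that failure to make these choices consistent is \emph{equivalent} to --- not merely sufficient for --- the non-existence of a spanning $\mathcal{LUG}^{H}$ is the delicate point, since a herdable realization may hide the matching inside cancellations among several equal-length walks; establishing that a herdable realization always \emph{exposes} an explicit sign-matched spanning subgraph is where the care is required.
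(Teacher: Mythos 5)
Your sufficiency argument uses the same ingredients as the paper (the realization of Lemma~\ref{lemma C herd}, the case catalogue preceding the theorem, and Proposition~\ref{test for H}), but you work on the primal side of Gordan's alternative --- explicitly constructing $\bm{\delta}$ via the dominance inequality~\eqref{C herd satis} --- whereas the paper's appendix proof works on the dual side, showing that under the spanning hypothesis no two rows of $\mathcal{C}(\mathcal{A},\mathcal{B})$ can be nonnegatively linearly dependent (no $\mathcal{C}_j=-\alpha\,\mathcal{C}_k$ with $\alpha>0$) for at least one realization, so that no nonnegative $\bm{y}$ lies in $\mathrm{Null}(\mathcal{C}^{\top})$. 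The two are equivalent by Theorem~\ref{gordan thm}, and your primal version is essentially what the paper's Remark~\ref{delta remark} carries out, so this direction is fine.

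The necessity direction is where your proposal has a genuine gap. You argue by contraposition and therefore need a complete classification of \emph{why} a spanning $\mathcal{LUG}^{H}(\mathcal{G}_s)$ can fail to exist, together with a proof that each failure mode kills herdability for \emph{every} admissible realization. Your sketch of that classification is both incomplete and partly incorrect: for a node that is never sign-matched you claim that two equal-length walks of opposite sign produce a ``structural cancellation that cannot be removed for any $a_{ij}\in\mathbb{R}^{+}$,'' but such an entry is precisely a $(+/-)$ entry of $\mathfrak{S}[\mathcal{C}]$ whose sign \emph{can} be realized freely by choosing the weights --- the obstruction, when there is one, comes from condition (ii)--(iii) of Definition~\ref{sign M def} (the impossibility of choosing that sign consistently with the other nodes that must be herded in the same column, or with the coupling forced by a repeating dilation set), not from an unremovable cancellation. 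You flag this consistency problem yourself as ``the delicate point'' but do not resolve it, and the argument as written would not close. The paper avoids the classification entirely by arguing necessity directly: starting from a herdable realization with $\mathcal{C}\bm{\delta}=v>0$, it shows (via the three cases for how $v_i>0$ can arise) that $\bm{\delta}$ can always be redesigned into the dominance form~\eqref{delta satis 2}, and then reads the dominant index $x$ for each node $i$ as membership of $i$ in layer $L_x$; applying sign matching to the resulting node--layer assignment exposes the spanning $\mathcal{LUG}^{H}(\mathcal{G}_s)$. If you want to keep your contrapositive structure you would have to prove the obstruction classification as a separate lemma; otherwise the direct extraction of the subgraph from the witness $\bm{\delta}$ is the cleaner route.
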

The proof is presented in Appendix~\ref{proof1}.

\begin{remark} \label{delta remark}
Consider the $\mathcal{SSC}$ matrix of this realization where $\mathfrak{S}[\mathcal{C}_{(i,j)}]$ is given by (\ref{eq:sign col}). Let $\boldsymbol{\delta} = [\delta_1, \delta_2, \dots, \delta_n]$ be a vector in $\mathbb{R}^n$, where $\delta_i \in \mathbb{R}$ for all $i \in \{1, 2, \dots, n\}$. If there exists a positive image in the range space of $\mathcal{C}(\mathcal{A}, \mathcal{B})$, then the system described by \eqref{sys1:sys1eqn} is herdable. For a parametric realization proposed in Lemma \ref{lemma C herd}, a suitable preimage $\boldsymbol{\delta}$, that can guide the system described by \eqref{sys1:sys1eqn} toward the positive orthant is constructed using the $\mathcal{LUG}^{\mathcal{H}}(\mathcal{G}_s)$ framework as follows:

\begin{enumerate}
    \item \textbf{Sign pattern of $\delta$:}
    The sign of each entry of the $\delta$ is given by the following expression:
    \vspace{-2mm}
    \begin{equation}\label{delta}
\operatorname{sign}~(\delta_{p}) \Big|_{p\in\{1,2.. n\}}=  \operatorname{sign}~(\mathcal{W}_{(1,i)})=~\mathfrak{S}[\mathcal{C}_{(1,i)}] \vspace{-1.5mm}
\end{equation} 
where $i$ denotes a node in the layer $L_p$ of $\mathcal{LUG}^{\mathcal{H}}(\mathcal{G}_s)$. 
\item  \textbf{Magnitude of each $\delta_i$:} The magnitude condition requires that for every node $i$ in the layer $L_x$ of $\mathcal{LUG}^{\mathcal{H}}(\mathcal{G}_s)$, there exists a $\delta_x$ and the magnitude of $\delta_x$ is chosen such that
\vspace{-2mm}
\begin{equation}\label{delta satis}
\hspace{-1em}
\delta_x~ \mathcal{C}_{(i,x)} >\sum_{j=1; j\neq x }^{n} |~\delta_j~ \mathcal{C}_{(i,j)}~|
~~ \left |\raisebox{-0.4em}{\scriptsize$\begin{array}{c}
x \in \{1, \dots n\}
\end{array}$}
\right.\vspace{-1mm}
\end{equation} 
\end{enumerate}
where $\delta_x$ and $\mathcal{C}_{(i,x)}$ have the same sign, which is evident from \eqref{delta}. If a node $i$ repeated multiple times in $\mathcal{LUG}^{\mathcal{H}}$ as a result of recurring sign matching, then condition \eqref{delta satis} can be satisfied with any $\delta_x, x \in\{1,2, \dots n\}$ for the corresponding node $i$. Hence, each $\delta_x$ is chosen such that \eqref{delta satis} is satisfied for each node $i$ of $\mathcal{G(A,B)}$.

A vector $\boldsymbol{\delta}$ with every entry satisfying \eqref{delta} and \eqref{delta satis} serves as a preimage under $\mathcal{C}(\mathcal{A}, \mathcal{B})$, generating a vector $\mathbf{v} \in \mathcal{I}m(\mathcal{C}(\mathcal{A}, \mathcal{B}))$ such that $\mathbf{v} > 0$. Hence if all the nodes of a digraph $\mathcal{G}(\mathcal{A}, \mathcal{B})$ is spanned by an $\mathcal{LUG}^{\mathcal{H}}(\mathcal{G}_s)$, then $(\mathcal{C}(\mathcal{A}, \mathcal{B})$  associated with $\mathcal{G}(\mathcal{A}, \mathcal{B})$ will always has a subspace spanning inside the positive orthant.

\end{remark}
\begin{remark}
If all follower nodes are spanned within $m$ layers of $\mathcal{LUG}^{\mathcal{H}}(\mathcal{G}_s)$, where $m < n$, then a straightforward construction of $\bm{\delta}$ can be given as follows:

\[
\delta_l =
\begin{cases}
\text{as in Remark}~\ref{delta remark}, & l \leq m, \\[6pt]
0, & m < l \leq n,
\end{cases}
\quad l \in \{1,2..n\}
\]

\end{remark}

\begin{eg}

\normalfont
Consider the digraph $\mathcal{G}_8$ given in Fig.~\ref{Dig with sds}(\subref{SD digraph}). The signed layered graph $\mathcal{G}_s$ associated with the digraph is shown in Fig.~\ref{Dig with sds}(\subref{lug H in sd dig}).  
    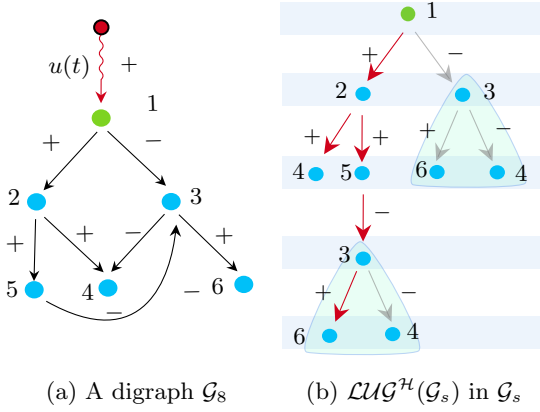
\begin{figure}[ht]
        \centering
       
      \begin{subfigure}{0.2\textwidth}
\tikzset{every picture/.style={scale=0.75pt}} 

\begin{tikzpicture}[x=0.75pt,y=0.75pt,yscale=-1,xscale=1]

\draw [color={rgb, 255:red, 0; green, 0; blue, 0 }  ,draw opacity=1 ]   (487.36,182.22) -- (486.38,220.8) ;
\draw [shift={(486.3,223.8)}, rotate = 271.46] [fill={rgb, 255:red, 0; green, 0; blue, 0 }  ,fill opacity=1 ][line width=0.08]  [draw opacity=0] (7.14,-3.43) -- (0,0) -- (7.14,3.43) -- (4.74,0) -- cycle    ;
\draw [color={rgb, 255:red, 0; green, 0; blue, 0 }  ,draw opacity=1 ]   (535.26,123.24) -- (571.98,157.84) ;
\draw [shift={(574.17,159.89)}, rotate = 223.29] [fill={rgb, 255:red, 0; green, 0; blue, 0 }  ,fill opacity=1 ][line width=0.08]  [draw opacity=0] (7.14,-3.43) -- (0,0) -- (7.14,3.43) -- (4.74,0) -- cycle    ;
\draw [color={rgb, 255:red, 0; green, 0; blue, 0 }  ,draw opacity=1 ]   (528.36,123.74) -- (494.69,160.68) ;
\draw [shift={(492.67,162.89)}, rotate = 312.35] [fill={rgb, 255:red, 0; green, 0; blue, 0 }  ,fill opacity=1 ][line width=0.08]  [draw opacity=0] (7.14,-3.43) -- (0,0) -- (7.14,3.43) -- (4.74,0) -- cycle    ;
\draw  [draw opacity=0][fill={rgb, 255:red, 0; green, 192; blue, 248 }  ,fill opacity=1 ] (487.83,164.83) .. controls (491.31,164.77) and (494.2,167.31) .. (494.28,170.52) .. controls (494.35,173.73) and (491.59,176.38) .. (488.11,176.44) .. controls (484.63,176.51) and (481.75,173.96) .. (481.67,170.75) .. controls (481.59,167.55) and (484.35,164.9) .. (487.83,164.83) -- cycle ;
\draw  [draw opacity=0][fill={rgb, 255:red, 0; green, 192; blue, 248 }  ,fill opacity=1 ] (535.4,222.6) .. controls (538.88,222.53) and (541.77,225.08) .. (541.84,228.29) .. controls (541.92,231.49) and (539.16,234.14) .. (535.68,234.21) .. controls (532.2,234.27) and (529.31,231.73) .. (529.24,228.52) .. controls (529.16,225.32) and (531.92,222.66) .. (535.4,222.6) -- cycle ;
\draw  [draw opacity=0][fill={rgb, 255:red, 0; green, 192; blue, 248 }  ,fill opacity=1 ] (577.46,164.67) .. controls (580.94,164.6) and (583.83,167.15) .. (583.91,170.35) .. controls (583.98,173.56) and (581.22,176.21) .. (577.74,176.28) .. controls (574.26,176.34) and (571.38,173.8) .. (571.3,170.59) .. controls (571.22,167.38) and (573.98,164.73) .. (577.46,164.67) -- cycle ;
\draw    (580.69,189.33) .. controls (577.36,221.8) and (559.44,274.15) .. (494.46,237.28) ;
\draw [shift={(581,185.88)}, rotate = 94.29] [fill={rgb, 255:red, 0; green, 0; blue, 0 }  ][line width=0.08]  [draw opacity=0] (8.04,-3.86) -- (0,0) -- (8.04,3.86) -- (5.34,0) -- cycle    ;
\draw  [draw opacity=0][fill={rgb, 255:red, 0; green, 192; blue, 248 }  ,fill opacity=1 ] (486.3,223.8) .. controls (489.78,223.74) and (492.66,226.28) .. (492.74,229.49) .. controls (492.82,232.7) and (490.06,235.35) .. (486.58,235.41) .. controls (483.1,235.48) and (480.21,232.93) .. (480.14,229.72) .. controls (480.06,226.52) and (482.82,223.87) .. (486.3,223.8) -- cycle ;
\draw  [draw opacity=0][fill={rgb, 255:red, 126; green, 211; blue, 33 }  ,fill opacity=1 ] (530.87,108.61) .. controls (534.35,108.55) and (537.23,111.1) .. (537.31,114.3) .. controls (537.39,117.51) and (534.63,120.16) .. (531.15,120.22) .. controls (527.66,120.29) and (524.78,117.74) .. (524.7,114.54) .. controls (524.63,111.33) and (527.39,108.68) .. (530.87,108.61) -- cycle ;
\draw [color={rgb, 255:red, 0; green, 0; blue, 0 }  ,draw opacity=1 ]   (582.76,179.24) -- (619.58,215.14) ;
\draw [shift={(621.73,217.23)}, rotate = 224.27] [fill={rgb, 255:red, 0; green, 0; blue, 0 }  ,fill opacity=1 ][line width=0.08]  [draw opacity=0] (7.14,-3.43) -- (0,0) -- (7.14,3.43) -- (4.74,0) -- cycle    ;
\draw [color={rgb, 255:red, 0; green, 0; blue, 0 }  ,draw opacity=1 ]   (573.17,178.89) -- (539.69,215.39) ;
\draw [shift={(537.66,217.6)}, rotate = 312.53] [fill={rgb, 255:red, 0; green, 0; blue, 0 }  ,fill opacity=1 ][line width=0.08]  [draw opacity=0] (7.14,-3.43) -- (0,0) -- (7.14,3.43) -- (4.74,0) -- cycle    ;
\draw [color={rgb, 255:red, 0; green, 0; blue, 0 }  ,draw opacity=1 ]   (493.17,180.89) -- (530.48,215.84) ;
\draw [shift={(532.67,217.89)}, rotate = 223.13] [fill={rgb, 255:red, 0; green, 0; blue, 0 }  ,fill opacity=1 ][line width=0.08]  [draw opacity=0] (7.14,-3.43) -- (0,0) -- (7.14,3.43) -- (4.74,0) -- cycle    ;
\draw  [draw opacity=0][fill={rgb, 255:red, 0; green, 192; blue, 248 }  ,fill opacity=1 ] (625.8,220.2) .. controls (629.28,220.13) and (632.17,222.68) .. (632.24,225.89) .. controls (632.32,229.09) and (629.56,231.74) .. (626.08,231.81) .. controls (622.6,231.87) and (619.71,229.33) .. (619.64,226.12) .. controls (619.56,222.92) and (622.32,220.26) .. (625.8,220.2) -- cycle ;
\draw [color={rgb, 255:red, 208; green, 2; blue, 27 }  ,draw opacity=1 ]   (531.11,53.58) .. controls (532.77,55.25) and (532.76,56.92) .. (531.09,58.58) .. controls (529.42,60.24) and (529.41,61.91) .. (531.07,63.58) .. controls (532.73,65.25) and (532.72,66.92) .. (531.05,68.58) .. controls (529.38,70.24) and (529.37,71.91) .. (531.03,73.58) .. controls (532.69,75.25) and (532.68,76.92) .. (531.01,78.58) .. controls (529.34,80.25) and (529.34,81.91) .. (531,83.58) .. controls (532.66,85.25) and (532.65,86.92) .. (530.98,88.58) -- (530.96,93.52) -- (530.93,101.52) ;
\draw [shift={(530.92,104.52)}, rotate = 270.21] [fill={rgb, 255:red, 208; green, 2; blue, 27 }  ,fill opacity=1 ][line width=0.08]  [draw opacity=0] (8.04,-3.86) -- (0,0) -- (8.04,3.86) -- (5.34,0) -- cycle    ;
\draw  [color={rgb, 255:red, 0; green, 0; blue, 0 }  ,draw opacity=1 ][fill={rgb, 255:red, 208; green, 2; blue, 27 }  ,fill opacity=1 ][line width=0.75]  (531,48.87) .. controls (533.64,48.82) and (535.83,50.88) .. (535.89,53.48) .. controls (535.95,56.08) and (533.85,58.24) .. (531.21,58.29) .. controls (528.57,58.34) and (526.38,56.27) .. (526.32,53.67) .. controls (526.26,51.07) and (528.36,48.92) .. (531,48.87) -- cycle ;

\draw (467.04,161.06) node [anchor=north west][inner sep=0.75pt]  [font=\small,color={rgb, 255:red, 0; green, 0; blue, 0 }  ,opacity=1 ,rotate=-358.27]  {$2$};
\draw (588.26,160.05) node [anchor=north west][inner sep=0.75pt]  [font=\small,color={rgb, 255:red, 0; green, 0; blue, 0 }  ,opacity=1 ,rotate=-358.27]  {$3$};
\draw (465.77,222.38) node [anchor=north west][inner sep=0.75pt]  [font=\small,color={rgb, 255:red, 0; green, 0; blue, 0 }  ,opacity=1 ,rotate=-358.27]  {$5$};
\draw (516.01,224.78) node [anchor=north west][inner sep=0.75pt]  [font=\small,color={rgb, 255:red, 0; green, 0; blue, 0 }  ,opacity=1 ,rotate=-358.27]  {$4$};
\draw (559.37,97.89) node [anchor=north west][inner sep=0.75pt]  [font=\small,color={rgb, 255:red, 0; green, 0; blue, 0 }  ,opacity=1 ,rotate=-358.27]  {$1$};
\draw (557.81,122.77) node [anchor=north west][inner sep=0.75pt]  [font=\footnotesize,rotate=-358.88]  {$-$};
\draw (583.15,224) node [anchor=north west][inner sep=0.75pt]  [font=\footnotesize,color={rgb, 255:red, 0; green, 0; blue, 0 }  ,opacity=1 ,rotate=-358.88]  {$-$};
\draw (489.58,124.02) node [anchor=north west][inner sep=0.75pt]  [font=\footnotesize,color={rgb, 255:red, 0; green, 0; blue, 0 }  ,opacity=1 ,rotate=-358.88]  {$+$};
\draw (531.66,237.73) node [anchor=north west][inner sep=0.75pt]  [font=\footnotesize,color={rgb, 255:red, 0; green, 0; blue, 0 }  ,opacity=1 ]  {$-$};
\draw (544.21,184.37) node [anchor=north west][inner sep=0.75pt]  [font=\footnotesize,rotate=-358.88]  {$-$};
\draw (465.58,192.02) node [anchor=north west][inner sep=0.75pt]  [font=\footnotesize,color={rgb, 255:red, 0; green, 0; blue, 0 }  ,opacity=1 ,rotate=-358.88]  {$+$};
\draw (511.98,184.82) node [anchor=north west][inner sep=0.75pt]  [font=\footnotesize,color={rgb, 255:red, 0; green, 0; blue, 0 }  ,opacity=1 ,rotate=-358.88]  {$+$};
\draw (602.57,220.78) node [anchor=north west][inner sep=0.75pt]  [font=\small,color={rgb, 255:red, 0; green, 0; blue, 0 }  ,opacity=1 ,rotate=-358.27]  {$6$};
\draw (604.78,186.42) node [anchor=north west][inner sep=0.75pt]  [font=\footnotesize,color={rgb, 255:red, 0; green, 0; blue, 0 }  ,opacity=1 ,rotate=-358.88]  {$+$};
\draw (494.23,70.2) node [anchor=north west][inner sep=0.75pt]  [font=\small]  {$u( t)$};
\draw (541.91,72.06) node [anchor=north west][inner sep=0.75pt]  [font=\scriptsize,color={rgb, 255:red, 0; green, 0; blue, 0 }  ,opacity=1 ,rotate=-358.88]  {$+$};

\end{tikzpicture}
    \caption{A digraph $\mathcal{G}_8$}
    \label{SD digraph}

   \end{subfigure}
  \vspace{0.5mm}
   \begin{subfigure}{0.2\textwidth}
       
\tikzset{every picture/.style={scale=0.75pt}} 
\tikzset{every picture/.style={scale=0.8pt}} 

\begin{tikzpicture}[x=0.75pt,y=0.75pt,yscale=-1,xscale=1]

\draw  [color={rgb, 255:red, 74; green, 144; blue, 226 }  ,draw opacity=0.38 ][fill={rgb, 255:red, 0; green, 247; blue, 141 }  ,fill opacity=0.09 ] (379.24,318.47) .. controls (386.72,318.7) and (422.63,384.01) .. (415.37,387.22) .. controls (408.11,390.42) and (354.45,389.69) .. (345.99,387.88) .. controls (337.54,386.08) and (371.76,318.23) .. (379.24,318.47) -- cycle ;
\draw  [color={rgb, 255:red, 74; green, 144; blue, 226 }  ,draw opacity=0.38 ][fill={rgb, 255:red, 0; green, 247; blue, 141 }  ,fill opacity=0.09 ] (312.91,423.7) .. controls (320.39,423.94) and (356.31,489.25) .. (349.04,492.46) .. controls (341.78,495.66) and (288.12,494.93) .. (279.67,493.12) .. controls (271.21,491.32) and (305.43,423.47) .. (312.91,423.7) -- cycle ;
\draw  [color={rgb, 255:red, 0; green, 0; blue, 0 }  ,draw opacity=0 ][fill={rgb, 255:red, 74; green, 144; blue, 226 }  ,fill opacity=0.1 ] (264.51,318.04) -- (426.93,318.04) -- (426.93,338.5) -- (264.51,338.5) -- cycle ;
\draw  [color={rgb, 255:red, 0; green, 0; blue, 0 }  ,draw opacity=0 ][fill={rgb, 255:red, 74; green, 144; blue, 226 }  ,fill opacity=0.1 ] (264.51,369.8) -- (426.93,369.8) -- (426.93,390.26) -- (264.51,390.26) -- cycle ;
\draw  [color={rgb, 255:red, 0; green, 0; blue, 0 }  ,draw opacity=0 ][fill={rgb, 255:red, 74; green, 144; blue, 226 }  ,fill opacity=0.1 ] (264.51,420.05) -- (426.93,420.05) -- (426.93,440.51) -- (264.51,440.51) -- cycle ;
\draw [color={rgb, 255:red, 208; green, 2; blue, 27 }  ,draw opacity=1 ]   (339.68,292.21) -- (319.11,320.25) ;
\draw [shift={(317.33,322.67)}, rotate = 306.27] [fill={rgb, 255:red, 208; green, 2; blue, 27 }  ,fill opacity=1 ][line width=0.08]  [draw opacity=0] (10.72,-5.15) -- (0,0) -- (10.72,5.15) -- (7.12,0) -- cycle    ;
\draw [color={rgb, 255:red, 128; green, 128; blue, 128 }  ,draw opacity=0.55 ]   (347.17,291.78) -- (368.57,321.24) ;
\draw [shift={(370.33,323.67)}, rotate = 234] [fill={rgb, 255:red, 128; green, 128; blue, 128 }  ,fill opacity=0.55 ][line width=0.08]  [draw opacity=0] (10.72,-5.15) -- (0,0) -- (10.72,5.15) -- (7.12,0) -- cycle    ;
\draw [color={rgb, 255:red, 208; green, 2; blue, 27 }  ,draw opacity=1 ][fill={rgb, 255:red, 183; green, 176; blue, 176 }  ,fill opacity=1 ]   (306.93,343.07) -- (291.97,366.15) ;
\draw [shift={(290.33,368.67)}, rotate = 302.96] [fill={rgb, 255:red, 208; green, 2; blue, 27 }  ,fill opacity=1 ][line width=0.08]  [draw opacity=0] (10.72,-5.15) -- (0,0) -- (10.72,5.15) -- (7.12,0) -- cycle    ;
\draw [color={rgb, 255:red, 208; green, 2; blue, 27 }  ,draw opacity=1 ]   (314.09,344.11) -- (313.98,370.78) ;
\draw [shift={(313.97,373.78)}, rotate = 270.23] [fill={rgb, 255:red, 208; green, 2; blue, 27 }  ,fill opacity=1 ][line width=0.08]  [draw opacity=0] (10.72,-5.15) -- (0,0) -- (10.72,5.15) -- (7.12,0) -- cycle    ;
\draw  [draw opacity=0][fill={rgb, 255:red, 0; green, 192; blue, 248 }  ,fill opacity=1 ] (309.87,330.72) .. controls (309.87,328.27) and (312.01,326.28) .. (314.64,326.28) .. controls (317.28,326.28) and (319.42,328.27) .. (319.42,330.72) .. controls (319.42,333.17) and (317.28,335.16) .. (314.64,335.16) .. controls (312.01,335.16) and (309.87,333.17) .. (309.87,330.72) -- cycle ;
\draw  [draw opacity=0][fill={rgb, 255:red, 0; green, 192; blue, 248 }  ,fill opacity=1 ] (309.29,380.55) .. controls (309.29,378.1) and (311.43,376.12) .. (314.06,376.12) .. controls (316.7,376.12) and (318.84,378.1) .. (318.84,380.55) .. controls (318.84,383) and (316.7,384.99) .. (314.06,384.99) .. controls (311.43,384.99) and (309.29,383) .. (309.29,380.55) -- cycle ;
\draw  [draw opacity=0][fill={rgb, 255:red, 0; green, 192; blue, 248 }  ,fill opacity=1 ] (356.03,379.85) .. controls (356.03,377.4) and (358.17,375.42) .. (360.8,375.42) .. controls (363.44,375.42) and (365.57,377.4) .. (365.57,379.85) .. controls (365.57,382.3) and (363.44,384.29) .. (360.8,384.29) .. controls (358.17,384.29) and (356.03,382.3) .. (356.03,379.85) -- cycle ;
\draw  [draw opacity=0][fill={rgb, 255:red, 0; green, 192; blue, 248 }  ,fill opacity=1 ] (393.64,380.05) .. controls (393.64,377.6) and (395.78,375.61) .. (398.42,375.61) .. controls (401.05,375.61) and (403.19,377.6) .. (403.19,380.05) .. controls (403.19,382.5) and (401.05,384.49) .. (398.42,384.49) .. controls (395.78,384.49) and (393.64,382.5) .. (393.64,380.05) -- cycle ;
\draw  [draw opacity=0][fill={rgb, 255:red, 126; green, 211; blue, 33 }  ,fill opacity=1 ] (342.6,275.81) .. controls (345.12,275.76) and (347.2,277.79) .. (347.26,280.36) .. controls (347.31,282.93) and (345.32,285.05) .. (342.8,285.1) .. controls (340.29,285.15) and (338.21,283.12) .. (338.15,280.55) .. controls (338.09,277.98) and (340.09,275.86) .. (342.6,275.81) -- cycle ;
\draw  [draw opacity=0][fill={rgb, 255:red, 0; green, 192; blue, 248 }  ,fill opacity=1 ] (280.49,381.15) .. controls (280.49,378.7) and (282.63,376.71) .. (285.26,376.71) .. controls (287.9,376.71) and (290.04,378.7) .. (290.04,381.15) .. controls (290.04,383.6) and (287.9,385.59) .. (285.26,385.59) .. controls (282.63,385.59) and (280.49,383.6) .. (280.49,381.15) -- cycle ;
\draw [color={rgb, 255:red, 208; green, 2; blue, 27 }  ,draw opacity=1 ]   (314.8,393.91) -- (314.69,420.58) ;
\draw [shift={(314.68,423.58)}, rotate = 270.23] [fill={rgb, 255:red, 208; green, 2; blue, 27 }  ,fill opacity=1 ][line width=0.08]  [draw opacity=0] (10.72,-5.15) -- (0,0) -- (10.72,5.15) -- (7.12,0) -- cycle    ;
\draw  [draw opacity=0][fill={rgb, 255:red, 0; green, 192; blue, 248 }  ,fill opacity=1 ] (309.98,434.29) .. controls (309.98,431.84) and (312.12,429.86) .. (314.76,429.86) .. controls (317.39,429.86) and (319.53,431.84) .. (319.53,434.29) .. controls (319.53,436.74) and (317.39,438.73) .. (314.76,438.73) .. controls (312.12,438.73) and (309.98,436.74) .. (309.98,434.29) -- cycle ;
\draw  [color={rgb, 255:red, 0; green, 0; blue, 0 }  ,draw opacity=0 ][fill={rgb, 255:red, 74; green, 144; blue, 226 }  ,fill opacity=0.1 ] (264.33,470.55) -- (426.76,470.55) -- (426.76,491.01) -- (264.33,491.01) -- cycle ;
\draw  [draw opacity=0][fill={rgb, 255:red, 0; green, 192; blue, 248 }  ,fill opacity=1 ] (289.03,482.85) .. controls (289.03,480.4) and (291.17,478.42) .. (293.8,478.42) .. controls (296.44,478.42) and (298.57,480.4) .. (298.57,482.85) .. controls (298.57,485.3) and (296.44,487.29) .. (293.8,487.29) .. controls (291.17,487.29) and (289.03,485.3) .. (289.03,482.85) -- cycle ;
\draw  [draw opacity=0][fill={rgb, 255:red, 0; green, 192; blue, 248 }  ,fill opacity=1 ] (328.35,481.78) .. controls (328.35,479.33) and (330.48,477.34) .. (333.12,477.34) .. controls (335.75,477.34) and (337.89,479.33) .. (337.89,481.78) .. controls (337.89,484.23) and (335.75,486.21) .. (333.12,486.21) .. controls (330.48,486.21) and (328.35,484.23) .. (328.35,481.78) -- cycle ;
\draw [color={rgb, 255:red, 208; green, 2; blue, 27 }  ,draw opacity=1 ]   (310.68,442.21) -- (298.48,471.24) ;
\draw [shift={(297.32,474)}, rotate = 292.8] [fill={rgb, 255:red, 208; green, 2; blue, 27 }  ,fill opacity=1 ][line width=0.08]  [draw opacity=0] (10.72,-5.15) -- (0,0) -- (10.72,5.15) -- (7.12,0) -- cycle    ;
\draw [color={rgb, 255:red, 128; green, 128; blue, 128 }  ,draw opacity=0.55 ]   (318.17,441.78) -- (331.21,471.37) ;
\draw [shift={(332.42,474.11)}, rotate = 246.21] [fill={rgb, 255:red, 128; green, 128; blue, 128 }  ,fill opacity=0.55 ][line width=0.08]  [draw opacity=0] (10.72,-5.15) -- (0,0) -- (10.72,5.15) -- (7.12,0) -- cycle    ;
\draw  [draw opacity=0][fill={rgb, 255:red, 0; green, 192; blue, 248 }  ,fill opacity=1 ] (371.98,331.29) .. controls (371.98,328.84) and (374.12,326.86) .. (376.76,326.86) .. controls (379.39,326.86) and (381.53,328.84) .. (381.53,331.29) .. controls (381.53,333.74) and (379.39,335.73) .. (376.76,335.73) .. controls (374.12,335.73) and (371.98,333.74) .. (371.98,331.29) -- cycle ;
\draw [color={rgb, 255:red, 128; green, 128; blue, 128 }  ,draw opacity=0.55 ]   (373.68,341.21) -- (361.48,370.24) ;
\draw [shift={(360.32,373)}, rotate = 292.8] [fill={rgb, 255:red, 128; green, 128; blue, 128 }  ,fill opacity=0.55 ][line width=0.08]  [draw opacity=0] (10.72,-5.15) -- (0,0) -- (10.72,5.15) -- (7.12,0) -- cycle    ;
\draw [color={rgb, 255:red, 128; green, 128; blue, 128 }  ,draw opacity=0.55 ]   (381.17,340.78) -- (394.21,370.37) ;
\draw [shift={(395.42,373.11)}, rotate = 246.21] [fill={rgb, 255:red, 128; green, 128; blue, 128 }  ,fill opacity=0.55 ][line width=0.08]  [draw opacity=0] (10.72,-5.15) -- (0,0) -- (10.72,5.15) -- (7.12,0) -- cycle    ;
\draw  [color={rgb, 255:red, 0; green, 0; blue, 0 }  ,draw opacity=0 ][fill={rgb, 255:red, 74; green, 144; blue, 226 }  ,fill opacity=0.1 ] (262.91,272.98) -- (425.33,272.98) -- (425.33,293.44) -- (262.91,293.44) -- cycle ;

\draw (295.55,321.87) node [anchor=north west][inner sep=0.75pt]  [font=\small,color={rgb, 255:red, 0; green, 0; blue, 0 }  ,opacity=1 ,rotate=-359.39]  {$2$};
\draw (268.96,373) node [anchor=north west][inner sep=0.75pt]  [font=\small,color={rgb, 255:red, 0; green, 0; blue, 0 }  ,opacity=1 ,rotate=-359.39]  {$4$};
\draw (298.59,373.54) node [anchor=north west][inner sep=0.75pt]  [font=\small,color={rgb, 255:red, 0; green, 0; blue, 0 }  ,opacity=1 ,rotate=-359.39]  {$5$};
\draw (352.12,271.76) node [anchor=north west][inner sep=0.75pt]  [font=\small,color={rgb, 255:red, 0; green, 0; blue, 0 }  ,opacity=1 ,rotate=-358.27]  {$1$};
\draw (340.21,473.52) node [anchor=north west][inner sep=0.75pt]  [font=\small,color={rgb, 255:red, 0; green, 0; blue, 0 }  ,opacity=1 ,rotate=-359.39]  {$4$};
\draw (345.63,371.6) node [anchor=north west][inner sep=0.75pt]  [font=\small,color={rgb, 255:red, 0; green, 0; blue, 0 }  ,opacity=1 ,rotate=-359.39]  {$6$};
\draw (407.92,374.29) node [anchor=north west][inner sep=0.75pt]  [font=\small,color={rgb, 255:red, 0; green, 0; blue, 0 }  ,opacity=1 ,rotate=-359.39]  {$4$};
\draw (297.58,425.25) node [anchor=north west][inner sep=0.75pt]  [font=\small,color={rgb, 255:red, 0; green, 0; blue, 0 }  ,opacity=1 ,rotate=-359.39]  {$3$};
\draw (311.84,299.48) node [anchor=north west][inner sep=0.75pt]  [font=\scriptsize,color={rgb, 255:red, 0; green, 0; blue, 0 }  ,opacity=1 ]  {$+$};
\draw (318.61,351.82) node [anchor=north west][inner sep=0.75pt]  [font=\scriptsize,color={rgb, 255:red, 0; green, 0; blue, 0 }  ,opacity=1 ]  {$+$};
\draw (319.51,399.05) node [anchor=north west][inner sep=0.75pt]  [font=\scriptsize,color={rgb, 255:red, 0; green, 0; blue, 0 }  ,opacity=1 ]  {$-$};
\draw (365.26,299.43) node [anchor=north west][inner sep=0.75pt]  [font=\scriptsize,color={rgb, 255:red, 0; green, 0; blue, 0 }  ,opacity=1 ]  {$-$};
\draw (269.63,476.6) node [anchor=north west][inner sep=0.75pt]  [font=\small,color={rgb, 255:red, 0; green, 0; blue, 0 }  ,opacity=1 ,rotate=-359.39]  {$6$};
\draw (282.84,449.48) node [anchor=north west][inner sep=0.75pt]  [font=\scriptsize,color={rgb, 255:red, 0; green, 0; blue, 0 }  ,opacity=1 ]  {$+$};
\draw (336.26,449.43) node [anchor=north west][inner sep=0.75pt]  [font=\scriptsize,color={rgb, 255:red, 0; green, 0; blue, 0 }  ,opacity=1 ]  {$-$};
\draw (347.84,347.48) node [anchor=north west][inner sep=0.75pt]  [font=\scriptsize,color={rgb, 255:red, 0; green, 0; blue, 0 }  ,opacity=1 ]  {$+$};
\draw (395.26,346.43) node [anchor=north west][inner sep=0.75pt]  [font=\scriptsize,color={rgb, 255:red, 0; green, 0; blue, 0 }  ,opacity=1 ]  {$-$};
\draw (275.84,351.48) node [anchor=north west][inner sep=0.75pt]  [font=\scriptsize,color={rgb, 255:red, 0; green, 0; blue, 0 }  ,opacity=1 ]  {$+$};
\draw (388.08,322.75) node [anchor=north west][inner sep=0.75pt]  [font=\small,color={rgb, 255:red, 0; green, 0; blue, 0 }  ,opacity=1 ,rotate=-359.39]  {$3$};

\end{tikzpicture}
    \caption{$\mathcal{LUG^H}(\mathcal{G}_s)$ in $\mathcal{G}_s$  }
    \label{lug H in sd dig}
   \end{subfigure}
  \caption{A digraph with a signed dilation that is $\mathcal{SS}$ herdable }
        \label{Dig with sds}
    \end{figure}
\end{eg}
The digraph contains a single signed dilation set, $\Delta_3 = \{4,6\}$, originating from node $3$. Although $\Delta_3$ reappears in $\mathcal{G}_s$ due to an alternative path to node $3$, either $\Delta_3^P = \{6\}$ or $\Delta_3^N = \{4\}$ can be sign-matched. Notably, no nodes of $\Delta_3$ in $L_2$ are matched, as they are either unmatched or excluded from $\mathcal{LUG}$. A spanning $\mathcal{LUG}^H(\mathcal{G}_s)$ is highlighted, where node $4$ is sign-matched outside $\Delta_3$, and node $6$ is matched within $\Delta_3$ at $L_5$. By Theorem \ref{theorem1}, the digraph given in Fig. \ref{Dig with sds} is $\mathcal{SS}$ herdable.

Note that $\mathcal{LUG^H}(\mathcal{G}_s)$ is not unique; in general, multiple $\mathcal{LUG^H}(\mathcal{G}_s)$ may exist in $\mathcal{G}_s$. If a digraph is not herdable under a single input, additional inputs are introduced so that the resulting digraph is completely herdable. The following section analyzes the $\mathcal{SS}$ herdability of digraphs with multiple inputs.

\section{$\mathcal{SS}$ herdability of digraph with multiple leader nodes} \label{Sec:6}
While \cite{liu2011controllability} shows that adding an external input to an unmatched node enhances controllability, a similar principle applies to herdabiltiy. We derive conditions under which $\mathcal{SS}$ herdability in networks with multiple inputs. If the same external input is applied to multiple nodes, those nodes are referred to as \emph{leader nodes}, whereas nodes receiving independent external inputs are referred to as \emph{driver nodes}.

The following proposition provides the graph-theoretic condition for $\mathcal{SS}$ herdability of a digraph having multiple leader nodes.
\begin{proposition}\label{mul-L}
Let $\mathcal{G(A,B)}$ be a digraph with multiple leader nodes receiving the same input $u(t)$ described by (\ref{sys1:sys1eqn}) and $\mathcal{G}_s$ be the signed layered graph $\mathcal{G}_s$ associated with $\mathcal{G(A,B)}$. Then, $\mathcal{G(A,B)}$  is $\mathcal{SS}$ herdable if and only if  $\mathcal{G}_s$  has an $\mathcal{LUG}^{\mathcal{H}}(\mathcal{G}_s)$ that spans all the nodes of $\mathcal{G(A,B)}$.
\end{proposition}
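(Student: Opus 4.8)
The plan is to reduce Proposition~\ref{mul-L} to the single-leader Theorem~\ref{theorem1} by an augmentation argument, so that no new machinery is required. First I would introduce a virtual super-leader $v_0$ that receives the scalar input $u(t)$ and is joined to each actual leader node $i$ by a directed edge $(v_0,i)$ carrying weight $b_i$. Since each $b_i \in \mathbb{R}\setminus\{0\}$ is a free structural parameter, the sign $\mathfrak{S}(b_i)$ of each such edge may be chosen at will. This produces an augmented digraph $\mathcal{G}'(\mathcal{A}',\mathcal{B}')$ of dimension $n+1$ with a single leader $v_0$, whose input matrix $\mathcal{B}' = [b_0,0,\dots,0]^\top$ has exactly the form required by Remark~\ref{Remark1}.

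The key structural observation is that the signed layered graph $\mathcal{G}'_s$ of the augmented system is precisely $\mathcal{G}_s$ shifted down by one layer: $v_0$ occupies $L_1$, the original leaders occupy $L_2$, and every node reached from some leader in $k-1$ steps now sits in layer $L_{k+1}$. Consequently Lemma~\ref{lem1:lemma1} applies verbatim to $\mathcal{G}'$, and a spanning $\mathcal{LUG}^{\mathcal{H}}(\mathcal{G}'_s)$ exists if and only if $\mathcal{G}_s$ admits an $\mathcal{LUG}^{\mathcal{H}}(\mathcal{G}_s)$ spanning all $n$ nodes of $\mathcal{G}$, the single added layer $\{v_0\}$ being trivially sign-matched as it is reached by the lone edge from $u(t)$.

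It then remains to verify the two equivalences. I would first argue that $(\mathcal{A},\mathcal{B})$ is $\mathcal{SS}$ herdable if and only if $(\mathcal{A}',\mathcal{B}')$ is: the state of $v_0$ has the standalone column $\mathcal{B}'$, so it is trivially herded for $b_0>0$ and its (decoupled, all-but-one-zero) row contributes no constraint; hence the positive-image condition of Proposition~\ref{test for H} for $\mathcal{C}(\mathcal{A}',\mathcal{B}')$ restricted to the original $n$ rows coincides, up to the positive scaling $b_0$ and a one-column shift, with that for $\mathcal{C}(\mathcal{A},\mathcal{B})$. Applying Theorem~\ref{theorem1} to $\mathcal{G}'$ then gives that $\mathcal{G}'$ is $\mathcal{SS}$ herdable iff $\mathcal{G}'_s$ has a spanning $\mathcal{LUG}^{\mathcal{H}}$, and chaining the two equivalences yields the claim. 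The sufficiency direction can alternatively be made explicit by reusing the realization of Lemma~\ref{lemma C herd} and the preimage $\boldsymbol{\delta}$ of Remark~\ref{delta remark}, now additionally choosing the signs $\mathfrak{S}(b_i)$ to agree with the sign-matching of the leaders in $L_1$.

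The hard part will be the faithful handling of the sign freedom at the multiple leaders. Unlike a fixed signed edge, each $(v_0,i)$ may be assigned either sign, so the first-layer edges of $\mathcal{G}'_s$ are not sign-constrained, and I must check that this extra flexibility neither manufactures spurious herdability nor weakens the necessity argument. Concretely, I would show that any nonnegative $\bm{y}\in\mathrm{Null}\!\big(\mathcal{C}(\mathcal{A},\mathcal{B})^{\top}\big)$ persisting for \emph{every} realization still forces, through the sign-matching conditions of Definition~\ref{sign M def}, a non-unisigned column tied to some unspanned node or to some signed-dilation set $\Delta_i$, exactly as in the single-leader necessity proof, so that freedom in the $b_i$ cannot rescue a digraph lacking a spanning $\mathcal{LUG}^{\mathcal{H}}$. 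Care is also needed when a leader node itself belongs to a signed-dilation set, where Lemma~\ref{SD-herd test} must be invoked to confirm that out-of-$\Delta_i$ sign matching remains the operative obstruction.
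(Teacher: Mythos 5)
Your proposal is correct in substance but takes a genuinely different route from the paper. The paper's own proof of Proposition~\ref{mul-L} is only a remark: it asserts the argument of Theorem~\ref{theorem1} goes through with the leader nodes placed together in $L_1$ and the signs of the input edges folded into the sign-matching conditions, and omits all details. You instead perform an explicit reduction: adjoin a virtual super-leader $v_0$ with edges $(v_0,i)$ of weight $b_i$ to each leader, observe that $\mathcal{C}(\mathcal{A}',\mathcal{B}')$ is block-diagonal with blocks $b_0$ and $b_0\,\mathcal{C}(\mathcal{A},\mathcal{B})$, so the positive-image test of Proposition~\ref{test for H} transfers verbatim, and note that $\mathcal{G}'_s$ is just $\mathcal{G}_s$ shifted down one layer with $\{v_0\}$ trivially sign-matched in $L_1$. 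This buys a proof that is actually checkable (the paper's is not written down) and makes transparent why the single-leader machinery applies without re-deriving it; the cost is the bookkeeping of one extra node and layer, including the check that the $(n+1)$-layer budget of Theorem~\ref{theorem1} for the augmented graph corresponds exactly to the $n$ layers available to the original, which your construction does satisfy.

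The one point that needs tightening is your treatment of the signs of the edges $(v_0,i)$. You declare each $\mathfrak{S}(b_i)$ freely choosable; but then the augmented object is not a single signed digraph, Theorem~\ref{theorem1} applies only per sign assignment, and your conclusion becomes an existential statement over all sign assignments to the input edges rather than a statement about the single $\mathcal{G}_s$ named in the proposition. Moreover, if those signs really were free, sign matching of the leaders in the first layer would be vacuous (one could always align them), whereas the paper's accompanying example declares a graph non-herdable precisely because of a layer dilation at $L_0$, i.e.\ it treats the input-edge signs as fixed structural data. You should therefore fix $\mathfrak{S}(b_i)$ as part of the structure and let only the magnitudes vary; with that convention your reduction to Theorem~\ref{theorem1} is exact, and the re-verification of necessity against sign flips sketched in your final paragraph becomes unnecessary.
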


\begin{proof}
\normalfont
The proof is analogous to that of Theorem \ref{theorem1} with minor modifications. Hence, we omit the details. 
Note that the difference in the setting of Proposition~\ref{mul-L} is that multiple leader nodes are present in $L_1$, and all the leaders can be spanned together in layer $L_1$. Thus, the sign of edges coming from the input signal must be considered while constructing the $\mathcal{LUG^{H}}(\mathcal{G}_s)$ for $\mathcal{SS}$ herdability of a digraph with multiple leaders. 
\hfill$\qed$
\end{proof}
\begin{eg}
\normalfont
Let us consider a $\mathcal{G}_9\mathcal{(A,B)}$ given in Fig.\ref{multi L}(\subref{mul dia}) with two leader nodes receiving external input u(t). Here, nodes $1$ and $2$ receive the input $u(t)$ and are considered as leaders.

   \begin{figure}[ht]
        \centering
       
      \begin{subfigure}{0.2\textwidth}
\tikzset{every picture/.style={scale=0.75pt}} 

\begin{tikzpicture}[x=0.75pt,y=0.75pt,yscale=-1,xscale=1]

\draw [color={rgb, 255:red, 0; green, 0; blue, 0 }  ,draw opacity=1 ]   (54.41,116.17) -- (54.74,134.72) ;
\draw [shift={(54.79,137.72)}, rotate = 268.98] [fill={rgb, 255:red, 0; green, 0; blue, 0 }  ,fill opacity=1 ][line width=0.08]  [draw opacity=0] (7.14,-3.43) -- (0,0) -- (7.14,3.43) -- (4.74,0) -- cycle    ;
\draw  [draw opacity=0][fill={rgb, 255:red, 0; green, 192; blue, 248 }  ,fill opacity=1 ] (54.79,140.41) .. controls (57.73,140.35) and (60.16,142.71) .. (60.23,145.68) .. controls (60.29,148.65) and (57.96,151.11) .. (55.03,151.17) .. controls (52.09,151.23) and (49.66,148.87) .. (49.59,145.9) .. controls (49.53,142.93) and (51.86,140.47) .. (54.79,140.41) -- cycle ;
\draw [color={rgb, 255:red, 0; green, 0; blue, 0 }  ,draw opacity=1 ]   (50,154) -- (37.02,175.45) ;
\draw [shift={(35.47,178.02)}, rotate = 301.18] [fill={rgb, 255:red, 0; green, 0; blue, 0 }  ,fill opacity=1 ][line width=0.08]  [draw opacity=0] (7.14,-3.43) -- (0,0) -- (7.14,3.43) -- (4.74,0) -- cycle    ;
\draw  [draw opacity=0][fill={rgb, 255:red, 0; green, 192; blue, 248 }  ,fill opacity=1 ] (32.13,180.02) .. controls (35.07,179.96) and (37.5,182.32) .. (37.57,185.29) .. controls (37.63,188.26) and (35.3,190.72) .. (32.37,190.78) .. controls (29.43,190.84) and (27,188.48) .. (26.93,185.51) .. controls (26.87,182.54) and (29.2,180.08) .. (32.13,180.02) -- cycle ;
\draw  [draw opacity=0][fill={rgb, 255:red, 126; green, 211; blue, 33 }  ,fill opacity=1 ] (54.79,100.55) .. controls (57.73,100.49) and (60.16,102.85) .. (60.23,105.82) .. controls (60.29,108.79) and (57.96,111.25) .. (55.03,111.3) .. controls (52.09,111.36) and (49.66,109.01) .. (49.59,106.03) .. controls (49.53,103.06) and (51.86,100.61) .. (54.79,100.55) -- cycle ;
\draw  [draw opacity=0][fill={rgb, 255:red, 126; green, 211; blue, 33 }  ,fill opacity=1 ] (147.12,97.1) .. controls (150.05,97.04) and (152.48,99.4) .. (152.55,102.37) .. controls (152.61,105.34) and (150.29,107.8) .. (147.35,107.86) .. controls (144.41,107.92) and (141.98,105.56) .. (141.92,102.59) .. controls (141.85,99.62) and (144.18,97.16) .. (147.12,97.1) -- cycle ;
\draw    (155.89,164.98) .. controls (172.57,143.1) and (165.37,118.61) .. (157,111.68) ;
\draw [shift={(153.98,167.35)}, rotate = 310.41] [fill={rgb, 255:red, 0; green, 0; blue, 0 }  ][line width=0.08]  [draw opacity=0] (10.72,-5.15) -- (0,0) -- (10.72,5.15) -- (7.12,0) -- cycle    ;
\draw  [draw opacity=0][fill={rgb, 255:red, 0; green, 192; blue, 248 }  ,fill opacity=1 ] (142.12,176.22) .. controls (142.12,173.18) and (144.64,170.72) .. (147.74,170.72) .. controls (150.84,170.72) and (153.36,173.18) .. (153.36,176.22) .. controls (153.36,179.25) and (150.84,181.71) .. (147.74,181.71) .. controls (144.64,181.71) and (142.12,179.25) .. (142.12,176.22) -- cycle ;
\draw [color={rgb, 255:red, 208; green, 2; blue, 27 }  ,draw opacity=1 ]   (102.82,53.25) .. controls (102.72,55.6) and (101.49,56.73) .. (99.14,56.63) .. controls (96.79,56.53) and (95.56,57.66) .. (95.45,60.01) .. controls (95.35,62.36) and (94.12,63.49) .. (91.77,63.39) .. controls (89.42,63.29) and (88.19,64.42) .. (88.09,66.77) .. controls (87.99,69.12) and (86.76,70.25) .. (84.41,70.15) .. controls (82.05,70.05) and (80.82,71.18) .. (80.72,73.54) .. controls (80.62,75.89) and (79.39,77.02) .. (77.04,76.92) .. controls (74.69,76.82) and (73.46,77.95) .. (73.36,80.3) .. controls (73.25,82.65) and (72.02,83.78) .. (69.67,83.68) .. controls (67.32,83.58) and (66.09,84.71) .. (65.99,87.06) -- (65.82,87.23) -- (59.92,92.64) ;
\draw [shift={(57.71,94.67)}, rotate = 317.44] [fill={rgb, 255:red, 208; green, 2; blue, 27 }  ,fill opacity=1 ][line width=0.08]  [draw opacity=0] (8.04,-3.86) -- (0,0) -- (8.04,3.86) -- (5.34,0) -- cycle    ;
\draw [color={rgb, 255:red, 208; green, 2; blue, 27 }  ,draw opacity=1 ]   (102.82,53.25) .. controls (105.18,53.2) and (106.38,54.36) .. (106.42,56.72) .. controls (106.47,59.07) and (107.67,60.23) .. (110.02,60.19) .. controls (112.37,60.15) and (113.57,61.31) .. (113.62,63.66) .. controls (113.67,66.01) and (114.87,67.17) .. (117.22,67.13) .. controls (119.57,67.09) and (120.77,68.25) .. (120.82,70.6) .. controls (120.87,72.95) and (122.07,74.11) .. (124.42,74.07) .. controls (126.77,74.03) and (127.97,75.19) .. (128.02,77.54) .. controls (128.07,79.89) and (129.27,81.05) .. (131.62,81.01) .. controls (133.97,80.97) and (135.17,82.13) .. (135.22,84.48) -- (135.79,85.03) -- (141.55,90.58) ;
\draw [shift={(143.71,92.67)}, rotate = 223.95] [fill={rgb, 255:red, 208; green, 2; blue, 27 }  ,fill opacity=1 ][line width=0.08]  [draw opacity=0] (8.04,-3.86) -- (0,0) -- (8.04,3.86) -- (5.34,0) -- cycle    ;
\draw  [color={rgb, 255:red, 0; green, 0; blue, 0 }  ,draw opacity=1 ][fill={rgb, 255:red, 208; green, 2; blue, 27 }  ,fill opacity=1 ][line width=0.75]  (102.71,48.54) .. controls (105.36,48.48) and (107.55,50.55) .. (107.6,53.15) .. controls (107.66,55.75) and (105.57,57.9) .. (102.92,57.96) .. controls (100.28,58.01) and (98.09,55.94) .. (98.03,53.34) .. controls (97.98,50.74) and (100.07,48.59) .. (102.71,48.54) -- cycle ;
\draw    (138.9,168.25) .. controls (126.41,153.67) and (124.93,126.81) .. (137.09,111.13) ;
\draw [shift={(138.9,108.99)}, rotate = 132.69] [fill={rgb, 255:red, 0; green, 0; blue, 0 }  ][line width=0.08]  [draw opacity=0] (10.72,-5.15) -- (0,0) -- (10.72,5.15) -- (7.12,0) -- cycle    ;
\draw    (130.87,175.48) .. controls (106.59,174.92) and (70.25,159.92) .. (61.33,153.56) ;
\draw [shift={(134.22,175.46)}, rotate = 177.87] [fill={rgb, 255:red, 0; green, 0; blue, 0 }  ][line width=0.08]  [draw opacity=0] (10.72,-5.15) -- (0,0) -- (10.72,5.15) -- (7.12,0) -- cycle    ;

\draw (159.69,93.14) node [anchor=north west][inner sep=0.75pt]  [font=\small,color={rgb, 255:red, 0; green, 0; blue, 0 }  ,opacity=1 ,rotate=-358.27]  {$2$};
\draw (66.83,138) node [anchor=north west][inner sep=0.75pt]  [font=\small,color={rgb, 255:red, 0; green, 0; blue, 0 }  ,opacity=1 ,rotate=-358.27]  {$3$};
\draw (157.77,174.73) node [anchor=north west][inner sep=0.75pt]  [font=\small,color={rgb, 255:red, 0; green, 0; blue, 0 }  ,opacity=1 ,rotate=-358.27]  {$4$};
\draw (32.57,97.52) node [anchor=north west][inner sep=0.75pt]  [font=\small,color={rgb, 255:red, 0; green, 0; blue, 0 }  ,opacity=1 ,rotate=-358.27]  {$1$};
\draw (24.14,154.6) node [anchor=north west][inner sep=0.75pt]  [font=\small,color={rgb, 255:red, 0; green, 0; blue, 0 }  ,opacity=1 ,rotate=-358.88]  {$+$};
\draw (170.01,138.89) node [anchor=north west][inner sep=0.75pt]  [font=\small,color={rgb, 255:red, 0; green, 0; blue, 0 }  ,opacity=1 ]  {$-$};
\draw (26.68,124.89) node [anchor=north west][inner sep=0.75pt]  [font=\small,color={rgb, 255:red, 0; green, 0; blue, 0 }  ,opacity=1 ]  {$-$};
\draw (12.83,181.98) node [anchor=north west][inner sep=0.75pt]  [font=\small,color={rgb, 255:red, 0; green, 0; blue, 0 }  ,opacity=1 ,rotate=-358.27]  {$5$};
\draw (90.95,21.87) node [anchor=north west][inner sep=0.75pt]  [font=\small]  {$u( t)$};
\draw (62.62,58.73) node [anchor=north west][inner sep=0.75pt]  [font=\small,color={rgb, 255:red, 0; green, 0; blue, 0 }  ,opacity=1 ,rotate=-358.88]  {$+$};
\draw (130.01,60.55) node [anchor=north west][inner sep=0.75pt]  [font=\small,color={rgb, 255:red, 0; green, 0; blue, 0 }  ,opacity=1 ]  {$-$};
\draw (110.01,130.22) node [anchor=north west][inner sep=0.75pt]  [font=\small,color={rgb, 255:red, 0; green, 0; blue, 0 }  ,opacity=1 ]  {$-$};
\draw (81.35,170.89) node [anchor=north west][inner sep=0.75pt]  [font=\small,color={rgb, 255:red, 0; green, 0; blue, 0 }  ,opacity=1 ]  {$-$};

\end{tikzpicture}
\caption{A digraph $\mathcal{G}_9 (\mathcal{A,B})$}
\label{mul dia}
   \end{subfigure}
  \vspace{0.5mm}
   \begin{subfigure}{0.2\textwidth}
       
\tikzset{every picture/.style={scale=0.75pt}} 

\begin{tikzpicture}[x=0.75pt,y=0.75pt,yscale=-1,xscale=1]

\draw  [color={rgb, 255:red, 0; green, 0; blue, 0 }  ,draw opacity=0 ][fill={rgb, 255:red, 74; green, 144; blue, 226 }  ,fill opacity=0.1 ] (287.13,252.69) -- (464.21,252.69) -- (464.21,268.67) -- (287.13,268.67) -- cycle ;
\draw  [color={rgb, 255:red, 0; green, 0; blue, 0 }  ,draw opacity=0 ][fill={rgb, 255:red, 74; green, 144; blue, 226 }  ,fill opacity=0.1 ] (287.13,297.42) -- (464.21,297.42) -- (464.21,313.4) -- (287.13,313.4) -- cycle ;
\draw  [color={rgb, 255:red, 0; green, 0; blue, 0 }  ,draw opacity=0 ][fill={rgb, 255:red, 74; green, 144; blue, 226 }  ,fill opacity=0.1 ] (288.49,167.25) -- (465.57,167.25) -- (465.57,183.23) -- (288.49,183.23) -- cycle ;
\draw  [color={rgb, 255:red, 0; green, 0; blue, 0 }  ,draw opacity=0 ][fill={rgb, 255:red, 74; green, 144; blue, 226 }  ,fill opacity=0.1 ] (287.76,209.45) -- (464.84,209.45) -- (464.84,225.42) -- (287.76,225.42) -- cycle ;
\draw  [color={rgb, 255:red, 0; green, 0; blue, 0 }  ,draw opacity=0 ][fill={rgb, 255:red, 74; green, 144; blue, 226 }  ,fill opacity=0.1 ] (287.81,336.29) -- (464.89,336.29) -- (464.89,352.26) -- (287.81,352.26) -- cycle ;
\draw [color={rgb, 255:red, 208; green, 2; blue, 27 }  ,draw opacity=1 ]   (334.95,224.59) -- (317.7,249.18) ;
\draw [shift={(315.98,251.63)}, rotate = 305.05] [fill={rgb, 255:red, 208; green, 2; blue, 27 }  ,fill opacity=1 ][line width=0.08]  [draw opacity=0] (10.72,-5.15) -- (0,0) -- (10.72,5.15) -- (7.12,0) -- cycle    ;
\draw  [draw opacity=0][fill={rgb, 255:red, 126; green, 211; blue, 33 }  ,fill opacity=1 ][line width=0.75]  (333.57,174.71) .. controls (333.57,171.51) and (336.22,168.92) .. (339.49,168.92) .. controls (342.75,168.92) and (345.4,171.51) .. (345.4,174.71) .. controls (345.4,177.91) and (342.75,180.5) .. (339.49,180.5) .. controls (336.22,180.5) and (333.57,177.91) .. (333.57,174.71) -- cycle ;
\draw [color={rgb, 255:red, 176; green, 169; blue, 169 }  ,draw opacity=1 ]   (344.91,224.59) -- (359.98,249.84) ;
\draw [shift={(361.51,252.41)}, rotate = 239.18] [fill={rgb, 255:red, 176; green, 169; blue, 169 }  ,fill opacity=1 ][line width=0.08]  [draw opacity=0] (10.72,-5.15) -- (0,0) -- (10.72,5.15) -- (7.12,0) -- cycle    ;
\draw [color={rgb, 255:red, 208; green, 2; blue, 27 }  ,draw opacity=1 ][line width=0.75]    (339.22,183.24) -- (339.22,204.94) ;
\draw [shift={(339.22,207.94)}, rotate = 270] [fill={rgb, 255:red, 208; green, 2; blue, 27 }  ,fill opacity=1 ][line width=0.08]  [draw opacity=0] (10.72,-5.15) -- (0,0) -- (10.72,5.15) -- (7.12,0) -- cycle    ;
\draw  [color={rgb, 255:red, 74; green, 144; blue, 226 }  ,draw opacity=1 ][fill={rgb, 255:red, 0; green, 192; blue, 248 }  ,fill opacity=1 ] (334.73,216.5) .. controls (334.73,214.03) and (336.74,212.02) .. (339.21,212.02) .. controls (341.69,212.02) and (343.69,214.03) .. (343.69,216.5) .. controls (343.69,218.98) and (341.69,220.98) .. (339.21,220.98) .. controls (336.74,220.98) and (334.73,218.98) .. (334.73,216.5) -- cycle ;
\draw  [color={rgb, 255:red, 74; green, 144; blue, 226 }  ,draw opacity=1 ][fill={rgb, 255:red, 0; green, 192; blue, 248 }  ,fill opacity=1 ] (360.73,260.5) .. controls (360.73,258.03) and (362.74,256.02) .. (365.21,256.02) .. controls (367.69,256.02) and (369.69,258.03) .. (369.69,260.5) .. controls (369.69,262.98) and (367.69,264.98) .. (365.21,264.98) .. controls (362.74,264.98) and (360.73,262.98) .. (360.73,260.5) -- cycle ;
\draw  [color={rgb, 255:red, 74; green, 144; blue, 226 }  ,draw opacity=1 ][fill={rgb, 255:red, 0; green, 192; blue, 248 }  ,fill opacity=1 ] (306.73,258.5) .. controls (306.73,256.03) and (308.74,254.02) .. (311.21,254.02) .. controls (313.69,254.02) and (315.69,256.03) .. (315.69,258.5) .. controls (315.69,260.98) and (313.69,262.98) .. (311.21,262.98) .. controls (308.74,262.98) and (306.73,260.98) .. (306.73,258.5) -- cycle ;
\draw [color={rgb, 255:red, 176; green, 169; blue, 169 }  ,draw opacity=1 ][line width=0.75]    (365.05,269.8) -- (365.05,291.5) ;
\draw [shift={(365.05,294.5)}, rotate = 270] [fill={rgb, 255:red, 176; green, 169; blue, 169 }  ,fill opacity=1 ][line width=0.08]  [draw opacity=0] (10.72,-5.15) -- (0,0) -- (10.72,5.15) -- (7.12,0) -- cycle    ;
\draw  [draw opacity=0][fill={rgb, 255:red, 126; green, 211; blue, 33 }  ,fill opacity=1 ][line width=0.75]  (359.17,303.11) .. controls (359.17,299.91) and (361.82,297.32) .. (365.09,297.32) .. controls (368.35,297.32) and (371,299.91) .. (371,303.11) .. controls (371,306.31) and (368.35,308.9) .. (365.09,308.9) .. controls (361.82,308.9) and (359.17,306.31) .. (359.17,303.11) -- cycle ;
\draw [color={rgb, 255:red, 176; green, 169; blue, 169 }  ,draw opacity=1 ]   (365,312.38) -- (365,334.08) ;
\draw [shift={(365,337.08)}, rotate = 270] [fill={rgb, 255:red, 176; green, 169; blue, 169 }  ,fill opacity=1 ][line width=0.08]  [draw opacity=0] (10.72,-5.15) -- (0,0) -- (10.72,5.15) -- (7.12,0) -- cycle    ;
\draw  [color={rgb, 255:red, 74; green, 144; blue, 226 }  ,draw opacity=1 ][fill={rgb, 255:red, 0; green, 192; blue, 248 }  ,fill opacity=1 ] (360.07,345.97) .. controls (360.07,343.49) and (362.07,341.49) .. (364.55,341.49) .. controls (367.02,341.49) and (369.03,343.49) .. (369.03,345.97) .. controls (369.03,348.44) and (367.02,350.45) .. (364.55,350.45) .. controls (362.07,350.45) and (360.07,348.44) .. (360.07,345.97) -- cycle ;
\draw  [draw opacity=0][fill={rgb, 255:red, 126; green, 211; blue, 33 }  ,fill opacity=1 ][line width=0.75]  (413.57,173.57) .. controls (413.57,170.37) and (416.22,167.78) .. (419.49,167.78) .. controls (422.75,167.78) and (425.4,170.37) .. (425.4,173.57) .. controls (425.4,176.77) and (422.75,179.36) .. (419.49,179.36) .. controls (416.22,179.36) and (413.57,176.77) .. (413.57,173.57) -- cycle ;
\draw  [color={rgb, 255:red, 74; green, 144; blue, 226 }  ,draw opacity=1 ][fill={rgb, 255:red, 0; green, 192; blue, 248 }  ,fill opacity=1 ] (417.4,259.36) .. controls (417.4,256.89) and (419.4,254.88) .. (421.88,254.88) .. controls (424.35,254.88) and (426.36,256.89) .. (426.36,259.36) .. controls (426.36,261.84) and (424.35,263.84) .. (421.88,263.84) .. controls (419.4,263.84) and (417.4,261.84) .. (417.4,259.36) -- cycle ;
\draw [color={rgb, 255:red, 208; green, 2; blue, 27 }  ,draw opacity=1 ][line width=0.75]    (422.39,270.66) -- (422.39,292.36) ;
\draw [shift={(422.39,295.36)}, rotate = 270] [fill={rgb, 255:red, 208; green, 2; blue, 27 }  ,fill opacity=1 ][line width=0.08]  [draw opacity=0] (10.72,-5.15) -- (0,0) -- (10.72,5.15) -- (7.12,0) -- cycle    ;
\draw  [draw opacity=0][fill={rgb, 255:red, 126; green, 211; blue, 33 }  ,fill opacity=1 ][line width=0.75]  (416.5,305.3) .. controls (416.5,302.11) and (419.15,299.52) .. (422.42,299.52) .. controls (425.69,299.52) and (428.34,302.11) .. (428.34,305.3) .. controls (428.34,308.5) and (425.69,311.09) .. (422.42,311.09) .. controls (419.15,311.09) and (416.5,308.5) .. (416.5,305.3) -- cycle ;
\draw [color={rgb, 255:red, 208; green, 2; blue, 27 }  ,draw opacity=1 ]   (339.39,133.13) .. controls (341.05,134.8) and (341.04,136.47) .. (339.37,138.13) .. controls (337.7,139.79) and (337.69,141.46) .. (339.34,143.13) .. controls (341,144.8) and (340.99,146.47) .. (339.32,148.13) -- (339.3,152.6) -- (339.27,160.6) ;
\draw [shift={(339.25,163.6)}, rotate = 270.26] [fill={rgb, 255:red, 208; green, 2; blue, 27 }  ,fill opacity=1 ][line width=0.08]  [draw opacity=0] (10.72,-5.15) -- (0,0) -- (10.72,5.15) -- (7.12,0) -- cycle    ;
\draw [color={rgb, 255:red, 176; green, 169; blue, 169 }  ,draw opacity=1 ]   (419.39,133.13) .. controls (421.05,134.8) and (421.04,136.47) .. (419.37,138.13) .. controls (417.7,139.79) and (417.69,141.46) .. (419.34,143.13) -- (419.33,146.67) -- (419.33,146.67) .. controls (420.99,148.34) and (420.98,150.01) .. (419.31,151.67) -- (419.3,152.6) -- (419.27,160.6) ;
\draw [shift={(419.25,163.6)}, rotate = 270.26] [fill={rgb, 255:red, 176; green, 169; blue, 169 }  ,fill opacity=1 ][line width=0.08]  [draw opacity=0] (10.72,-5.15) -- (0,0) -- (10.72,5.15) -- (7.12,0) -- cycle    ;
\draw  [color={rgb, 255:red, 74; green, 144; blue, 226 }  ,draw opacity=1 ][fill={rgb, 255:red, 0; green, 192; blue, 248 }  ,fill opacity=1 ] (417.4,344.03) .. controls (417.4,341.55) and (419.4,339.55) .. (421.88,339.55) .. controls (424.35,339.55) and (426.36,341.55) .. (426.36,344.03) .. controls (426.36,346.5) and (424.35,348.51) .. (421.88,348.51) .. controls (419.4,348.51) and (417.4,346.5) .. (417.4,344.03) -- cycle ;
\draw  [color={rgb, 255:red, 74; green, 144; blue, 226 }  ,draw opacity=1 ][fill={rgb, 255:red, 0; green, 192; blue, 248 }  ,fill opacity=1 ] (416.73,215.5) .. controls (416.73,213.03) and (418.74,211.02) .. (421.21,211.02) .. controls (423.69,211.02) and (425.69,213.03) .. (425.69,215.5) .. controls (425.69,217.98) and (423.69,219.98) .. (421.21,219.98) .. controls (418.74,219.98) and (416.73,217.98) .. (416.73,215.5) -- cycle ;
\draw [color={rgb, 255:red, 176; green, 169; blue, 169 }  ,draw opacity=1 ][line width=0.75]    (420.05,181.8) -- (420.05,203.5) ;
\draw [shift={(420.05,206.5)}, rotate = 270] [fill={rgb, 255:red, 176; green, 169; blue, 169 }  ,fill opacity=1 ][line width=0.08]  [draw opacity=0] (10.72,-5.15) -- (0,0) -- (10.72,5.15) -- (7.12,0) -- cycle    ;
\draw [color={rgb, 255:red, 176; green, 169; blue, 169 }  ,draw opacity=1 ][line width=0.75]    (421.72,225.8) -- (421.72,247.5) ;
\draw [shift={(421.72,250.5)}, rotate = 270] [fill={rgb, 255:red, 176; green, 169; blue, 169 }  ,fill opacity=1 ][line width=0.08]  [draw opacity=0] (10.72,-5.15) -- (0,0) -- (10.72,5.15) -- (7.12,0) -- cycle    ;
\draw [color={rgb, 255:red, 208; green, 2; blue, 27 }  ,draw opacity=1 ][line width=0.75]    (422.42,312.09) -- (422.42,333.79) ;
\draw [shift={(422.42,336.79)}, rotate = 270] [fill={rgb, 255:red, 208; green, 2; blue, 27 }  ,fill opacity=1 ][line width=0.08]  [draw opacity=0] (10.72,-5.15) -- (0,0) -- (10.72,5.15) -- (7.12,0) -- cycle    ;
\draw  [color={rgb, 255:red, 0; green, 0; blue, 0 }  ,draw opacity=0 ][fill={rgb, 255:red, 74; green, 144; blue, 226 }  ,fill opacity=0.1 ] (289.25,118.25) -- (466.33,118.25) -- (466.33,134.23) -- (289.25,134.23) -- cycle ;
\draw  [draw opacity=0][fill={rgb, 255:red, 208; green, 2; blue, 27 }  ,fill opacity=1 ][line width=0.75]  (334.57,125.71) .. controls (334.57,122.51) and (337.22,119.92) .. (340.49,119.92) .. controls (343.75,119.92) and (346.4,122.51) .. (346.4,125.71) .. controls (346.4,128.91) and (343.75,131.5) .. (340.49,131.5) .. controls (337.22,131.5) and (334.57,128.91) .. (334.57,125.71) -- cycle ;
\draw  [draw opacity=0][fill={rgb, 255:red, 208; green, 2; blue, 27 }  ,fill opacity=1 ][line width=0.75]  (414.57,124.57) .. controls (414.57,121.37) and (417.22,118.78) .. (420.49,118.78) .. controls (423.75,118.78) and (426.4,121.37) .. (426.4,124.57) .. controls (426.4,127.77) and (423.75,130.36) .. (420.49,130.36) .. controls (417.22,130.36) and (414.57,127.77) .. (414.57,124.57) -- cycle ;

\draw (353.19,165.38) node [anchor=north west][inner sep=0.75pt]  [color={rgb, 255:red, 0; green, 0; blue, 0 }  ,opacity=1 ,rotate=-359.39]  {$1$};
\draw (355.54,208.57) node [anchor=north west][inner sep=0.75pt]  [color={rgb, 255:red, 0; green, 0; blue, 0 }  ,opacity=1 ,rotate=-359.39]  {$3$};
\draw (290.76,251.24) node [anchor=north west][inner sep=0.75pt]  [color={rgb, 255:red, 0; green, 0; blue, 0 }  ,opacity=1 ,rotate=-359.39]  {$5$};
\draw (303.4,225.68) node [anchor=north west][inner sep=0.75pt]  [font=\small,color={rgb, 255:red, 0; green, 0; blue, 0 }  ,opacity=1 ]  {$+$};
\draw (375.12,296.53) node [anchor=north west][inner sep=0.75pt]  [color={rgb, 255:red, 0; green, 0; blue, 0 }  ,opacity=1 ,rotate=-359.39]  {$2$};
\draw (374.05,251.08) node [anchor=north west][inner sep=0.75pt]  [color={rgb, 255:red, 0; green, 0; blue, 0 }  ,opacity=1 ,rotate=-359.39]  {$4$};
\draw (435.12,252.05) node [anchor=north west][inner sep=0.75pt]  [color={rgb, 255:red, 0; green, 0; blue, 0 }  ,opacity=1 ,rotate=-359.39]  {$2$};
\draw (432.45,164.79) node [anchor=north west][inner sep=0.75pt]  [color={rgb, 255:red, 0; green, 0; blue, 0 }  ,opacity=1 ,rotate=-359.39]  {$2$};
\draw (371.74,274) node [anchor=north west][inner sep=0.75pt]  [font=\footnotesize,color={rgb, 255:red, 0; green, 0; blue, 0 }  ,opacity=1 ]  {$-$};
\draw (433.6,275.73) node [anchor=north west][inner sep=0.75pt]  [font=\footnotesize,color={rgb, 255:red, 0; green, 0; blue, 0 }  ,opacity=1 ]  {$-$};
\draw (314.01,139.27) node [anchor=north west][inner sep=0.75pt]  [font=\small,color={rgb, 255:red, 0; green, 0; blue, 0 }  ,opacity=1 ]  {$+$};
\draw (428.35,136.22) node [anchor=north west][inner sep=0.75pt]  [font=\small,color={rgb, 255:red, 0; green, 0; blue, 0 }  ,opacity=1 ]  {$-$};
\draw (436.27,318.4) node [anchor=north west][inner sep=0.75pt]  [font=\footnotesize,color={rgb, 255:red, 0; green, 0; blue, 0 }  ,opacity=1 ]  {$-$};
\draw (313.74,188) node [anchor=north west][inner sep=0.75pt]  [font=\footnotesize,color={rgb, 255:red, 0; green, 0; blue, 0 }  ,opacity=1 ]  {$-$};
\draw (435.38,207.55) node [anchor=north west][inner sep=0.75pt]  [color={rgb, 255:red, 0; green, 0; blue, 0 }  ,opacity=1 ,rotate=-359.39]  {$4$};
\draw (436.05,296.88) node [anchor=north west][inner sep=0.75pt]  [color={rgb, 255:red, 0; green, 0; blue, 0 }  ,opacity=1 ,rotate=-359.39]  {$4$};
\draw (436.45,336.05) node [anchor=north west][inner sep=0.75pt]  [color={rgb, 255:red, 0; green, 0; blue, 0 }  ,opacity=1 ,rotate=-359.39]  {$2$};
\draw (433.01,185.4) node [anchor=north west][inner sep=0.75pt]  [font=\small,color={rgb, 255:red, 0; green, 0; blue, 0 }  ,opacity=1 ]  {$-$};
\draw (435.68,226.73) node [anchor=north west][inner sep=0.75pt]  [font=\small,color={rgb, 255:red, 0; green, 0; blue, 0 }  ,opacity=1 ]  {$-$};
\draw (373.38,337.08) node [anchor=north west][inner sep=0.75pt]  [color={rgb, 255:red, 0; green, 0; blue, 0 }  ,opacity=1 ,rotate=-359.39]  {$4$};
\draw (373.07,316.67) node [anchor=north west][inner sep=0.75pt]  [font=\footnotesize,color={rgb, 255:red, 0; green, 0; blue, 0 }  ,opacity=1 ]  {$-$};
\draw (362.4,230.67) node [anchor=north west][inner sep=0.75pt]  [font=\footnotesize,color={rgb, 255:red, 0; green, 0; blue, 0 }  ,opacity=1 ]  {$-$};
\draw (366.95,117.87) node [anchor=north west][inner sep=0.75pt]  [font=\small]  {$u( t)$};
\draw (477.7,166.4) node [anchor=north west][inner sep=0.75pt]  [color={rgb, 255:red, 91; green, 130; blue, 49 }  ,opacity=1 ]  {$L_{1}$};
\draw (476.37,208.09) node [anchor=north west][inner sep=0.75pt]  [color={rgb, 255:red, 91; green, 130; blue, 49 }  ,opacity=1 ]  {$L_{2}$};
\draw (478.03,334.65) node [anchor=north west][inner sep=0.75pt]  [color={rgb, 255:red, 91; green, 130; blue, 49 }  ,opacity=1 ]  {$L_{5}$};
\draw (478.03,294.6) node [anchor=north west][inner sep=0.75pt]  [color={rgb, 255:red, 91; green, 130; blue, 49 }  ,opacity=1 ]  {$L_{4}$};
\draw (476.37,252.18) node [anchor=north west][inner sep=0.75pt]  [color={rgb, 255:red, 91; green, 130; blue, 49 }  ,opacity=1 ]  {$L_{3}$};
\draw (476.37,118) node [anchor=north west][inner sep=0.75pt]  [color={rgb, 255:red, 91; green, 130; blue, 49 }  ,opacity=1 ]  {$L_{0}$};

\end{tikzpicture}
\caption{$\mathcal{LUG^{H}}(\mathcal{G}_s)$ in $(\mathcal{G}_s)$}
\label{gs mul}
   \end{subfigure}
  \caption{A digraph $\mathcal{G}_9 (\mathcal{A,B})$ with multiple leaders; $\mathcal{G}_s$ contains $\mathcal{LUG}^H(\mathcal{G}_s)$ from leaders $\{1,2\}$, making the digraph $\mathcal{SS}$ herdable.}

        \label{multi L}
    \end{figure}
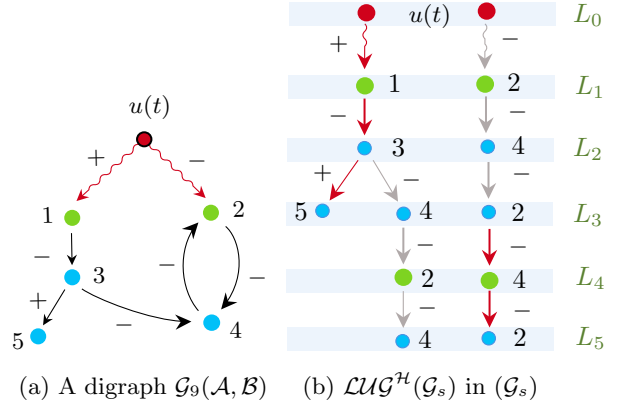 
   The $(\mathcal{G}_s)$ of $\mathcal{G}(A,B)$ is shown in Fig.~\ref{multi L}(\subref{gs mul}). The highlighted $\mathcal{LUG^{H}}(\mathcal{G}_s)$ spans all nodes; hence, the digraph is $\mathcal{SS}$ herdable. If the edge $(4,2)$ is removed, node $2$ is no longer spanned by $\mathcal{LUG^{H}}(\mathcal{G}_s)$, despite being a leader, and the digraph is not  $\mathcal{SS}$ herdable due to layer dilation at $L_0$.
\end{eg}

\subsection{$\mathcal{SS}$ herdability of with multiple driver nodes} \label{Sec:6.1}

In the following results, we provide the graph-theoretic condition for $\mathcal{SS}$ herdability of a digraph having more than one driver node. Hence, the input matrix is denoted by $\mathcal{B} \in \mathbb{R}^{n \times m}$, where $m$ represents the number of independent input signals.

 \begin{remark} \label{Remark multi}
    Consider a pair $(\mathcal{A}, \mathcal{B})$, where $\mathcal{A} \in \mathbb{R}^{n \times n}$ and $\mathcal{B} \in \mathbb{R}^{n \times m}$. 
    Assume that there are $m$ independent inputs such that a set of $m$ nodes are leader nodes. 
    Then, the controllability matrix of the pair $(\mathcal{A}, \mathcal{B})$ is given by
    \begin{equation} \label{Control_mat mul}
        \mathcal{C}(\mathcal{A},\mathcal{B}) = \begin{bmatrix}
       ~~\Psi_0 &\mid~~\Psi_1&\mid~~ \Psi_2\mid & \cdots &\mid \Psi_{n-1}
        \end{bmatrix},
    \end{equation}
    \normalfont
 where $\Psi_k \in \mathbb{R}^{n \times m}$ is defined as
\[
    \Psi_k = \big[\, \Psi_k^{1} ~~ \Psi_k^{2} ~~ \cdots ~~ \Psi_k^{m} \,\big],
\]
with each column given by
\[
\Psi_k^{\,i} = \mathcal{A}^{k}\mathcal{B}_{i}, 
\qquad k \in \{ 0,\dots,n-1\},\quad i \in \{ 1,\dots,m\},
\]

Where, $\mathcal{B}_{i}$ denotes the $i^{th}$ column of $\mathcal{B}$, whose entries are defined as
\[
    (\mathcal{B}_{i})_{j} =
    \begin{cases}
        b_{i}, & \text{if } j = i, \\
        0, & \text{otherwise}.
    \end{cases}
\]

    Here, $b_i$ is the strength of the input signal $u_i(t)$ applied to the leader node $i$. 
    If the $j^{\text{th}}$ entry of $\Psi^i_k$, denoted $[\Psi^i_k]_j$, is nonzero, then there exists at least one walk of length $k$ from the leader node $i$ to node $j$. 
\end{remark}
\begin{lemma}\label{multi lemma}
   Let $\mathcal{G(A,B)}$ be a digraph with multiple driver nodes described by (\ref{sys1:sys1eqn}) and $\mathcal{G}_s$ be the signed layered graph associated with $\mathcal{G(A,B)}$. Then each entry in $\Psi_k^i$ of $\Psi_k$ in \eqref{Control_mat mul} corresponds to the nodes in layer $L_{k+1} $ of $\mathcal{G}_s$. 
\end{lemma}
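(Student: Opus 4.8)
The plan is to prove this as a direct block-wise generalization of Lemma~\ref{lem1:lemma1}, leveraging the walk interpretation already recorded in Remark~\ref{Remark multi}. First I would fix a driver index $i \in \{1,\dots,m\}$ and analyze the single column $\Psi_k^i = \mathcal{A}^k \mathcal{B}_i$ in isolation, since the block structure $\Psi_k = [\Psi_k^1 \mid \cdots \mid \Psi_k^m]$ allows each driver to be treated independently. Because $\mathcal{B}_i$ is a scalar multiple of the standard basis vector supported at driver node $i$, the column $\Psi_0^i = \mathcal{B}_i$ has its only nonzero entry in row $i$, which I would identify with the placement of driver $i$ at the base layer $L_1$ of $\mathcal{G}_s$, consistent with the convention (inherited from Proposition~\ref{mul-L}) that all driver nodes occupy $L_1$.

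Next, for $k \ge 1$, I would invoke Remark~\ref{Remark multi}: whenever $[\Psi_k^i]_j \neq 0$, there exists at least one walk of length $k$ from driver $i$ to node $j$ in $\mathcal{G}(\mathcal{A},\mathcal{B})$. I would then appeal to the construction of the signed layered graph in Section~\ref{G_s para}, where $V_{L_{k+1}}$ is exactly the set of nodes reachable from a driver node in $k$ steps. Combining these two facts yields $j \in V_{L_{k+1}}$. Since the argument is uniform in the row index $j$ and in the driver index $i$, it establishes that every nonzero entry of $\Psi_k^i$ is associated with a node in layer $L_{k+1}$, which is precisely the claim. No computation beyond what Remark~\ref{Remark multi} already supplies is required, so I expect the argument to be short.

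The only genuine care needed lies in the layer-index bookkeeping (the $+1$ offset between the power $\mathcal{A}^k$ and the layer $L_{k+1}$) and in reconciling the multi-driver layering with the reachability statement. The subtlety I would flag is that a single node $j$ may be reachable from different drivers in different numbers of steps, so it can appear in several layers simultaneously; since $\mathcal{G}_s$ explicitly permits such repeated appearances (as noted when $\mathcal{G}_s$ was introduced in Section~\ref{G_s para}), the correspondence is between a particular entry of $\Psi_k^i$ and the occurrence of $j$ in $L_{k+1}$ generated along a length-$k$ walk from driver $i$, rather than a claim that $j$ occupies $L_{k+1}$ uniquely. I would close by noting that the base identification ($k=0$ placing drivers in $L_1$) together with the reachability structure for $k \ge 1$ covers all columns $\Psi_0,\dots,\Psi_{n-1}$, so the main point to state carefully is this indexing and multiplicity reconciliation rather than any nontrivial estimate.
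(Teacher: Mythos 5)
Your proposal is correct and follows essentially the same route as the paper's proof: both identify a nonzero entry $[\Psi_k^i]_j$ with the existence of a length-$k$ walk from driver $i$ to node $j$ (via Remark~\ref{Remark multi}) and then invoke the definition of $\mathcal{G}_s$, under which $V_{L_{k+1}}$ consists of the nodes reachable in $k$ steps from a leader. The only presentational difference is that the paper additionally spells out the flattened column-index bookkeeping ($k=\lfloor (j-1)/m\rfloor$, $\ell=((j-1)\bmod m)+1$) whereas you argue block-wise, and your remarks on the $k=0$ base case and on nodes appearing in several layers are consistent with, though not stated in, the paper's argument.
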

\begin{proof}
\normalfont
     Consider $\mathcal{C}(\mathcal{A,B})$ given in \eqref{Control_mat mul} associated to the system \eqref{sys1:sys1eqn} having $m$ inputs. For all \(i \in \{1,2,\dots,n\}\) and \(j \in \{1,2,\dots,nm\}\), the \((i,j)\)-th entry of the controllability matrix $\mathcal{C}_{(i,j)}$ is given by \([\mathcal{A}^k \mathcal{B}_\ell]_i\) = \( [\Psi^\ell_k]_i\), where \(k = \left\lfloor~\frac{j-1}{m}~\right\rfloor\), \( \ell=\big((j-1) \bmod m\big)+1\). Consider $[\Psi^\ell_k]_j \ne0$, then there exists a walk of length $k$ from the leader node $\ell \in \{1,2,\dots,m\}$ to the follower node $j$.
     
     Since each layer $L_{k}$, $k\in\{2,3..n\}$ in $\mathcal{G}_s$ consists of nodes that are reachable from a leader node $\ell$ in $L_1$ with a walk of length $k-1$, A node $j$ in $L_{k}$ corresponds to the entry $[\mathcal{A}^{p-1}\mathcal{B}_{\ell}]_j=[\Psi_{p-1}^{\ell}]_j$ in controllability matrix $\mathcal{C}(\mathcal{A,B})$. Hence, proved. \hfill $\qed$ 
\end{proof}
\begin{theorem} \label{Thm multi Driver}
Let $\mathcal{G(A,B)}$ be a digraph with multiple driver nodes described by (\ref{sys1:sys1eqn}) and $\mathcal{G}_s$ be the signed layered graph associated with $\mathcal{G(A,B)}$. Let $\mathcal{LUG}_i^{\mathcal{H}}(\mathcal{G}_s)$ be an $\mathcal{LUG}^{\mathcal{H}}(\mathcal{G}_s)$ with unique leader node $i$. Then $\mathcal{G(A,B)}$ is $\mathcal{SS}$ herdable if and only if all the nodes of $\mathcal{G}(\mathcal{A,B})$ are spanned by $~\bigcup_{i \in \mathbb{Z}^+}~ \mathcal{LUG}_i^{\mathcal{H}}(\mathcal{G}_s)$.

\end{theorem}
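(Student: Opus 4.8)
The plan is to reduce the multi-driver problem to $m$ instances of the single-leader result, Theorem~\ref{theorem1}, and then glue them together through the image of the controllability matrix. First I would observe that, by Remark~\ref{Remark multi}, the columns of $\mathcal{C}(\mathcal{A},\mathcal{B})$ in \eqref{Control_mat mul} are exactly the union of the $m$ single-input controllability matrices $\mathcal{C}(\mathcal{A},\mathcal{B}_i)=[\,\mathcal{B}_i \mid \mathcal{A}\mathcal{B}_i \mid \cdots \mid \mathcal{A}^{n-1}\mathcal{B}_i\,]$, merely interleaved power-wise. Since herdability is characterized by Proposition~\ref{test for H} purely through $\mathcal{I}m(\mathcal{C}(\mathcal{A},\mathcal{B}))$, a permutation of columns is irrelevant, so $\mathcal{G}(\mathcal{A},\mathcal{B})$ is $\mathcal{SS}$ herdable iff some parametric realization lets the combined column set positively span a strictly positive vector. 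By Lemma~\ref{multi lemma}, the nonzero entries of block $i$ are governed by the layered graph rooted at the single leader $i$, so the sign structure of block $i$ is precisely what $\mathcal{LUG}_i^{\mathcal{H}}(\mathcal{G}_s)$ describes. This identification is the backbone of both directions.

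For sufficiency, assume every node is spanned by $\bigcup_i \mathcal{LUG}_i^{\mathcal{H}}(\mathcal{G}_s)$. I would assign to each node $j$ a leader $i(j)$ and a layer $L_{p(j)}$ in which $j$ is sign-matched inside $\mathcal{LUG}_{i(j)}^{\mathcal{H}}(\mathcal{G}_s)$; by Lemma~\ref{multi lemma} this pins down a designated column of $\mathcal{C}(\mathcal{A},\mathcal{B})$, namely the one carrying $\Psi_{p(j)-1}^{i(j)}$, in which $j$ appears. I then invoke the realization of Lemma~\ref{lemma C herd} (edge weights that sign-match a node taken $>d\gg 0$, the others in $(0,d)$) to make every designated column diagonally dominant at its node and to resolve each $(+/-)$ entry in the sign-matched direction. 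Because the sign pattern is fixed, this single realization of the shared $\mathcal{A}$ is simultaneously valid across all blocks, and the preimage $\boldsymbol{\delta}$ of Remark~\ref{delta remark} --- now supported on the designated columns drawn from several blocks --- can be made to satisfy \eqref{delta} and \eqref{delta satis} node by node, yielding $\mathcal{C}(\mathcal{A},\mathcal{B})\boldsymbol{\delta}=\mathbf{v}\gg 0$. Herdability then follows from Proposition~\ref{test for H}.

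For necessity I would argue by contraposition: suppose some node $j_0$ is spanned by no $\mathcal{LUG}_i^{\mathcal{H}}(\mathcal{G}_s)$. Then in each block $i$ the single-leader obstruction from the necessity half of Theorem~\ref{theorem1} applies to $j_0$, i.e.\ every column of $\mathcal{C}(\mathcal{A},\mathcal{B}_i)$ in which $j_0$ is nonzero either carries a structurally indefinite $j_0$ entry produced by cancelling walks, or co-locates $j_0$ with an opposite-sign node forced by a layer or signed dilation (Lemma~\ref{Prop:SD set}, Lemma~\ref{SD-herd test}, cf.\ Proposition~\ref{prop3:prop3}). I would then show that these per-block obstructions survive the merge: the coordinate-$j_0$ row of $\mathcal{C}(\mathcal{A},\mathcal{B})$, together with the rows of its sign-coupled partners, admits for every realization a nonnegative nonzero $\bm{y}$ with $\mathcal{C}(\mathcal{A},\mathcal{B})^{\top}\bm{y}=0$. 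By Proposition~\ref{test for H}, the system is not $\mathcal{SS}$ herdable.

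The main obstacle I anticipate is exactly this merging step in the necessity direction: although each block separately fails to herd $j_0$, one must rule out that a positive combination of columns drawn from \emph{different} blocks repairs coordinate $j_0$ while keeping all remaining coordinates positive. The resolution is that $\mathcal{A}$ is shared, so a walk to $j_0$ via leader $i_1$ and one via leader $i_2$ still inherit the same fixed edge signs along $\mathcal{G}_s$; the unresolvable sign conflict attached to $j_0$ is therefore a property of the common layered graph rather than of any one input, which is what allows a single $\bm{y}$ to annihilate every block at once. A secondary, more technical point on the sufficiency side is verifying that one realization of the shared $\mathcal{A}$ can diagonally dominate \emph{all} designated columns simultaneously; this reduces to the magnitude freedom in each $\delta_x$, which absorbs the cross-block off-diagonal terms through \eqref{delta satis}.
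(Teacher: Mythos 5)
Your block decomposition of $\mathcal{C}(\mathcal{A},\mathcal{B})$ and the identification of block $i$'s sign structure with $\mathcal{LUG}_i^{\mathcal{H}}(\mathcal{G}_s)$ via Lemma~\ref{multi lemma} match the paper, and your sufficiency argument is sound: you construct $\boldsymbol{\delta}$ directly from the designated sign-matched columns under the realization of Lemma~\ref{lemma C herd}, whereas the paper argues through Gordan's alternative (showing no nonnegative $\bm{y}$ annihilates $\mathcal{C}^{\top}$ by deriving a contradiction from $\mathcal{C}_j=-\alpha\,\mathcal{C}_k$). These are two faces of Proposition~\ref{test for H} and either is acceptable; your constructive version is arguably more transparent since the dominance condition \eqref{delta satis} does absorb the cross-block off-diagonal terms exactly as you say.

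The necessity direction is where you diverge, and where there is a genuine gap. You argue by contraposition and must therefore show that if $j_0$ is spanned by no $\mathcal{LUG}_i^{\mathcal{H}}(\mathcal{G}_s)$, then \emph{for every realization} a single nonnegative $\bm{y}\neq 0$ satisfies $\mathcal{C}(\mathcal{A},\mathcal{B})^{\top}\bm{y}=0$ across \emph{all} blocks simultaneously. You correctly flag this as the main obstacle, but the resolution you offer (``the sign conflict is a property of the common layered graph'') does not close it: the obstruction data attached to $j_0$ --- which node it is dilation-coupled to, and in which layers its nonzero entries sit --- depends on the root of the layered graph, so block $1$ may force $\mathcal{C}_{j_0}=-\alpha_1\mathcal{C}_{k_1}$ on its columns while block $2$ forces $\mathcal{C}_{j_0}=-\alpha_2\mathcal{C}_{k_2}$ with $k_1\neq k_2$; a vector $\bm{y}$ supported on $\{j_0,k_1\}$ need not annihilate block $2$, and nothing you have written rules out that a positive combination of columns drawn from different blocks repairs coordinate $j_0$. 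The paper avoids this entirely by running necessity in the forward direction: from herdability it extracts $\boldsymbol{\delta}$ with $\mathcal{C}\boldsymbol{\delta}=\bm{v}>0$, reduces each coordinate to the dominant-term case \eqref{dominance_condition}, and reads off from the dominant entries a sign-matched layer assignment for every node, which assembles into the spanning union $\bigcup_i\mathcal{LUG}_i^{\mathcal{H}}(\mathcal{G}_s)$. You should either adopt that direct route or supply the missing simultaneous-annihilation lemma; as it stands, the contrapositive argument is incomplete.
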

Proof is presented in Appendix \ref{proof Thm multi Driver}.
\normalfont

\begin{eg}
\normalfont
Let us consider a $\mathcal{G}_{10}(\mathcal{A,B})$ given in Fig.\ref{multi driver full}(\subref{multi G (a,b)}) with driver nodes $1$ and $2$ receiving external inputs $u_1(t)$ and $u_2(t)$ respectively. Let $\mathcal{G}_s$ be the signed layer graph of $\mathcal{G(A,B)}$ such that the leader nodes $\{1,2\}~ \in L_1$ of $\mathcal{G}_s$ as shown in Fig. \ref{multi driver full} (\subref{LUG multi d1}).

   \begin{figure}[ht]
        \centering
       
      \begin{subfigure}{0.2\textwidth}
\tikzset{every picture/.style={scale=1.1pt}} 

\begin{tikzpicture}[x=0.75pt,y=0.75pt,yscale=-1,xscale=1]

\draw [color={rgb, 255:red, 0; green, 0; blue, 0 }  ,draw opacity=1 ]   (107.65,85.3) -- (107.52,101.85) ;
\draw [shift={(107.51,103.85)}, rotate = 270.45] [color={rgb, 255:red, 0; green, 0; blue, 0 }  ,draw opacity=1 ][line width=0.75]    (6.56,-2.94) .. controls (4.17,-1.38) and (1.99,-0.4) .. (0,0) .. controls (1.99,0.4) and (4.17,1.38) .. (6.56,2.94)   ;
\draw [color={rgb, 255:red, 0; green, 0; blue, 0 }  ,draw opacity=1 ]   (107.65,110.62) -- (96.7,127.76) ;
\draw [shift={(95.62,129.44)}, rotate = 302.58] [color={rgb, 255:red, 0; green, 0; blue, 0 }  ,draw opacity=1 ][line width=0.75]    (6.56,-2.94) .. controls (4.17,-1.38) and (1.99,-0.4) .. (0,0) .. controls (1.99,0.4) and (4.17,1.38) .. (6.56,2.94)   ;
\draw [color={rgb, 255:red, 208; green, 2; blue, 27 }  ,draw opacity=1 ]   (106.97,53.22) .. controls (108.67,54.85) and (108.7,56.52) .. (107.07,58.22) .. controls (105.44,59.92) and (105.47,61.59) .. (107.17,63.22) -- (107.24,67.22) -- (107.4,75.22) ;
\draw [shift={(107.46,78.22)}, rotate = 268.89] [fill={rgb, 255:red, 208; green, 2; blue, 27 }  ,fill opacity=1 ][line width=0.08]  [draw opacity=0] (7.14,-3.43) -- (0,0) -- (7.14,3.43) -- (4.74,0) -- cycle    ;
\draw  [color={rgb, 255:red, 0; green, 0; blue, 0 }  ,draw opacity=1 ][fill={rgb, 255:red, 208; green, 2; blue, 27 }  ,fill opacity=1 ][line width=0.75]  (106.84,47.29) .. controls (108.47,47.26) and (109.81,48.56) .. (109.85,50.19) .. controls (109.89,51.83) and (108.6,53.19) .. (106.97,53.22) .. controls (105.35,53.25) and (104,51.95) .. (103.97,50.31) .. controls (103.93,48.68) and (105.22,47.32) .. (106.84,47.29) -- cycle ;
\draw [color={rgb, 255:red, 208; green, 2; blue, 27 }  ,draw opacity=1 ]   (164.77,52.59) .. controls (166.46,54.22) and (166.49,55.89) .. (164.86,57.59) .. controls (163.23,59.29) and (163.26,60.96) .. (164.96,62.59) -- (165.04,66.6) -- (165.19,74.59) ;
\draw [shift={(165.25,77.59)}, rotate = 268.89] [fill={rgb, 255:red, 208; green, 2; blue, 27 }  ,fill opacity=1 ][line width=0.08]  [draw opacity=0] (7.14,-3.43) -- (0,0) -- (7.14,3.43) -- (4.74,0) -- cycle    ;
\draw  [color={rgb, 255:red, 0; green, 0; blue, 0 }  ,draw opacity=1 ][fill={rgb, 255:red, 208; green, 2; blue, 27 }  ,fill opacity=1 ][line width=0.75]  (164.64,46.66) .. controls (166.26,46.63) and (167.61,47.93) .. (167.64,49.56) .. controls (167.68,51.2) and (166.39,52.56) .. (164.77,52.59) .. controls (163.14,52.62) and (161.79,51.32) .. (161.76,49.69) .. controls (161.72,48.05) and (163.01,46.69) .. (164.64,46.66) -- cycle ;
\draw [color={rgb, 255:red, 0; green, 0; blue, 0 }  ,draw opacity=1 ]   (164.41,84.31) -- (150.13,104.84) ;
\draw [shift={(148.99,106.48)}, rotate = 304.83] [color={rgb, 255:red, 0; green, 0; blue, 0 }  ,draw opacity=1 ][line width=0.75]    (6.56,-2.94) .. controls (4.17,-1.38) and (1.99,-0.4) .. (0,0) .. controls (1.99,0.4) and (4.17,1.38) .. (6.56,2.94)   ;
\draw [color={rgb, 255:red, 0; green, 0; blue, 0 }  ,draw opacity=1 ]   (145.01,111.85) -- (141.34,136.04) ;
\draw [shift={(141.04,138.02)}, rotate = 278.63] [color={rgb, 255:red, 0; green, 0; blue, 0 }  ,draw opacity=1 ][line width=0.75]    (6.56,-2.94) .. controls (4.17,-1.38) and (1.99,-0.4) .. (0,0) .. controls (1.99,0.4) and (4.17,1.38) .. (6.56,2.94)   ;
\draw [color={rgb, 255:red, 0; green, 0; blue, 0 }  ,draw opacity=1 ]   (164.41,84.31) -- (176.63,105.51) ;
\draw [shift={(177.63,107.25)}, rotate = 240.05] [color={rgb, 255:red, 0; green, 0; blue, 0 }  ,draw opacity=1 ][line width=0.75]    (6.56,-2.94) .. controls (4.17,-1.38) and (1.99,-0.4) .. (0,0) .. controls (1.99,0.4) and (4.17,1.38) .. (6.56,2.94)   ;
\draw [color={rgb, 255:red, 0; green, 0; blue, 0 }  ,draw opacity=1 ]   (180.02,113.51) -- (180.81,135.63) ;
\draw [shift={(180.88,137.63)}, rotate = 267.97] [color={rgb, 255:red, 0; green, 0; blue, 0 }  ,draw opacity=1 ][line width=0.75]    (6.56,-2.94) .. controls (4.17,-1.38) and (1.99,-0.4) .. (0,0) .. controls (1.99,0.4) and (4.17,1.38) .. (6.56,2.94)   ;
\draw    (143.38,113.03) .. controls (134.91,118.93) and (127.04,120.87) .. (120.18,118.74) .. controls (115.99,117.44) and (112.17,114.61) .. (108.84,110.24) ;
\draw [shift={(145.01,111.85)}, rotate = 143.12] [color={rgb, 255:red, 0; green, 0; blue, 0 }  ][line width=0.75]    (6.56,-2.94) .. controls (4.17,-1.38) and (1.99,-0.4) .. (0,0) .. controls (1.99,0.4) and (4.17,1.38) .. (6.56,2.94)   ;
\draw  [color={rgb, 255:red, 0; green, 0; blue, 0 }  ,draw opacity=1 ][fill={rgb, 255:red, 126; green, 211; blue, 33 }  ,fill opacity=1 ] (107.51,78.75) .. controls (109.31,78.71) and (110.81,80.19) .. (110.85,82.07) .. controls (110.89,83.94) and (109.46,85.48) .. (107.65,85.52) .. controls (105.85,85.56) and (104.35,84.07) .. (104.31,82.2) .. controls (104.27,80.33) and (105.7,78.78) .. (107.51,78.75) -- cycle ;
\draw  [color={rgb, 255:red, 0; green, 0; blue, 0 }  ,draw opacity=1 ][fill={rgb, 255:red, 126; green, 211; blue, 33 }  ,fill opacity=1 ] (164.27,77.54) .. controls (166.07,77.5) and (167.57,78.98) .. (167.61,80.86) .. controls (167.65,82.73) and (166.22,84.27) .. (164.41,84.31) .. controls (162.61,84.35) and (161.11,82.86) .. (161.07,80.99) .. controls (161.03,79.12) and (162.46,77.57) .. (164.27,77.54) -- cycle ;
\draw  [color={rgb, 255:red, 0; green, 0; blue, 0 }  ,draw opacity=1 ][fill={rgb, 255:red, 0; green, 192; blue, 248 }  ,fill opacity=1 ] (146.47,106.08) .. controls (148.28,106.05) and (149.78,107.53) .. (149.82,109.4) .. controls (149.86,111.27) and (148.42,112.82) .. (146.62,112.86) .. controls (144.81,112.89) and (143.32,111.41) .. (143.28,109.54) .. controls (143.24,107.67) and (144.67,106.12) .. (146.47,106.08) -- cycle ;
\draw  [color={rgb, 255:red, 0; green, 0; blue, 0 }  ,draw opacity=1 ][fill={rgb, 255:red, 0; green, 192; blue, 248 }  ,fill opacity=1 ] (93.57,128.78) .. controls (95.38,128.75) and (96.88,130.23) .. (96.92,132.1) .. controls (96.95,133.97) and (95.52,135.52) .. (93.72,135.56) .. controls (91.91,135.59) and (90.42,134.11) .. (90.38,132.24) .. controls (90.34,130.37) and (91.77,128.82) .. (93.57,128.78) -- cycle ;
\draw  [color={rgb, 255:red, 0; green, 0; blue, 0 }  ,draw opacity=1 ][fill={rgb, 255:red, 0; green, 192; blue, 248 }  ,fill opacity=1 ] (175.57,111.01) .. controls (175.57,109.1) and (177.12,107.55) .. (179.03,107.55) .. controls (180.94,107.55) and (182.48,109.1) .. (182.48,111.01) .. controls (182.48,112.92) and (180.94,114.47) .. (179.03,114.47) .. controls (177.12,114.47) and (175.57,112.92) .. (175.57,111.01) -- cycle ;
\draw  [color={rgb, 255:red, 0; green, 0; blue, 0 }  ,draw opacity=1 ][fill={rgb, 255:red, 0; green, 192; blue, 248 }  ,fill opacity=1 ] (137.83,142.16) .. controls (137.83,140.25) and (139.38,138.7) .. (141.29,138.7) .. controls (143.2,138.7) and (144.74,140.25) .. (144.74,142.16) .. controls (144.74,144.07) and (143.2,145.62) .. (141.29,145.62) .. controls (139.38,145.62) and (137.83,144.07) .. (137.83,142.16) -- cycle ;
\draw  [color={rgb, 255:red, 0; green, 0; blue, 0 }  ,draw opacity=1 ][fill={rgb, 255:red, 0; green, 192; blue, 248 }  ,fill opacity=1 ] (180.88,137.63) .. controls (182.68,137.59) and (184.18,139.08) .. (184.22,140.95) .. controls (184.26,142.82) and (182.83,144.36) .. (181.02,144.4) .. controls (179.22,144.44) and (177.72,142.95) .. (177.68,141.08) .. controls (177.64,139.21) and (179.07,137.67) .. (180.88,137.63) -- cycle ;
\draw    (146.47,106.08) .. controls (135.39,95.34) and (124.73,92.59) .. (112.25,102.76) ;
\draw [shift={(110.88,103.92)}, rotate = 318.63] [color={rgb, 255:red, 0; green, 0; blue, 0 }  ][line width=0.75]    (6.56,-2.94) .. controls (4.17,-1.38) and (1.99,-0.4) .. (0,0) .. controls (1.99,0.4) and (4.17,1.38) .. (6.56,2.94)   ;
\draw  [color={rgb, 255:red, 0; green, 0; blue, 0 }  ,draw opacity=1 ][fill={rgb, 255:red, 0; green, 192; blue, 248 }  ,fill opacity=1 ] (107.51,103.85) .. controls (109.31,103.81) and (110.81,105.29) .. (110.85,107.16) .. controls (110.89,109.03) and (109.46,110.58) .. (107.65,110.62) .. controls (105.85,110.66) and (104.35,109.17) .. (104.31,107.3) .. controls (104.27,105.43) and (105.7,103.88) .. (107.51,103.85) -- cycle ;

\draw (172,77.34) node [anchor=north west][inner sep=0.75pt]  [font=\tiny,color={rgb, 255:red, 0; green, 0; blue, 0 }  ,opacity=1 ,rotate=-358.27]  {$2$};
\draw (94.19,106.08) node [anchor=north west][inner sep=0.75pt]  [font=\tiny,color={rgb, 255:red, 0; green, 0; blue, 0 }  ,opacity=1 ,rotate=-358.27]  {$3$};
\draw (89.87,138.8) node [anchor=north west][inner sep=0.75pt]  [font=\tiny,color={rgb, 255:red, 0; green, 0; blue, 0 }  ,opacity=1 ,rotate=-358.27]  {$4$};
\draw (95.85,78.64) node [anchor=north west][inner sep=0.75pt]  [font=\tiny,color={rgb, 255:red, 0; green, 0; blue, 0 }  ,opacity=1 ,rotate=-358.27]  {$1$};
\draw (90.93,115.5) node [anchor=north west][inner sep=0.75pt]  [font=\tiny,color={rgb, 255:red, 0; green, 0; blue, 0 }  ,opacity=1 ,rotate=-358.88]  {$+$};
\draw (92.83,88.76) node [anchor=north west][inner sep=0.75pt]  [font=\tiny,color={rgb, 255:red, 0; green, 0; blue, 0 }  ,opacity=1 ]  {$-$};
\draw (154.67,108.42) node [anchor=north west][inner sep=0.75pt]  [font=\tiny,color={rgb, 255:red, 0; green, 0; blue, 0 }  ,opacity=1 ,rotate=-358.27]  {$5$};
\draw (98.76,32.24) node [anchor=north west][inner sep=0.75pt]  [font=\tiny]  {$u_{1}( t)$};
\draw (93.36,57.86) node [anchor=north west][inner sep=0.75pt]  [font=\tiny,color={rgb, 255:red, 0; green, 0; blue, 0 }  ,opacity=1 ,rotate=-358.88]  {$+$};
\draw (150.83,55.19) node [anchor=north west][inner sep=0.75pt]  [font=\tiny,color={rgb, 255:red, 0; green, 0; blue, 0 }  ,opacity=1 ]  {$+$};
\draw (118.14,120.23) node [anchor=north west][inner sep=0.75pt]  [font=\tiny,color={rgb, 255:red, 0; green, 0; blue, 0 }  ,opacity=1 ]  {$-$};
\draw (184.57,106.99) node [anchor=north west][inner sep=0.75pt]  [font=\tiny,color={rgb, 255:red, 0; green, 0; blue, 0 }  ,opacity=1 ,rotate=-358.27]  {$6$};
\draw (130.23,141.26) node [anchor=north west][inner sep=0.75pt]  [font=\tiny,color={rgb, 255:red, 0; green, 0; blue, 0 }  ,opacity=1 ,rotate=-358.27]  {$7$};
\draw (145.17,87.28) node [anchor=north west][inner sep=0.75pt]  [font=\tiny,color={rgb, 255:red, 0; green, 0; blue, 0 }  ,opacity=1 ,rotate=-358.88]  {$+$};
\draw (155.75,32.58) node [anchor=north west][inner sep=0.75pt]  [font=\tiny]  {$u_{2}( t)$};
\draw (168.87,138.63) node [anchor=north west][inner sep=0.75pt]  [font=\tiny,color={rgb, 255:red, 0; green, 0; blue, 0 }  ,opacity=1 ,rotate=-358.27]  {$8$};
\draw (133.47,122.99) node [anchor=north west][inner sep=0.75pt]  [font=\tiny,color={rgb, 255:red, 0; green, 0; blue, 0 }  ,opacity=1 ]  {$-$};
\draw (169.4,120.87) node [anchor=north west][inner sep=0.75pt]  [font=\tiny,color={rgb, 255:red, 0; green, 0; blue, 0 }  ,opacity=1 ,rotate=-358.88]  {$+$};
\draw (175.01,86.25) node [anchor=north west][inner sep=0.75pt]  [font=\tiny,color={rgb, 255:red, 0; green, 0; blue, 0 }  ,opacity=1 ]  {$-$};
\draw (123.97,86.28) node [anchor=north west][inner sep=0.75pt]  [font=\tiny,color={rgb, 255:red, 0; green, 0; blue, 0 }  ,opacity=1 ,rotate=-358.88]  {$+$};

\end{tikzpicture}
\caption{A digraph $\mathcal{G}_{10} (\mathcal{A,B})$}
\label{multi G (a,b)}
   \end{subfigure}
  \vspace{0.5mm}
   \begin{subfigure}{0.2\textwidth}
       
\tikzset{every picture/.style={scale=1.1pt}} 

\begin{tikzpicture}[x=0.75pt,y=0.75pt,yscale=-1,xscale=1]

\draw [color={rgb, 255:red, 208; green, 2; blue, 27 }  ,draw opacity=1 ]   (336.27,130.72) -- (336.24,152.08) ;
\draw [shift={(336.24,154.08)}, rotate = 270.07] [color={rgb, 255:red, 208; green, 2; blue, 27 }  ,draw opacity=1 ][line width=0.75]    (6.56,-2.94) .. controls (4.17,-1.38) and (1.99,-0.4) .. (0,0) .. controls (1.99,0.4) and (4.17,1.38) .. (6.56,2.94)   ;
\draw  [color={rgb, 255:red, 0; green, 0; blue, 0 }  ,draw opacity=0 ][fill={rgb, 255:red, 74; green, 144; blue, 226 }  ,fill opacity=0.1 ] (236.95,94.4) -- (299.67,94.4) -- (299.67,105.45) -- (236.95,105.45) -- cycle ;
\draw  [color={rgb, 255:red, 0; green, 0; blue, 0 }  ,draw opacity=0 ][fill={rgb, 255:red, 74; green, 144; blue, 226 }  ,fill opacity=0.1 ] (304.11,94.07) -- (369.64,94.07) -- (369.64,105.12) -- (304.11,105.12) -- cycle ;
\draw  [color={rgb, 255:red, 0; green, 0; blue, 0 }  ,draw opacity=0 ][fill={rgb, 255:red, 74; green, 144; blue, 226 }  ,fill opacity=0.1 ] (303.88,120.52) -- (369.4,120.52) -- (369.4,131.57) -- (303.88,131.57) -- cycle ;
\draw  [color={rgb, 255:red, 0; green, 0; blue, 0 }  ,draw opacity=0 ][fill={rgb, 255:red, 74; green, 144; blue, 226 }  ,fill opacity=0.1 ] (303.51,154.9) -- (369.03,154.9) -- (369.03,165.95) -- (303.51,165.95) -- cycle ;
\draw  [color={rgb, 255:red, 0; green, 0; blue, 0 }  ,draw opacity=0 ][fill={rgb, 255:red, 74; green, 144; blue, 226 }  ,fill opacity=0.1 ] (304.38,64.9) -- (369.91,64.9) -- (369.91,75.95) -- (304.38,75.95) -- cycle ;
\draw  [color={rgb, 255:red, 0; green, 0; blue, 0 }  ,draw opacity=0 ][fill={rgb, 255:red, 74; green, 144; blue, 226 }  ,fill opacity=0.1 ] (304.66,31.02) -- (370.19,31.02) -- (370.19,42.07) -- (304.66,42.07) -- cycle ;
\draw  [color={rgb, 255:red, 0; green, 0; blue, 0 }  ,draw opacity=0 ][fill={rgb, 255:red, 74; green, 144; blue, 226 }  ,fill opacity=0.1 ] (236.9,120.85) -- (299.62,120.85) -- (299.62,131.89) -- (236.9,131.89) -- cycle ;
\draw  [color={rgb, 255:red, 0; green, 0; blue, 0 }  ,draw opacity=0 ][fill={rgb, 255:red, 74; green, 144; blue, 226 }  ,fill opacity=0.1 ] (236.54,155.23) -- (299.26,155.23) -- (299.26,166.27) -- (236.54,166.27) -- cycle ;
\draw  [color={rgb, 255:red, 0; green, 0; blue, 0 }  ,draw opacity=0 ][fill={rgb, 255:red, 74; green, 144; blue, 226 }  ,fill opacity=0.1 ] (237.2,65.23) -- (299.93,65.23) -- (299.93,76.27) -- (237.2,76.27) -- cycle ;
\draw [color={rgb, 255:red, 128; green, 128; blue, 128 }  ,draw opacity=0.44 ][line width=0.75]    (317.38,103.07) -- (317.28,122.24) ;
\draw [shift={(317.27,124.24)}, rotate = 270.28] [color={rgb, 255:red, 128; green, 128; blue, 128 }  ,draw opacity=0.44 ][line width=0.75]    (6.56,-2.94) .. controls (4.17,-1.38) and (1.99,-0.4) .. (0,0) .. controls (1.99,0.4) and (4.17,1.38) .. (6.56,2.94)   ;
\draw  [color={rgb, 255:red, 0; green, 0; blue, 0 }  ,draw opacity=0 ][fill={rgb, 255:red, 74; green, 144; blue, 226 }  ,fill opacity=0.1 ] (237.47,31.35) -- (300.2,31.35) -- (300.2,42.39) -- (237.47,42.39) -- cycle ;
\draw [color={rgb, 255:red, 128; green, 128; blue, 128 }  ,draw opacity=0.44 ]   (331.33,73.79) -- (318.41,95.16) ;
\draw [shift={(317.38,96.87)}, rotate = 301.16] [color={rgb, 255:red, 128; green, 128; blue, 128 }  ,draw opacity=0.44 ][line width=0.75]    (6.56,-2.94) .. controls (4.17,-1.38) and (1.99,-0.4) .. (0,0) .. controls (1.99,0.4) and (4.17,1.38) .. (6.56,2.94)   ;
\draw [color={rgb, 255:red, 208; green, 2; blue, 27 }  ,draw opacity=1 ]   (331.33,73.79) -- (341.9,95.08) ;
\draw [shift={(342.78,96.87)}, rotate = 243.61] [color={rgb, 255:red, 208; green, 2; blue, 27 }  ,draw opacity=1 ][line width=0.75]    (6.56,-2.94) .. controls (4.17,-1.38) and (1.99,-0.4) .. (0,0) .. controls (1.99,0.4) and (4.17,1.38) .. (6.56,2.94)   ;
\draw [color={rgb, 255:red, 128; green, 128; blue, 128 }  ,draw opacity=0.44 ]   (317.38,103.07) -- (334.95,123.03) ;
\draw [shift={(336.27,124.53)}, rotate = 228.64] [color={rgb, 255:red, 128; green, 128; blue, 128 }  ,draw opacity=0.44 ][line width=0.75]    (6.56,-2.94) .. controls (4.17,-1.38) and (1.99,-0.4) .. (0,0) .. controls (1.99,0.4) and (4.17,1.38) .. (6.56,2.94)   ;
\draw [color={rgb, 255:red, 208; green, 2; blue, 27 }  ,draw opacity=1 ]   (342.78,103.07) -- (353.96,122.79) ;
\draw [shift={(354.95,124.53)}, rotate = 240.46] [color={rgb, 255:red, 208; green, 2; blue, 27 }  ,draw opacity=1 ][line width=0.75]    (6.56,-2.94) .. controls (4.17,-1.38) and (1.99,-0.4) .. (0,0) .. controls (1.99,0.4) and (4.17,1.38) .. (6.56,2.94)   ;
\draw [color={rgb, 255:red, 128; green, 128; blue, 128 }  ,draw opacity=0.46 ]   (271.12,102.38) -- (259.56,121.44) ;
\draw [shift={(258.53,123.15)}, rotate = 301.23] [color={rgb, 255:red, 128; green, 128; blue, 128 }  ,draw opacity=0.46 ][line width=0.75]    (6.56,-2.94) .. controls (4.17,-1.38) and (1.99,-0.4) .. (0,0) .. controls (1.99,0.4) and (4.17,1.38) .. (6.56,2.94)   ;
\draw [color={rgb, 255:red, 208; green, 2; blue, 27 }  ,draw opacity=1 ][fill={rgb, 255:red, 208; green, 2; blue, 27 }  ,fill opacity=1 ]   (271.12,102.38) -- (282.27,121.42) ;
\draw [shift={(283.28,123.15)}, rotate = 239.65] [color={rgb, 255:red, 208; green, 2; blue, 27 }  ,draw opacity=1 ][line width=0.75]    (6.56,-2.94) .. controls (4.17,-1.38) and (1.99,-0.4) .. (0,0) .. controls (1.99,0.4) and (4.17,1.38) .. (6.56,2.94)   ;
\draw [color={rgb, 255:red, 208; green, 2; blue, 27 }  ,draw opacity=1 ]   (270.18,42.27) .. controls (271.87,43.9) and (271.9,45.57) .. (270.27,47.27) .. controls (268.64,48.97) and (268.67,50.64) .. (270.37,52.27) -- (270.47,56.94) -- (270.63,64.94) ;
\draw [shift={(270.68,67.94)}, rotate = 268.86] [fill={rgb, 255:red, 208; green, 2; blue, 27 }  ,fill opacity=1 ][line width=0.08]  [draw opacity=0] (7.14,-3.43) -- (0,0) -- (7.14,3.43) -- (4.74,0) -- cycle    ;
\draw  [color={rgb, 255:red, 0; green, 0; blue, 0 }  ,draw opacity=1 ][fill={rgb, 255:red, 208; green, 2; blue, 27 }  ,fill opacity=1 ][line width=0.75]  (270.04,36.19) .. controls (271.75,36.15) and (273.16,37.49) .. (273.2,39.17) .. controls (273.24,40.85) and (271.88,42.24) .. (270.18,42.27) .. controls (268.47,42.31) and (267.05,40.97) .. (267.02,39.29) .. controls (266.98,37.61) and (268.33,36.22) .. (270.04,36.19) -- cycle ;
\draw [color={rgb, 255:red, 208; green, 2; blue, 27 }  ,draw opacity=1 ]   (330.66,41.23) .. controls (332.36,42.86) and (332.39,44.53) .. (330.76,46.23) .. controls (329.13,47.93) and (329.16,49.6) .. (330.86,51.23) -- (330.95,55.9) -- (331.11,63.9) ;
\draw [shift={(331.17,66.9)}, rotate = 268.86] [fill={rgb, 255:red, 208; green, 2; blue, 27 }  ,fill opacity=1 ][line width=0.08]  [draw opacity=0] (7.14,-3.43) -- (0,0) -- (7.14,3.43) -- (4.74,0) -- cycle    ;
\draw  [color={rgb, 255:red, 0; green, 0; blue, 0 }  ,draw opacity=1 ][fill={rgb, 255:red, 208; green, 2; blue, 27 }  ,fill opacity=1 ][line width=0.75]  (330.52,35.15) .. controls (332.23,35.11) and (333.64,36.45) .. (333.68,38.13) .. controls (333.72,39.81) and (332.37,41.2) .. (330.66,41.23) .. controls (328.95,41.27) and (327.54,39.93) .. (327.5,38.25) .. controls (327.46,36.57) and (328.81,35.18) .. (330.52,35.15) -- cycle ;
\draw [color={rgb, 255:red, 208; green, 2; blue, 27 }  ,draw opacity=1 ][fill={rgb, 255:red, 208; green, 2; blue, 27 }  ,fill opacity=1 ]   (270.89,76.43) -- (271.02,93.08) ;
\draw [shift={(271.04,95.08)}, rotate = 269.53] [color={rgb, 255:red, 208; green, 2; blue, 27 }  ,draw opacity=1 ][line width=0.75]    (6.56,-2.94) .. controls (4.17,-1.38) and (1.99,-0.4) .. (0,0) .. controls (1.99,0.4) and (4.17,1.38) .. (6.56,2.94)   ;
\draw [color={rgb, 255:red, 208; green, 2; blue, 27 }  ,draw opacity=1 ][fill={rgb, 255:red, 208; green, 2; blue, 27 }  ,fill opacity=1 ]   (283.28,130.34) -- (283.06,153.08) ;
\draw [shift={(283.04,155.08)}, rotate = 270.56] [color={rgb, 255:red, 208; green, 2; blue, 27 }  ,draw opacity=1 ][line width=0.75]    (6.56,-2.94) .. controls (4.17,-1.38) and (1.99,-0.4) .. (0,0) .. controls (1.99,0.4) and (4.17,1.38) .. (6.56,2.94)   ;
\draw  [color={rgb, 255:red, 0; green, 0; blue, 0 }  ,draw opacity=1 ][fill={rgb, 255:red, 0; green, 192; blue, 248 }  ,fill opacity=1 ] (280.36,159.57) .. controls (280.36,157.86) and (281.67,156.47) .. (283.28,156.47) .. controls (284.89,156.47) and (286.2,157.86) .. (286.2,159.57) .. controls (286.2,161.28) and (284.89,162.66) .. (283.28,162.66) .. controls (281.67,162.66) and (280.36,161.28) .. (280.36,159.57) -- cycle ;
\draw  [color={rgb, 255:red, 0; green, 0; blue, 0 }  ,draw opacity=1 ][fill={rgb, 255:red, 0; green, 192; blue, 248 }  ,fill opacity=1 ] (314.46,99.97) .. controls (314.46,98.26) and (315.76,96.87) .. (317.38,96.87) .. controls (318.99,96.87) and (320.3,98.26) .. (320.3,99.97) .. controls (320.3,101.68) and (318.99,103.07) .. (317.38,103.07) .. controls (315.76,103.07) and (314.46,101.68) .. (314.46,99.97) -- cycle ;
\draw  [color={rgb, 255:red, 0; green, 0; blue, 0 }  ,draw opacity=1 ][fill={rgb, 255:red, 0; green, 192; blue, 248 }  ,fill opacity=1 ] (333.35,127.63) .. controls (333.35,125.92) and (334.66,124.53) .. (336.27,124.53) .. controls (337.88,124.53) and (339.19,125.92) .. (339.19,127.63) .. controls (339.19,129.34) and (337.88,130.72) .. (336.27,130.72) .. controls (334.66,130.72) and (333.35,129.34) .. (333.35,127.63) -- cycle ;
\draw  [color={rgb, 255:red, 0; green, 0; blue, 0 }  ,draw opacity=1 ][fill={rgb, 255:red, 0; green, 192; blue, 248 }  ,fill opacity=1 ] (352.03,127.63) .. controls (352.03,125.92) and (353.33,124.53) .. (354.95,124.53) .. controls (356.56,124.53) and (357.86,125.92) .. (357.86,127.63) .. controls (357.86,129.34) and (356.56,130.72) .. (354.95,130.72) .. controls (353.33,130.72) and (352.03,129.34) .. (352.03,127.63) -- cycle ;
\draw  [color={rgb, 255:red, 0; green, 0; blue, 0 }  ,draw opacity=1 ][fill={rgb, 255:red, 0; green, 192; blue, 248 }  ,fill opacity=1 ] (255.61,126.24) .. controls (255.61,124.53) and (256.91,123.15) .. (258.53,123.15) .. controls (260.14,123.15) and (261.44,124.53) .. (261.44,126.24) .. controls (261.44,127.96) and (260.14,129.34) .. (258.53,129.34) .. controls (256.91,129.34) and (255.61,127.96) .. (255.61,126.24) -- cycle ;
\draw  [color={rgb, 255:red, 0; green, 0; blue, 0 }  ,draw opacity=1 ][fill={rgb, 255:red, 0; green, 192; blue, 248 }  ,fill opacity=1 ] (280.36,126.24) .. controls (280.36,124.53) and (281.67,123.15) .. (283.28,123.15) .. controls (284.89,123.15) and (286.2,124.53) .. (286.2,126.24) .. controls (286.2,127.96) and (284.89,129.34) .. (283.28,129.34) .. controls (281.67,129.34) and (280.36,127.96) .. (280.36,126.24) -- cycle ;
\draw  [color={rgb, 255:red, 0; green, 0; blue, 0 }  ,draw opacity=1 ][fill={rgb, 255:red, 0; green, 192; blue, 248 }  ,fill opacity=1 ] (333.35,160.26) .. controls (333.35,158.55) and (334.66,157.16) .. (336.27,157.16) .. controls (337.88,157.16) and (339.19,158.55) .. (339.19,160.26) .. controls (339.19,161.97) and (337.88,163.36) .. (336.27,163.36) .. controls (334.66,163.36) and (333.35,161.97) .. (333.35,160.26) -- cycle ;
\draw  [color={rgb, 255:red, 0; green, 0; blue, 0 }  ,draw opacity=1 ][fill={rgb, 255:red, 126; green, 211; blue, 33 }  ,fill opacity=1 ] (270.74,68.48) .. controls (272.63,68.44) and (274.2,69.96) .. (274.24,71.88) .. controls (274.29,73.8) and (272.78,75.39) .. (270.89,75.43) .. controls (268.99,75.47) and (267.42,73.94) .. (267.38,72.02) .. controls (267.34,70.1) and (268.84,68.52) .. (270.74,68.48) -- cycle ;
\draw  [color={rgb, 255:red, 0; green, 0; blue, 0 }  ,draw opacity=1 ][fill={rgb, 255:red, 126; green, 211; blue, 33 }  ,fill opacity=1 ] (331.18,66.84) .. controls (333.08,66.8) and (334.65,68.33) .. (334.69,70.25) .. controls (334.73,72.16) and (333.23,73.75) .. (331.33,73.79) .. controls (329.44,73.83) and (327.87,72.31) .. (327.82,70.39) .. controls (327.78,68.47) and (329.29,66.88) .. (331.18,66.84) -- cycle ;
\draw  [color={rgb, 255:red, 0; green, 0; blue, 0 }  ,draw opacity=1 ][fill={rgb, 255:red, 0; green, 192; blue, 248 }  ,fill opacity=1 ] (339.87,99.97) .. controls (339.87,98.26) and (341.17,96.87) .. (342.78,96.87) .. controls (344.4,96.87) and (345.7,98.26) .. (345.7,99.97) .. controls (345.7,101.68) and (344.4,103.07) .. (342.78,103.07) .. controls (341.17,103.07) and (339.87,101.68) .. (339.87,99.97) -- cycle ;
\draw  [color={rgb, 255:red, 0; green, 0; blue, 0 }  ,draw opacity=1 ][fill={rgb, 255:red, 0; green, 192; blue, 248 }  ,fill opacity=1 ] (268.2,99.28) .. controls (268.2,97.57) and (269.51,96.18) .. (271.12,96.18) .. controls (272.73,96.18) and (274.04,97.57) .. (274.04,99.28) .. controls (274.04,100.99) and (272.73,102.38) .. (271.12,102.38) .. controls (269.51,102.38) and (268.2,100.99) .. (268.2,99.28) -- cycle ;
\draw  [color={rgb, 255:red, 0; green, 0; blue, 0 }  ,draw opacity=1 ][fill={rgb, 255:red, 0; green, 192; blue, 248 }  ,fill opacity=1 ] (314.35,127.34) .. controls (314.35,125.63) and (315.66,124.24) .. (317.27,124.24) .. controls (318.89,124.24) and (320.19,125.63) .. (320.19,127.34) .. controls (320.19,129.05) and (318.89,130.44) .. (317.27,130.44) .. controls (315.66,130.44) and (314.35,129.05) .. (314.35,127.34) -- cycle ;

\draw (311.78,79.8) node [anchor=north west][inner sep=0.75pt]  [font=\tiny,color={rgb, 255:red, 0; green, 0; blue, 0 }  ,opacity=1 ,rotate=-358.88]  {$+$};
\draw (278.94,68.38) node [anchor=north west][inner sep=0.75pt]  [font=\tiny,color={rgb, 255:red, 0; green, 0; blue, 0 }  ,opacity=1 ,rotate=-359.39]  {$1$};
\draw (278.71,98.24) node [anchor=north west][inner sep=0.75pt]  [font=\tiny,color={rgb, 255:red, 0; green, 0; blue, 0 }  ,opacity=1 ,rotate=-359.39]  {$3$};
\draw (288.69,123.77) node [anchor=north west][inner sep=0.75pt]  [font=\tiny,color={rgb, 255:red, 0; green, 0; blue, 0 }  ,opacity=1 ,rotate=-359.39]  {$5$};
\draw (245.58,123.91) node [anchor=north west][inner sep=0.75pt]  [font=\tiny,color={rgb, 255:red, 0; green, 0; blue, 0 }  ,opacity=1 ,rotate=-359.39]  {$4$};
\draw (340.67,68.39) node [anchor=north west][inner sep=0.75pt]  [font=\tiny,color={rgb, 255:red, 0; green, 0; blue, 0 }  ,opacity=1 ,rotate=-359.39]  {$2$};
\draw (257.27,79.86) node [anchor=north west][inner sep=0.75pt]  [font=\tiny,color={rgb, 255:red, 0; green, 0; blue, 0 }  ,opacity=1 ]  {$-$};
\draw (244.36,32.89) node [anchor=north west][inner sep=0.75pt]  [font=\tiny]  {$u_{1}( t)$};
\draw (375,67.71) node [anchor=north west][inner sep=0.75pt]  [font=\tiny,color={rgb, 255:red, 91; green, 130; blue, 49 }  ,opacity=1 ]  {$L_{1}$};
\draw (374.12,96.94) node [anchor=north west][inner sep=0.75pt]  [font=\tiny,color={rgb, 255:red, 91; green, 130; blue, 49 }  ,opacity=1 ]  {$L_{2}$};
\draw (375.21,157.75) node [anchor=north west][inner sep=0.75pt]  [font=\tiny,color={rgb, 255:red, 91; green, 130; blue, 49 }  ,opacity=1 ]  {$L_{4}$};
\draw (374.12,122.43) node [anchor=north west][inner sep=0.75pt]  [font=\tiny,color={rgb, 255:red, 91; green, 130; blue, 49 }  ,opacity=1 ]  {$L_{3}$};
\draw (374.11,33.65) node [anchor=north west][inner sep=0.75pt]  [font=\tiny,color={rgb, 255:red, 91; green, 130; blue, 49 }  ,opacity=1 ]  {$L_{0}$};
\draw (339.67,32.69) node [anchor=north west][inner sep=0.75pt]  [font=\tiny]  {$u_{2}( t)$};
\draw (305.07,123.62) node [anchor=north west][inner sep=0.75pt]  [font=\tiny,color={rgb, 255:red, 0; green, 0; blue, 0 }  ,opacity=1 ,rotate=-358.27]  {$7$};
\draw (304.64,96.41) node [anchor=north west][inner sep=0.75pt]  [font=\tiny,color={rgb, 255:red, 0; green, 0; blue, 0 }  ,opacity=1 ,rotate=-359.39]  {$5$};
\draw (350.97,96.67) node [anchor=north west][inner sep=0.75pt]  [font=\tiny,color={rgb, 255:red, 0; green, 0; blue, 0 }  ,opacity=1 ,rotate=-358.27]  {$6$};
\draw (290.34,158.41) node [anchor=north west][inner sep=0.75pt]  [font=\tiny,color={rgb, 255:red, 0; green, 0; blue, 0 }  ,opacity=1 ,rotate=-358.27]  {$7$};
\draw (287.61,138.4) node [anchor=north west][inner sep=0.75pt]  [font=\tiny,color={rgb, 255:red, 0; green, 0; blue, 0 }  ,opacity=1 ]  {$-$};
\draw (305.38,108.9) node [anchor=north west][inner sep=0.75pt]  [font=\tiny,color={rgb, 255:red, 0; green, 0; blue, 0 }  ,opacity=1 ]  {$-$};
\draw (354.14,107.7) node [anchor=north west][inner sep=0.75pt]  [font=\tiny,color={rgb, 255:red, 0; green, 0; blue, 0 }  ,opacity=1 ]  {$+$};
\draw (359.4,123.43) node [anchor=north west][inner sep=0.75pt]  [font=\tiny,color={rgb, 255:red, 0; green, 0; blue, 0 }  ,opacity=1 ,rotate=-358.27]  {$8$};
\draw (251.25,107.3) node [anchor=north west][inner sep=0.75pt]  [font=\tiny,color={rgb, 255:red, 0; green, 0; blue, 0 }  ,opacity=1 ,rotate=-358.88]  {$+$};
\draw (340.17,138.22) node [anchor=north west][inner sep=0.75pt]  [font=\tiny,color={rgb, 255:red, 0; green, 0; blue, 0 }  ,opacity=1 ]  {$+$};
\draw (342.48,78.95) node [anchor=north west][inner sep=0.75pt]  [font=\tiny,color={rgb, 255:red, 0; green, 0; blue, 0 }  ,opacity=1 ]  {$-$};
\draw (283.27,107.98) node [anchor=north west][inner sep=0.75pt]  [font=\tiny,color={rgb, 255:red, 0; green, 0; blue, 0 }  ,opacity=1 ]  {$-$};
\draw (330.39,107.97) node [anchor=north west][inner sep=0.75pt]  [font=\tiny,color={rgb, 255:red, 0; green, 0; blue, 0 }  ,opacity=1 ,rotate=-358.88]  {$+$};
\draw (340.94,123.18) node [anchor=north west][inner sep=0.75pt]  [font=\tiny,color={rgb, 255:red, 0; green, 0; blue, 0 }  ,opacity=1 ,rotate=-359.39]  {$3$};
\draw (323.08,157.47) node [anchor=north west][inner sep=0.75pt]  [font=\tiny,color={rgb, 255:red, 0; green, 0; blue, 0 }  ,opacity=1 ,rotate=-359.39]  {$4$};
\draw (259.11,46.76) node [anchor=north west][inner sep=0.75pt]  [font=\tiny,color={rgb, 255:red, 0; green, 0; blue, 0 }  ,opacity=1 ,rotate=-358.88]  {$+$};
\draw (317.62,47.21) node [anchor=north west][inner sep=0.75pt]  [font=\tiny,color={rgb, 255:red, 0; green, 0; blue, 0 }  ,opacity=1 ]  {$+$};
\draw (255.81,15.35) node [anchor=north west][inner sep=0.75pt]  [font=\tiny]  {$\mathcal{LUG}_{1}^{\mathcal{H}}$};
\draw (320.84,16.82) node [anchor=north west][inner sep=0.75pt]  [font=\tiny]  {$\mathcal{LUG}_{2}^{\mathcal{H}}$};

\end{tikzpicture}
\caption{$\mathcal{LUG^{H}}(\mathcal{G}_s)$ in $(\mathcal{G}_s)$}
\label{LUG multi d1}
   \end{subfigure}
  \caption{$\mathcal{SS}$ herdability of digraph with multiple leaders}
        \label{multi driver full}
    \end{figure}
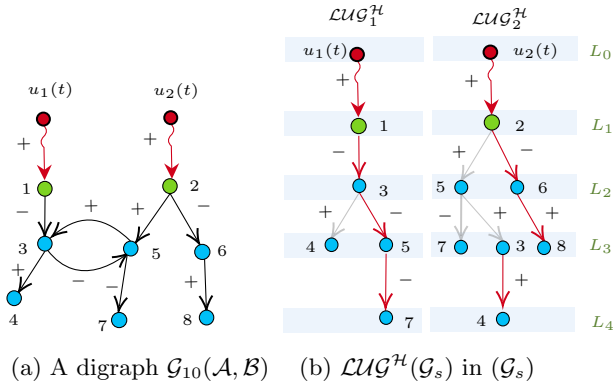
From the Fig. \ref{multi driver full} (\subref{LUG multi d1}), it is evident that all the nodes of $\mathcal{G(A,B)}$ are spanned by $~\bigcup_{i \in \{1,2\}}~ \mathcal{LUG}_i^{\mathcal{H}}(\mathcal{G}_s)$. Hence digraph $\mathcal{G}_{10}(\mathcal{A,B})$ is $\mathcal{SS}$ herdable. 
\end{eg}
  
\section{Conclusion}   \label{Sec:7}
In this paper, the necessary and sufficient conditions for structural sign $(\mathcal{SS})$ herdability of an arbitrary digraph are studied. For this purpose, we used a signed layered graph representation of the digraphs, which facilitates the study of the impact of signed and layer dilations on $(\mathcal{SS})$ herdability. We introduced the signed dilation set to analyze these effects on $(\mathcal{SS})$ herdability, and to address these challenges, we proposed sign-matching within the signed layered graph. In this study, $\mathcal{LUG^{H}}(\mathcal{G}_s)$, a subgraph of the signed layered graph, was formulated using sign-matching, forming the foundation for establishing the necessary and sufficient condition for $(\mathcal{SS})$ herdability.
 We show that the presence of an $\mathcal{LUG^{H}}(\mathcal{G}_s)$ that covers all the nodes within the digraph guarantees $\mathcal{SS}$ herdability, despite the presence of signed dilation and layer dilation in the digraph. We build upon this framework to study $(\mathcal{SS})$ herdability in digraphs with multiple leader nodes that share a common control input, using $\mathcal{LUG^{H}}(\mathcal{G}_s)$ as the basis. Subsequently, we investigate an extended scenario in which multiple driver nodes are each associated with distinct control inputs. To examine herdability, we utilize Gordan's Theorem of alternatives, akin to Kalman’s rank condition, as a fundamental tool for testing and validating our conclusions. In future studies, the aim will be to expand these results to time-varying and temporal networks.

\section*{Appendix}
\setcounter{figure}{13} 

\section{Proof of Theorem~\ref{theorem1}}\label{proof1}
\begin{proof}
\normalfont
\textit{Sufficiency}:

Let $\mathcal{G}(\mathcal{A,B})$ be a digraph where all the nodes are spanned by an $\mathcal{LUG^{H}}(\mathcal{G}_s)$. Then the following statements are true:
\begin{enumerate}
    \item[(i)] For all $i\in\{1,2..n\}$, the $i^{th}$ row of $\mathcal{C}(\mathcal{A},\mathcal{B})$ , i.e  $\mathcal{C}_{(i,:)}$ is nonzero.
    \item [(ii)] For all the sign-matched nodes in $L_{p+1}$, their corresponding entries in $[\Psi_{p}]$ are unisigned. 
    \item[(iii)] Other than the entries corresponding to sign-matched nodes, there can be other nonzero entries in $\Psi_p$. They are matched in the other columns of $ \mathcal{C}(\mathcal{A},\mathcal{B})$. 
    \item[(iv)] It follows from Lemma \ref{lemma C herd} that all (+/-) entries in $ \mathcal{C}(\mathcal{A},\mathcal{B})$ can be realized to the sign for which the corresponding node is sign-matched.
 \end{enumerate}

It follows from Proposition \ref{test for H}, that if for every realization of $(\mathcal{A,B})$ there exists a $\mathbf{y}\geq0$ such that $\mathcal{C}(\mathcal{A,B})^\top \mathbf{y}=0$, then it is not $\mathcal{SS}$ herdable. Let $\mathcal{C}_1,\mathcal{C}_2 \ldots, \mathcal{C}_n$ denote the rows of $ \mathcal{C}(\mathcal{A},\mathcal{B})$, which are equivalently the column vectors of $ \mathcal{C}(\mathcal{A},\mathcal{B})^\top$. The existence of $\mathbf{y}\geq0$ that satisfies the above expression implies that there exist some rows in $\mathcal{C}(\mathcal{A},\mathcal{B})$ that are nonnegatively linearly dependent, irrespective of any realization of $(\mathcal{A,B})$. 

Let us assume that for every realization of $(\mathcal{A},\mathcal{B})$ there exists a $\mathbf{y}\geq0$ such that $\mathcal{C}(\mathcal{A,B})^\top \mathbf{y}=0$. This implies there exists some $j,k~\in~\{1,2 \dots n\}$ such that
\begin{align}\label{c satisfy}
     \mathcal{C}_j = -\alpha ~ \mathcal{C}_k~;~ \alpha \in\mathbb{R}^+
     \end{align}
The expression \([\Psi_p]_j= - \alpha ~[\Psi_p]_k~~; \forall ~p ~\in \{1,2, \dots n-1\}\) implies that for any non-zero $[\Psi_p]_j$ and $[\Psi_p]_k$, nodes $j$ and $k$ are part of the same layer. Given that all nodes are sign-matched by an $\mathcal{LUG^{H}}(\mathcal{G}_s)$, this leads to the following possible scenarios.
\begin{enumerate}
    \item [($i$)] If node $j$ and $k$ are sign-matched at the same layer $L_p$, then $\text{sign}~[\Psi_p]_j= \text{sign}~[\Psi_p]_k$. This is in contradiction with the above relation, given that $\alpha~\in \mathbb{R}^+$; consequently, condition~\eqref{c satisfy} fails to hold.
     \item [($ii$)] If the node $j$ is sign-matched at $L_p$ and node $k$ is sign-matched at a $L_{m \ne p}$, then there exists a walk $\mathcal{W}_{(1,k)}$ of length $m-1$ and the entry $[\Psi_m]_k \ne 0$  in $\mathcal{C}_k$. This leads to  $\mathcal{C}_j \ne \beta ~ \mathcal{C}_k~;~ \beta \in\mathbb{R}$ for some realization of $(\mathcal{A},\mathcal{B})$ given by Lemma \ref{lemma C herd}.
\end{enumerate}
From the previous cases, it can be inferred that there is a realization of $(\mathcal{A},\mathcal{B})$ where \ref{c satisfy} does not hold. This results in the absence of a non-negative vector $\mathbf{y}$ such that $\mathcal{C}(\mathcal{A,B})^\top \mathbf{y}=0$ for the corresponding realization of $(\mathcal{A},\mathcal{B})$, thereby contradicting the initial assumption.

For a digraph $\mathcal{G}(\mathcal{A,B})$ realized by structured matrices $(\mathcal{A,B})$, these scenarios arise due to the presence of signed dilation or layer dilation in $\mathcal{G}_s$. 

\begin{enumerate}
    \item [(i)] In case of signed dilation:
 Since the $\mathcal{LUG^{H}}(\mathcal{G}_s)$ spans all the nodes, the unmatched nodes in the dilation set $\Delta_i$ are sign-matched outside $\Delta_i$. Let us consider all the nodes in $\Delta_i^N$ are sign-matched within $\Delta_i$, therefore, every $j \in  \Delta_i^P$ is being sign-matched outside the $\Delta_i$. This implies there is an alternate walk of length $p$ to node $j$ through which it is sign-matched. Then the entry $[\Psi_{p-1}]_j\neq0$ and therefore there exists a realization of $(\mathcal{A},\mathcal{B})$ such that $\mathcal{C}_{(j,p)} \neq -\alpha~\mathcal{C}_{(j,k)}$.

Hence, for all  $j \in \Delta^P_i$ and $k \in \Delta^N_i$, there exists a realization of $(\mathcal{A},\mathcal{B})$ such that $\mathcal{C}_j$ and $\mathcal{C}_k$ are not linearly dependent. Thus $\mathcal{C}_j \ne -\alpha~\mathcal{C}_k~; \alpha>0$

 \item [(ii)] In case of Layer dilation:
    
 The Layer dilation in $L_p$ causes sign changes in the entries of $[\Psi_{p+1}]$ as discussed in Section \ref{Sec:4}. Since $\mathcal{LUG^{H}}(\mathcal{G}_s)$ spans all the nodes of $\mathcal{G(A,B)}$, the unmatched nodes in $L_p$ will be sign-matched in other layers. Hence, those unmatched nodes have more than one nonzero entry in their corresponding rows other than $[\Psi_{p+1}]$. Therefore there exists a realization of $(\mathcal{A},\mathcal{B})$ for which these rows are linearly independent, Hence, $\mathcal{C}_j \ne -\alpha~\mathcal{C}_k~; \alpha>0$
\end{enumerate} 
 From the above cases, it follows that there is no $j,k \in \{2, \dots n\}$ such that $\mathcal{C}_j$ and $\mathcal{C}_k$ are nonnegatively linearly dependent. This implies there exists no $\mathbf{y} \ge \mathbf{0}$ such that $\mathcal{C}(\mathcal{A},\mathcal{B})^\top \mathbf{y} = \mathbf{0}$. By Proposition \ref{test for H}, the digraph is $\mathcal{SS}$ herdable.


\begin{remark}
    In contrast, suppose that no single $\mathcal{LUG}^{\mathcal{H}}(\mathcal{G}_s)$ spans all nodes within $n$ layers. This situation arises when an unmatched node lies in the acyclic component of the graph exhibiting layer dilation, or when an unmatched node in a signed dilation is not matched outside the corresponding signed dilation set. By Propositions~\ref{prop3:prop3} and Lemma~\ref{Prop:SD set}, it then follows that the system~(\ref{sys1:sys1eqn}) is not $\mathcal{SS}$ herdable.
\end{remark}


\textit{Necessity}: Let us consider that the system described by (\ref{sys1:sys1eqn}) is $\mathcal{SS}$ herdable. Then there exists a parametric realization of $(\mathcal{A,B})$ such that $\mathcal{G(A,B)}$ is herdable. This implies there exists a vector $\bm{\delta}$ such that \( \mathcal{C}(\mathcal{A},\mathcal{B}) \, \bm{\delta} = \bm{v} > 0.
\) This can be equivalently written as 
\vspace{-2mm}
\begin{equation}\label{lin-com}
\sum_{j=1}^{n} \delta_j \,[C_{(:,j)}] = v>0.
\end{equation}

where $C_{(:,j)}$ is the $j^{th}$ column of \( \mathcal{C}(\mathcal{A},\mathcal{B})\) and  \eqref{lin-com} represents a linear combination of the columns of $\mathcal{C}(\mathcal{A},\mathcal{B})$. For each $i \in \{1, 2, \dots, n\}$, the above expression arises from one of the following cases:

\begin{enumerate}
    \item [(i)]There exists an index \( x \in \{1, \dots, n\} \) such that
   \(\delta_x\, \mathcal{C}_{(i,x)} >
\sum_{\substack{j=1 \\ j \neq x}}^{n} 
\left| \delta_j\, \mathcal{C}_{(i,j)} \right|.\)
    In this situation, the effect of one component is greater than the combined influence of all others, indicating that the corresponding $v_i$ is positive.

\item  [(ii)] All terms \( \delta_j \, \mathcal{C}_{(i,j)} \) for \( j \in \{1, 2, \dots, n\} \) are positive but not necessarily of equal magnitude. Consequently, \( \sum_{j=1}^{n} \delta_j \, \mathcal{C}_{(i,j)} = v_i > 0, \) indicates a uniform or consistently positive contribution from all components.

\item  [(iii)] Consider all nonzero terms \( \mathcal{C}_{(i,j)} \) for \( j \in \{1, 2, \dots, n\} \).  Partition them into two distinct index sets such that 
\(
y = \{ j \mid \delta_j \, \mathcal{C}_{(i,j)} > 0 \}, \qquad
\bar{y} = \{ k \mid \delta_k \, \mathcal{C}_{(i,k)} < 0 \}.
\)
If both sets are nonempty, and 
\(
\sum_{j \in y} \delta_j \, \mathcal{C}_{(i,j)} > 
 \sum_{k \in \bar{y}} \left| \delta_k \, \mathcal{C}_{(i,k)} \right|,
\) implies that the total positive contribution dominates the total negative contribution. This results in  \(\sum_{j=1}^{n} \delta_j \,[C_{(i,j)}] = v_i>0\)
\end{enumerate}
In all of the above scenarios, for each \( i \in \{1, 2, \dots, n\} \), there exists at least one index \( j \in \{1, 2, \dots, n\} \) such that \( \delta_j \, \mathcal{C}_{(i,j)} > 0\). Therefore, assuming that \(\delta\) can be freely chosen, it is always possible to redesign \(\delta\) so that all cases can be transformed into case  (i).
With case (i) being true for all \( i \in \{1, 2, \dots, n\} \), a vector $\bm{\delta}$ can be chosen such that \eqref{delta satis} holds for some index $x$. 
 Using Lemma~\ref{lem1:lemma1}, the condition in \eqref{delta satis} can be equivalently expressed as follows:
\vspace{-4mm}
\begin{equation}\label{delta satis 2}
\hspace{-1em}
\bm{\delta}_x~ [\mathcal{A}^{x-1}\mathcal{B}]_i >~\sum_{j=1; j\neq x }^{n}| \bm{\delta}_j~ [\mathcal{A}^{j-1}\mathcal{B}]_i~|
~ \left|\raisebox{-0.4em}{\scriptsize$\begin{array}{c}
x \in \{1, \dots n\}
\end{array}$}
\right.
\end{equation}
Every $i^{th}$ entry of  $[\mathcal{A}^{x-1}\mathcal{B}]$ that satisfies (\ref{delta satis 2}) corresponds to a node $i$ in the layer $L_x$ of the signed layered graph $\mathcal{G}_s$ associated with the digraph $\mathcal{G(A,B)}$. 
Consider the set of nodes \( \{\, i \mid \bm{\delta}_x [\mathcal{A}^{x-1}\mathcal{B}]_i > 0 \,\},\) corresponding to paths of length $(x-1)$. This set forms a subgraph of $\mathcal{G}_s$, where $x \in \{1, 2, \dots, n\}$.
 Applying sign matching on the mapped nodes, the resulting subgraph is an $\mathcal{LUG}^{\mathcal{H}}(\mathcal{G}_s)$. 

For a node  \( i \in \{1, 2, \dots, n\} \), the products \( \delta_j \, \mathcal{C}_{(i,j)}, \quad j \in \{1, \dots, n\},\) may all have the same sign. This arises from the sign-matching of node \( i \) in the layer \( L_x \) of \( \mathcal{LUG}^{\mathcal{H}}(\mathcal{G}_s) \) within \( \mathcal{G}_s \). It can be observed that the occurrence of cases (i) and (ii) described above arises due to repeated sign matching of node \( i \) across different layers of \( \mathcal{LUG}^{\mathcal{H}}(\mathcal{G}_s) \).

 Hence, there exists an $\mathcal{LUG}^{\mathcal{H}}(\mathcal{G}_s)$ in the $\mathcal{G}_s$ associated with $\mathcal{G(A,B)}$ that is $\mathcal{SS}$ herdable. \hfill$\qed$

\end{proof}
\section{Proof of Theorem~\ref{Thm multi Driver}}\label{proof Thm multi Driver}
\text{Sufficiency:} 
Let $\mathcal{G}(\mathcal{A,B})$ be a digraph whose signed layered graph $\mathcal{G}_s$ contains a set of $\mathcal{LUG}_i^{\mathcal{H}}(\mathcal{G}_s)$, where $i$ is the leader node associated with corresponding $\mathcal{LUG}_i^{\mathcal{H}}(\mathcal{G}_s)$. 
Let us assume that all the nodes of $\mathcal{G}(\mathcal{A,B})$ are spanned by $~\bigcup_{i \in \mathbb{Z}^+}~ \mathcal{LUG}_i^{\mathcal{H}}(\mathcal{G}_s)$. This imples that for every node $j$ there exists atleast one $\mathcal{LUG}_i^{\mathcal{H}}(\mathcal{G}_s)$ such that $j \in L_p$ where $L_p$ is the $p^{th}$ layer of $\mathcal{LUG}_i^{\mathcal{H}}(\mathcal{G}_s)$. Then the following statements are true:
\begin{enumerate}
    \item[(i)] For all $i\in\{1,2..n\}$, the $i^{th}$ row of $\mathcal{C}(\mathcal{A},\mathcal{B})$ , i.e  $\mathcal{C}_{(i,:)}$ is nonzero.
    \item [(ii)] For all the sign-matched nodes in $L_{p+1}$ in the $\mathcal{LUG}_\ell^{\mathcal{H}}(\mathcal{G}_s)$ originating from leader node $\ell$ their corresponding entries in  \([\mathcal{A}^p \mathcal{B}_\ell]\) = \( [\Psi^\ell_p]\) are unisigned. 
    \item[(iii)] Other than the entries corresponding to sign-matched nodes, there can be other nonzero entries in $[\Psi^\ell_k]$. They are matched in the other columns of $ \mathcal{C}(\mathcal{A},\mathcal{B})$. 
    \item[(iv)] Applying Lemma \ref{lemma C herd} will ensure that all (+/-) entries in $ \mathcal{C}(\mathcal{A},\mathcal{B})$ are realized to the sign for which the corresponding node is sign-matched.
 \end{enumerate}

It follows from Proposition \ref{test for H}, that if for every realization of $(\mathcal{A,B})$ there exists a nonnegative vector $\mathbf{y}\in \mathbb{R}^{nm}$ such that $\mathcal{C}(\mathcal{A,B})^\top \mathbf{y}=0$, then \eqref{sys1:sys1eqn} is not $\mathcal{SS}$ herdable. Let $\mathcal{C}_1,\mathcal{C}_2 \ldots, \mathcal{C}_n$ denote the rows of $ \mathcal{C}(\mathcal{A},\mathcal{B})$, which are equivalently the column vectors of $ \mathcal{C}(\mathcal{A},\mathcal{B})^\top$, where $ \mathcal{C}(\mathcal{A},\mathcal{B})^\top \in \mathbb{R}^{nm \times n}$. The existence of $\mathbf{y}\geq0$ that satisfies the above expression implies that there exist some rows in $\mathcal{C}(\mathcal{A},\mathcal{B})$ that are nonnegatively linearly dependent, irrespective of any realization of $(\mathcal{A,B})$. 

Let us assume that for every realization of $(\mathcal{A},\mathcal{B})$ there exists a $\mathbf{y}\geq0$ such that $\mathcal{C}(\mathcal{A,B})^\top \mathbf{y}=0$. This implies there exists some $j,k~\in~\{1,2 \dots n\}$ such that
\begin{align}\label{c mul satisfy}
     \mathcal{C}_j = -\alpha ~ \mathcal{C}_k~;~ \alpha \in\mathbb{R}^+
     \end{align}
The expression \([\Psi^\ell_p]_j= - \alpha ~[\Psi^\ell_p]_k~~; \forall ~p ~\in \{1,2, \dots n-1\}\) implies that for any non-zero $[\Psi^\ell_p]_j$ and $[\Psi^\ell_p]_k$, nodes $j$ and $k$ are part of the same layer. Given that all nodes are spanned by union of $\mathcal{LUG_\ell^{H}}(\mathcal{G}_s), \ell\in \{1,2 \dots m\}$, every nodes of $\mathcal{G}(\mathcal{A,B})$ is sign-matched by an $\mathcal{LUG^{H}}(\mathcal{G}_s)$ originating from a leader $\ell$. This leads to the following possible scenarios.
\begin{enumerate}
    \item [($i$)] If node $j$ and $k$ are sign-matched at the same layer $L_p$, then $\text{sign}~[\Psi^\ell_p]_j= \text{sign}~[\Psi^\ell_p]_k$. This is in contradiction with the relation $[\Psi^\ell_p]_j= - \alpha ~[\Psi^\ell_p]_k$, given that $\alpha~\in \mathbb{R}^+$; consequently, condition~\eqref{c satisfy} fails to hold.
     \item [($ii$)] If the node $j$ is sign-matched at $L_p$ and node $k$ is sign-matched at a $L_{q \ne p}$, then there exists a walk $\mathcal{W}_{(\ell,k)}$ of length $q-1$ and the entry $[\Psi^\ell_q]_k \ne 0$  in $\mathcal{C}_k$. This leads to  $\mathcal{C}_j \ne \beta ~ \mathcal{C}_k~;~ \beta \in\mathbb{R}$ for some realization of $(\mathcal{A},\mathcal{B})$ given by Lemma \ref{lemma C herd}.
\end{enumerate}
From the previous cases, it can be inferred that there is a realization of $(\mathcal{A},\mathcal{B})$ where \eqref{c mul satisfy} does not hold. This results in the absence of a non-negative vector $\mathbf{y}$ such that $\mathcal{C}(\mathcal{A,B})^\top \mathbf{y}=0$ for the corresponding realization of $(\mathcal{A},\mathcal{B})$, thereby contradicting the initial assumption.
This implies there exists no $\mathbf{y} \ge \mathbf{0}$ such that $\mathcal{C}(\mathcal{A},\mathcal{B})^\top \mathbf{y} = \mathbf{0}$. By Proposition \ref{test for H}, the digraph is $\mathcal{SS}$ herdable.

\textit{Necessity}: Let us consider a system described by (\ref{sys1:sys1eqn}) having $m$ driver nodes, is $\mathcal{SS}$ herdable. Then there exists a parametric realization of $(\mathcal{A,B})$ such that $\mathcal{G(A,B)}$ is herdable. This implies there exists a vector $\bm{\delta \in \mathbb{R}^{nm}}$ such that \( \mathcal{C}(\mathcal{A},\mathcal{B}) \, \bm{\delta} = \bm{v} > 0.
\) Each component of vector $v_i$ can be equivalently written as 
\vspace{-2mm}
\begin{equation}\label{multi lin-com}
\sum_{j=1}^{nm} \delta_j \,[C_{(i,j)}] = v_i>0, \quad \forall i \in \{1, 2, \dots, n\}.
\end{equation}

where \eqref{multi lin-com} indicated that $v$ is a linear combination of the columns of $\mathcal{C}(\mathcal{A},\mathcal{B})$. For each $i \in \{1, 2, \dots, n\}$, the above expression arises from one of the following cases:

\begin{enumerate}
    \item [(i)]There exists an index \( x \in \{1, \dots, nm\} \) such that
    \(
    \delta_x \, \mathcal{C}_{(i,x)} >  
    \sum_{\substack{j=1 \\ j \neq x}}^{nm} \left|\delta_j \, \mathcal{C}_{(i,j)}
    \right|.
    \)
    In this situation, the effect of one component is greater than the combined influence of all others, indicating that the corresponding $v_i$ is positive.

\item  [(ii)] All terms \( \delta_j \, \mathcal{C}_{(i,j)} \) for \( j \in \{1, 2, \dots, nm\} \) are positive but not necessarily of equal magnitude. Consequently, \( \sum_{j=1}^{nm} \delta_j \, \mathcal{C}_{(i,j)} = v_i > 0, \) indicates a uniform or consistently positive contribution from all components.

\item  [(iii)] Consider all nonzero terms \( \mathcal{C}_{(i,j)} \) for \( j \in \{1, 2, \dots, nm\} \). Partition them into two distinct index sets such that 
\(
y = \{ j \mid \delta_j \, \mathcal{C}_{(i,j)} > 0 \}~\text{and}~
\bar{y} = \{ k \mid \delta_k \, \mathcal{C}_{(i,k)} < 0 \}.
\)
If
\(
\sum_{j \in y} \delta_j \, \mathcal{C}_{(i,j)} > 
 \sum_{k \in \bar{y}} \left|\delta_k \, \mathcal{C}_{(i,k)} \right|,
\) then it implies that the total positive contribution dominates the total negative contribution. This results in  \(\sum_{j=1}^{nm} \delta_j \,[C_{(i,j)}] = v_i>0\).
\end{enumerate}
In all of the above scenarios, for each \( i \in \{1, 2, \dots, n\} \), there exists at least one index \( j \in \{1, 2, \dots, mn\} \) such that \( \delta_j \, \mathcal{C}_{(i,j)} > 0\). Therefore, with the assumption that \eqref{sys1:sys1eqn} is herdable and \(\delta\) can be freely chosen, it is always possible to redesign \(\delta\) so that all cases can be transformed into case  (i).
With case (i) being true for all \( i \in \{1, 2, \dots, n\} \), a vector $\bm{\delta}$ can be chosen such that for some index $x$, the following expression holds:
\begin{equation}\label{multi delta satis}
    \delta_x~ \mathcal{C}_{(i,x)} >\sum_{j=1; j\neq x }^{nm} |\delta_j~ \mathcal{C}_{(i,j)}~|
~~ \left |\raisebox{-0.4em}{\scriptsize$\begin{array}{c}
x \in \{1, \dots nm\}
\end{array}$}
\right.
\end{equation}

 It follows from Remark \ref{Remark multi} and Lemma \ref{multi lemma} that \(\mathcal{C}_{(i,x)}\) is given by \([\mathcal{A}^k \mathcal{B}_\ell]_i\) = \( [\Psi^\ell_k]_i\), where \(k = \left\lfloor~\frac{x-1}{m}~\right\rfloor\), \( \ell=\big((x-1) \bmod m \big)+1\) and $m$ is the number of inputs.
 
 Therefore, the condition in \eqref{multi delta  satis} can be equivalently expressed as follows:
\vspace{-4mm}
\begin{equation}\label{dominance_condition}
 \delta_g [\mathcal{A}^k \mathcal{B}_\ell]_i 
>
\sum_{\substack{j = 0,\dots,n-1 \\ \ell' = 1,\dots,m \\ (j,\ell') \neq (k,\ell)}} 
\big| \delta_{\bar{g}} [\mathcal{A}^j \mathcal{B}_{\ell'}]_i ~~
\big|
\quad
\substack{
k = \{0, \dots, n-1\} \\
\ell = \{1, \dots, m\}
}
\end{equation}
where \( \delta_g \) and \([\mathcal{A}^k \mathcal{B}_\ell]\) has the same sign, $g=(k\cdot m+\ell)$ and $\bar{g}=(j\cdot m+\ell')$.

Every $i^{th}$ entry of  $[\mathcal{A}^k \mathcal{B}_\ell]$ that satisfies (\ref{dominance_condition}) corresponds to a node $i$ in the layer $L_{k+1}$ of the signed layered graph $\mathcal{G}_s$. For each $\delta_{{g}}[\mathcal{A}^k \mathcal{B}_\ell]_i >0 $ the node $i$ is mapped in the layer $L_{k+1}$ of $\mathcal{G}_s$ that has walk from leader $\ell$ with length $k$. Applying sign matching on the mapped nodes, the resulting subgraph is a set of $\mathcal{LUG}_\ell^{\mathcal{H}}(\mathcal{G}_s)$ having node $\ell$ as the leader and their  $~\bigcup_{i \in \mathbb{Z}^+}~ \mathcal{LUG}_i^{\mathcal{H}}(\mathcal{G}_s)$ spans all the nodes of $\mathcal{G(A,B)}$.
 \hfill$\qed$

\bibliographystyle{unsrt}        
\bibliography{autosam}           
\end{document}